   \newtheorem{condition}{Condition}
   \newtheorem{proposition}{Proposition}
   \newtheorem{theorem}{Theorem}
   \newtheorem{lemma}{Lemma}
   \newtheorem{corollary}{Corollary}
\newcommand{\round}[1]{\ensuremath{\lfloor#1\rceil}}
\newcolumntype{C}[1]{>{\centering\let\newline\\\arraybackslash\hspace{0pt}}m{#1}}
\DeclareMathOperator*{\argmax}{argmax}
\DeclareMathOperator*{\expit}{expit}
\DeclareMathOperator*{\logit}{logit}
\newcommand*{\addFileDependency}[1]{
  \typeout{(#1)}
  \@addtofilelist{#1}
  \IfFileExists{#1}{}{\typeout{No file #1.}}
}
\newcommand{\indep}{\rotatebox[origin=c]{90}{$\models$}}
\DeclareMathAlphabet{\mathpzc}{OT1}{pzc}{m}{it}
\newcommand{\hd}{\mathpzc{h}}
\title{Propensity score augmentation\\in matching-based estimation of causal effects}
\author[1]{Ernesto Ulloa-P\'erez}
\author[3,2]{Marco Carone}
\author[2,3]{Alex Luedtke}
\affil[1]{Department of Biostatistics, Epidemiology and Informatics, University of Pennsylvania}
\affil[2]{Department of Statistics, University of Washington}
\affil[3]{Department of Biostatistics, University of Washington}
\begin{document}

\maketitle

\doublespacing

\begin{abstract}
When assessing the causal effect of a binary exposure using observational data, confounder imbalance across exposure arms must be addressed. Matching methods, including propensity score-based matching, can be used to deconfound the causal relationship of interest. They have been particularly popular in practice, at least in part due to their simplicity and interpretability. However, these methods can suffer from low statistical efficiency compared to many competing methods. In this work, we propose a novel matching-based estimator of the average treatment effect based on a suitably-augmented propensity score model. Our procedure can be shown to have greater statistical efficiency than traditional matching estimators whenever prognostic variables are available, and in some cases, can nearly reach the nonparametric efficiency bound. In addition to a theoretical study, we provide numerical results to illustrate our findings. Finally, we use our novel procedure to estimate the effect of circumcision on risk of HIV-1 infection using vaccine efficacy trial data.	
\end{abstract}

\section{Introduction}

In many studies, the goal is to estimate the causal effect of a binary exposure, say treatment versus control, on an outcome of interest. In randomized trials, since by design treatment level is independent of patient characteristics, treatment effects can often be inferred through simple, unadjusted comparisons. For example, the average treatment effect can be estimated by the difference in mean observed outcomes across treatment levels. In contrast, in observational studies, patients receiving different treatment levels may also differ in baseline characteristics, thereby giving rise to potential confounding of the treatment-outcome relationship. In order to infer about treatment effects, a lack of balance in patient characteristics across treatment levels must be addressed using appropriate methods.

Matching methods are commonly used to infer causal effects using data from observational studies. Such methods outline how to turn the original dataset into a matched dataset in which the distribution of confounders is appropriately balanced across treatment groups. Often, this is accomplished by pairing each patient to one (or many) patients with similar confounder  values in the opposite treatment arm. Once a matched dataset has been constructed, treatment effects can be estimated similarly as they would had the data been generated from a randomized trial.  Since the conditional treatment probability given baseline covariates --- referred to as the propensity score --- is a balancing score, in that treatment level and baseline covariates are independent conditionally upon the propensity score value, matching on the propensity score suffices to balance the distribution of confounders between treatment groups \citep{rosenbaum1983}. Propensity score matching has been popular in practice because it provides a principled means of dimension reduction, allowing pairing of patients with dissimilar covariate values that may nevertheless have similar propensity score values. There are many ways to construct a matched set based on propensity score values --- \cite{austin2011} provides a comprehensive review. Below, we focus on one to several (1:$M$) propensity score matching with replacement aimed at targeting the average treatment effect; throughout this paper, we refer to this scheme as 1:$M$ matching.

Matching estimators are appealing because of their simplicity, the ease with which their output can be visualized and communicated, and the wide availability of software \citep{stuart2010}. However, such estimators are generally inefficient, failing to achieve the nonparametric efficiency bound for the average treatment effect; this fact was formally shown in \cite{abadie2016}, who derived the large-sample properties of the 1:$M$ matching estimator. Since there exist several alternative methods for estimating the average treatment effect that in fact do attain this bound \citep{robins1994,van2006},  this limitation may discourage some practitioners from using the 1:$M$ matching estimator. In this work, we address this limitation by showing that the inefficiency of this estimator can typically be reduced --- sometimes substantially --- by augmenting the propensity score model using a carefully-constructed synthetic covariate. Our proposal is a natural extrapolation of the fact that, even when the propensity score is known, the asymptotic variance of the matching estimator can be reduced by instead estimating the propensity score, as shown by \cite{abadie2016}. Here, we characterize the maximal efficiency gain possible through propensity score model augmentation, and show that this gain can be achieved with a one-dimensional augmentation. The augmented 1:$M$ estimator we propose specifically leverages information from available prognostic variables, baseline covariates unrelated to treatment level but predictive of the outcome, which are often ignored by traditional matching methods since they do not feature in the propensity score model. Finally, we also provide results suggesting that certain implementations of the 1:$M$ matching estimator may in fact be (nearly) nonparametric efficient.
 
This paper is structured as follows. In Section \ref{sec:secc2}, we review existing results on the large-sample behavior of the 1:$M$ matching estimator. In Section \ref{sec:secc3}, we study propensity score model augmentation, establish the maximal efficiency gain possible through augmentation, and use these results to propose a novel augmented 1:$M$ matching estimator of the average treatment effect. We also establish the large-sample theory of our proposal. In Section \ref{sec:secc4}, we show results from numerical studies that illustrate potential efficiency gains as well as the behavior of our proposed estimator. Then, in Section \ref{sec:secc5}, we illustrate the use of our method through an application in which we estimate the causal effect of circumcision on HIV infection using data from a HIV vaccine trial. We conclude with a discussion of our findings in Section \ref{sec:secc6}. All proofs of technical results are provided in the Supplementary Material.

\section{Review of results on the 1:$M$ matching estimator} \label{sec:secc2}

\subsection{Statistical setup} \label{subssec:statsetup}

Suppose that the available data consist of independent observations $O_1,O_2,\ldots,O_n$ from an unknown distribution $P_0$. Here, the data unit $O_i:=(V_i, A_i, Y_i)$ represents the information gathered on patient $i$, with  $V_i\in\mathcal{V}\subseteq \mathbb{R}^p$ a vector of baseline covariates, $A_i\in\{0,1\}$ the treatment level received, and $Y_i\in\mathcal{Y}\subseteq \mathbb{R}$ the outcome of interest. For convenience, we also define $W_i:=(1,V_i^{\top})^{\top}\in\mathcal{W}\subseteq \mathbb{R}^{p+1}$. For each covariate level $w\in\mathcal{W}$, we denote the propensity score evaluated at $w$ by $\pi_0(w) := P_0(A = 1\,|\,W=w)$. Throughout this article, we assume nonparametric models for the conditional distribution of $Y$ given $(A,W)$ and for the marginal distribution of $W$. Furthermore, we assume that the true propensity score follows a logistic regression model, namely that $\pi_0=\pi_{\theta_0}$ for some $\theta_0\in\mathbb{R}^{p+1}$, where for each $\theta\in\mathbb{R}^{p+1}$ we define $\pi_{\theta}:w\mapsto \expit(\theta^{\top} w)$. We will denote by $\mathcal{M}_0$ this propensity score model. Relaxation of this modeling assumption is discussed in Section \ref{sec:secc6}.

The causal estimand we focus on in this paper is the average treatment effect. To define it precisely, we make use of the potential outcomes framework, which allows us to define treatment effects \citep{rubin1974}. For $a \in \{0,1\}$, we denote by $Y(a)$ the potential outcome under treatment level $a$. Then, the average treatment effect is given by $\mathbb{E}\left[Y(1) - Y(0)\right]$, where $\mathbb{E}$ denotes expectation of the true joint distribution of counterfactual outcomes. We denote the outcome regression and propensity-reduced outcome regression, respectively, as $\mu_0(a,w):=E_0\left(Y\,|\,A=a,W=w\right)$ and $\bar{\mu}_0(a,p):=E_0\left\{Y\,|\, A=a,\pi_0(W)=p\right\}$. Here and below, we use the notation $E_0$, $var_0$ and $cov_0$ to denote, respectively, an expectation, variance and covariance computed under the true sampling distribution $P_0$. Under the following standard causal conditions: \begin{enumerate}[i.]
    \item (no interference) any patient's treatment level is independent of any other patient's potential outcomes;
    \item (consistency) the observed outcome is $Y=AY(1)+(1-A)Y(0)$;
    \item (exchangeability) $Y(a)$ and $A$ are independent given $W$ $P_0$-almost surely for each $a\in\{0,1\}$;
    \item (positivity) $\pi_0(W)\in(0,1)$ $P_0$-almost surely;
\end{enumerate}the average treatment effect can be identified as\[\psi_0:=E_0\left\{\mu_0(1,W)-\mu_0(0,W)\right\}=E_0\left\{\bar{\mu}_0(1,\pi_0(W))-\bar{\mu}_0(0,\pi_0(W))\right\}.\] 
The above is a summary of the distribution $P_0$ of the observed data unit $O$.

\subsection{Definition of the 1:$M$ matching estimator}

The identification formula above motivates matching on the propensity score instead of the entire covariate vector. If $\pi$ is the propensity score used to match observations, the $\pi$--specific 1:$M$ matching estimator of the average treatment effect can be written as
\begin{equation*}
\psi_n(\pi):= \frac{1}{n}\sum_{i=1}^{n}(2A_i-1)\left[Y_i - \frac{1}{M}\sum_{j=1}^nI\{j \in J_M(i,\pi)\}\,Y_j\right],
\end{equation*}
where we denote by $J_M(i,\pi)$ the matched set for the $i^{th}$ observation based on the propensity score $\pi$
$$\left\{j: A_j = 1-A_i, \textstyle\sum_{k=1}^{n} I\{A_k = 1-A_i\}I\left\{|\pi(W_j) -\pi(W_i) |\leq |\pi(W_k) -\pi(W_i) |\right\} \leq M \right\}.$$ Writing the number $K_{M,\pi}(i):= \sum_{j= 1}^{n}I\{i \in J_M(j,\pi)\}$  of times observation $i$ occurs as a match when propensity score $\pi$ is used, the $\pi$--specific matching estimator can also be represented as 
\begin{equation*}
\psi_n(\pi) = \frac{1}{n}\sum_{i=1}^{n}(2A_i-1)\left\{1 + \frac{K_{M,\pi}(i)}{M}\right\}Y_i\ .
\end{equation*}
Since the true index value $\theta_0$ of the propensity score $\pi_0$ in the logistic regression model $\mathcal{M}_0$ is typically unknown, an estimator of $\theta_0$ must be used instead. Suppose that $\theta_n := \argmax_{\theta}\ell_n(\theta)$ is the maximum likelihood estimator of $\theta_0$, where 
$\ell_n(\theta) := \sum_{i = 1}^n\left[ A_i \log\pi_{\theta}(W_i)) + (1-A_i)\log\{1-\pi_{\theta}(W_i)\}\right]$ is the log-likelihood of $\theta$. Then, the resulting 1:$M$ matching estimator available for use is $\psi_n:=\psi_n(\pi_{\theta_n})$.

\subsection{Large-sample behavior of the 1:$M$ matching estimator}

The large-sample properties of $\psi_n$ and its oracle counterpart $\psi_n(\pi_0)$ were carefully established in \cite{abadie2016} and \citet{abadie2006}, respectively. Here, we summarize two of their key findings, which our proposal heavily builds upon.

The first finding we focus on provides an asymptotic distributional result for the $\pi_0$--specific 1:$M$ matching estimator. Specifically, under some regularity conditions, \citet{abadie2006} showed that $n^{1/2}\left\{\psi_n(\pi_0)-\psi_0\right\}$ tends to a normal random variable with mean zero and variance $\sigma^2_M:=\sigma^2_{1}+\sigma^2_{2,M}$, where the variance components are defined as
\begin{align*}
\sigma^2_1\,&:=\, E_0\left[\frac{\bar{\sigma}^2(1,\pi_0(W))}{\pi_0(W)}+\frac{\bar{\sigma}^2(0,\pi_0(W))}{1-\pi_0(W)}+\left\{\bar{\mu}_0(1,\pi_0(W))-\bar{\mu}_0(0,\pi_0(W))-\psi_0\right\}^2\right],\\
\sigma^2_{2,M}\,&:=\,\frac{1}{2M}\, E_0\left[\bar{\sigma}^2(1,\pi_0(W))\left\{\frac{1}{\pi_0(W)}-\pi_0(W)\right\}+\bar{\sigma}^2(0,\pi_0(W))\left\{\frac{1}{1-\pi_0(W)}-1+\pi_0(W)\right\}\right],
\end{align*} and for each $a\in\{0,1\}$ and $w\in\mathcal{W}$, $\bar{\sigma}^2(a,\pi_0(w)):=var_0\{Y\,|\,A=a,\pi_0(W)=\pi_0(w)\}$ denotes the conditional variance of $Y$ given $A=a$ and $\pi_0(W)=\pi_0(w)$ computed under  $P_0$. In particular, this result highlights that, as expected,  larger values of $M$ render $\psi_n(\pi_0)$ relatively less variable in large samples, but that, irrespective of $M$, $\psi_n(\pi_0)$ is not nonparametric efficient except in  trivial cases.

The second finding central to the developments in this paper describes the extension of the above distributional results to the 1:$M$ matching estimator based on an estimated propensity score index. For technical reasons pertaining to a certain non-smoothness in the behavior of matching estimators \citep{andreou2012alternative}, the estimator studied in \cite{abadie2016} is based not on the maximum likelihood estimator $\theta_n$ but rather on a discretized version $\theta_{n,k}$ of $\theta_n$, defined as $\theta_{n,k}:=\round{kn^{1/2}\theta_n}/(kn^{1/2})$, where $k$ is a user-specified discretization constant and $\round{\cdot}:\mathbb{R}^{p+1}\rightarrow \mathbb{Z}^{p+1}$ is the componentwise nearest-integer function. If either $k$ or $n$ is large, $\theta_{n,k}$ is close to $\theta_n$ in each sample. Under regularity conditions, they established, for each $z\in\mathbb{R}$, that
\begin{equation}
\label{eq:asymtauhat}
\lim_{k \rightarrow\infty}\lim_{n \rightarrow \infty }P_0\left[n^{1/2}\left\{\psi_n(\pi_{\theta_{n,k}})-\psi_0\right\}/\sigma_{M,*} \leq z\right] = \Phi(z)
\end{equation} with asymptotic variance given by $\sigma^2_{M,*}:=\sigma_M^2-c(\theta_0)^{\top} \mathscr{I}(\theta_0)^{-1}c(\theta_0)$, where $\Phi$ denotes the standard normal distribution function, $\mathscr{I}(\theta_0):=E_0\left[\pi_0(W)\{1-\pi_0(W)\}WW^{\top}\right]$ is the Fisher information for $\theta$ at $\theta=\theta_0$ in the logistic regression model, assumed to be invertible, and the $\mathbb{R}^{p+1}$--valued vector $c(\theta_0)$ is defined as 
\begin{align*}
    c(\theta_0)\ &:=\ E_0\left[\pi_0(W)\, cov_0\left\{W,\mu_0(0,W)\,\middle|\,A=0,\pi_0(W)\right\}\right]\\
    &\hspace{.5in}+E_0\left[\{1-\pi_0(W)\}\, cov_0\left\{W,\mu_0(1,W)\,\middle|\,A=1,\pi_0(W)\right\}\right].
\end{align*}
This result suggests that the 1:$M$ matching estimator $\psi_{n}(\pi_{\theta_{n,k}})$ based on a (discretization-regularized) estimator of the propensity score is expected to be more efficient than the $\theta_0$--specific  1:$M$ matching estimator $\psi_n(\pi_0)$, with a reduction in asymptotic variance given by $c(\theta_0)^{\top}\mathscr{I}(\theta_0)^{-1}c(\theta_0)\geq 0$. Even so, $\psi_{n}(\pi_{\theta_{n,k}})$ typically still does not achieve the nonparametric efficiency bound.

\section{Propensity score augmentation for the 1:$M$ matching estimator}\label{sec:secc3}
\subsection{Using model augmentation to gain efficiency}

Given that estimation of the propensity score increased the asymptotic efficiency of the 1:$M$ matching estimator relative to the (oracle) 1:$M$ matching estimator based on the true propensity score, it is natural to wonder whether further gains are possible by augmentation of the propensity score model. To investigate this question, we first make use of the second result of \cite{abadie2016} stated above to characterize potential gains resulting from the use of a given augmentation of the propensity score model.

Suppose  $q:=(q_1,q_2,\ldots,q_m)^{\top}$ is an arbitrary $m$-dimensional augmentation function, with $q_j:\mathcal{W}\rightarrow\mathbb{R}$ satisfying that $var_0\{q_j(W)\}<\infty$ for each $j=1,2,\ldots,m$. We consider the 1:$M$ matching estimator based on a propensity score estimated using the augmented logistic regression model $\mathcal{M}(q):=\left\{\pi_{q,\theta,\gamma}:\theta\in\mathbb{R}^{p+1},\gamma\in\mathbb{R}^m\right\}$, where we define \[\pi_{q,\theta,\gamma}:w\mapsto \expit\big\{\theta^{\top} w+\gamma^{\top} q(w)\big\}\ .\] This corresponds to fitting a regression model not only with covariate vector $W$ but also a synthetic covariate vector $q(W)$ created by transformation of $W$. For any choice of $q$, the propensity score model $\mathcal{M}(q)$ is correctly specified since $\pi_0$ is recovered by setting $\vartheta:=(\theta,\gamma)$ equal to $\vartheta_{0,m}:=(\theta_0,0_m)$, where $0_m$ is an $m$--dimensional zero vector. To simplify notation, we denote $\vartheta_{0,m}$ by $\vartheta_0$ as we typically consider the number $m$ of augmentation covariates as fixed. We note that $\pi_{q,\vartheta_{0}}=\pi_0$ irrespective of $q$. Among regularity conditions needed for identification in this augmented model, the information matrix for $\vartheta$ in $\mathcal{M}(q)$ at $\vartheta=\vartheta_{0}$, given by \[\mathscr{I}_{q}(\vartheta_0):=E_0\left[\pi_0(W)\{1-\pi_0(W)\}\begin{bmatrix}WW^{\top} & Wq(W)^{\top}\\ q(W)W^{\top} & q(W)q(W)^{\top} \end{bmatrix}\right],\] is assumed to be invertible.  We denote by $\vartheta_{q,n}:=\argmax_{\vartheta}\ell_{q}(\vartheta;D_n)$ the maximum likelihood estimator of $\vartheta_{0}$, where $\ell_{q}(\vartheta;D_n):= \sum_{i = 1}^n\left[ A_i \log\pi_{q,\vartheta}(W_i)) + (1-A_i)\log\{1-\pi_{q,\vartheta}(W_i)\}\right]$ is the log-likelihood of $\vartheta$ based on model $\mathcal{M}(q)$ for the propensity score. Using once more the results of \cite{abadie2016}, we find that the (sample size-scaled) asymptotic variance of the 1:$M$ matching estimator $\psi_{n,q,k}:=\psi_n(\pi_{q,\vartheta_{q,n,k}})$ using an estimated propensity score based on model $\mathcal{M}(q)$ and a discretized regularization is given by $\sigma^2_{q,M}:=\sigma^2_M-\mathrm{gain}(q)$, where we define $\mathrm{gain}(q):=c_{q}(\vartheta_0)^{\top}\mathscr{I}_{q}(\vartheta_0)^{-1}c_{q}(\vartheta_0)$ and write $c_{q}(\vartheta_0):=(c(\theta_0)^{\top},\bar{c}_{q}(\vartheta_0)^{\top})^{\top}$ with
\begin{align*}
    \bar{c}_{q}(\vartheta_0)\ &:=\ E_0\left[\pi_0(W)\, cov_0\left\{q(W),\mu_0(0,W)\,\middle|\,A=0,\pi_0(W)\right\}\right]\\
    &\hspace{.5in}+E_0\left[\{1-\pi_0(W)\}\, cov_0\left\{q(W),\mu_0(1,W)\,\middle|\,A=1,\pi_0(W)\right\}\right].
\end{align*} This implies, in particular, that there is no loss but only potential gains in terms of asymptotic variance from augmenting the propensity score model with a covariate vector $q(W)$.

\subsection{Maximizing the efficiency gain}

Given the potential efficiency gains resulting from augmentation by any covariate vector $q(W)$, we consider constructing a 1:$M$ matching estimator with a propensity score estimated using an augmented model $\mathcal{M}(q)$. However, as the expression for $\textrm{gain}(q)$ implies, the efficiency gain possible depends on the choice of $q$. Thus, many candidates for $q$ could be considered to increase or even perhaps maximize the efficiency gain. In the following theorem, we display an augmented propensity score model through which the efficiency gain is maximized. Before stating our formal result, we define the transformation $h_0:\mathcal{W}\rightarrow\mathbb{R}$ pointwise as 
\begin{equation*}
h_0(w) := \frac{\mu_0(1, w)  - \bar{\mu}_0(1,\pi_0(w))}{\pi_0(w)} + \frac{\mu_0(0, w)  - \bar{\mu}_0(0,\pi_0(w))}{1-\pi_0(w)}\ .
\end{equation*} We also denote by $\mathcal{Q}_0$ the set obtaining by collecting, for each $m\in\{1,2,\ldots\}$, each augmentation function $q:\mathcal{W}\rightarrow \mathbb{R}^{m}$ such that $E_0\{ q_j(W)^2\}<\infty$ for each $j=1,2,\ldots,m$ and $\mathscr{I}_{q}(\vartheta_0)$ is invertible. The following theorem describes the efficiency gain resulting from using \[\mathcal{M}(h_0)=\left\{w\mapsto \expit\big\{\theta^{\top} w+\gamma h_0(w)\big\}:\theta\in\mathbb{R}^{p+1},\gamma\in\mathbb{R}\right\},\]the propensity score model augmented with the synthetic covariate $h_0(W)$. The theorem requires four onditions to hold. The first three are as follows:

\setcounter{condition}{0}
\begin{condition}
\label{cond:pos:appendix}
Positivity: for some constants $u$ and $l$ such that $0 < l < u <1$, we have that $l < \pi_0(W) < u$ almost surely.
\end{condition}

\begin{condition} 
\label{cond:qmd:appendix}
Boundedness: $\mathcal{W}$ is a bounded subset of $\mathbb{R}^{p+1}$. 
\end{condition}

\begin{condition}
\label{cond:iden:appendix}
Identifiability: (i) consistency: for each observation $Y = AY(A) + (1-A)Y(1-A)$; (ii) no unmeasured confounding: for $a = 0,1$ $Y(a) \indep A | W $; (iii) no interference:  the random variables $Y(1), Y(0)$ and $A$ are such that for every $i\neq j$, we have that $Y_i(0) \indep A_j$, and $Y_i(1) \indep A_j$.
\end{condition}

The fourth condition --- referred to as Condition \ref{cond:bounds:appendix}--- imposes restrictions on the conditional mean of the outcome given the augmented propensity score and treatment status. More specifically, for $a \in \{0,1\}$, the condition requires that $E_0[Y|A= a, \pi_{\theta_0}(W) = p]$ is L-Lipschitz continuous in $p$ and uniformly bounded over $\theta_0$ and $p$. Additionally, Condition \ref{cond:bounds:appendix} also requires the moment condition that there exists a $\delta>0$ such that $E[|Y|^{2+\delta}\mid A = a, \pi_0(W) = p]$ is uniformly bounded over $\theta_0$ and $p$. For brevity, the detailed form of the condition is deferred to Supplementary Appendix~\ref{sec:prelim}.

\begin{theorem}[Gain in efficiency from using augmentation]
\label{theo1}
Suppose Conditions 
\ref{cond:pos:appendix}-\ref{cond:iden:appendix} and \ref{cond:bounds:appendix} hold. The following statements hold true:
\begin{enumerate}[(a)]
\item $\mathrm{gain}(h_0)=\sup_{q\in\mathcal{Q}_0}\mathrm{gain}(q)$;
\item $\mathrm{gain}(h_0)=E_0\left[\pi_0(W)\{1-\pi_0(W)\}\{h_0(W)\}^2\right]$.
\end{enumerate}
\end{theorem}
This result is significant since it indicates that a one-dimensional augmentation covariate suffices to achieve maximal efficiency gain among all possible augmented propensity score models. Moreover, it indicates that if $\mu_0(a,w)=\bar{\mu}_0(a,\pi_0(w))$ for $P_0$--almost every $(a,w)$ --- this occurs, for example, if $W$ is one-dimensional --- then there is no efficiency to be gained, and in fact, the 1:$M$ matching estimator has the same asymptotic variance irrespective of whether the true propensity score is used or instead estimated. The optimal gain depends on the unknown distribution $P_0$ --- we provide some insights on when larger gains can be expected in Section \ref{subsec:releffgd}.

The following corollary provides a comparison between the asymptotic variance of the augmentation-based 1:$M$ matching estimator and the nonparametric efficiency bound \[\sigma^2_{NP}:=E_0\left[\frac{\sigma^2(1,W)}{\pi_0(W)}+\frac{\sigma^2(0,W)}{1-\pi_0(W)}+\left\{\mu_0(1,W)+\mu_0(0,W)-\psi_0\right\}^2\right],\]where $\sigma^2(a,w):=var_0\left(Y\,|\,A=a,W=w\right)$ is the conditional variance of $Y$ given $(A,W)=(a,w)$.

\begin{corollary}[Gain in efficiency from using optimal univariate augmentation] Under the conditions of Theorem~\ref{theo1}, the difference $\delta_M:=\sigma^2_M-\mathrm{gain}(h_0)-\sigma^2_{NP}$
between the asymptotic variance of the optimally-augmented 1:$M$ matching estimator and the nonparametric efficiency bound is given by
\begin{align*}
&\delta_M\,=\,\frac{1}{2M}\,E_0\left[\left\{\frac{1}{\pi_0(W)}-\pi_0(W)\right\}\left\{\sigma^2(1,W)+\zeta(1,W)\right\}\right.\\
&\hspace{1.2in}\left.+\left\{\frac{1}{1-\pi_0(W)}-1+\pi_0(W)\right\}\left\{\sigma^2(0,W)+\zeta(0,W)\right\}\right]\,\geq\,0\ ,
\end{align*}where we define $\zeta(a,w):=var_0\left\{\mu_0(a,W)\,|\,\pi_0(W)=\pi_0(w)\right\}$.
\end{corollary}

In particular, this result indicates that the asymptotic inefficiency of the optimally-augmented 1:$M$ matching estimator --- the amount by which its (sample size-scaled) asymptotic variance exceeds the nonparametric efficiency bound --- can be made arbitrarily small by choosing a large value of $M$. However, our theoretical guarantee pertains to the augmented matching estimator with fixed $M$, and does not consider the case where $M$ increases with sample size. In contrast, \cite{he2024propensity} study the asymptotics of an unaugmented matching estimator based on a diverging number of matches, showing the asymptotic variance of such an estimator is lower than that of the unaugmented matching estimator, regardless of $M$. Additionally, under the $M\rightarrow \infty$ regime, the unaugmented matching estimator can attain the nonparametric efficiency bound under the condition that $\mu_0(a, w) = \bar{\mu}_0(a,\pi_0(w))$ and $\sigma^2(a,w)= \bar{\sigma}^2(a,w)$, for $a =0,1$ and all $w \in \mathcal{W}$. When $\pi_0$ is injective this condition directly holds, but otherwise it can be quite strong, especially if $W$ is multivariate. In contrast, our result does not rely on $\mu_0(a, w) = \bar{\mu}_0(a,\pi_0(w))$ or $\sigma^2(a,w)= \bar{\sigma}^2(a,w)$ for $\sigma^2_M-\mathrm{gain}(h_0)$ to be arbitrarily close to ~$\sigma^2_{NP}$. In future work, it would be interesting to apply the theoretical developments from \cite{he2024propensity} to study whether the optimally-augmented 1:$M$ matching estimator achieves the asymptotic efficiency bound in an appropriate $M\rightarrow\infty$ regime. 

\subsection{Estimating the optimal augmentation covariate}

In practice, since the optimal augmentation function $h_0 $ is unknown, it must be estimated. In this section, we establish that, under regularity conditions, the augmented 1:$M$ matching estimators using an estimator of $h_0$ is asymptotically equivalent to the corresponding estimator using $h_0$ itself.  We show this in two steps. First,  in Proposition \ref{propp:1}, we consider the behavior of the 1:$M$ matching estimator resulting from use of the augmented propensity score model $\mathcal{M}(h_{0n})$, where $h_{01},h_{02},\ldots$ is a sequence of deterministic functions such that $\int \left\{h_{0n}(w)-h_0(w)\right\}^2P_{W,0}(dw)$ tends to zero. Here, $P_{W,0}$ denotes the marginal distribution of $W$ implied by $P_0$. This result allows accounting for the fact that, in applications, a proxy of $h_0$ would be used instead of $h_0$. It does not however account for the randomness of this proxy. Thus, in Corollary \ref{corp:2}, we show that the results of Proposition \ref{propp:1} still hold when the proxy used for $h_0$ is a random function $h_n$ constructed on an independent sample such that $\int\left\{h_{n}(w)-h_0(w)\right\}^2P_{W,0}(dw)$ tends to zero $P_0$--almost surely.

Our theoretical results establishing the asymptotic normality of our proposed matching estimator require nine regularity conditions. Conditions \ref{cond:pos:appendix}-\ref{cond:iden:appendix} and \ref{cond:bounds:appendix} have been described in Theorem \ref{theo:theo1} above. 
Condition \ref{cond:qmd2:appendix} pertains to $\mathcal{H}$, a collection of $\mathcal{W}\rightarrow\mathbb{R}$ functions that contains the sequence of augmentation covariate functions utilized in Proposition \ref{propp:1} as well as $h_0$. 

\begin{condition} 
\label{cond:qmd2:appendix}
Conditions on $\mathcal{H}$:
(i) $\mathcal{H}$ is a collection of bounded $\mathcal{W}\rightarrow\mathbb{R}$ functions, so that  \allowbreak
$\sup_{\hd \in\mathcal{H}}\sup_{w\in\mathcal{W}} |\hd(w)|<\infty$. 
(ii) The class of functions $\mathcal{H}$ is totally bounded in $L^2(P_{W,0})$. (iii) The class of functions $\mathcal{H}$ is such that for every $\hd \in \mathcal{H}$, $E(r_\hd(W)r_\hd(W)^{\rm{T}})$ is positive definite.
\end{condition}
Condition \ref{cond:ps:appendix} pertains to the evaluation of the propensity score at $W$, and is also utilized by \cite{abadie2016}.
\begin{condition}
\label{cond:ps:appendix}
The evaluation of the propensity score $\pi$ at $W$ is continuously distributed, with continuous density.
\end{condition}
Condition \ref{cond:mle:appendix} requires identifiability of $\vartheta_0$ and is deferred to Supplementary Appendix ~\ref{sec:prelim}. Finally, Conditions \ref{cond:lambdan:appendix} and \ref{cond:ulan:last:appendix} are utilized to derive the limits of specific terms that arise when deriving asymptotic distribution of the matching estimator. An interpretation of these technical conditions is provided in the Supplementary Material. All supplementary conditions are also stated in the Appendix. 

We employ the strategy of \cite{abadie2016} to derive large-sample distributional results for the 1:$M$ matching estimator based on a discretization-regularized maximum likelihood propensity score estimator within an augmented model.  As before, given a discretization constant $k>0$, we define a partition of cubes with sides of length $(kn^{1/2})^{-1}$ in $\mathbb{R}^{p+2}$. For a given augmentation function $q$, the discretized estimator $\vartheta_{q,n,k}$ is set to equal the midpoint of the cube to which $\vartheta_{q,n}$ belongs. In our notation, the discretized estimator is thus defined as $\vartheta_{q,n,k}:=\round{kn^{1/2}\vartheta_{q,n}}/(kn^{1/2})$, where, similarly as before, $s:\round{\cdot}:\mathbb{R}^{p+2} \rightarrow \mathbb{Z}^{p+2}$ is the componentwise nearest-integer function. The 1:$M$ matching estimator based on a discretization-regularized maximum likelihood propensity score estimator with augmentation functions $h_{0n}$ and $h_n$, respectively, are then given by $\psi_{n,h_{0n},k}:=\psi_n(\pi_{h_{0n},\vartheta_{h_{0n},n,k}})$ and $\psi_{n,h_{n},k}:=\psi_n(\pi_{h_{n},\vartheta_{h_n,n,k}})$. 

Given the sequence $h_{01},h_{02},\ldots$ of deterministic functions approximating the optimal augmentation function $h_0$, Proposition \ref{propp:1} states that $n^{1/2}\left(\psi_{h_{0n},n,k}-\psi_0\right)$ tends to a mean-zero normal distribution with oracle variance $\sigma^2_{M,\mathrm{opt}}:=\sigma^2_M-\mathrm{gain}(h_0)$. 

\begin{proposition}[Guarantee when a fixed sequence of augmentation covariates is used]
\label{propp:1}
Suppose Conditions \ref{cond:pos:appendix}-\ref{cond:ps:appendix} and \ref{cond:bounds:appendix}-\ref{cond:ulan:last:appendix} hold. 
Then, for each $z\in\mathbb{R}$, it holds that
\begin{align*}
\lim_{k \downarrow 0 }\lim_{n \rightarrow \infty }P_0\left[n^{1/2}\left(\psi_{n,h_{0n},k}-\psi_0\right)/\sigma_{M,\mathrm{opt}} \leq z\right] = \Phi(z)\ .
\end{align*}
\end{proposition}

The above result gives a large-sample distributional approximation relevant to the matching estimator based on the discretization-regularized propensity score estimator involving augmentation by $h_{0n}$. While the involved probability sequence is based on sampling from the true distribution $P_0$ at each sample size $n$, a stronger result wherein the probability sequence is based on sampling from certain local perturbations of $P_0$ can also be proven, and is established in Proposition 
\ref{prop1} in the Supplementary Material. This strengthened result ascribes a certain level of regularity to the matching estimator studied, which is particularly relevant given the irregularity that may perhaps be expected given how non-smooth the matching operation is. A key distinction between this result and the main result of \cite{abadie2016} is that the propensity score model used here incorporates a covariate --- in our case, an augmentation covariate --- that is changing with $n$ but eventually settles down. Nevertheless, in practice, it would be the case that $h_0$ is unknown and no deterministic approximating sequence $h_{0n}$ is available, limiting the applicability of the result. The next result overcomes this limitation by allowing use of a random approximating sequence $h_n$ in the propensity score augmentation step.

To allow consideration of a random estimator $h_n$ of the optimal augmentation function $h_0$, we require $h_n$ to be constructed on a different dataset than  the resulting (discretized) maximum likelihood estimator $\vartheta_{h_n,n,k}$ of $\vartheta_{h_n,0}$ in model $\mathcal{M}(h_n)$ and matching estimator $\psi_{n,h_n,k}$. In practice, this implies that the original dataset must be partitioned into two subsets: subset $A_n$ used for estimation of $h_0$, say of size $1<m_n<n$, and subset $B_n$ used for the remainder of the matching estimation procedure, of size $n_{\mathrm{eff}}:=n-m_n$, with $m_n$ tending to infinity with $n$.  We are then able to prove that the same large-sample distributional result holds for the 1:$M$ matching estimator based on a discretization-based estimate of the propensity score model augmented with an estimator of the optimal augmentation function as with the (oracle) optimal augmentation function itself, except for the loss of effective sample size resulting for the need to estimate $h_0$ on a separate sample.

\begin{corollary}[Guarantee when augmentation covariate is estimated]
\label{corp:2}
Suppose Conditions \ref{cond:pos:appendix}-\ref{cond:ps:appendix} and \ref{cond:bounds:appendix}-\ref{cond:ulan:last:appendix} hold, and that $\int \left\{h_n(w) - h_0(w)\right\}^2P_{W,0}(dw)$ tends to zero $P_0$--almost surely. Then, for each $z\in\mathbb{R}$, it $P_0$--almost surely holds that
\begin{align*}
  \lim_{k \rightarrow\infty}\lim_{n \rightarrow \infty }P_0\left[n_{\mathrm{eff}}^{1/2}\left(\psi_{n,h_n,k}-\psi_0\right)/\sigma_{M,\mathrm{opt}} \leq z\,\middle|\,A_n\right] = \Phi(z)\ .
\end{align*}
\end{corollary}

To estimate the optimal augmentation function $h_0$, we must first estimate the outcome regression $\mu_0$ and the propensity-reduced outcome regression $\bar{\mu}_0$. Flexible regression techniques can be used for this purpose. The nuisance functions $\mu_0$ and $\bar{\mu}_0$ can naturally be estimated in sequence. First, an estimate of $\mu_0(a,w)$ can be obtained by using any regression technique for learning the mean of outcome $Y$ given $W=w$ within the subset of observations with $A=a$. Regression tools used can include parametric, semiparametric, or nonparametric approaches, or a combination thereof pooled via an ensembling strategy. Then, based on the fact that \[\bar{\mu}_0(a,p)=E_0\left\{Y\,\middle|\,A=a,\pi_0(W)=\pi_0(w)\right\}=E_0\left\{\mu_0(a,W)\,\middle|\,A=a,\pi_0(W)=\pi_0(w)\right\},\]an estimate of $\bar{\mu}_0(a,p)$ can be obtained using a mean regression of outcome $\mu_n(A,W)$ on $\pi_n(W)$ within the subset of observations with $A=a$, where $\pi_n$ is a model-based estimator of the propensity score. Here again, a variety of regression tools could be considered. This nested regression strategy is more likely to result in estimates of $\mu_0(a,w)$ and $\bar{\mu}_0(a,\pi_0(w))$ that are compatible with each other, and hence may be preferable in finite samples compared to the use of separate regressions to estimate these nuisance functions. We summarize the steps involved in constructing the final matching estimator in Algorithm \ref{al1}.

\begin{algorithm}[ht]
\begin{algorithmic}[1]
    \vspace{.15in}
    \State \textbf{partition data into two subsets $A_n$ and $B_n$;}
    \vspace{.1in}
    \State \textbf{using subset $A_n$ of data, obtain $h_n$ as follows:}
    \vspace{.05in}
    \State \quad compute MLE $\theta_n$ of $\theta_0$ in $\mathcal{M}_0$ by performing logistic regression of $A$ on $W$, and set $\pi_n:=\pi_{\theta_n}$;
    \State \quad for $a \in \{0, 1\}$, obtain $\mu_n(a,\cdot)$ by regressing $Y$ on $W$ using observations with $A=a$;
    \State \quad for $a \in \{0, 1\}$, obtain $\bar{\mu}_n(a,\cdot)$ by regressing $\mu_n(A,W)$ on $\pi_{n}(W)$ using observations with $A=a$.
    \State \quad define  $h_n:w\mapsto \{\mu_n(1,w)-\bar{\mu}_n(1,\pi_n(w)\}/\pi_n(w)+\{\mu_n(0,w)-\bar{\mu}_n(0,\pi_n(w)\}/\{1-\pi_n(w)\}$;
    \vspace{.1in}
    \State \textbf{using subset $B_n$ of data, compute $\psi_{n,h_n,k}$ as follows:}
    \vspace{.05in}
    \State \quad compute MLE $\vartheta_{h_n,n}$ of $\vartheta_{0}$ in $\mathcal{M}(h_n)$ by performing logistic regression of $A$ on $(W,h_n(W)$;
    \State \quad compute 1:$M$ matching estimator $\psi_{n,h_n,k}$ based on matching using $\pi_{h_n,\vartheta_{h_n,n,k}}$.
  \vspace{.1in}
\end{algorithmic}
\caption{Optimally-augmented propensity score 1:$M$ matching with sample-splitting.}
\label{al1}
\end{algorithm}

In practice, the size $m_n$ of $A_n$ can be chosen to be relatively small compared to the size $n_{\mathrm{eff}}$ of $B_n$ since, while $h_n$ must be consistent in a suitable sense, there is no minimal convergence rate required. In our simulation studies, we selected $m_n=0.05\times n$ to minimize the variance inflation induced by sample-splitting. We note that relatively poor estimation of $h_0$, due, for example, to the use of insufficiently flexible regression techniques or an overly small value of $m_n$, is not expected to be problematic since \textit{any} augmentation of the propensity score model is expected to possibly help but never hurt the performance of the resulting matching estimator. 

\section{Numerical studies}\label{sec:secc4}

\subsection{Motivating questions}

The theory provided in this paper suggests that efficiency gains are possible via propensity score model augmentation, and that a regularization of the resulting 1:$M$ matching estimator has desirable large-sample properties. Nevertheless, there are critical questions that are best addressed through numerical studies, including: \begin{enumerate}
    \item how substantial can the theoretical efficiency gains derived from augmentation be, and when can meaningful gains be expected?
    \item does the proposed matching estimator achieve predicted efficiency gains in practice, and is the normal approximation derived useful to describe the finite-sample behavior of the estimator?
    \item does the use of sample-splitting to estimate the optimal augmentation function and calculate the resulting estimator appear critical to the validity of the derived normal approximation?
\end{enumerate}
We begin by investigating the first question in the context of a simple data-generating mechanism for which certain closed-form results can be derived based on our analytic expressions. We then perform simulation studies using a collection of scenarios in order to provide additional insight on the first question and to address the second and third questions as well.

\subsection{Illustration of predicted gains}\label{subsec:releffgd}

The efficiency gain resulting from propensity score model augmentation depends on various aspects of the data-generating mechanism $P_0$. Based on the form of the optimal augmentation function, we expect that the gain will tend to be higher when the contrast between the outcome regression and propensity-reduced outcome regression is large. This occurs, for example, when $W$ includes variables that are strongly associated with the outcome but less so with treatment. 

We considered a simple data-generating mechanism based upon which to perform some analytic and numeric efficiency gain computations. Specifically, we considered a setting in which $W:=(W_1,W_2)$ consists of two independent mean-zero normal random variables with unit variance, $A$ follows a Bernoulli distribution with success probability $\pi_0(w_1,w_2):=\expit\left(\theta_0+\theta_1w_1+\theta_2w_2\right)$ conditionally upon $W=(w_1,w_2)$, and $Y$ is normal random variable with mean \[\mu_0(a,w):=a\left(\beta_0+\beta_1w_1+\beta_2w_2\right)+(1-a)\left(\gamma_0+\gamma_1w_1+\gamma_2w_2\right)\] and variance $\sigma^2$ conditionally upon $W=(w_1,w_2)$ and $A=a$. The average treatment effect equals $\beta_0-\gamma_0$. Furthermore, the propensity-reduced outcome regression $\bar{\mu}_0(a,\pi_0(w))$ can be shown to equal \[a\left\{\beta_0+(\beta_1\theta_1+\beta_2\theta_2)\left(\frac{\theta_1w_1+\theta_2w_2}{\theta_1^2+\theta_2^2}\right)\right\}+(1-a)\left\{\gamma_0+(\gamma_1\theta_1+\gamma_2\theta_2)\left(\frac{\theta_1w_1+\theta_2w_2}{\theta_1^2+\theta_2^2}\right)\right\},\] which implies, for example, that \[\mu_0(1,w)-\bar{\mu}_0(1,\pi_0(w))=\frac{\theta_2(\beta_1\theta_2-\beta_2\theta_1)w_1+\theta_1(\beta_2\theta_1-\beta_1\theta_2)w_2}{\theta_1^2+\theta_2^2}.\]A similar expression holds for  $\mu_0(0,w)-\bar{\mu}_0(0,\pi_0(w))$, replacing $\beta_j$ by $\gamma_j$ for each $j=1,2,3$. Suppose that $W_2$ is a precision variable, in the sense that it has an effect on $Y$ but not on $A$; in the context of this data-generating mechanism, this is achieved by setting $\theta_2=0$ but ensuring that either $\beta_2\neq 0$ or $\gamma_2\neq 0$. In this case, the difference in outcome regressions simplify to $\mu_0(1,w)-\bar{\mu}_0(1,\pi_0(w))=\beta_2w_2$ and $\mu_0(0,w)-\bar{\mu}_0(0,\pi_0(w))=\gamma_2 w_2$. For simplicity, we may consider the case in which $\gamma_2=\beta_2$ so that there is no effect modification by $W_2$. In this case, we can explicitly calculate the nonparametric efficiency bound to be  \[\sigma^2_{NP}=2\sigma^2\left\{1+\exp(\theta_1^2/2)\right\}+(\gamma_1-\beta_1)^2\ .\] The asymptotic variances of the 1:$M$ matching estimator based on unaugmented and optimally augmented propensity score models are, respectively, $\sigma^2_{M,*}=\sigma^2_{M}-\mathrm{gain}(\varnothing)$ and $\sigma^2_{M,\mathrm{opt}}=\sigma^2_M-\mathrm{gain}(h_0)$, where we can explicitly compute that \[\sigma^2_M=\sigma^2_{NP}+2\beta_2^2\left\{1+\exp(\theta_1^2/2)\right\}+\delta_M\]with $\delta_M:=(\sigma^2+\beta_2^2)\,\{1+2\exp(\theta_1^2/2)\}/(2M)$, and furthermore, as expected, that\[
\mathrm{gain}(\varnothing)=\frac{\beta_2^2}{E_0\left[\pi_0(W)\{1-\pi_0(W)\}\right]}\ \leq\  \beta_2^2\,E_0\left[\frac{1}{\pi_0(W)\{1-\pi_0(W)\}}\right]=\mathrm{gain}(h_0)\ .\] Additionally, further simplification yields that  $\mathrm{gain}(h_0)=2\beta_2^2\,\{1+\exp(\theta_1^2/2)\}$. We thus find that the relative efficiency of the unaugmented 1:$M$ matching estimator to its augmented counterpart equals \begin{align*}
    \mbox{relative efficiency}\ &=\ 1+\bar{\beta}_2^2\left[\frac{2\{1+\exp(\theta_1^2/2)\}-1/E_0\left[\pi_0(W)\{1-\pi_0(W)\}\right]}{2\{1+\exp(\theta_1^2/2)\}+(\bar{\gamma}_1-\bar{\beta}_1)^2+\frac{1}{2M}(1+\bar{\beta}_2^2)\{1+2\exp(\theta_1^2/2)\}}\right],
\end{align*}where we have defined the standardized coefficients $\bar{\beta}_1:=\beta_1/\sigma$, $\bar{\beta}_2:=\beta_2/\sigma$ and $\bar{\gamma}_1:=\gamma_1/\sigma$. In Figure \ref{releff_example}, for different choices of $M$, we display the relative efficiency as a function of $\theta_1$ for $\bar{\beta}_2=1$ and as a function of $\bar{\beta}_2$ for $\theta_1=1$. In these displays, for simplicity, we have set $\bar{\beta}_1=\bar{\gamma}_1$, corresponding to the absence of effect modification by $W_1$. We observe that the relative gains in efficiency resulting from the use of augmentation compared to standard propensity score-based 1:$M$ matching grow with the strength of the associations between $Y$ and $W_2$  and between $A$ and $W_1$, and that these gains can be substantial even under moderate associations. If any of these two associations is null, then no gains in efficiency result from augmentation. We also note that, as expected, the relative efficiency of augmentation grows with the number $M$ of matches though with diminishing marginal improvements as $M$ grows.
\begin{figure}
\label{releff_example}
\centering
\includegraphics[scale=0.8]{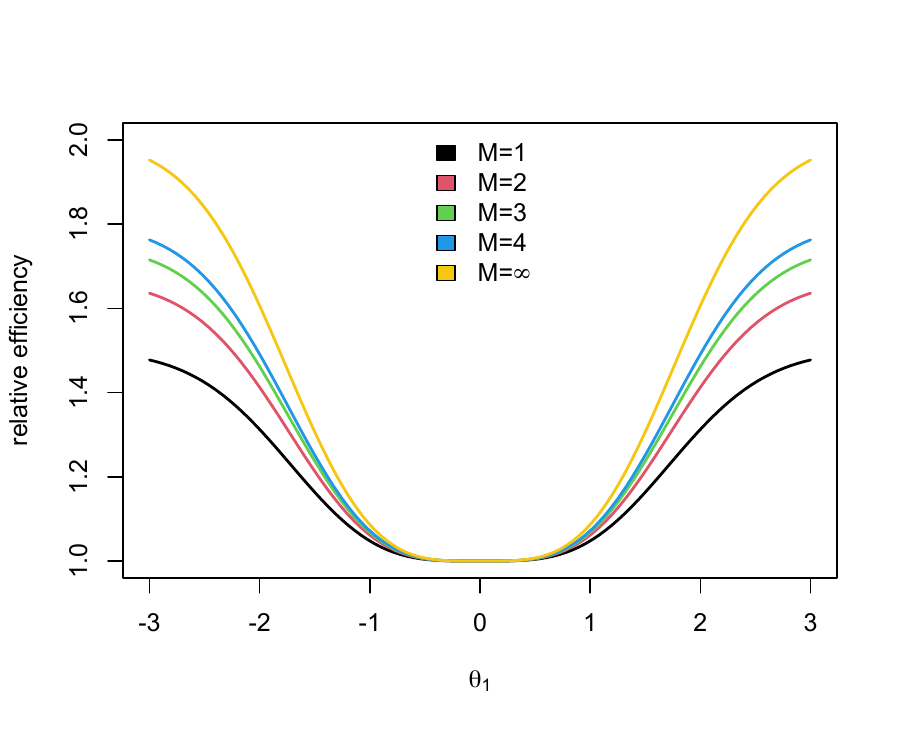}\vspace{-.6in}
\includegraphics[scale=0.8]{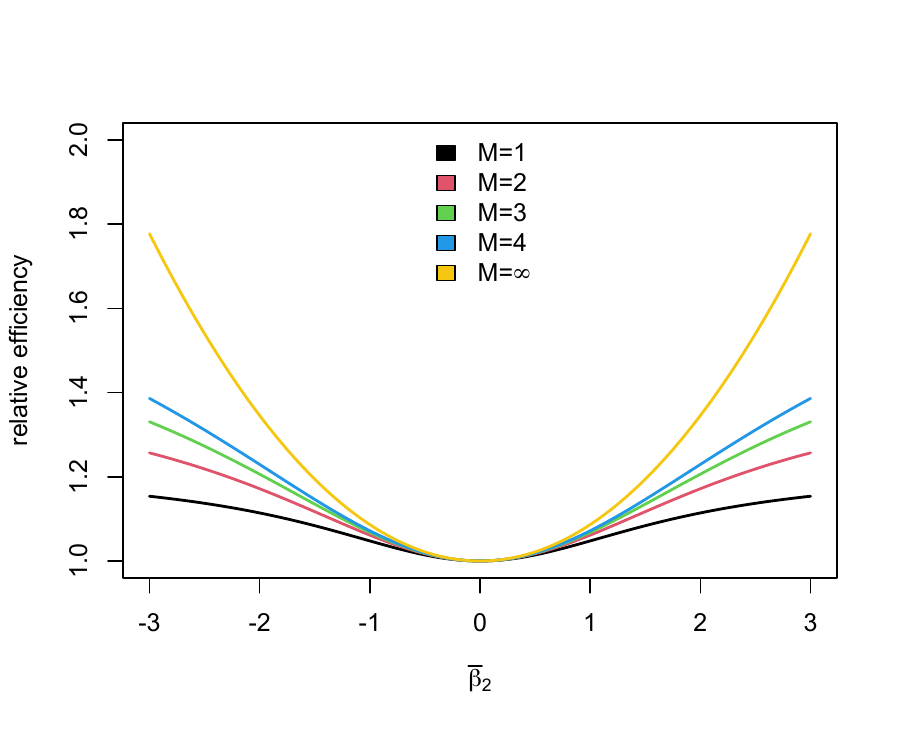}
\caption{Relative efficiency of 1:$M$ matching estimator with optimal propensity augmentation versus unaugmented propensity in analytic example described in Section 4.1. The left plot displays the relative efficiency as a function of $\theta_1$ for $\bar{\beta}_2=1$. The right plot displays the relative efficiency as a function of $\beta_2$ for $\theta_1=1$. In both plots, we have set  $\bar{\beta}_1=\bar{\gamma}_1=1$, and show the relative efficiency value for different values of $M$.}
\end{figure}

\subsection{Simulation studies with sample-splitting}

To verify that the proposed estimation procedure behaves as suggested by our theoretical results, we ran simulations under each of four different scenarios. In each simulation scenario, data were generated as follows. First, a covariate vector $W:=(W_1,W_2)$ was generated, where $W_1$ is a uniform random variable that takes values between -2 and 2, and $W_2$ is a standard binary random variable. Then, given $W=(w_1,w_2)$, a Bernoulli random variable $A$ with success probability $\pi_0(w_1,w_2)$ was drawn. Finally, given $W=(w_1,w_2)$ and $A=a$, a normal random variable $Y$  with mean $\mu_0(a,w)$ and unit variance was generated. Table \ref{datagen} provides the specific form of $\pi_0(w_1,w_2)$ and $\mu_0(a,w)$ defining the four simulation scenarios considered. 
They encompass settings where: both covariates are confounders (scenario 1), one of the two measured covariates are either a precision variable (scenario 2) or an instrument (scenario 3), and a combination where one covariate is a weak instrument and the second is nearly a precision variable (scenario 4). Lastly, in all settings, one or both confounders also serve as effect modifiers.

\begin{table}
\begin{center}\begin{tabular}{|C{3.8em}||C{9em}|C{9em}|C{9em}|}
\hline
scenario & $\logit \pi_0(w)$   & $\mu_0(1,w)$ & $\mu_0(0,w)$       \\ \hline\hline
1  & $0.2 + 1.5 w_1 - w_2$ & $2 + 4 w_1 - 3w_2$                & $1 + 3w_1 + 3w_2$            \\ 
2  & $0.2 + 1.5 w_1$ & $2 + 4 w_1 - 3w_2$                & $1 + 3w_1 + 3w_2$            \\ 
3        & $0.2 + 1.5 w_1 - w_2$  & $2 - 3w_2$           & $1 + 3w_2$        \\ 
4        & $0.2 + 1.5 w_1 - 0.1 w_2 $           & $2 +0.1 w_1 - 3w_2$         & $1 + 0.1 w_1 + 3w_2$    \\ \hline
\end{tabular}\end{center}
\caption{Description of data-generating mechanism for various simulation scenarios considered.}
\label{datagen}
\end{table}

To validate numerically our asymptotic theory, we generated 5,000 random datasets of size 5,000 in each scenario outlined. We implemented a 95/5 sample-splitting scheme. Specifically, we split each dataset randomly into two parts, one with $n_\mathrm{eff}= 4,750$ observations and the other with $m_n=250$. The smaller sample was used to obtain an estimate $h_n$ of the optimal augmentation function $h_0$, whereas the larger sample was used to compute the 1:1 matching estimator based on propensity score augmentation by $h_n$. All comparisons between empirical and theoretical variances were based on sample size $n_{\mathrm{eff}}$.
To construct $h_n$, we fitted (correctly-specified) parametric regression models to estimate $\mu_0$ and $\pi_0$, and estimated $\bar{\mu}_0$ using the super learner \citep{van2007super} with the following candidate prediction algorithms: highly adaptive lasso estimator \citep{benkeser2016highly}, gradient-boosted regression trees \citep{chen2016xgboost}, and multivariate adaptive regression splines \citep{friedman1991multivariate}. We implemented these estimators using the R package \texttt{sl3} \citep{coyle2021sl3-rpkg}.

\begin{table}
\begin{center}\begin{tabular}{|C{3.55em}||C{4em}|C{4em}|C{4em}|C{4em}|C{4em}|C{4em}|C{4em}|}
\hline
scenario & efficiency bound  & theor. var. (aug) & emp. var. (aug) &  asympt. bias (aug) & theor. var. (no aug) & emp. var. (no aug) &  asympt. bias (no aug)\\ \hline\hline 
1 & 20.22 & 30.55 & 30.69 & 0.09 & 34.22 & 31.39 & 0.02 \\ 
2 &  19.17 & 31.82 & 33.73 & 0.04 & 42.62 & 40.98 & 0.12 \\ 
3 & 18.83 & 28.97 & 30.99 & 0.03 & 34.62 & 34.24 & 0.02 \\ 
4 &  17.78 & 29.94 & 29.75 & 0.04 & 39.99 & 38.12 & 0.08  \\ 
  \hline
\end{tabular}\end{center}
\caption{Theoretical versus empirical variance of the 1:1 matching with augmented propensity score estimator and 95/5 sample-splitting across simulation scenarios at sample size 5,000. The second, third, and sixth columns provide the variance suggested by theory for a nonparametric efficient estimator, the optimally augmented matching estimator based on sample-splitting, and the unaugmented matching estimator. The fourth and seventh columns display the empirical Monte Carlo variance of the augmented (sample-split) matching estimator and the unaugmented matching estimator implemented over 5,000 simulated datasets. The fifth and eighth columns show the Monte Carlo bias of the augmented and unaugmented estimators, respectively.} 
\label{table2}
\end{table}

Results of this simulation are summarized in Table \ref{table2}. Overall, the empirical variance of the augmented 1:1 matching estimator (fourth column) was close to the variance predicted by our theoretical results once adjusted for sample size depletion due to sample-splitting (third column), though in general the empirical variance was slightly above the theoretical variance. In these scenarios, none of the estimation procedures considered achieved nonparametric efficiency. In scenarios 2 and 4, in which $W_2$ was a strong precision variable, augmentation led to a significant improvement in efficiency, even despite the reduction in effective sample size resulting from sample-splitting. In scenarios 1 and 3, the augmented and unaugmented estimators had a similar variance.

\subsection{Simulation studies without sample-splitting}

A potential drawback of our proposed augmentation method is that it relies on sample-splitting. This reduces the effective sample size and introduces a tuning parameter: the proportion of the data to use for estimation of the optimal augmentation function. Given these observations, it is natural to consider foregoing sample-splitting altogether.

Through a simulation study, we examined the extent to which the behavior of the augmented propensity score 1:1 matching estimator differs when sample-splitting is used versus not. While our theoretical results for the augmented 1:$M$ matching estimator have only been shown to be valid when sample-splitting is used, as highlighted above, it is of practical interest to explore whether this is necessary or merely sufficient for the sake of our technical arguments. In this simulation study, we considered the same four data-generating mechanisms previously described. For each scenario, we compared the performance of the augmented matching estimator without sample-splitting to the unaugmented matching estimator.

The implementation of the augmented estimator was the same as described in the previous subsection except for the use of the full sample for estimation of both the optimal augmentation function and the average treatment effect.

\begin{figure}
\centering
\includegraphics[scale=0.4]{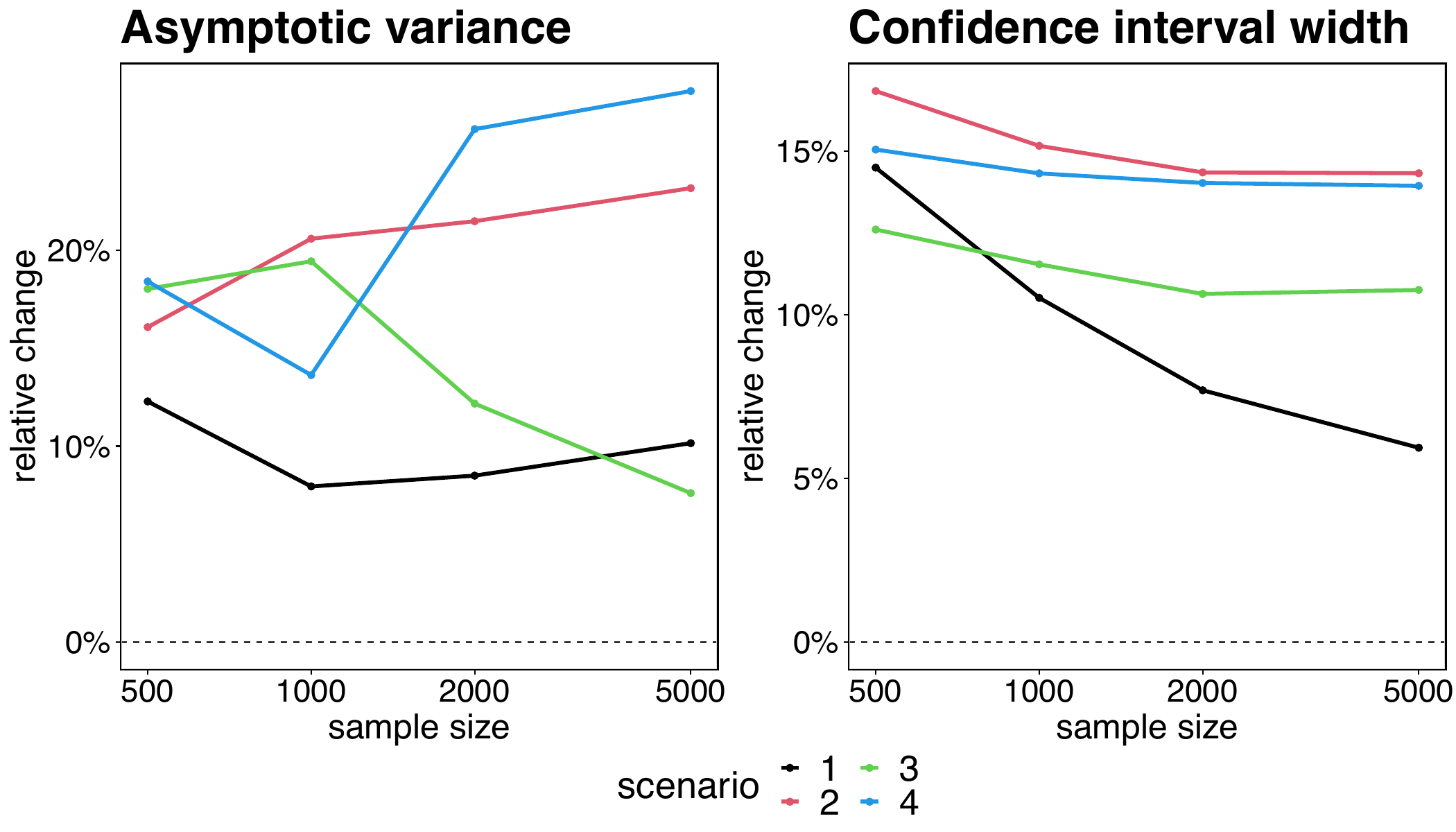}
\caption{Empirical Monte Carlo percent reduction in variance (sample size-scaled) and length of Wald-type confidence intervals (on each simulated data set) comparing the augmented 1:1 matching estimator (without sample-splitting) and its unaugmented counterpart across scenarios and sample sizes. A positive relative reduction indicates improvements resulting from augmentation.}
\label{fig2}
\end{figure}

\begin{figure}
\centering
\includegraphics[scale=0.4]{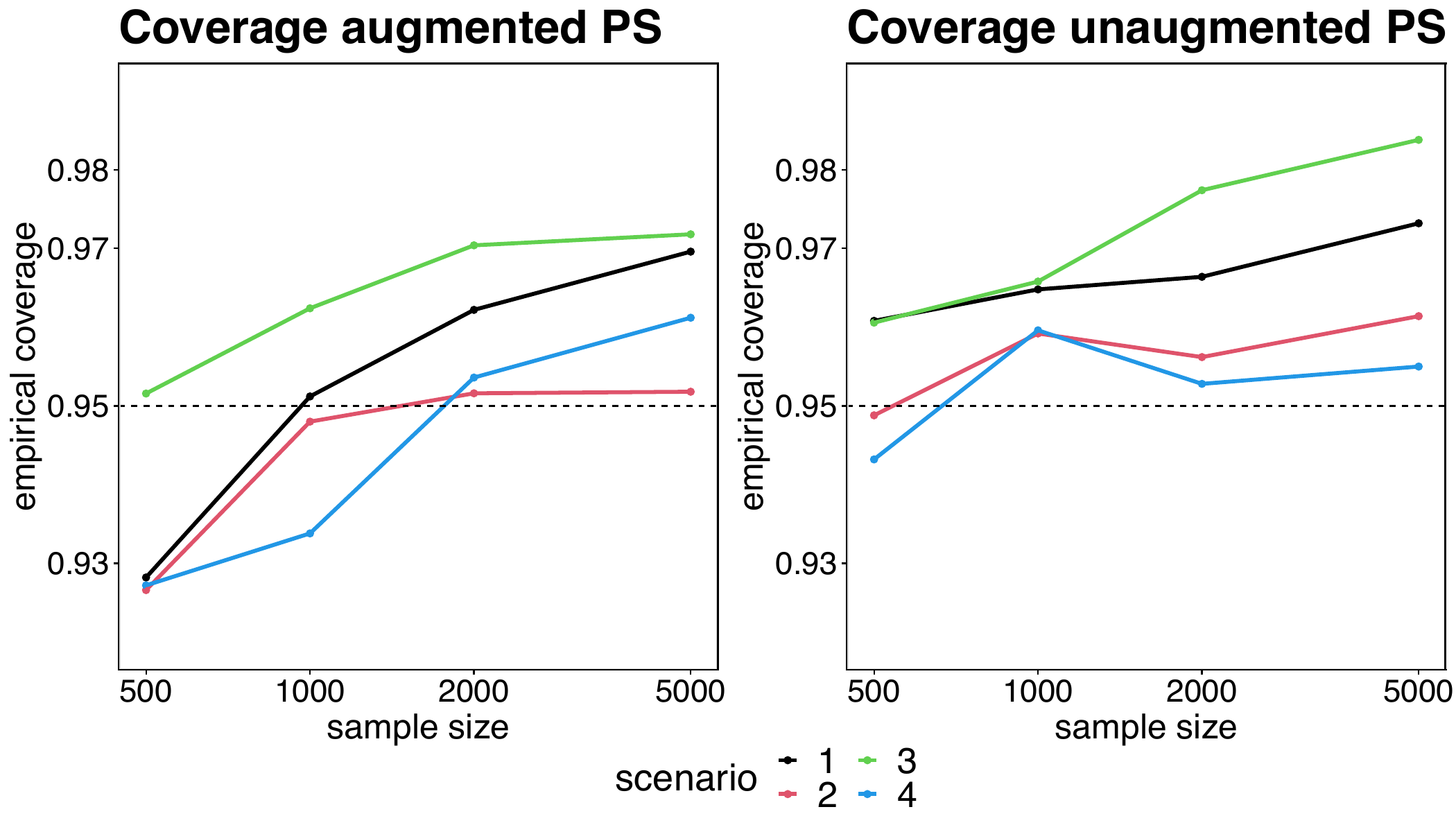}
\caption{Empirical coverage of Wald-type confidence intervals  based on the augmented 1:1 matching estimator (without sample-splitting) and its unaugmented version across scenarios and sample sizes.}
\label{fig3}
\end{figure}

Results are summarized in Table \ref{table3}. Once more, the Monte Carlo variance of the augmented matching estimator was relatively close to the variance predicted by our theoretical results, with some under-dispersion seen in certain settings. While bias is not explicitly reported on in this table, we found all estimators to have negligible bias, and the percent reduction in empirical variance reported in Table \ref{table3} was essentially equal to the percent reduction in empirical mean squared error in these simulations. In Figure \ref{fig2}, the empirical percent change in sample size-scaled variance and length of Wald-type confidence intervals resulting from augmentation is displayed not only for sample size $n=5000$, but also for smaller sample sizes $n\in\{500, 1000, 2500\}$. In most scenarios considered, Wald-type confidence intervals for the average treatment effect based on the augmented 1:1 matching estimator (without sample-splitting) and on estimates of the theoretical variance suffered from moderate undercoverage in smaller sample sizes considered, and slight overcoverage at sample size $n=5000$ (Figure \ref{fig3}). The unaugmented matching estimator with estimated variance had slight overcoverage with modest and larger sample sizes. The overcoverage seen in large samples mirrors the fact that the empirical variance of both the unaugmented and augmented estimators was somewhat smaller than predicted by theory.

\begin{table}
\begin{center}\begin{tabular}{|C{3.8em}||C{7em}|C{6em}|C{7em}|C{6em}|C{7em}|}
\hline
scenario  & theoretical var. (aug) & empirical var. (aug) & theoretical var. (no aug) & empirical var. (no aug) &  empirical change in var.\\ \hline\hline 
1 & 30.55 & 28.8 & 34.22 & 32.1 &  --10\%\\ 
2 & 31.82 &  30.5  & 42.62 & 39.7 &  --23\% \\ 
3 & 28.97 &  29.2 & 34.62 & 31.6  & --7\%  \\ 
4 & 29.94 &   27.4 & 39.99 &  38.2 &  --28\% \\ 
  \hline
\end{tabular}\end{center}
\caption{Theoretical versus empirical (sample size-scaled) variance of the 1:1 matching with augmented propensity score estimator and no sample-splitting across simulation scenarios at sample size 5,000. The second and fourth columns provide the variance suggested by theory for an optimally augmented matching estimator based on the known augmentation function and for its unaugmented counterpart, respectively. The third and fifth columns display the empirical Monte Carlo variance of the augmented matching estimator (without sample-splitting) and the unaugmented matching estimator implemented over 5,000 simulated datasets. The fifth column shows the percent change in variance between the augmented and unaugmented estimators.} 
\label{table3}
\end{table}

\section{Analyzing the effect of circumcision on risk of HIV infection} \label{sec:secc5}

We have employed our proposed propensity score augmentation matching method to estimate the causal effect of circumcision on risk of HIV-1 infection. To do so, we analyzed data from the Step Study, a randomized trial conducted by the HIV Vaccine Trials Network to assess the efficacy of a recombinant adenovirus type-5 (Ad5) HIV vaccine. The trial enrolled seronegative men between 18 and 46 years of age at 27 different sites in North America, the Caribbean, South America, and Australia. More details on the study design  can be found in \cite{duerr2012extended}.

Exposure, outcome of interest, and measured potential confounders were defined as follows. Circumcision, the exposure of interest, was self-reported initially but reassessed by clinicians through physical exam in the subsequent follow-up study. The outcome considered is HIV-1 infection occurring within the first two years since enrollment. Potential confounders recorded at baseline were participant age ($<30$ or $\geq 30$), region (North America + Australia, or Other), ethnicity (Black, Hispanic, Mestizo/Mestiza, White, or Other), attenuated Ad5 titer, evidence of sexually transmitted disease, and sexual risk factors, including number of male partners (0, $<4$, or $\geq 4$), unprotected insertive anal sex (yes or no), and unprotected receptive anal sex (yes or no) in the last three months before enrollment. 

To estimate the causal effect of circumcision on risk of HIV-1 infection, we conducted 1:1 propensity score matching, and first estimated the propensity score through logistic regression with all measured potential confounders as main terms. We compared the results obtained to matching with the augmented propensity score, which included  the same main terms and the estimated optimal augmentation covariate. We estimated $\mu_0$ using the Super Learner algorithm \citep{van2007super}, and $\bar{\mu}_0$ using the nested procedure and the univariable the Super Learner algorithm as well. In our analysis, sample-splitting was not implemented. Point estimates were obtained using the \texttt{Matching} package in \texttt{R} \citep{sekhon2008multivariate}. The variance of resulting matching estimators were estimated by subtracting from the estimator of $\sigma^2_M$ provided by the \texttt{Matching} package an empirical plug-in estimator of their corresponding gain ---$\mathrm{gain}(h_0)$ or $\mathrm{gain}(\varnothing)$---based on $\mu_n$, $\bar{\mu}_n$ and $\pi_n$.

A total of 1,488 study participants were followed for at least two years and provided complete covariate information. Of these participants, 828 were circumcised, and the other 660 were not. In all, 109 participants developed an HIV-1 infection in the first two years of follow-up, 63 of which were circumcised and the other 46 were not. The average treatment effect was estimated to be -0.01 using the standard 1:1 propensity score-based matching estimator, with estimated standard error 0.028, and -0.04 using the augmented 1:1 propensity score-based matching estimator, with estimated standard error 0.025. 

\section{Discussion} \label{sec:secc6}

Our findings suggest that 1:$M$ matching using a propensity score estimated through the augmented model $\mathcal{M}(h_0)$ may yield substantial efficiency gains. Additionally, we found that the data generating mechanism plays an important role in the variance and efficiency gain of the augmented matching estimator.\cite{brookhart2006variable} found that including precision variables in propensity score regression and sub-classification methods increased their precision. This is also true for matching estimators, and the resulting gain in efficiency from including precision variables is even larger when the optimal augmentation covariate is included. Thus, we recommend including any precision variable available into the augmentation covariate. Interestingly, a slight modification of Theorem \ref{theo1} shows that there is no need to include any precision variable in the propensity score model, since either doing so or not leads to the same asymptotic variance after optimal augmentation. Thus, if researchers have at their disposal several precision variables, they can extract all the relevant information they contain by including them all in the augmentation covariate, without any need to increase the number of parameters in the propensity score model beyond the single augmentation covariate. This result can be practically advantageous in settings in which the number of precision variables is large and sample size is limited. \cite{brookhart2006variable} also found that including instruments in propensity score regression and sub-classification methods can be detrimental to their performance. This result aligns with our findings in Section \ref{subsec:releffgd}, where we showed that the variance of the matching estimator increased as the strength of the instrument increased. The use of the optimal augmentation covariate did attenuate this loss in precision. Nevertheless, we recommend excluding instruments in both the propensity score model and the augmentation covariate.  

While our proposed approach provides a means of performing propensity score matching with greater statistical efficiency, it does suffer from three potential limitations. First, our approach builds upon correct specification of the logistic regression model for the propensity score. Nevertheless, because any propensity score can be approximated by a logistic regression model based on covariate transformations, our methodology could be implemented with $W$ replaced by the evaluation of a rich collection of basis functions on $W$. Alternatively, as in \cite{abadie2016}, it appears possible to extend our theoretical results to arbitrary parametric regression models for the propensity score. Second, the need for sample-splitting in order to estimate $h_0 $ in our procedure results in a reduced effective sample size. Nevertheless, our simulation results suggest that using the complete sample to build an estimator of $h_0$ and then to estimate $\psi_0$  does not invalidate the asymptotic distributional approximation provided in our theory. Whether sample-splitting is fundamentally necessary or only an artifact of our technical proofs remains to be formally studied. Third, similarly as in the work of \cite{abadie2016}, our theoretical results only pertain to the augmented 1:$M$ matching estimator incorporating a discretization regularization, but not to its unregularized counterpart, even though our simulation studies suggest that this is perhaps not a significant issue for practical purposes. Here again, it is of interest to determine whether our results can also be established for the unregularized augmented 1:$M$ matching estimator.

In this work we have focused on the average treatment effect, however the same ideas used here could be explored in the context of the average treatment effect among the treated.  \cite{abadie2016} showed that when estimating the latter causal contrast, estimation of the propensity score may or may not result in efficiency improvements. Identifying the optimal augmentation covariate and elucidating its impact on efficiency remains an open question.

\bibliography{supplement-ref.bib}

\appendix

\section{Appendix}

In this section, we briefly outline a the remaining set of conditions for Theorem \ref{theo:theo1} and Proposition \ref{propp:1} to hold. The derivations and proofs can be found in the Supplement of this work. 

Let $\vartheta_{0}:= (\theta_0,0)$, and for any $\vartheta\in\mathbb{R}^{p+2}$ and $\hd: \mathcal{W}\rightarrow\mathbb{R}$, let $r_\hd(w):= (w,\hd(w))$, $\pi_{\hd,\vartheta}(w):= {\rm expit}[\vartheta^{\top} r_\hd(w)]$, and let $P_{\hd,\vartheta}$ denote the distribution for which
\begin{align}
    \frac{dP_{\hd,\vartheta}}{dP_0}(w,a,y)&= \frac{\pi_{\hd,\vartheta}(w)^a[1-\pi_{\hd,\vartheta}(w)]^{1-a}}{\pi_0(w)^a[1-\pi_0(w)]^{1-a}}. \label{eq:Pthetah}
\end{align}
We write $E_{\hd,\vartheta}$ to denote the expectation operator under $P_{\hd,\vartheta}$. 

Fix $g\in \mathbb{R}^{p+2}$ and let $\bar{P}_{\hd,\vartheta}$ denote the marginal distribution of $(W,A)$ under sampling from $P_{\hd,\vartheta}$. Let $\vartheta_n := \vartheta_{0} + g/\sqrt{n}$. In the following definitions, we let $\vartheta$ and $\vartheta'$ be generic elements of $\mathbb{R}^{p+2}$, $\hd$ be a $\mathcal{W}\rightarrow\mathbb{R}$ function, $z:=(w,a)$ denote a realization of a covariate-treatment tuple, $\mathcal{Z}$ its support, and $d_n:= \{z_i\}_{i=1}^n$ denote a dataset consisting of $n$ such tuples. We let $\ell_\hd(\vartheta;z)$ denote the log-likelihood for $\vartheta\in\mathbb{R}^{p+2}$ that would arise under sampling from $\bar{P}_{\hd,\vartheta}$, $U_\hd(\vartheta;z):= \frac{\partial}{\partial \vartheta} \ell_\hd(\vartheta;z)$ denote the score function for $\vartheta$, and $\mathscr{I}_{\hd,\vartheta}:=E_{\hd,\vartheta}[U_\hd(\vartheta;Z)U_\hd(\vartheta;Z)^{\top}]$ be the information matrix for $\vartheta$. Additionally, let $\ell_{\hd}(\vartheta; d_n):=\sum_{i=1}^n \ell_\hd(\vartheta;z_i)$, $\Lambda_{\hd}(\vartheta,\vartheta';d_n):=\ell_{\hd}(\vartheta; d_n) - \ell_{\hd}(\vartheta';d_n)$, let $\bar{p}_{\hd,\vartheta}:=d\bar{P}_{\hd,\vartheta}/d\bar{P}_0$, where $\bar{P}_0$ is the marginal distribution of $(W,A)$ under sampling from $P_0$, and let $P_{W,0}$ denote the marginal distribution of $W$ under $P_0$. Finally, some of our proofs use deterministic sequences in a family of $\mathcal{W} \rightarrow \mathbb{R}$ functions which we denote by $\mathcal{H}$.

\renewcommand{\thecondition}{S\arabic{condition}}
\setcounter{condition}{0}

\begin{condition}
\label{cond:bounds:appendix}
There exists an $\epsilon >0$ such that, for all $a\in \{0,1\}$ and $\tilde{\vartheta} \in \{\vartheta \in \mathbb{R}^{p+2}: \|\vartheta - \vartheta_0\| \leq \epsilon\}$, both of the following hold:  $E_{\hd, \tilde{\vartheta}}[Y\mid A = a, \pi_{\hd, \tilde{\vartheta}}(W) = p]$ is L-Lipschitz continuous in $p$ and uniformly bounded over $\tilde{\vartheta}$, $\hd$, and $p$; and there exists a $\delta >0$ such that $E_{\hd, \tilde{\vartheta}}[|Y|^{2+\delta}\mid A=a,  \pi_{\hd, \tilde{\vartheta}}(W)=p]$ is uniformly bounded over $\tilde{\vartheta}$, $\hd$, and $p$. 
\end{condition}

\begin{condition}
\label{cond:mle:appendix}
Consistency: $\vartheta_0$ is identifiable, in the sense that for any given observation $\{a_i,w_i\}$, we have that  $P_{\hd,\vartheta}(a_i,w_i) \neq P_{\hd, \vartheta_0}(a_i,w_i)$ for all $\vartheta \neq \vartheta_0$.
\end{condition}

\begin{condition}
\label{cond:lambdan:appendix}
For every sequence of vectors $(\vartheta_n)_{n=1}^{\infty}$ such that $\vartheta_n= \vartheta_0 + g/\sqrt{n}$ for some $g \in \mathbb{R}^{p+1}$ and every sequence of functions $(h_{0n})_{n=1}^{\infty}$ such that $\|h_{0n} - h_0\|_{L_2(P)} \rightarrow 0$, we have that, for every $\epsilon > 0$,
\begin{equation}
P_{h_{0n},\vartheta_n}\left(\left|\frac{\lambda_{n}}{\lambda^*} - 1\right| > \epsilon\right) \xrightarrow{n \to \infty} 0.
\end{equation}
Above 
$\lambda_{n}$ and $\lambda^*$ are as defined in Equations \ref{eqdefl:lembrasymn} and \ref{eqdeflstr:lembrasymn} found in the proof of Lemma \ref{lem:brasym} in the Supplementary material.
\end{condition}

\begin{condition}
\label{cond:ulan:last:appendix}
Given $\hd$, $d_n$, $\vartheta,$ and $\vartheta'$, let $\Lambda_{\hd}(\vartheta,\vartheta';d_n)=\ell_{\hd}(\vartheta; d_n) - \ell_{\hd, d_n}(\vartheta')$ be the log-likelihood ratio. Then, for any $\vartheta_n,\vartheta_n'$, sequence of vectors in $\mathbb{R}^{p+2}$ contiguous to $\vartheta_0$ such that $\delta_n = \sqrt{n}(\vartheta_n - \vartheta_0)$ and $\delta_n'  = \sqrt{n}(\vartheta'-\vartheta_0)$ are bounded sequences in $\mathbb{R}^{p+2}$, there exists a sequence of processes $\{\Delta_{h_{0n}}(\vartheta): \vartheta \in \mathbb{R}^{p+2}\}_{n=1}^{\infty}$ such that
\begin{align*}
\Lambda_{h_{0n},n}(\vartheta_n'|\vartheta_n) &= (\delta_n'-\delta_n)^{\rm{T}}\Delta_{h_{0n}}(\vartheta_0 + \delta_n/\sqrt{n}) - \frac{1}{2}(\delta_n' - \delta_n)^{\rm{T}}\mathscr{I}_{h_0, \vartheta_0}(\delta_n' - \delta_n) + o_{P_{h_0, \vartheta_0}}(1)  \\
&= (\delta_n'-\delta_n)^{\rm{T}}\Delta_{h_{0n}}(\vartheta_0 + \delta_n'/\sqrt{n}) + \frac{1}{2}(\delta_n' - \delta_n)^{\rm{T}}\mathscr{I}_{h_0, \vartheta_0}(\delta_n' - \delta_n) + o_{P_{h_0, \vartheta_0}}(1).
\end{align*}
Moreover, the sequence $\Delta_{h_{0n},n}(\vartheta_n)$ is asymptotically normally distributed with zero mean and variance $\mathscr{I}_{ h_0,\vartheta_0}$ under $P_{h_{0n},\vartheta_n}$ as $n \rightarrow \infty$. 
\end{condition}

\section*{ \centering Supplementary Material}

\section{Preliminaries} \label{sec:prelim}

\subsection{Notation}

Fix a sequence of natural numbers $\{m_n\}_{n=1}^\infty$ for which $\lim_{n\rightarrow \infty } m_n/n = 0$. In our upcoming arguments, we will generally condition on the iid sequence $S := \{(W_{-i},A_{-i},Y_{-i})\}_{i=1}^{\infty}$ from $P_0$. The first $m_n$ observations of this sequence are used to obtain the estimate $h_n$ of $h_0$. Some of our proofs will also employ deterministic sequences in a family of $\mathcal{W} \rightarrow \mathbb{R}$ functions denoted by $\mathcal{H}$. We will denote these deterministic sequences in $\mathcal{H}$ via $(h_{0n})_{n=1}^\infty$, and arbitrary functions in $\mathcal{H}$ by $\hd$. Our results will depend on $(h_{0n})_{n=1}^\infty$ satisfying certain properties, and, as we will show later on, when the random sequence $(h_n)_{n=1}^\infty$ of estimators of $h_0$ satisfies these properties with probability one, the conclusions of these arguments will also hold with probability one. 

Let $P_0$ denote the distribution of each of the $n+m_n$ iid copies of $(W,A,Y)$ that were observed. Let $\mathcal{W}, \mathcal{Y}$ be the support of $W$ and $Y$, respectively. We follow the convention that all vectors are column vectors and we let $v_{k}$ denote the k-th entry of a vector $v$. By assumption there exists some $\theta_0\in \mathbb{R}^{p+1}$ such that
\begin{align*}
    \logit P_0(A = 1 \mid W) = \theta_0^{\rm{T}} W.
\end{align*}

Hereafter, we let $\vartheta_{0}:= (\theta_0,0)$, $\pi_0(w):= P_0(A=1|W=w)$, and $\rho(x) := \frac{d}{dx}{\rm expit}(x) = \frac{e^x}{(1+e^x)^2}$.

For any $\vartheta\in\mathbb{R}^{p+2}$ and $\hd: \mathcal{W}\rightarrow\mathbb{R}$, let $r_\hd(w):= (w,\hd(w))$, $\pi_{\hd,\vartheta}(w):= {\rm expit}[\vartheta^{\rm{T}} r_\hd(w)]$, and let $P_{\hd,\vartheta}$ denote the distribution for which
\begin{align}
    \frac{dP_{\hd,\vartheta}}{dP_0}(w,a,y)&= \frac{\pi_{\hd,\vartheta}(w)^a[1-\pi_{\hd,\vartheta}(w)]^{1-a}}{\pi_0(w)^a[1-\pi_0(w)]^{1-a}}. \label{eq:Pthetah}
\end{align}
We write $E_{\hd,\vartheta}$ to denote the expectation operator under $P_{\hd,\vartheta}$. Unless otherwise stated, $E[\cdot]$ will be used to denote $E_{ h_0,\vartheta_0}[\cdot]$. Note that $P_0=P_{\hd, \vartheta_0}$, regardless of the chosen function $\hd$. Moreover, under sampling $(W,A,Y)$ from $P_{\hd,\vartheta}$, $W$ has the same marginal distribution as it would under sampling from $P_0$, and $Y\mid A,W$ has the same conditional distribution as it would under sampling from $P_0$. The two distributions only differ in the conditional distribution of $A\mid W$. In particular, $P_{\hd,\vartheta}(A = 1 \mid W=w)$ is equal to $\pi_{\hd,\vartheta}(w)$, rather than to $\pi_0(w)$.

Fix $g\in \mathbb{R}^{p+2}$ and let $\bar{P}_{\hd,\vartheta}$ denote the marginal distribution of $(W,A)$ under sampling from $P_{\hd,\vartheta}$. Let $\vartheta_n := \vartheta_{0} + g/\sqrt{n}$. In the following definitions, we let $\vartheta$ and $\vartheta'$ be generic elements of $\mathbb{R}^{p+2}$, $\hd$ be a $\mathcal{W}\rightarrow\mathbb{R}$ function, $z:=(w,a)$ denote a realization of a covariate-treatment tuple, $\mathcal{Z}$ its support, and $d_n:= \{z_i\}_{i=1}^n$ denote a dataset consisting of $n$ such tuples. We let $\ell_\hd(\vartheta;z)$ denote the log-likelihood for $\vartheta\in\mathbb{R}^{p+2}$ that would arise under sampling from $\bar{P}_{\hd,\vartheta}$, $U_\hd(\vartheta;z):= \frac{\partial}{\partial \vartheta} \ell_\hd(\vartheta;z)$ denote the score function for $\vartheta$, and $\mathscr{I}_{\hd,\vartheta}:=E_{\hd,\vartheta}[U_\hd(\vartheta;Z)U_\hd(\vartheta;Z)^{\rm{T}}]$ be the information matrix for $\vartheta$. In our arguments, we will use that the aforementioned score function takes the following form:
\begin{align}
\label{defu}
	U_\hd(\vartheta;z)
	&= a  \nabla_\vartheta \log \pi_{\hd,\vartheta}(w) + (1-a)\nabla_\vartheta\log\{1-\pi_{\hd,\vartheta}(w)\}. 
\end{align}
Additionally, define the following function to denote the Jacobian of the score function:
\begin{align}
\label{defK}
K_{\hd}(\vartheta;z) = \frac{\partial}{\partial \vartheta} U_{\hd}(\vartheta; z). 
\end{align}
In a slight abuse of notation, we also define $\ell_{\hd}(\vartheta; d_n):=\sum_{i=1}^n \ell_\hd(\vartheta;z_i)$, $U_{\hd}(\vartheta; d_n):=\sum_{i=1}^n U_\hd(\vartheta;z_i)$, and $K_{\hd}(\vartheta;d_n) := \sum_{i=1}^n K_{\hd}(\vartheta;z_i)$. Additionally, let $\Lambda_{\hd}(\vartheta,\vartheta';d_n):=\ell_{\hd}(\vartheta; d_n) - \ell_{\hd}(\vartheta';d_n)$, and let $\bar{p}_{\hd,\vartheta}:=d\bar{P}_{\hd,\vartheta}/d\bar{P}_0$, where $\bar{P}_0$ is the marginal distribution of $(W,A)$ under sampling from $P_0$. By Equation \eqref{eq:Pthetah}, $\bar{p}_{\hd,\vartheta}(w,a)=\pi_{\hd,\vartheta}(w)^a[1-\pi_{\hd,\vartheta}(w)]^{1-a}/\{\pi_0(w)^a[1-\pi_0(w)]^{1-a}\}$. Finally, $P_{W,0}$ denotes the marginal distribution of $W$ under $P_0$. The following section outlines the sufficient conditions for the proofs of the results in our work to hold. 

\subsection{Conditions}

\begin{condition}
\label{cond:pos:first}
Positivity: for some constants $u$ and $l$ such that $0 < l < u <1$, we have that $l < \pi_0(W) < u$ almost surely.
\end{condition}

\begin{condition} 
\label{cond:qmd}
 
Boundedness: $\mathcal{W}$ is a uniformly bounded subset of $\mathbb{R}^{p+1}$. 
\end{condition}

\begin{condition}
\label{cond:iden}
Identifiability: (i) consistency: for each observation $Y = AY(A) + (1-A)Y(1-A)$; (ii) no unmeasured confounding: for $a = 0,1$ $Y(a) \indep A | W $; (iii) no interference:  the random variables $Y(1), Y(0)$ and $A$ are such that for every $i\neq j$, we have that $Y_i(0) \indep A_j$, and $Y_i(1) \indep A_j$.
\end{condition}

\begin{condition} 
\label{cond:qmd2}
Conditions on $\mathcal{H}$:
(i) $\mathcal{H}$ is a collection of bounded $\mathcal{W}\rightarrow\mathbb{R}$ functions, so that  \allowbreak
$\sup_{\hd \in\mathcal{H}}\sup_{w\in\mathcal{W}} |\hd(w)|<\infty$. 
(ii) The class of functions $\mathcal{H}$ is totally bounded in $L^2(P_{W,0})$. 
(iii) The class of functions $\mathcal{H}$ is such that for every $\hd \in \mathcal{H}$, $E(r_\hd(W)r_\hd(W)^{\rm{T}})$ is positive definite.
\end{condition}

\begin{condition}
\label{cond:ps}
The evaluation of the propensity score $\pi$ at $W$ is continuously distributed, with continuous density.
\end{condition}

\renewcommand{\thecondition}{S\arabic{condition}}
\setcounter{condition}{0}

\begin{condition}
\label{cond:bounds}
There exists an $\epsilon >0$ such that, for all $a\in \{0,1\}$ and $\tilde{\vartheta} \in \{\vartheta \in \mathbb{R}^{p+2}: \|\vartheta - \vartheta_0\| \leq \epsilon\}$, both of the following hold:  $E_{\hd, \tilde{\vartheta}}[Y\mid A = a, \pi_{\hd, \tilde{\vartheta}}(W) = p]$ is L-Lipschitz continuous in $p$ and uniformly bounded over $\tilde{\vartheta}$, $\hd$, and $p$; and there exists a $\delta >0$ such that $E_{\hd, \tilde{\vartheta}}[|Y|^{2+\delta}\mid A=a,  \pi_{\hd, \tilde{\vartheta}}(W)=p]$ is uniformly bounded over $\tilde{\vartheta}$, $\hd$, and $p$. 
\end{condition}

\begin{condition}
\label{cond:mle}
Consistency: $\vartheta_0$ is identifiable, in the sense that for any given observation $\{a_i,w_i\}$, we have that  $P_{\hd,\vartheta}(a_i,w_i) \neq P_{\hd, \vartheta_0}(a_i,w_i)$ for all $\vartheta \neq \vartheta_0$.
\end{condition}

\begin{condition}
\label{cond:lambdan}
For every sequence of vectors $(\vartheta_n)_{n=1}^{\infty}$ such that $\vartheta_n= \vartheta_0 + g/\sqrt{n}$ for some $g \in \mathbb{R}^{p+1}$ and every sequence of functions $(h_{0n})_{n=1}^{\infty}$ such that $\|h_{0n} - h_0\|_{L_2(P)} \rightarrow 0$, we have that, for every $\epsilon > 0$,
\begin{equation}
P_{h_{0n},\vartheta_n}\left(\left|\frac{\lambda_{n}}{\lambda^*} - 1\right| > \epsilon\right) \overset{n \rightarrow \infty }{\longrightarrow} 0.
\end{equation}
Above 
$\lambda_{n}$ and $\lambda^*$ are as defined in Equations \ref{eqdefl:lembrasymn} and \ref{eqdeflstr:lembrasymn}, which can be found in the proof of Lemma \ref{lem:brasym}.
\end{condition}

\begin{condition}
\label{cond:ulan:last}
Given $\hd$, $d_n$, $\vartheta,$ and $\vartheta'$, let $\Lambda_{\hd}(\vartheta,\vartheta';d_n)=\ell_{\hd}(\vartheta; d_n) - \ell_{\hd, d_n}(\vartheta')$ be the log-likelihood ratio. Then, for any $\vartheta_n,\vartheta_n'$, sequence of vectors in $\mathbb{R}^{p+2}$ contiguous to $\vartheta_0$ such that $\delta_n = \sqrt{n}(\vartheta_n - \vartheta_0)$ and $\delta_n'  = \sqrt{n}(\vartheta'-\vartheta_0)$ are bounded sequences in $\mathbb{R}^{p+2}$, there exists a sequence of processes $\{\Delta_{h_{0n}}(\vartheta): \vartheta \in \mathbb{R}^{p+2}\}_{n=1}^{\infty}$ such that
\begin{align*}
\Lambda_{h_{0n},n}(\vartheta_n'|\vartheta_n) &= (\delta_n'-\delta_n)^{\rm{T}}\Delta_{h_{0n}}(\vartheta_0 + \delta_n/\sqrt{n}) - \frac{1}{2}(\delta_n' - \delta_n)^{\rm{T}}\mathscr{I}_{h_0, \vartheta_0}(\delta_n' - \delta_n) + o_{P_{h_0, \vartheta_0}}(1)  \\
&= (\delta_n'-\delta_n)^{\rm{T}}\Delta_{h_{0n}}(\vartheta_0 + \delta_n'/\sqrt{n}) + \frac{1}{2}(\delta_n' - \delta_n)^{\rm{T}}\mathscr{I}_{h_0, \vartheta_0}(\delta_n' - \delta_n) + o_{P_{h_0, \vartheta_0}}(1).
\end{align*}
Moreover, the sequence $\Delta_{h_{0n},n}(\vartheta_n)$ is asymptotically normally distributed with zero mean and variance $\mathscr{I}_{ h_0,\vartheta_0}$ under $P_{h_{0n},\vartheta_n}$ as $n \rightarrow \infty$. 
\end{condition}

We require Conditions \ref{cond:pos:first}-\ref{cond:ps} and Conditions \ref{cond:bounds}-\ref{cond:ulan:last} to derive the asymptotic results of the proposed matching estimator. Conditions \ref{cond:pos:first} and \ref{cond:iden} are standard conditions for identification of the average treatment effect. Regarding estimation of the average treatment effect, in the first part of our proof, we show a uniform version of Theorem 7.2 in \cite{van2000asymptotic} over probability models indexed by $\vartheta$ and $\hd \in \mathcal{H}$, where $\mathcal{H}$ is a function class. Specifically, the uniformity is over the function class $\mathcal{H}$. In order to show uniform quadratic mean differentiability we require Conditions \ref{cond:qmd} and \ref{cond:qmd2}, which impose restrictions on $\mathcal{W}$ and $\mathcal{H}$, respectively. In Condition \ref{cond:mle} we require that the probability model is identifiable for the maximum likelihood estimator to consistently estimate the propensity score parameters. Conditions \ref{cond:ps}, \ref{cond:bounds}, and \ref{cond:lambdan}, are used to show joint asymptotic normality of a martingale whose definition is based on the score function at $\vartheta_n$, and the matching estimator evaluated at a propensity score that takes the values $\vartheta_n$ and a sequence of deterministic functions $(h_{0n})_{n=1}^{\infty}$ that are converging to $h_{0}$. Specifically, Condition \ref{cond:ps} pertains to the distribution of the propensity score evaluated at $W$ and is required to leverage the results shown in \cite{abadie2016}. Next, Condition \ref{cond:bounds}, which is similar to Assumption 4 in \cite{abadie2016}, imposes restrictions on the conditional mean of $Y$ given the propensity score and treatment value $a$.  Condition \ref{cond:lambdan} helps us control the asymptotic variance of the aforementioned martingale. Based on the asymptotic normality of the martingale, we then show joint asymptotic normality of the matching estimator, the maximum likelihood estimator, and the log likelihood ratio between $\vartheta_0$ and its shift $\vartheta_n$. Finally, in order to obtain the marginal distribution of the matching estimator, we use the discretization technique from \cite{andreou2012alternative}, which requires Condition \ref{cond:ulan:last}. 

The following two sections ---Sections \ref{supsec:sec2} and \ref{supsec:sec3}--- state and show the proofs of Theorem \ref{theo1} and Corollary \ref{cor1}, respectively. Section \ref{subsec:sec4} shows the lemmas we require to show Proposition \ref{prop1}. Next, in Section \ref{supsec:sec5}, we show the proof of Proposition \ref{prop1}. Section \ref{supsec:sec6} shows the proof of the Corollary 2 that allows the results of Proposition 1 to hold even when the sequence of functions to be random. 


\section{Proof of Theorem~\ref{theo:theo1}}\label{supsec:sec2} 
The following theorem and its corollary pertain to the efficiency gain term in the asymptotic variance of the matching estimator. Before stating the theorem, we introduce additional notation and definitions as well as remind the reader of previous definitions. 

Let $q(w):=(q_1(w), \ldots, q_{m}(w))$ denote a $m$-dimensional vector that consists of the evaluations of $m$ real-valued satisfying that $\text{var}(q_j(W))$ for each $j = 1,...,m$.

We let $\pi_{q, \theta, \gamma}:= \text{expit}[\theta^{\rm{T}}w + \gamma^{\rm{T}}q(w)]$, where $\theta \in \mathbb{R}^{p+1}$ and $\gamma \in \mathbb{R}^{m}$. In the special case where $m=1$, $q_{1}=\hd$ for some $\hd$ in $\mathcal{H}$, it holds that $\pi_{q, \theta, \gamma}=\pi_{\hd,\vartheta}$. Recall that $\rho(x) := \frac{d}{dx}{\rm expit}(x)=\frac{e^x}{(1+e^x)^2}$, let $$\rho_{q, \theta, \gamma}(w):= \frac{e^{\theta^{\rm{T}}w + \gamma^{\rm{T}}q(w)}}{(1+e^{\theta^{\rm{T}}w + \gamma^{\rm{T}}q(w)})^2}.$$ Additionally, we will let $\vartheta_{0,m}$ be a vector of dimension $m+p+1$, where its first $p+1$ entries of have the same values as those of $\theta_0$, and the remaining $m$ are 0. In a slight abuse of notation, we will let $\vartheta_{0,m}$ be denoted by $\vartheta_{0}$, where the number of additional 0's is based on the length of $q$--- which is usually included in expressions where $\vartheta_{0,m}$ appears.

Let $P_{q,\theta,\gamma}$ denote the distribution for which 
\begin{equation*}
    \frac{dP_{q,\theta,\gamma}}{dP}(w,a,y) = \frac{\pi_{q, \theta, \gamma}(w)^a[1-\pi_{q, \theta, \gamma}(w)]^{1-a}}{\pi(w)^a[1-\pi(w)]^{1-a}}.
\end{equation*}
Let $\bar{\mu}_{q,\theta,\gamma}(a,p) := E_{q, \theta, \gamma}[Y|A = a, \pi_{q, \theta, \gamma}(W) = p]$ denote the conditional expectation under $P_{q,\theta,\gamma}$ of $Y$ given $a$ and the augmented propensity score evaluated at $p$, and let $\mu_{q, \theta, \gamma}(a,w) = E_{q, \theta, \gamma}[Y|A = a, W = w]$ denote the conditional mean of $Y$ given treatment $a$ and covariates $w$, also under $P_{q,\theta,\gamma}$. 

Given augmentation covariates $q$, we will denote the corresponding model of the augmented propensity score by 
\begin{align*}
	\mathcal{M}(q) :=\{\text{logit}[\pi_{q, \theta, \gamma}(w)] &= \theta^{\rm{T}}w + \gamma^{\rm{T}}q(w): q_i(w)  \in L^2(P_{W,0}) \; \forall i\in \{1,\dots,m\}, \; \mathscr{I}_{q,\vartheta_0} \text{ is invertible}\},
\end{align*} 
where $$\mathscr{I}_{q,\vartheta_0}:=E\left[\pi_0(W)\{1-\pi_0(W)\}\begin{bmatrix}WW^{\rm{T}} & Wq(W)^{\rm{T}}\\ q(W)W^{\rm{T}} & q(W)q(W)^{\rm{T}} \end{bmatrix}\right],$$ denotes the information matrix for $\vartheta$ at $\vartheta=\vartheta_0$. Let $(\mathscr{I}_{q,\vartheta_0})^{-1}$ be equal to the inverse of the information matrix.

By \cite{abadie2016}, the efficiency gain in the asymptotic variance of the 1:$M$ matching estimator  using an estimated propensity score based on model $\mathcal{M}(q)$ and a discretized regularization is given by  $c_{q,\vartheta_0}^{\rm{T}}(\mathscr{I}_{q,\vartheta_0})^{-1}c_{q,\vartheta_0}$, where $c_{q,\vartheta_0}=(c_{\theta_0}^{\rm{T}},\bar{c}_{q,\vartheta_0}^{\rm{T}})^{\rm{T}}$ is defined by the vectors
\begin{align*}
    c_{\theta_0}\ &:=\ E\left[\pi_0(W)\, \mathrm{cov}\left\{W,\mu_0(0,W)\,\middle|\,A=0,\pi_0(W)\right\}\right]\\
    &\hspace{.5in}+E\left[\{1-\pi_0(W)\}\, \mathrm{cov}\left\{W,\mu_0(1,W)\,\middle|\,A=1,\pi_0(W)\right\}\right],
\end{align*}
and
\begin{align*}
   \bar{c}_{q,\vartheta_0}\ &:=\ E\left[\pi_0(W)\, \mathrm{cov}\left\{q(W),\mu_0(0,W)\,\middle|\,A=0,\pi_0(W)\right\}\right]\\
    &\hspace{.5in}+E\left[\{1-\pi_0(W)\}\, \mathrm{cov}\left\{q(W),\mu_0(1,W)\,\middle|\,A=1,\pi_0(W)\right\}\right].
\end{align*} 
For convenience we let ${\rm{gain}}(q)$ denote such efficiency gain. 

Before showing the proof of Theorem \ref{theo:theo1}, recall that $h_0$ is defined as 
\begin{equation}
\label{defh}
h_0(w) = \frac{1}{\pi_{h_0, \vartheta_0}(w)}\left[\mu_{h_0, \vartheta_0}(1, w)  - \bar{\mu}(1,\pi_{h_0, \vartheta_0}(w) )\right] + \frac{1}{1-\pi_{h_0, \vartheta_0}(w)}\left[\mu_{h_0, \vartheta_0}(0, w)  - \bar{\mu}(0,\pi_{h_0, \vartheta_0}(w))\right].
\end{equation}
Conditions \ref{cond:pos:first}, \ref{cond:qmd}, and \ref{cond:bounds} imply that $h_0 \in L^2(P_{W,0})$.
Finally, denote by $\mathcal{Q}_0$ the set obtaining by collecting, for each $m\in\{1,2,\ldots\}$, each augmentation function $q:\mathcal{W}\rightarrow \mathbb{R}^{m}$ such that $E\{q_j(W)^2\}<\infty$ for each $j=1,2,\ldots,m$ and $\mathscr{I}_{q,\vartheta_0}$ is invertible. 
\begin{theorem}
\label{theo:theo1}
Suppose Conditions \ref{cond:pos:first}-\ref{cond:iden}, and \ref{cond:bounds} hold. The following statements hold true:
\begin{enumerate}[(a)]
\item $\mathrm{gain}(h_0)=\sup_{q\in\mathcal{Q}_0}\mathrm{gain}(q)$;
\item $\mathrm{gain}(h_0)=E\left[\pi_0(W)\{1-\pi_0(W)\}\{h_0(W)\}^2\right]$.
\end{enumerate}
\end{theorem}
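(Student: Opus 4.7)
The plan is to realize $\mathrm{gain}(q)$ as the squared norm of the orthogonal projection of $h_0$ onto the linear span of the augmentation coordinates, in the weighted space with inner product $\langle f,g\rangle_w:=E[\pi_0(W)\{1-\pi_0(W)\}f(W)g(W)]$. Under this inner product, $\mathscr{I}_{q,\vartheta_0}$ is precisely the Gram matrix of the coordinates of $r_q(W)=(W,q(W))$; writing $V_q$ for their linear span in $L^2_w$, a standard projection identity gives
\[\mathrm{gain}(q)=c_{q,\vartheta_0}^{\top}\mathscr{I}_{q,\vartheta_0}^{-1}c_{q,\vartheta_0}=\|P_{V_q}h_0\|_w^2,\]
provided one can show $c_{q,\vartheta_0}=\langle h_0,r_q\rangle_w$. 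Establishing this inner-product representation is the main algebraic step.

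To obtain it, first use that $\pi_0(W)$ is a balancing score (by Conditions~\ref{cond:pos:first} and~\ref{cond:iden}), so the conditional distribution of $W$ given $\{A=a,\pi_0(W)=p\}$ does not depend on $a$. This yields $E[W\mid A=a,\pi_0(W)]=\bar W(\pi_0):=E[W\mid \pi_0(W)]$ (and analogously for $q$), and $\bar\mu(a,p)=E[\mu_0(a,W)\mid\pi_0(W)=p]$. Substituting into the definitions of $c_{\theta_0}$ and $\bar c_{q,\vartheta_0}$, and abbreviating $\tilde W=W-\bar W(\pi_0)$, $\tilde q=q(W)-\bar q(\pi_0)$, $\tilde\mu_a=\mu_0(a,W)-\bar\mu(a,\pi_0(W))$, the conditional covariances collapse to
\[c_{\theta_0}=E\bigl[\tilde W\{\pi_0\tilde\mu_0+(1-\pi_0)\tilde\mu_1\}\bigr],\qquad \bar c_{q,\vartheta_0}=E\bigl[\tilde q\{\pi_0\tilde\mu_0+(1-\pi_0)\tilde\mu_1\}\bigr].\]
The defining equation~\eqref{defh} for $h_0$ rearranges to $\pi_0(1-\pi_0)h_0=\pi_0\tilde\mu_0+(1-\pi_0)\tilde\mu_1$, so the two blocks of $c_{q,\vartheta_0}$ become $\langle h_0,\tilde W\rangle_w$ and $\langle h_0,\tilde q\rangle_w$. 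A further application of balancing shows $E[h_0(W)\mid\pi_0(W)]=0$: for each $a$, $E[\mu_0(a,W)\mid\pi_0(W)=p]=\bar\mu(a,p)$, so each summand in the definition of $h_0$ has zero conditional mean given $\pi_0$. Since $\tilde W$ differs from $W$ only by a $\pi_0$-measurable function, $\langle h_0,\tilde W\rangle_w=\langle h_0,W\rangle_w$ (and analogously for $q$), and therefore $c_{q,\vartheta_0}=\langle h_0,r_q\rangle_w$.

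Once the projection identity is in place, part~(a) follows from Bessel's inequality $\mathrm{gain}(q)=\|P_{V_q}h_0\|_w^2\le\|h_0\|_w^2$ uniformly in $q\in\mathcal{Q}_0$. Taking $q=h_0$---which lies in $\mathcal{Q}_0$, since $h_0\in L^2(P_{W,0})$ by Conditions~\ref{cond:pos:first},~\ref{cond:qmd},~\ref{cond:bounds}, and $\mathscr{I}_{h_0,\vartheta_0}$ is invertible by Condition~\ref{cond:qmd2}(iii) together with positivity---$h_0$ itself is a coordinate of $r_{h_0}$, so $h_0\in V_{h_0}$ and $P_{V_{h_0}}h_0=h_0$. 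Hence $\mathrm{gain}(h_0)=\|h_0\|_w^2=E[\pi_0(W)\{1-\pi_0(W)\}h_0(W)^2]$, which proves~(b) and shows the Bessel bound is attained at $q=h_0$, completing~(a). The main technical obstacle is the careful balancing-score algebra that reduces the two conditional covariance expressions in $c_{q,\vartheta_0}$ to a single weighted inner product against $h_0$; in particular, the cancellation $E[h_0\mid\pi_0(W)]=0$---itself a consequence of balancing combined with the self-referential definition of $h_0$---is what allows passage between the centered and uncentered inner products.
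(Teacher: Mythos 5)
Your proof is correct and rests on the same central idea as the paper's: realize $\mathrm{gain}(q)$ as the squared norm of the orthogonal projection of $h_0$ onto the span of $(W,q(W))$ in the weighted $L^2$ space with inner product $\langle f,g\rangle_w = E[\pi_0(1-\pi_0)fg]$, so that part~(a) is Bessel's inequality and part~(b) is the attainment of equality when $q=h_0$ makes $h_0$ itself a basis vector. Where you differ is in the scaffolding: the paper expands the conditional covariance in $c_{q,\vartheta_0}$ by repeated applications of the tower property and then builds an explicit orthonormal basis via a square root $B$ of $\mathscr{I}_{q,\vartheta_0}^{-1}$, whereas you collapse the covariance in one step via the balancing property of $\pi_0(W)$ and then invoke the abstract identity $\|P_{V_q}h_0\|_w^2 = c^{\top}G^{-1}c$ for the squared norm of a projection against a Gram matrix $G$. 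Both routes pass through the same inner-product representation of $c_{q,\vartheta_0}$, but your version is cleaner, and your intermediate observation that $E[h_0(W)\mid\pi_0(W)]=0$---which lets you interchange centered and uncentered coordinates freely---makes explicit a piece of structure the paper's algebra exploits only implicitly. One small quibble: you cite Condition~\ref{cond:qmd2}(iii) to secure invertibility of $\mathscr{I}_{h_0,\vartheta_0}$, but that condition is not among the theorem's hypotheses (only Conditions~\ref{cond:pos:first}--\ref{cond:iden} and~\ref{cond:bounds} are assumed); the invertibility needed to place $h_0$ in $\mathcal{Q}_0$ is a tacit assumption the paper's proof also makes when it writes $\mathcal{M}(h_0)$ with the constraint that $\mathscr{I}_{h_0,\vartheta_0}$ is invertible, so this is an implicit hypothesis shared by both arguments rather than an error specific to yours.
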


\begin{proof}
Fix $q\in\mathcal{Q}_0$. By \cite{abadie2016}, $\mathrm{gain}(q)$ is  $c_{q,\vartheta_0}^{\rm{T}}(\mathscr{I}_{q,\vartheta_0})^{-1}c_{q,\vartheta_0}$, where 
\begin{align}
\label{eq1:theo1}
c_{q,\vartheta_0}&=\begin{pmatrix}
E\left[
\mathrm{cov}(W_1,\mu_{q, \vartheta_0}(A,W)|\pi_{q, \vartheta_0}(W),A)\rho_{q, \vartheta_0}(W)\left(\frac{A}{\pi_{q, \vartheta_0}(W)^{2}}+\frac{1-A}{(1-\pi_{q, \vartheta_0}(W)^{2}}\right)\right]\\
\vdots\\
E\left[\mathrm{cov}(q_{m}(W),\mu_{q, \vartheta_0}(A,W)|\pi_{q, \vartheta_0}(W),A)\rho_{q, \vartheta_0}(W)\left(\frac{A}{\pi_{q, \vartheta_0}(W)^{2}}+\frac{1-A}{(1-\pi_{q, \vartheta_0}(W)^{2}}\right)\right]
\end{pmatrix}.
\end{align} 
Let $$g_{q,\vartheta_0}(a,w):=\frac{a(1-\pi_{q, \vartheta_0}(w))}{\pi_{q, \vartheta_0}(w)} + \frac{(1-a)\pi_{q, \vartheta_0}(w)}{1-\pi_{q, \vartheta_0}(w)}.$$ Conditioning on $a$, and $\pi_{q, \vartheta_0}(w)$, $g_{q,\vartheta_0}(a,w)$ is fixed. For an arbitrary $q_j(W)$, we have that
\begin{align}
\label{eq2theo1}
&E\left\{\mathrm{cov}(q_j(W),\mu_{q, \vartheta_0}(A,W)|\pi_{q, \vartheta_0}(W),A)\pi_{q, \vartheta_0}(W)(1-\pi_{q, \vartheta_0}(W))\left(\frac{A}{\pi_{q, \vartheta_0}(W)^2} + \frac{1-A}{(1-\pi_{q, \vartheta_0}(W))^2}\right)\right\} \nonumber \\ \nonumber
= &E\left\{\mathrm{cov}(q_j(W),\mu_{q, \vartheta_0}(A,W)|\pi_{q, \vartheta_0}(W),A)\left(\frac{A(1-\pi_{q, \vartheta_0}(W))}{\pi_{q, \vartheta_0}(W)} + \frac{(1-A)\pi_{q, \vartheta_0}(W)}{(1-\pi_{q, \vartheta_0}(W))}\right)\right\} \\ \nonumber
= &E\left\{\mathrm{cov}(q_j(W),\mu_{q, \vartheta_0}(A,W)|\pi_{q, \vartheta_0}(W),A)g_{q,\vartheta_0}(A,W)\right\} \\ \nonumber
= &E\bigg\{E\left[q_j(W)\mu_{q, \vartheta_0}(A,W)|\pi_{q, \vartheta_0}(W),A)\right]g_{q,\vartheta_0}(A,W)\\ \nonumber
& \quad - E\left[q_j(W)|\pi_{q, \vartheta_0}(W),A\right]E\left[\mu_{q, \vartheta_0}(A,W)|\pi_{q, \vartheta_0}(W),A\right]g_{q,\vartheta_0}(A,W)\bigg\} \\ \nonumber
= &E\bigg\{E\left[q_j(W)\mu_{q, \vartheta_0}(A,W)|\pi_{q, \vartheta_0}(W),A\right]g_{q,\vartheta_0}(A,W)\\ \nonumber
& \quad -E\left[q_j(W)|\pi_{q, \vartheta_0}(W),A\right]\bar{\mu}_{q, \vartheta_0}(A,\pi_{q, \vartheta_0}(W))g_{q,\vartheta_0}(A,W)\bigg\}  \\ \nonumber
=&E\left\{E\left[q_j(W)\mu_{q, \vartheta_0}(A,W)g_{q,\vartheta_0}(A,W)|\pi_{q, \vartheta_0}(W),A\right] - E\left[q_j(W)\bar{\mu}_{q, \vartheta_0}(A,\pi_{q, \vartheta_0}(W))g_{q,\vartheta_0}(A,W)|A,\pi_{q, \vartheta_0}(W)\right]\right\}  \\ \nonumber
=&E\left\{q_j(W)g_{q,\vartheta_0}(A,W)\mu_{q, \vartheta_0}(A,W) - q_j(W)\bar{\mu}_{q, \vartheta_0}(A,\pi_{q, \vartheta_0}(W))g_{q,\vartheta_0}(A,W)\right\} \\ \nonumber
=&E\left\{q_j(W)g_{q,\vartheta_0}(A,W)(\mu_{q, \vartheta_0}(A,W)-\bar{\mu}_{q, \vartheta_0}(A,\pi_{q, \vartheta_0}(W)))\right\} \\ \nonumber
=& E\left\{q_j(W)g_{q,\vartheta_0}(1,W)(\mu_{q, \vartheta_0}(1,W)-\bar{\mu}_{q, \vartheta_0}(1,\pi_{q, \vartheta_0}(W)))\pi_{h_0, \vartheta_0}(W)\right\} \\ \nonumber
&+E\left\{q_j(W)\left[g_{q,\vartheta_0}(0,W)(\mu_{q, \vartheta_0}(0,W)-\bar{\mu}_{q, \vartheta_0}(0,\pi_{q, \vartheta_0}(W)))(1-\pi_{h_0, \vartheta_0}(W))\right]\right\}\\ 
=&E\left\{q_j(W)\left[(\mu_{q, \vartheta_0}(1,W)-\bar{\mu}_{q, \vartheta_0}(1,\pi_{q, \vartheta_0}(W)))(1-\pi_{h_0, \vartheta_0}(W)) + (\mu_{q, \vartheta_0}(0,W)-\bar{\mu}_{q, \vartheta_0}(0,\pi_{q, \vartheta_0}(W)))\pi_{h_0, \vartheta_0}(W)\right]\right\}
\end{align}
The above identity also holds for an arbitrary covariate $W_j$. Next, by definition of $h_0$ in \eqref{defh}, and because $\pi_{q, \vartheta_0}(w) = \pi_{h_0, \vartheta_0}(w)$ for all $w \in \mathcal{W}$, we have that 
\begin{align*}
    (\mu_{q, \vartheta_0}(1,w)-\bar{\mu}_{q, \vartheta_0}(1,\pi_{q, \vartheta_0}(w)))(1-\pi_{h_0, \vartheta_0}(w)) + (\mu_{q, \vartheta_0}(0,w)-\bar{\mu}_{q, \vartheta_0}(0,\pi_{q, \vartheta_0}(w)))\pi_{h_0, \vartheta_0}(w)=\left[s_{q, \vartheta_0}(w)\right]^2h_0(w),
\end{align*}
where $s_{q, \vartheta_0}(w):= \sqrt{\pi_{q, \vartheta_0}(w)(1-\pi_{q, \vartheta_0}(w))}$. Therefore, the left-hand side of \eqref{eq2theo1} is equal to 
\begin{align*}
E\left\{[s_{q, \vartheta_0}(w)q_j(W)][s_{q, \vartheta_0}(w)h_0(W)]\right\}.
\end{align*}
The above identity implies that the vector $c_{q,\vartheta_0}$ can be re-expressed as 
\begin{align}
\label{eq2:theo1}
c_{\tilde{q},\vartheta_0}&:= \begin{pmatrix}
E\left[\tilde{W}_1\tilde{h}_0(W)\right]\\
\vdots\\
E\left[\tilde{q}_{m}(W)\tilde{h}_0(W) \right] 
\end{pmatrix}, 
\end{align}
where, for each $i\in \{1,\ldots,p\}$, $\tilde{w}_i:=s_{q, \vartheta_0}(w) w_i$, for each $j\in \{1,\ldots,m\}$ and $w \in \mathcal{W}$, $\tilde{q}_i(w):= s_{q, \vartheta_0}(w)q_i(w)$, and $\tilde{h}_0(w) :=  s_{q, \vartheta_0}h_0(w)$. Next, let $\tilde{q}(w):= (\tilde{q}_1(w),...\tilde{q}_{m}(w))$ denote the $m$ real valued-vector, where for each $i\in \{1,\ldots,m\}$, each entry is equal to the corresponding $\tilde{q}_{i}$ function evaluated at $w$. Finally, $\mathscr{I}_{q, \vartheta_0}$ can be re-expressed as

$$\mathscr{I}_{\tilde{q},\vartheta_0}:=E\begin{bmatrix}\tilde{W}\tilde{W}^{\rm{T}} & \tilde{W}\tilde{q}(W)^{\rm{T}}\\ \tilde{q}(W)\tilde{W}^{\rm{T}} & \tilde{q}(W)\tilde{q}(W)^{\rm{T}} \end{bmatrix},$$ 
Because $\mathscr{I}_{\tilde{q},\vartheta_0}$ is invertible, the covariates $\{\tilde{w}_1,\ldots,\tilde{q}_{m}(w)\}$ must be linearly independent. 
Thus, we can construct an orthonormal basis of the space spanned by the functions in $\{\tilde{w}_1,\ldots,\tilde{q}_{m}\}$ as follows. Let $B$ be a matrix such that $B^2=(\mathscr{I}_{\tilde{q},\vartheta_0})^{-1}$. 
Then, for every $i \in \{1,\ldots,p+m+1\}$, let $a_i$ be a $\mathcal{W}\rightarrow\mathbb{R}$ function defined as $a_i(w) := \sum_{j=1}^{p+1}B_{ij}\tilde{w}_{j} + \sum_{j=1}^{m}B_{ij}\tilde{q}_{j}(w)$, and, for a given $w$, let $a(w):=(a_1(w),\ldots a_{p+m+1}(w))$. By construction, the functions $\{a_1,...a_{p+m+1}\}$ are an orthonormal basis of the space spanned by the functions $\{\tilde{w}_1,\ldots,\tilde{q}_{p+m+1}\}$ with Gram matrix $\mathscr{I}_{\tilde{q},\vartheta_0}$. Additionally, we have that
\begin{align}
\label{eq3:theo1}
E[a_i(W)\tilde{h}_0(W)]= \sum_{j=1}^{p+1}B_{ij}E[\tilde{w}_j\tilde{h}_0(W)] + \sum_{j=1}^{m}B_{ij}E[\tilde{q}_j(W)\tilde{h}_0(W)].     
\end{align} 
Before proceeding further, we introduce some notation. For $f_1,\ldots,f_j\in L^2(P_{W,0})$, let ${\rm span}\{f_1,\ldots,f_j\}$ denote the linear span of $f_1,\ldots,f_j$. For a subspace $\mathcal{S}$ of $L^2(P_{W,0})$, let $\mathcal{S}^\perp$ denote the orthogonal complement of $\mathcal{S}$ in $L^2(P_{W,0})$ and let $\Pi(f|\mathcal{S})$ denote the $L^2(P_{W,0})$-projection of $f$ onto $\mathcal{S}$. When $\mathcal{S}$ is a one-dimensional space spanned by the function $a_i$, we write $\Pi(f\mid a_i)$ to mean $\Pi(f\mid {\rm span}\{a_i\})$. Given this notation, we can decompose $\tilde{h}_0$ as $\tilde{h}_0 = \Pi (\tilde{h}_0|{\rm span}\{a_1,\ldots,a_{m+p+1}\}^\perp) + \sum_{i=1}^{m+p+1} \Pi (\tilde{h}_0|a_i)$.  Then, letting \begin{align}
\label{eq4:theo1}
c_{a,\vartheta_0}&:= \begin{pmatrix}
E\left[a_1(W)\tilde{h}_0(W)\right]\\
\vdots\\
E\left[a_{p+m+1}(W)\tilde{h}_0(W) \right] 
\end{pmatrix},
\end{align} 
and by Equations \ref{eq2:theo1} and \ref{eq3:theo1}, $\mathrm{gain}(q)$ can be upper bounded as follows: 
\begin{align}
\mathrm{gain}(q) &= c^{\rm{T}}_{q,\vartheta_0}(\mathscr{I}_{q,\vartheta_0})^{-1} c_{q,\vartheta_0}= c^{\rm{T}}_{\tilde{q},\vartheta_0}(\mathscr{I}_{\tilde{q},\vartheta_0})^{-1}c_{\tilde{q},\vartheta_0} = c^{\rm{T}}_{\tilde{q},\vartheta_0}BBc_{\tilde{q},\vartheta_0}=
c_{a,\vartheta_0}^{\rm{T}}c_{a,\vartheta_0} \nonumber \\ 
&=\sum_{i = 1}^{m+p+1} E^2[a_i(W)\tilde{h}_0(W)]  = \sum_{i = 1}^{m+p+1}  E^2\left[a_i(W)\Pi (\tilde{h}_0|a_i)(W)\right] \label{eq:MqbEff} \\ 
&\leq  \sum_{i = 1}^{m+p+1} E[a_i^2(W)]E\left[\Pi (\tilde{h}_0|a_i)^2(W)\right] =\sum_{i = 1}^{m+p+1} E\left[\Pi (\tilde{h}_0|a_i)^2(W) \right]\leq E\left[\tilde{h}_0^2(W)\right].\nonumber
\end{align}
Now, consider the sub-model that contains $h_0$ as an augmentation covariate: 
\begin{align*}
\mathcal{M}(h_0) =\{\logit(\pi_{h_0, \bar{\theta}}(w)) &=  \theta_1^{\rm{T}}w + \theta_2 h_0(w): \; \theta_1 \in \mathbb{R}^{p+1}, \; p \in \mathbb{N}, \theta_2 \; \in \mathbb{R},  \mathscr{I}_{h_0, \vartheta_0} \text{ is invertible}\}.
\end{align*}
For $q(w)=h_0(w)$ it holds that $\mathcal{M}(h_0)=\mathcal{M}(q)$. Applying \eqref{eq:MqbEff} for this particular choice of $q$ and then using Parseval's identity shows that
\begin{align*}
    \mathrm{gain}(h_0)&= \sum_{i = 1}^{p+2}  E^2\left[a_i(W)\Pi (\tilde{h}_0|a_i)(W)\right] = E\left[\Pi(\tilde{h}_0\mid {\rm span}\{a_1,\ldots,a_{p+2}\})(W)^2\right].
\end{align*}
Because $q=h_0$, it holds that $\tilde{h}_0=\Pi(\tilde{h}_0\mid {\rm span}\{a_1,\ldots,a_{p+2}\})$. Hence, the above yields that $\mathrm{gain}(h_0)=E[\tilde{h}_0^2(W)]$, which shows that the inequality in \eqref{eq:MqbEff} is saturated for this particular choice of $q$, showing part (a) of Theorem \ref{theo:theo1}. Finally, by definition of $\tilde{h}$, we see that $\mathrm{gain}(h_0)=E\left[\pi_0(W)\{1-\pi_0(W)\}\{h_0(W)\}^2\right]$.

\end{proof}

\section{Proof of Corollary~\ref{cor1}}\label{supsec:sec3} 

In the statement and proof of the next corollary, we let $\bar{\pi}(a|w):= \pi_{h_0, \vartheta_0}(w)^a[1-\pi_{h_0, \vartheta_0}(w)]^{1-a}$. We also write $\bar{\pi}_1(W)$ to mean $\bar{\pi}(1|w)$. For notational brevity, and because all expectations are taken with respect to $P_{h_0, \vartheta_0}$, we will drop the subscripts $\vartheta_0,h_0$ from our notation. For instance, we will use the notation $\mu(a,w)$ and $\bar{\mu}(a, \pi_{1}(w))$ to denote $\mu_{h_0, \vartheta_0}(a,w)$ and $\bar{\mu}_{h_0, \vartheta_0}(a, \pi_{1}(w))$, respectively. Additionally, we let $\sigma^2_{NP}$ denote the nonparametric efficiency bound: $$\sigma^2_{NP}:=E\left[\frac{\sigma^2(1,W)}{\pi(W)}+\frac{\sigma^2(0,W)}{1-\pi(W)}+\left\{\mu(1,W)+\mu(0,W)-\psi\right\}^2\right].$$ 

\begin{corollary}Under the conditions of Theorem~\ref{theo:theo1}, the difference $\delta_M:=\sigma^2_M-\mathrm{gain}(h_0)-\sigma^2_{NP}$
between the asymptotic variance of the optimally-augmented 1:$M$ matching estimator and the nonparametric efficiency bound is given by
\begin{align*}
&\delta_M\,=\,\frac{1}{2M}\,E_0\left[\left\{\frac{1}{\pi_0(W)}-\pi_0(W)\right\}\left\{\sigma^2(1,W)+\zeta(1,W)\right\}\right.\\
&\hspace{1.2in}\left.+\left\{\frac{1}{1-\pi_0(W)}-1+\pi_0(W)\right\}\left\{\sigma^2(0,W)+\zeta(0,W)\right\}\right]\,\geq\,0\ ,
\end{align*}where we define $\zeta(a,w):=var_0\left\{\mu_0(a,W)\,|\,\pi_0(W)=\pi_0(w)\right\}$.
\label{cor1}
\end{corollary}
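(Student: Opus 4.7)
My plan is to combine three ingredients: the Abadie--Imbens (2016) asymptotic variance formula for the 1:$M$ propensity-score matching estimator, Theorem~\ref{theo:theo1} identifying $\mathrm{gain}(h_0)$ with the efficiency deficit of conditioning only on $\pi_0(W)$ instead of on the full covariate $W$, and the law of total variance to translate between $\bar\sigma^2(a,\pi_0(W))$ and $\sigma^2(a,W)+\zeta(a,W)$.

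The Abadie--Imbens decomposition writes $\sigma_M^2$ (the variance of 1:$M$ matching on true $\pi_0$) as $\sigma_M^2 = V_{\rm PS} + (2M)^{-1}V_{\rm match}$, where $V_{\rm PS}$ is the efficiency bound for ATE estimation when one is restricted to $\pi_0(W)$-measurable tangent directions, and
$$V_{\rm match}\,=\,E\!\left[\Bigl(\tfrac{1}{\pi_0(W)}-\pi_0(W)\Bigr)\bar\sigma^2(1,\pi_0(W)) \,+\, \Bigl(\tfrac{1}{1-\pi_0(W)}-1+\pi_0(W)\Bigr)\bar\sigma^2(0,\pi_0(W))\right]$$
is the $O(1/M)$ matching-discrepancy inflation. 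The key intermediate identity is $V_{\rm PS}-\sigma_{NP}^2=\mathrm{gain}(h_0)$. I would verify this by expanding $\mathrm{gain}(h_0)=E[\pi_0(1-\pi_0)h_0^2]$ via \eqref{defh} to obtain
$$\mathrm{gain}(h_0)\,=\,E\!\left[\tfrac{1-\pi_0}{\pi_0}\zeta(1,W)+\tfrac{\pi_0}{1-\pi_0}\zeta(0,W)\right]\,+\,2\,E\!\left[\mathrm{cov}\bigl(\mu(1,W),\mu(0,W)\mid \pi_0(W)\bigr)\right],$$
and then by a parallel direct computation of $V_{\rm PS}-\sigma_{NP}^2$ in which I expand the squared-mean contribution $(\bar\mu(1,\pi_0)-\bar\mu(0,\pi_0)-\psi)^2$ against $(\mu(1,W)-\mu(0,W)-\psi)^2$ and apply $\bar\sigma^2(a,\pi_0(W))=E[\sigma^2(a,W)\mid \pi_0(W)]+\zeta(a,W)$; the same right-hand side emerges. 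Hence $\delta_M=(2M)^{-1}V_{\rm match}$, and a final application of iterated expectations permits replacing each $\bar\sigma^2(a,\pi_0(W))$ inside $V_{\rm match}$ by $\sigma^2(a,W)+\zeta(a,W)$, yielding the claimed formula.

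Nonnegativity is then immediate from Condition~\ref{cond:pos:first}: $0<\pi_0(W)<1$ a.s.\ gives $\tfrac{1}{\pi_0}-\pi_0=\tfrac{(1-\pi_0)(1+\pi_0)}{\pi_0}>0$ and $\tfrac{1}{1-\pi_0}-1+\pi_0=\tfrac{\pi_0(2-\pi_0)}{1-\pi_0}>0$, each multiplied by the nonnegative quantity $\sigma^2(a,W)+\zeta(a,W)$. The main obstacle will be the intermediate identity $V_{\rm PS}-\sigma_{NP}^2=\mathrm{gain}(h_0)$: matching up the $\zeta$-terms on the two sides is routine via the law of total variance, but the cross-covariance $\mathrm{cov}(\mu(1,W),\mu(0,W)\mid\pi_0(W))$ appears in both the $\mathrm{gain}(h_0)$ expansion (through the $2D_1 D_0/[\pi_0(1-\pi_0)]$ cross term when squaring the two summands in \eqref{defh}) and in the squared-mean discrepancy $V_{\rm PS}-\sigma_{NP}^2$ (through the $(D_1-D_0)^2$ expansion), and these two cross-term contributions must be tracked carefully to confirm they agree in sign and magnitude. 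Once that cancellation is verified, the remaining algebra is elementary.
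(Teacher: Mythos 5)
Your proposal is correct and follows essentially the same route as the paper's proof. You reorganize the computation by first splitting $\sigma_M^2 = V_{\rm PS} + (2M)^{-1}V_{\rm match}$ and isolating the intermediate identity $V_{\rm PS} - \sigma_{NP}^2 = \mathrm{gain}(h_0)$, whereas the paper subtracts $\sigma_{NP}^2$ from the full $\sigma_M^2$ at once and then separately verifies that the non-$\delta_M$ remainder, written as $-E[\Gamma(W)] + \sum_a E[\zeta(a,W)/\bar\pi(a|W)]$, matches the expansion of $\mathrm{gain}(h_0) = E[\pi_0(1-\pi_0)h_0^2]$ obtained from \eqref{defh}; these are the same identity in different guises. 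Your expansion $\mathrm{gain}(h_0) = E[\tfrac{1-\pi_0}{\pi_0}\zeta(1,W) + \tfrac{\pi_0}{1-\pi_0}\zeta(0,W)] + 2E[\mathrm{cov}(\mu(1,W),\mu(0,W)\mid\pi_0(W))]$ is exactly equation \eqref{eq7:cor2} rewritten via $\Gamma = \zeta(1,\cdot)+\zeta(0,\cdot)-2\xi$, and the cross-covariance cancellation you flag as the delicate point is the content of \eqref{eq8:cor2}--\eqref{eq9:cor2}, where the paper shows $E[\Delta(1,W)\Delta(0,W)] = E[\xi(W)]$ by the tower property. The iterated-expectations step replacing $\bar\sigma^2(a,\pi_0(W))$ by $\sigma^2(a,W)+\zeta(a,W)$ inside $V_{\rm match}$ (valid because the propensity-score weights are $\pi_0(W)$-measurable) is likewise the paper's computation leading from \eqref{eq2:cor2} to \eqref{eq4:cor2}. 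Your nonnegativity argument via positivity of $\pi_0$ is immediate and, while the paper does not spell it out, is the intended justification.
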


\begin{proof}
First, note that we can re-express $\delta_m$ as
\begin{align}
\delta_m &= \frac{1}{2M}\sum_{a=0}^1E\left\{\left[\frac{1}{\bar{\pi}(a|W)} - \bar{\pi}(a|W)\right]\left[\sigma^2(a,W) +\zeta(a,W)\right]\right\}.
\end{align} Let $\gamma(w) = \mu(1,w)-\mu(0,w)$, and $\bar{\gamma}(w) = \bar{\mu}(1,w) - \bar{\mu}(0,w)$.  Now, let $s_M$, be a $(0,1) \rightarrow \mathbb{R}$ function defined as 
\begin{align*}
    s_M(p) &:= p^{-1} + (2M)^{-1}\left(p^{-1} - p\right). 
\end{align*}
We write $s_M\circ \bar{\pi}(a|w)$ to mean $s_M(\bar{\pi}(a|w))$. Then,  $\sigma_M^2$ can be expressed as 
\begin{align*}
\sigma_M^2 = &E\left[(\bar{\gamma}(W)-\psi_0)^2\right]  + \sum_{a=0}^1 E\left[\bar{\sigma}^2(a,\bar{\pi}_1(W)) s_M\circ\bar{\pi}(a|W)\right]. 
\end{align*}
Next, the variance of an asymptotically efficient estimator is equal to 
\begin{align*}
\sigma_{NP}^2&= E\left[\left(\frac{2A-1}{\bar{\pi}(A|W)}(Y-\mu(A,W)) + (\gamma(W) - \psi_0)\right)^2\right] = E\left[\sum_{a = 0}^{1}\frac{\sigma^2(a,W)}{\bar{\pi}(a|W)}\right] + E\left[(\gamma(W) - \psi_0)^2\right].
\end{align*}
Hence, 
\begin{align*}
\sigma_M^2 - \sigma^2_{NP} &= E\left[(\bar{\gamma}(W)-\psi_0)^2\right]  + \sum_{a=0}^1 E\left[\bar{\sigma}^2(a,\bar{\pi}_1(W)) s_M\circ\bar{\pi}(a|W)\right] \\
&\quad - \left\{E\left[\sum_{a = 0}^{1}\frac{\sigma^2(a,W)}{\bar{\pi}(a|W)}\right] + E\left[(\gamma(W) - \psi_0)^2\right] \right\}  \\
&= E\left[(\bar{\gamma}(W)-\psi_0)^2\right]
- E\left[(\gamma(W) - \psi_0)^2\right]+ \sum_{a=0}^1 E\left[\bar{\sigma}^2(a,\bar{\pi}_1(W)) s_M\circ\bar{\pi}(a|W) - \frac{\sigma^2(a,W)}{\bar{\pi}(a|W)}\right].
\end{align*}
Let $\Gamma(w) := {\rm{Var}}(\gamma(W)|\bar{\pi}_{1}(W) = \bar{\pi}_{1}(w))$. Noting that $\tau=E[\bar{\gamma}(W)]=E[\gamma(W)]$ and $E[\gamma(W)|\bar{\pi}_1(W)]=\bar{\gamma}(W)$ almost surely, it is not difficult to show that $E\left[(\bar{\gamma}(W)-\psi_0)^2\right]
- E\left[(\gamma(W) - \psi_0)^2\right] = 
-E[\Gamma(W)]$. Hence, the above rewrites as follows:
\begin{align}
\label{eq2:cor2}
\sigma_M^2 - \sigma^2_{NP} = &-E[\Gamma(W)] +\sum_{a=0}^1 E\left[\bar{\sigma}^2(a,\bar{\pi}_1(W)) s_M\circ\bar{\pi}(a|W) - \frac{\sigma^2(a,W)}{\bar{\pi}(a|W)}\right].
\end{align}
We now begin to study the latter term above. By the law of total variance, the following holds for $P_0$-almost all $w$:
\begin{align*}
    &\sum_{a=0}^1 \left[\bar{\sigma}^2(a,\bar{\pi}_1(w)) s_M\circ\bar{\pi}(a|w)\right]=\sum_{a=0}^1 \left[E\left\{\sigma^2(a,W)|\bar{\pi}_1(W)=\bar{\pi}_1(w)\right\}s_M\circ\bar{\pi}(a|w)+\zeta(a,w) s_M\circ\bar{\pi}(a|w)\right].
\end{align*}
Integrating $w$ on both sides against the marginal distribution under $P_0$ and applying the law of total expectation then shows that
\begin{align*}
    \sum_{a=0}^1 &E\left[\bar{\sigma}^2(a,\bar{\pi}_1(W)) s_M\circ\bar{\pi}(a|W)\right]= \sum_{a=0}^1 E\left[\sigma^2(a,W)s_M\circ\bar{\pi}(a|W)+\zeta(a,w)s_M\circ\bar{\pi}(a|W)\right].
\end{align*}
Plugging this into the latter term in \eqref{eq2:cor2} and simplifying, we find that
\begin{align*}
    \sum_{a=0}^1 E\left[\bar{\sigma}^2(a,\bar{\pi}_1(W)) s_M\circ\bar{\pi}(a|W) - \frac{\sigma^2(a,W)}{\bar{\pi}(a|W)}\right] 
    &= \sum_{a=0}^1 E\left[\frac{\zeta(a,w)}{\bar{\pi}(a|W)}\right] + \delta_m.
\end{align*}
Hence,
\begin{align}
\label{eq4:cor2}
\sigma_M^2 - \sigma^2_{NP} =&-E[\Gamma(W)] + \sum_{a = 0}^1 E\left[\frac{\zeta(a,w)}{\bar{\pi}(a|W)}\right] + \delta_m.
\end{align}
Let $\Delta(a,w):=\mu(a,w)-\bar{\mu}(a,\bar{\pi}_1(w))$. Now, because
\begin{align}
\label{def:effhstr}
    \mathrm{gain}(h_0)=E\left[\left(\sqrt{\pi_{h_0, \vartheta_0}(W)(1-\pi_{h_0, \vartheta_0}(W))}h_0(W)\right)^2\right]
\end{align}
we have that $\mathrm{gain}(h_0)$ is equal to
\begin{align*}
\mathrm{gain}(h_0)=&E\left[\bar{\pi}_1(W)\bar{\pi}_0(W)h_0^2(W)\right] 
= \sum_{a = 0}^1 E\left[\frac{1-\bar{\pi}(a|W)}{\bar{\pi}(a|W)}\Delta^2(a,W)\right]+2E\left[\Delta(1,W)\Delta(0,W)\right].
\end{align*}
Next, for a given $a \in \{0,1\}$, and for $P_0$-almost all $w$, $\zeta(a,w) =E\left[\Delta^2(a,W)|\bar{\pi}_1(W)=\bar{\pi}_1(w)\right]$. 
Applying the law of total expectation on the right-hand side above and plugging this in yields that
\begin{align*}
\mathrm{gain}(h_0)&=\sum_{a = 0}^1 E\left[\left(\frac{1}{\bar{\pi}(a|W)}-1\right)\zeta(a,w)\right] +2E\left\{\Delta(1,W)\Delta(0,W)\right\}.
\end{align*}
Now, let $\xi(w) := \mathrm{cov}\left(\mu(1,W), \mu(0,W)|\bar{\pi}_1(W) = \bar{\pi}_1(w)\right)$, and note that $\Gamma(w) = \sum_{a = 0}^1\zeta(a,w) - 2\xi(w)$ for $P_0$-almost all $w$. 
Plugging this into the above yields that
\begin{align}
\label{eq7:cor2}
\mathrm{gain}(h_0)&=-E\left[\Gamma(W)\right] - 2E\left[\xi(W)\right] + \sum_{a=0}^1 E\left\{\frac{\zeta(a,w)}{\bar{\pi}(a|W)}\right\}  +2E\left[\Delta(1,W)\Delta(0,W)\right].  
\end{align}
By the tower property, $E\left[\mu(a, W)\bar{\mu}(1-a,\bar{\pi}_1(W))\right]= E\left[\bar{\mu}(a, \bar{\pi}_1(W)\bar{\mu}(1-a,\bar{\pi}_1(W))\right]$ for $a \in \{0,1\}$.
Hence,
\begin{align}
\label{eq8:cor2}
&E\left[\Delta(1,W)\Delta(0,W)\right] 
=E\left[\mu(1, W)\mu(0, W)\right] - E\left\{\bar{\mu}(0,\bar{\pi}_1(W))\bar{\mu}(1,\bar{\pi}_1(W))\right\}.
\end{align} 
Moreover, the definition of conditional covariance and the tower property imply that
\begin{align}
\label{eq9:cor2}
E\left[\xi(W)\right]=&E[\mu(1,W)\mu(0,W)]-E\left\{\bar{\mu}(1,\bar{\pi}_1(W)\bar{\mu}(0,\bar{\pi}_1(W))\right\}.
\end{align}
Combining Equations \ref{eq7:cor2}, \ref{eq8:cor2}, and \ref{eq9:cor2}, we have that
\begin{align*}
\mathrm{gain}(h_0) =&-E[\Gamma(W)] + \sum_{a = 0}^1 E\left[\frac{\zeta(a,w)}{\bar{\pi}(a|W)}\right].
\end{align*}
Equation \ref{eq4:cor2} and the above identity show that
\begin{align*}
\sigma_M^2 - \sigma^2_{NP} = \mathrm{gain}(h_0)+ \delta_m.
\end{align*}
\end{proof}

\section{Supporting lemmas of Proposition \ref{prop1}} \label{subsec:sec4}

As a first step, we want to show that model $\{\bar{P}_{\hd,\vartheta}; \vartheta \in \mathbb{R}^{p+2}, \hd \in \mathcal{H}\}$ is uniformly differentiable in quadratic mean (UDQM) at every $\vartheta \in \Theta$, where $\Theta$ is an open subset of $\mathbb{R}^{p+2}$. By UDQM, we mean that 
   $$\sup_{\hd \in \mathcal{H}}\int\left[\sqrt{\bar{p}_{\hd, \vartheta + g}(z)} - \sqrt{\bar{p}_{\hd,\vartheta}(z)} - \frac{1}{2}g^T U_{\hd}(\vartheta;z)\sqrt{\bar{p}_{\hd,\vartheta}(z)}\right]^2 d\bar{P}(z) = o(\|g\|^2), \; g \rightarrow 0 $$ 
Before we do this, we establish quadratic mean differentiability for each fixed $\hd$.

\begin{lemma}\label{lem:qmd}
Assume Conditions \ref{cond:pos:first}, \ref{cond:qmd}, and \ref{cond:qmd2} hold, then for each $\hd \in \mathcal{H}$, $\{\bar{P}_{\hd,\vartheta}; \vartheta \in \mathbb{R}^{p+2}\}$ is quadratic mean differentiable.
\end{lemma}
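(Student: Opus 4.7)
The plan is to invoke the standard sufficient condition for quadratic mean differentiability, namely Lemma 7.6 in \cite{van2000asymptotic}. For a parametric family with densities $\bar{p}_{\hd,\vartheta}$ with respect to $\bar{P}_0$, this criterion reduces the problem to verifying (i) that for $\bar{P}_0$-a.e.\ $z=(w,a)$, the map $\vartheta\mapsto s_{\hd,\vartheta}(z):=\sqrt{\bar{p}_{\hd,\vartheta}(z)}$ is continuously differentiable on an open neighborhood of $\vartheta$, and (ii) that the Fisher information map $\vartheta\mapsto\mathscr{I}_{\hd,\vartheta}$ has entries that are well-defined, finite, and continuous in $\vartheta$. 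Once these two points are in hand, Lemma 7.6 yields QMD with score equal to $U_\hd(\vartheta;z)$ as given in \eqref{defu} and information $\mathscr{I}_{\hd,\vartheta}$.

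For step (i), I would use the explicit form $\bar{p}_{\hd,\vartheta}(w,a)=\pi_{\hd,\vartheta}(w)^a[1-\pi_{\hd,\vartheta}(w)]^{1-a}/\{\pi_0(w)^a[1-\pi_0(w)]^{1-a}\}$. Because $\rm expit$ is $C^\infty$ and strictly between $0$ and $1$, and because Conditions \ref{cond:qmd} and \ref{cond:qmd2}(i) imply that $r_\hd(w)=(w,\hd(w))$ is uniformly bounded on $\mathcal{W}$, the linear form $\vartheta^{\rm T}r_\hd(w)$ is bounded on any compact neighborhood of $\vartheta$; hence $\pi_{\hd,\vartheta}(w)$ is bounded away from $0$ and $1$ uniformly on such neighborhoods. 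Combined with the positivity of $\pi_0$ from Condition \ref{cond:pos:first}, this makes $s_{\hd,\vartheta}(z)$ a $C^\infty$ function of $\vartheta$ for every $z$, with gradient
\begin{equation*}
\nabla_\vartheta s_{\hd,\vartheta}(z)=\tfrac{1}{2}\,s_{\hd,\vartheta}(z)\,U_\hd(\vartheta;z),\qquad U_\hd(\vartheta;z)=\{a-\pi_{\hd,\vartheta}(w)\}\,r_\hd(w),
\end{equation*}
which matches the score in \eqref{defu}.

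For step (ii), I would write $\mathscr{I}_{\hd,\vartheta}=\int(a-\pi_{\hd,\vartheta}(w))^2\,r_\hd(w)r_\hd(w)^{\rm T}\,\bar{p}_{\hd,\vartheta}(w,a)\,d\bar{P}_0(w,a)$. The integrand is continuous in $\vartheta$ for every $(w,a)$, while $|a-\pi_{\hd,\vartheta}(w)|\leq 1$, $\|r_\hd(w)\|$ is uniformly bounded, and $\bar{p}_{\hd,\vartheta}(w,a)$ is uniformly bounded on compact neighborhoods of $\vartheta$ by the positivity of $\pi_0$ (Condition \ref{cond:pos:first}) and the boundedness of $r_\hd$ just invoked. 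Dominated convergence then delivers continuity of $\mathscr{I}_{\hd,\vartheta}$, and finiteness follows from the same uniform bound. Together with (i), this invokes Lemma 7.6 in \cite{van2000asymptotic} and finishes the argument. The main (mild) obstacle is the bookkeeping needed to establish uniform local bounds on $\pi_{\hd,\vartheta}$, $1-\pi_{\hd,\vartheta}$, and $\bar{p}_{\hd,\vartheta}$ simultaneously---this is where Conditions \ref{cond:pos:first}, \ref{cond:qmd}, and \ref{cond:qmd2}(i) must be used jointly---but after that everything reduces to differentiation under the integral sign.
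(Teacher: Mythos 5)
Your plan is correct and follows essentially the same route as the paper: invoke Lemma~7.6 in \cite{van2000asymptotic}, verify that $\vartheta\mapsto\sqrt{\bar{p}_{\hd,\vartheta}(z)}$ is continuously differentiable pointwise using the smoothness of $\rm expit$ together with boundedness of $\mathcal{W}$, $\mathcal{H}$, and positivity of $\pi_0$, and then establish finiteness and continuity of $\mathscr{I}_{\hd,\vartheta}$ via dominated convergence. Your closed forms for $U_\hd(\vartheta;z)$ and $\mathscr{I}_{\hd,\vartheta}$ agree with the paper's after simplifying $\rho(\vartheta^{\rm T}r_\hd(w))=\pi_{\hd,\vartheta}(w)[1-\pi_{\hd,\vartheta}(w)]$, so no substantive gap remains.
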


\begin{proof}[Proof of Lemma~\ref{lem:qmd}]
Fix $\hd \in \mathcal{H}$. In what follows, we will show that the conditions of Lemma~7.6 in \cite{van2000asymptotic} are satisfied, which will give the result.

Let $s_{\hd,\vartheta} := \sqrt{\bar{p}_{\hd,\vartheta}}$. We first show that, for each $z$, $\vartheta \mapsto s_{\hd,\vartheta}(z)$ is continuously differentiable. Fix $z$, noting that 
$\pi_{\hd,\vartheta}(w) = {\rm expit}[\vartheta^{\rm{T}}  r_\hd(w)]$, we see that $\vartheta\mapsto \pi_{\hd,\vartheta}(w)$ is the composition of two continuously differentiable functions, namely ${\rm expit}(\cdot)$ and $\vartheta \mapsto \vartheta^{\rm{T}} r_\hd(w)$, and so is itself continuously differentiable. 
Next, let $\dot{p}_{\hd,\vartheta}(z) := \nabla_\vartheta \bar{p}_{\hd,\vartheta}(z)$. 
It can be shown that
\begin{align*}
\dot{p}_{\hd,\vartheta}(z)
&= \frac{(-1)^a\nabla_\vartheta \pi_{\hd,\vartheta}(w)}{\pi(w)^a[1-\pi(w)]^{1-a}}
 = \frac{(-1)^a \rho(\vartheta^{\rm{T}} r_\hd(w))r_\hd(w)}{\pi(w)^a[1-\pi(w)]^{1-a}}.
\end{align*}
Hence, $\vartheta\mapsto \dot{p}_{\hd,\vartheta}(z)$ is continuous in $\vartheta$. Finally, let $\dot{s}_{\hd,\vartheta}(z):= \nabla_\vartheta s_{\hd,\vartheta}(z)$ be the gradient of $\vartheta \mapsto s_{\hd,\vartheta}(z)$.  Noting that $p_{\hd,\vartheta}(z)>0$ for all $\vartheta$, we see that
$$\dot{s}_{\hd,\vartheta}(z):= \nabla_\vartheta  s_{h, \vartheta}(z) = \nabla_\vartheta  \sqrt{\bar{p}_{\hd,\vartheta}(z)} = \dot{p}_{\hd,\vartheta}(z)/\left[2s_{\hd,\vartheta}(z)\right].$$
Because $\vartheta \mapsto \bar{p}_{\hd,\vartheta}(z)$ is a strictly positive, continuous function and that $x \mapsto \sqrt{x}$ and $x \mapsto 1/x $ are continuous on $(0,\infty)$, we see that $\vartheta\mapsto \bar{p}_{\hd,\vartheta}(z)^{-1/2}$ is continuous. Hence, $z\mapsto \dot{s}_{\hd,\vartheta}(z)$ is continuous, and so the map $\vartheta \mapsto s_{\hd,\vartheta}$ is continuously differentiable.

We now show that the elements of the matrix $\mathscr{I}_{\hd,\vartheta} = \int (\dot{p}_{\hd,\vartheta}/\bar{p}_{\hd,\vartheta})(\dot{p}_{\hd,\vartheta}^{\rm{T}}/\bar{p}_{\hd,\vartheta}) \bar{p}_{\hd,\vartheta} d\mu$ are well defined and continuous in $\vartheta$. Recalling that $\dot{p}_{\hd,\vartheta}(z)=\{(-1)^a \rho(\vartheta^{\rm{T}} r_\hd(w))r_\hd(w)\}/\{\pi(w)^a[1-\pi(w)]^{1-a}\}$, we see that
\begin{align*}
(\dot{p}_{\hd,\vartheta}/\bar{p}_{\hd,\vartheta})(\dot{p}_{\hd,\vartheta}^{\rm{T}}/\bar{p}_{\hd,\vartheta}) &= \frac{1}{[\pi_{\hd,\vartheta}(w)^a(1-\pi_{\hd,\vartheta}(w))^{1-a}]^2}\rho(\vartheta^{\rm{T}} r_\hd(w))r_\hd(w) r_\hd(w)^{\rm{T}} \rho(\vartheta^{\rm{T}} r_\hd(w)) \\
&= \frac{\rho(\vartheta^{\rm{T}} r_\hd(w))^2}{[\pi_{\hd,\vartheta}(w)^a(1-\pi_{\hd,\vartheta}(w))^{1-a}]^2} r_\hd(w)r_\hd(w)^{\rm{T}} 
\end{align*}
Below we use $E_{\bar{P}_{W}}$ to denote the expectation operator under $\bar{P}_{W}$. We have that
\begin{align*}
 \mathscr{I}_{\hd,\vartheta}&= \int (\dot{p}_{\hd,\vartheta}/\bar{p}_{\hd,\vartheta})(\dot{p}_{\hd,\vartheta}^{\rm{T}}/\bar{p}_{\hd,\vartheta}) d\bar{P}_{\hd,\vartheta} \\
 &= \int_{\mathcal{W}}\left( \sum_{a \in \{1, 0\}} \frac{\rho(\vartheta^{\rm{T}} r_\hd(w))^2r_\hd(w)r_\hd(w)^{\rm{T}}}{[\pi_{\hd,\vartheta}(w)^a(1-\pi_{\hd,\vartheta}(w))^{1-a}]^2}[\pi_{\hd,\vartheta}(w)^a(1-\pi_{\hd,\vartheta}(w))^{1-a}]\right)d\bar{P}_{W}\\
 &= \int_{\mathcal{W}}\left( \sum_{a \in \{1, 0\}} \frac{\rho(\vartheta^{\rm{T}} r_\hd(w))^2r_\hd(w)r_\hd(w)^{\rm{T}}}{\pi_{\hd,\vartheta}(w)^a(1-\pi_{\hd,\vartheta}(w))^{1-a}}\right)d\bar{P}_{W} \\
 &= \int_{\mathcal{W}}\frac{\rho(\vartheta^{\rm{T}} r_\hd(w))^2r_\hd(w)r_\hd(w)^{\rm{T}}}{\pi_{\hd,\vartheta}(w)(1-\pi_{\hd,\vartheta}(w))}d\bar{P}_{W} \\
 &= E_{\bar{P}_W}\left[\frac{\rho(\vartheta^{\rm{T}} r_\hd(W))^2r_\hd(W)r_\hd(W)^{\rm{T}}}{\pi_{\hd,\vartheta}(W)(1-\pi_{\hd,\vartheta}(W))}\right],
\end{align*}
the third equality uses that $1/t + 1/(1-t)=1/[t(1-t)]$. 
By Condition \ref{cond:qmd2}, $\hd(\cdot)$ is bounded. Hence, there exists a finite constant $C_u$ such that 
	\begin{equation*}
	   0\leq \frac{\rho(\vartheta^{\rm{T}} r_\hd(w))^2}{\pi_{\hd,\vartheta}(w)(1-\pi_{\hd,\vartheta}(w))} \leq C_u \ \ \textnormal{for all } w \in \mathcal{W}, \textnormal{ and all } \vartheta \in \mathbb{R}^{p+2}.
	\end{equation*}
Combining the bound on $\hd(\cdot)$ with the fact that $\mathcal{W}$ is bounded shows that the entries of the matrix $r_{\hd}(w)r_{\hd}(w)^{\rm{T}}$ are also bounded in $w$. Therefore, $\mathscr{I}_{\hd,\vartheta}$ is well-defined and finite. 
Next, note that the entries of the matrix  $$r_{\hd}(w)r_{\hd}(w)^{\rm{T}}\frac{\rho_{\hd,\vartheta}(w)}{\pi_{\hd,\vartheta}(w)^2(1-\pi_{\hd,\vartheta}(w))}$$ correspond to the composition of continuous functions in $\vartheta$ and, as such, are also continuous in $\vartheta$ for each $w$.  Therefore, for every $w$ and every $\mathbb{R}^{p+2}$-valued sequence $\{\vartheta_n\}_{n=1}^\infty$ that converges to $\vartheta$, we have that 
\begin{equation*}
r_{\hd}(w)r_{\hd}(w)^{\rm{T}} \frac{\rho(\vartheta_n^{\rm{T}} r_\hd(w))^2}{\pi_{\hd, \vartheta_n}(w)(1-\pi_{\hd, \vartheta_n}(w))} \longrightarrow  r_{\hd}(w)r_{\hd}(w)^{\rm{T}} \frac{\rho(\vartheta^{\rm{T}} r_\hd(w))^2}{\pi_{\hd,\vartheta}(w)(1-\pi_{\hd,\vartheta}(w))}.
\end{equation*}
Because each entry of $r_{\hd}(w)r_{\hd}(w)^{\rm{T}} \frac{\rho(\vartheta_n^{\rm{T}} r_\hd(w))^2}{\pi_{\hd, \vartheta_n}(w)(1-\pi_{\hd, \vartheta_n}(w))}$ is bounded by a constant, the Dominated Convergence Theorem implies that
\begin{equation*}
     E_{\bar{P}_W}\left[\frac{\rho(\vartheta_n^{\rm{T}} r_\hd(W))^2r_\hd(W)r_\hd(W)^{\rm{T}}}{\pi_{\hd, \vartheta_n}(W)(1-\pi_{\hd, \vartheta_n}(W))}\right] \longrightarrow E_{\bar{P}_W}\left[\frac{\rho(\vartheta^{\rm{T}} r_\hd(W))^2r_\hd(W)r_\hd(W)^{\rm{T}}}{\pi_{\hd,\vartheta}(W)(1-\pi_{\hd,\vartheta}(W))}\right]
\end{equation*}

Therefore. $\mathscr{I}_{\hd,\vartheta}$ is continuous in $\vartheta$. By Lemma~7.6 in \cite{van2000asymptotic}, the model $\{\bar{P}_{\hd,\vartheta}; \vartheta \in \mathbb{R}^{p+2}\}$ is quadratic mean differentiable.
\end{proof}

\begin{lemma}\label{lem:uqmd}
If the conditions of Lemma~\ref{lem:qmd} hold, then $\{\bar{P}_{\hd,\vartheta}; \vartheta \in \mathbb{R}^{p+2}, \hd \in \mathcal{H}\}$ is uniformly differentiable in quadratic mean.
\end{lemma}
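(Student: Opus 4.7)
The plan is to upgrade the pointwise quadratic mean differentiability from Lemma \ref{lem:qmd} to a statement uniform over $\hd \in \mathcal{H}$ by producing $\hd$-free envelopes and continuity moduli for $\dot{s}_{\hd,\vartheta}$, and then running the standard mean-value argument. Fix $\vartheta \in \Theta$ and restrict attention to $g$ in a bounded neighborhood of the origin. With $s_{\hd,\vartheta}(z) := \sqrt{\bar{p}_{\hd,\vartheta}(z)}$ and the identity $\dot{s}_{\hd,\vartheta} = \tfrac{1}{2} U_{\hd}(\vartheta;\cdot)\, s_{\hd,\vartheta}$ established in Lemma \ref{lem:qmd}, the fundamental theorem of calculus yields
\begin{align*}
s_{\hd,\vartheta+g}(z) - s_{\hd,\vartheta}(z) - \tfrac{1}{2} g^{\rm{T}} U_{\hd}(\vartheta;z)\, s_{\hd,\vartheta}(z) &= \int_0^1 g^{\rm{T}}\bigl[\dot{s}_{\hd,\vartheta+tg}(z) - \dot{s}_{\hd,\vartheta}(z)\bigr]\,dt.
\end{align*}
Squaring, applying Cauchy--Schwarz in $t$, integrating against $\bar{P}_0$ and taking the supremum over $\hd \in \mathcal{H}$ bounds the UDQM remainder by $\|g\|^2 R(g)$, where
\begin{align*}
R(g) := \sup_{\hd\in\mathcal{H}}\sup_{t\in[0,1]}\int \bigl\|\dot{s}_{\hd,\vartheta+tg}(z) - \dot{s}_{\hd,\vartheta}(z)\bigr\|^2\, d\bar{P}_0(z).
\end{align*}
The whole task then reduces to showing $R(g)\to 0$ as $g\to 0$.

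Next I would extract a deterministic, $\hd$-free envelope and modulus of continuity for $\dot{s}_{\hd,\vartheta}$. Conditions \ref{cond:qmd} and \ref{cond:qmd2}(i) jointly yield $M:=\sup_{\hd\in\mathcal{H}}\sup_{w\in\mathcal{W}}\|r_{\hd}(w)\|<\infty$, so $(\vartheta+tg)^{\rm{T}} r_{\hd}(w)$ lies in a compact interval for all $\hd$, $w$, $t\in[0,1]$ and $g$ in the neighborhood of the origin. Combined with the positivity of $\pi_0$ from Condition \ref{cond:pos:first}, the explicit formula
\begin{align*}
\dot{s}_{\hd,\vartheta}(z) = \frac{(-1)^a \rho(\vartheta^{\rm{T}} r_{\hd}(w))\, r_{\hd}(w)}{2\,\pi_0(w)^a[1-\pi_0(w)]^{1-a}\, s_{\hd,\vartheta}(z)}
\end{align*}
derived in Lemma \ref{lem:qmd} is bounded by a constant depending only on $M$, the positivity bounds $l,u$, and the neighborhood radius, uniformly in $z$, $\hd$ and the relevant $\vartheta$. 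The same uniform boundedness, together with the smoothness of $\mathrm{expit}$, $\rho$, $x\mapsto\sqrt{x}$ and $x\mapsto 1/x$ on the relevant compact sets, produces a deterministic modulus $\omega(\|g\|)\to 0$ satisfying
\begin{align*}
\bigl\|\dot{s}_{\hd,\vartheta+tg}(z) - \dot{s}_{\hd,\vartheta}(z)\bigr\|\le \omega(\|g\|)\qquad \text{for all } \hd\in\mathcal{H},\ t\in[0,1],\ z.
\end{align*}
Since $\bar{P}_0$ is a probability measure, $R(g)\le \omega(\|g\|)^2\to 0$, completing the argument.

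The main obstacle is the second step: extracting a continuity modulus that is uniform in $\hd$. The key observation is that $\hd$ enters $\dot{s}_{\hd,\vartheta}$ only through $r_{\hd}(w)=(w,\hd(w))$, and Conditions \ref{cond:qmd} and \ref{cond:qmd2}(i) confine $\{r_{\hd}(w) : \hd\in\mathcal{H},\ w\in\mathcal{W}\}$ to a fixed compact subset of $\mathbb{R}^{p+2}$; from that point every remaining ingredient is a smooth function of quantities living in a fixed compact set, so pointwise continuity automatically upgrades to joint uniform continuity. Notably, the total boundedness of $\mathcal{H}$ in $L^2(P_{W,0})$ required by Condition \ref{cond:qmd2}(ii) is \emph{not} needed for this particular lemma and will instead play its role in the empirical process arguments of subsequent lemmas. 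With the uniform envelope and modulus in hand, UDQM follows without any further appeal to the dominated convergence theorem.
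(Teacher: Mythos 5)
Your argument is correct and takes a genuinely different route from the paper's. The paper reduces the supremum over $\mathcal{H}$ to a supremum over a finite set by (i) writing the UDQM remainder $f_{\vartheta,g,z}(\eta)$ as a function of the scalar $\eta=\hd(w)$, (ii) establishing in Lemma~\ref{lem:Lipschitz} that $f_{\vartheta,g,z}$ is Lipschitz in $\eta$ uniformly over $g$ and $z$, (iii) invoking Condition~\ref{cond:qmd2}(ii) ($L^2(P_{W,0})$ total boundedness of $\mathcal{H}$) to obtain a finite $\epsilon$-cover, and (iv) applying the pointwise QMD from Lemma~\ref{lem:qmd} on each element of the cover. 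You instead control the remainder via the fundamental theorem of calculus applied to $t\mapsto s_{\hd,\vartheta+tg}(z)$, then observe that $\dot{s}_{\hd,\vartheta}(z)$ is a smooth function of the quantities $(r_{\hd}(w),\pi_0(w),a,\vartheta)$, all of which range over a fixed compact set by Conditions~\ref{cond:pos:first}, \ref{cond:qmd}, and \ref{cond:qmd2}(i); uniform continuity on that compact set then delivers an $\hd$-free modulus $\omega(\|g\|)\to 0$ that bounds the integrand pointwise. Both proofs are valid, but they trade different resources: the paper's approach uses the $L^2$-total-boundedness and an $\epsilon$-net argument (which also serves as a warm-up for the empirical-process nets appearing in later lemmas), whereas yours exploits the stronger sup-norm boundedness of $\mathcal{H}$ to get a pointwise modulus directly, sidestepping the finite-cover reduction and dominated convergence entirely. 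Your remark that Condition~\ref{cond:qmd2}(ii) is not actually needed for this lemma is accurate under your argument; it is only the $\sup$-norm envelope from Condition~\ref{cond:qmd2}(i) that drives the compactness, and the paper's reliance on the $L^2$ cover is a stylistic choice rather than a necessity here. The FTC identity, the Cauchy--Schwarz step, and the claim that jointly continuous functions on compact sets admit moduli of uniform continuity are all standard and correctly applied, so the proof stands.
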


Before proving the above result, we establish the following useful lemma. Concisely stating this lemma requires a slight abuse of notation. 
We first fix an open, bounded, convex set $H$ that contains the image of $\hd$. For each $\eta\in H$, we then define $\pi_{\eta, \vartheta}(w) :=  {\rm expit}(\vartheta^{\rm{T}} (w,\eta))$, $\bar{p}_{\eta, \vartheta}(z) := \pi_{\eta, \vartheta}(w)^a(1-\pi_{\eta, \vartheta}(w))^{1-a}/\{\pi(w)^a(1-\pi(w))^{1-a}\}$, and $U_{\eta}(\vartheta;z) = a\frac{\partial}{\partial \vartheta} \log\pi_{\eta, \vartheta}(w) + (1-a)\frac{\partial}{\partial \vartheta}\log\{1-\pi_{\eta, \vartheta}(w)\}$. For each $\vartheta,g\in \mathbb{R}^{p+2}$ and $z\in\mathcal{Z}$, define the following $H\rightarrow \mathbb{R}$ function:
\begin{align}
    f_{\vartheta,g,z}(\eta):= \left[\bar{p}_{\eta, \vartheta + g}(z)^{1/2} - \bar{p}_{\eta, \vartheta}(z)^{1/2} - \frac{1}{2}g^T U_{\eta}(\vartheta;z) \bar{p}_{\eta, \vartheta}(z)^{1/2} \right]^2. \label{eq:fdef}
\end{align}
\begin{lemma}\label{lem:Lipschitz}
Fix $\vartheta\in\mathbb{R}^{p+2}$ and a bounded neighborhood $\mathcal{N}\subset\mathbb{R}^{p+2}$ of zero. If the conditions of Lemma~\ref{lem:qmd} hold, there exists an $L>0$ such that, for all $g\in \mathcal{N}$ and $z\in\mathcal{Z}$, $f_{\vartheta,g,z}(\cdot)$ is $L$-Lipschitz.
\end{lemma}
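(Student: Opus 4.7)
The plan is to exhibit a uniform bound on the derivative of $\eta\mapsto f_{\vartheta,g,z}(\eta)$ over $\eta\in H$, $g\in\mathcal{N}$, and $z\in\mathcal{Z}$, and then invoke the mean value theorem on the convex set $H$ to extract the Lipschitz constant.

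To that end, write $f_{\vartheta,g,z}(\eta)=F(\eta)^2$ with
\begin{align*}
F(\eta)\ :=\ \bar{p}_{\eta,\vartheta+g}(z)^{1/2}-\bar{p}_{\eta,\vartheta}(z)^{1/2}-\tfrac{1}{2}g^{\top}U_{\eta}(\vartheta;z)\,\bar{p}_{\eta,\vartheta}(z)^{1/2},
\end{align*}
so that $f'_{\vartheta,g,z}(\eta)=2F(\eta)F'(\eta)$. It therefore suffices to show that $|F(\eta)|$ and $|F'(\eta)|$ are each bounded by a constant depending only on $\vartheta$, $\mathcal{N}$, the diameter of $H$, $\sup_{w}\|w\|$, and the constants $l,u$ from Condition~\ref{cond:pos:first}.

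Next I would collect elementary estimates. Because $\mathcal{W}$ is bounded (Condition~\ref{cond:qmd}), $H$ is bounded, $\vartheta$ is fixed, and $g$ ranges over the bounded set $\mathcal{N}$, the linear forms $\vartheta^{\top}(w,\eta)$ and $(\vartheta+g)^{\top}(w,\eta)$ live in a compact interval; consequently both $\pi_{\eta,\vartheta}(w)$ and $\pi_{\eta,\vartheta+g}(w)$ are bounded away from $0$ and $1$, and $\rho$ evaluated at these arguments is bounded above. Together with the positivity bound $l\leq \pi(w)\leq u$ from Condition~\ref{cond:pos:first}, this yields uniform upper and (strictly positive) lower bounds on $\bar p_{\eta,\vartheta}(z)$ and $\bar p_{\eta,\vartheta+g}(z)$, and in turn on their square roots. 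Using $U_{\eta}(\vartheta;z)=[a-\pi_{\eta,\vartheta}(w)](w,\eta)$, which follows from the closed form in \eqref{defu}, we also get $\|U_{\eta}(\vartheta;z)\|$ bounded. These estimates already give a uniform bound on $|F(\eta)|$.

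It remains to bound $|F'(\eta)|$. Each term in $F(\eta)$ is a smooth composition of $\mathrm{expit}$, $\rho$, square roots, and $\eta\mapsto (w,\eta)$, so by the chain rule $\partial_\eta \pi_{\eta,\vartheta}(w)=\vartheta_{p+2}\,\rho(\vartheta^{\top}(w,\eta))$ is bounded, and analogously for $\partial_\eta\pi_{\eta,\vartheta+g}(w)$. Since $\bar p_{\eta,\vartheta}(z)$ is bounded below, $\partial_\eta \bar p_{\eta,\vartheta}(z)^{1/2}= \partial_\eta\bar p_{\eta,\vartheta}(z)/(2\bar p_{\eta,\vartheta}(z)^{1/2})$ is bounded. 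A direct differentiation of $U_{\eta}(\vartheta;z)=[a-\pi_{\eta,\vartheta}(w)](w,\eta)$ shows that $\partial_\eta U_{\eta}(\vartheta;z)$ is likewise bounded. The product rule then bounds the derivative of the third term in $F$, and adding the three bounded contributions bounds $|F'(\eta)|$ uniformly. Multiplying the bounds on $|F|$ and $|F'|$ and applying the mean value theorem on $H$ produces the desired constant $L$.

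The main obstacle is not analytical but rather the bookkeeping: $F$ contains three terms each built from compositions of $\mathrm{expit}$, $\rho$, and square roots, and one must verify that every intermediate quantity remains bounded uniformly in $(\eta,g,z)$. All the ingredients for these bounds are already furnished by the bounded neighborhood $\mathcal{N}$, the bounded neighborhood $H$ (guaranteed by the boundedness of $\mathcal{H}$ in Condition~\ref{cond:qmd2}), the bounded covariate space $\mathcal{W}$, and the positivity bounds on $\pi$; once these are assembled, the Lipschitz constant $L$ can be read off.
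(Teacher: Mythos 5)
Your proposal is correct and follows essentially the same strategy as the paper: the paper likewise differentiates $f_{\vartheta,g,z}$ with respect to $\eta$, obtaining $\partial_\eta f = 2\beta_{\vartheta,g,\eta,z}\cdot(\partial_\eta \beta_{\vartheta,g,\eta,z})$ (your $2FF'$), and then bounds $|\beta|$ and $|\partial_\eta\beta|$ uniformly over $g\in\mathcal{N}$, $\eta\in H$, $z\in\mathcal{Z}$ using the boundedness of $\mathcal{W}$, $\mathcal{H}$, and $\mathcal{N}$ together with the positivity of $\pi$.
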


\begin{proof}[Proof of Lemma~\ref{lem:Lipschitz}] 
Fix $\vartheta\in\mathbb{R}^{p+2}$ and a bounded neighborhood $\mathcal{N}$ of zero. Though not indicated in the notation, all constants displayed in this proof may depend on the choice of $\vartheta$ and $\mathcal{N}$. We show that $\big |\frac{\partial}{\partial \eta} f_{\vartheta,g,z}(\eta)\big |$ is uniformly bounded over $g\in\mathcal{N}$, $\eta\in H$, and $z\in\mathcal{Z}$.

For each $v \in \mathbb{R}^{p+2}$, $\eta\in H$, and $z=(w,a)\in\mathcal{Z}$, let $q_{\eta, v}(z) := [\frac{\partial}{\partial \eta}\bar{p}_{\eta, v}(z)]/[\bar{p}_{\eta, v}(z)^{1/2}]$. It can be shown that 
\begin{align}
    q_{\eta, v}(z) =  \frac{1}{\{\pi(w)^a[1-\pi(w)]^{1-a}\}^{1/2}}\left\{\frac{(-1)^{1-a}\pi_{\eta, v}(w)[1-\pi_{\eta, v}(w)]}{\{\pi_{\eta, v}(w)^{a}[1-\pi_{\eta, v}(w)]^{1-a}\}^{1/2}}\right\}v_{p+2}.\label{eq:qveta}
\end{align}
Further, 
\begin{align*}
    \frac{\partial}{\partial \eta}U_{\eta}(\vartheta; z) &=  \frac{\partial}{\partial \eta}\left\{a\nabla_\vartheta\log\{\pi_{\eta, \vartheta}(w)\} + (1-a)\nabla_\vartheta\log\{1-\pi_{\eta, \vartheta}(w)\}\right\}  \\
    &= \frac{\partial}{\partial \eta}\left\{(w,\eta)(a-\pi_{\eta, \vartheta}(w))\right\} \\
    &= -\rho(\vartheta^{\rm{T}}(w,\eta))\vartheta_{p+2}(w, \eta  -[a-\pi_{\eta, \vartheta}(w)]).
\end{align*}
For each $v \in \mathbb{R}^{p+2}$, $\eta\in H$, $z\in\mathcal{Z}$, and $g\in\mathcal{N}$, let $\beta_{\vartheta,g,\eta,z}:=\bar{p}_{\eta, \vartheta + g}(z)^{1/2} - \bar{p}_{\vartheta,\eta}(z)^{1/2} - \frac{1}{2}g^{\rm{T}} U_{\eta}(\vartheta;z)\bar{p}_{\vartheta,\eta}(z)^{1/2}$. We have that
\begin{align}
\label{derf}
&\frac{\partial}{\partial \eta}f_{\vartheta,g,z}(\eta) =\\ \nonumber
&= 2\beta_{\vartheta,g,\eta,z}\left\{\frac{d}{d\eta}\bar{p}_{\eta, \vartheta + g}(z)^{1/2} - \frac{d}{d\eta}\bar{p}_{\eta, \vartheta}(z)^{1/2} - \frac{d}{d\eta}[\frac{1}{2}g^T U_{\eta}(\vartheta;z)\bar{p}_{\eta, \vartheta}(z)^{1/2}] \right\} \\ \nonumber
&= 2 \beta_{\vartheta,g,\eta,z}\Bigg\{\frac{\frac{\partial}{\partial \eta}\bar{p}_{\eta, \vartheta + g}(z)}{2\bar{p}_{\eta, \vartheta + g}(z)^{1/2}} -\frac{\frac{\partial}{\partial \eta}\bar{p}_{\eta, \vartheta}(z)}{2\bar{p}_{\eta, \vartheta}(z)^{1/2}}  \\ \nonumber
&\hspace{5.25em}- \frac{1}{2} g^{\rm{T}} \left(\frac{\partial}{\partial\eta} U_{\eta}(\vartheta;z)\right)\bar{p}_{\eta, \vartheta}(z)^{1/2} - \frac{1}{2} g^{\rm{T}} U_{\eta}(\vartheta;z)\left(\frac{\frac{\partial}{\partial \eta}\bar{p}_{\eta, \vartheta}(z)}{2\bar{p}_{\eta, \vartheta}(z)^{1/2}}\right) \Bigg\} \\
\nonumber
&= \beta_{\vartheta,g,\eta,z}\left\{\frac{\frac{\partial}{\partial \eta}\bar{p}_{\eta, \vartheta + g}(z)}{\bar{p}_{\eta, \vartheta + g}(z)^{1/2}} -\frac{\frac{\partial}{\partial \eta}\bar{p}_{\eta, \vartheta}(z)}{\bar{p}_{\eta, \vartheta}(z)^{1/2}} - g^{\rm{T}} \left(\frac{\partial}{\partial\eta} U_{\eta}(\vartheta;z)\right)\bar{p}_{\eta, \vartheta}(z)^{1/2} - \frac{1}{2} g^{\rm{T}} U_{\eta}(\vartheta;z)\left(\frac{\frac{\partial}{\partial \eta}\bar{p}_{\eta, \vartheta}(z)}{\bar{p}_{\eta, \vartheta}(z)^{1/2}}\right)\right\}   \\
\nonumber
&= \beta_{\vartheta,g,\eta,z}\left\{q_{\eta, \vartheta +g}(z) - q_{\eta, \vartheta}(z)  - g^{\rm{T}} \left(\frac{\partial}{\partial\eta} U_{\eta}(\vartheta;z)\right)\bar{p}_{\eta, \vartheta}(z)^{1/2} - \frac{1}{2} g^{\rm{T}} U_{\eta}(\vartheta;z)q_{\eta, \vartheta}(z)\right\} \\
\nonumber
&= \beta_{\vartheta,g,\eta,z}\left\{q_{\eta, \vartheta +g}(z) - q_{\eta, \vartheta}(z)  - g^{\rm{T}} \left(\frac{\partial}{\partial\eta} U_{\eta}(\vartheta;z)\right)[\bar{p}_{\eta, \vartheta}(z)^{1/2} + q_{\eta, \vartheta}(z) ]\right\} \\
\nonumber
&= \beta_{\vartheta,g,\eta,z}\left\{q_{\eta, \vartheta +g}(z) - q_{\eta, \vartheta}(z)  + \rho(\vartheta^{\rm{T}}(w,\eta))\vartheta_{p+2}g^{\rm{T}}(w, \eta  -[a-\pi_{\eta, \vartheta}(w)])[\bar{p}_{\eta, \vartheta}(z)^{1/2} + q_{\eta, \vartheta}(z) ]\right\}.
\end{align}
Because all $\hd \in \mathcal{H}$ are bounded functions, the domain of $f_{\vartheta,g,z}$, namely $H$, is bounded. Moreover, because $\mathcal{W}$ is bounded, we see that $\sup_{w\in\mathcal{W},g\in\mathcal{N}}|(\vartheta + g)^{\rm{T}} (w,\eta)|<\infty$, and so there exists a $\delta>0$ so that
\begin{align}
\label{eq:posit}
    \delta \le \inf_{w\in\mathcal{W},g\in\mathcal{N}} \pi_{\eta, \vartheta + g}(w)\le \sup_{w\in\mathcal{W},g\in\mathcal{N}} \pi_{\eta, \vartheta + g}(w)\le 1-\delta.
\end{align}
Plugging this observation and a similar observation about the propensity $\pi$ into \eqref{eq:qveta} 
shows that there exist finite constants $C_1,C_2$ such that, for all $g \in \mathcal{N}$, $\eta\in H$, and $z\in\mathcal{Z}$, $|q_{\eta, \vartheta + g}(z)|\le C_1 (|\vartheta_{p+2}| + |g_{p+2}|)\le C_1|\vartheta_{p+2}| + C_2$, where the latter inequality used that $\mathcal{N}$ is bounded. Let $C:= C_1|\vartheta_{p+2}| + C_2$.
Next, there exists an $M<\infty$ such that $\sup_{g\in\mathcal{N},\eta\in H,z\in\mathcal{Z}}|g^{\rm{T}}(w,\eta - [a - \pi_{\eta, \vartheta}(w)])|<M$.  Moreover, because $\rho(x)\leq 1$ for all $x \in \mathbb{R}$ such that
$$\sup_{g\in\mathcal{N},\eta\in H,z\in\mathcal{Z}}|\rho(\vartheta^{\rm{T}}(w,\eta))\vartheta_{p+2}g^{\rm{T}}(w, \eta - [a-\pi_{\eta, \vartheta}(w)])| < M|\vartheta_{p+2}|.$$ 
Lastly, by the aforementioned arguments, there exists a $D<\infty$ such that $\sup_{g\in\mathcal{N},\eta\in H,z\in\mathcal{Z}}\bar{p}_{\eta, \vartheta + g}(z)\le D$. Therefore, for all $g\in\mathcal{N}$, $\eta\in H$, and $z\in\mathcal{Z}$,
\begin{align*}
    &q_{\eta, \vartheta +g}(z) - q_{\eta, \vartheta}(z)  + \rho(\vartheta^{\rm{T}}(w,\eta))\vartheta_{p+2}g^{\rm{T}}(w, \eta  -[a-\pi_{\eta, \vartheta}(w)])[\bar{p}_{\eta, \vartheta}(z)^{1/2} + q_{\eta, \vartheta}(z) ] \\
    &\leq |q_{\eta, \vartheta +g}(z)| + |q_{\eta, \vartheta}(z)|  + |\rho(\vartheta^{\rm{T}}(w,\eta))\vartheta_{p+2}g^{\rm{T}}(w, \eta  - [a-\pi_{\eta, \vartheta}(w)])| |\bar{p}_{\eta, \vartheta}(z)^{1/2} + q_{\eta, \vartheta}(z) | \\
    &\leq  2C + |\vartheta_{p+2}|M(D^{1/2} + C),
\end{align*}
and also
\begin{align*}
\beta_{\vartheta, g, \eta, z}= |\bar{p}_{\eta, \vartheta + g}(z)^{1/2} - \bar{p}_{\eta, \vartheta}(z)^{1/2} - \frac{1}{2}g^{\rm{T}} U_{\eta}(\vartheta;z)\bar{p}_{\eta, \vartheta}(z)^{1/2}| \leq 2D^{1/2} +  |\vartheta_{p+2}|MD^{1/2}.
\end{align*}
Combining the above bounds with the display in equation  \ref{derf}, we have that 
$$\sup_{g\in\mathcal{N},\eta\in H,z\in\mathcal{Z}}\bigg|\frac{\partial}{\partial \eta}f_{\vartheta,g,z}(\eta)\bigg| < (2D^{1/2} +  |\vartheta_{p+2}|MD^{1/2})[2C + |\vartheta_{p+2}|M(D^{1/2} + C)]=:L.$$
Hence, $f_{\vartheta,g,z}(\cdot)$ is $L$-Lipschitz for all $g \in \mathcal{N}$ and $z \in \mathcal{Z}$.
\end{proof}

\begin{proof}[Proof of Lemma~\ref{lem:uqmd}] 
Fix $\vartheta \in \mathbb{R}^{p+2}$. We show that 
\begin{align*}
   \sup_{\hd \in \mathcal{H}} \int \left(\sqrt{\bar{p}_{\hd, \vartheta + g}(z)} - \sqrt{\bar{p}_{\hd,\vartheta}(z)} - \frac{1}{2}g^T U_{\hd}(\vartheta;z)\sqrt{\bar{p}_{\hd,\vartheta}(z)} \right)^2 d\bar{P}_Z(z)  \rightarrow 0 \text{ as } g \rightarrow 0.
\end{align*}
For $g\in\mathcal{N}$ and $\hd_1,\hd_2\in\mathcal{H}$, define
\begin{align*}
    \Psi_{g}(\hd_1,\hd_2):= \left|\int [f_{\vartheta, g, z}\circ \hd_1(w) - f_{\vartheta, g, z}\circ \hd_2(w)] d\bar{P}_Z(z) \right|
\end{align*}
By two applications of Jensen's inequality and Lemma~\ref{lem:Lipschitz}, the following holds for all $g \in \mathcal{N}$ and $\hd_1,\hd_2\in\mathcal{H}$:
\begin{align}
   \Psi_{g}(\hd_1,\hd_2)
   &\leq \int |f_{\vartheta, g, z}\circ \hd_1(w) - f_{\vartheta, g, z}\circ \hd_2(w)| d\bar{P}_Z(z) \nonumber \\
   &\leq L_{\vartheta} \int |\hd_1(w) - \hd_2(w)|dP_{W,0}(w) \le L_{\vartheta}\|\hd_1-\hd_2\|_{L^2(P_{W,0})}, \label{boundf}
\end{align}
where $\|\cdot\|_{L^2(P_{W,0})}$ denotes the $L^2(P_{W,0})$ norm, defined as $\|f\|_{L^2(P_{W,0})}:=[\int f(w)^2 dP_{W,0}(w)]^{1/2}$. 
Because $(\mathcal{H},L^2(P_{W,0}))$ is totally bounded, for every $\delta>0$, there exists a finite $\delta$-cover of $\mathcal{H}$. Now, fix $\epsilon>0$. By \eqref{boundf}, $\|\hd_1-\hd_2\|_{L^2(P_{W,0})} \le \epsilon/L_{\vartheta}$ implies $\Psi_g(\hd_1,\hd_2)\le \epsilon$. 
Let $\mathcal{H}_{\epsilon}$ be a minimal $\epsilon/L_{\vartheta}$-cover of $\mathcal{H}$ with finite cardinality. We then define a $\mathcal{H}\rightarrow \mathcal{H}_{\epsilon}$ map $H_{\epsilon}$ so that, for all $\hd \in \mathcal{H}$, $H_{\epsilon}(\hd)$ is such that $\|\hd-H_{\epsilon}(\hd)\|_{L^2(P_{W,0})}\le \epsilon/L_{\vartheta}$, and note that, by the above $\Psi_g(\hd,H_{\epsilon}(\hd))\le \epsilon$ for all $\hd \in \mathcal{H}$. 
Combining this with the triangle inequality and the fact that $f_{\vartheta,g,z}$ is a nonnegative function shows that
\begin{align*}
       0\le \sup_{\hd \in \mathcal{H}}\int f_{\vartheta, g, z}\circ \hd(w) d\bar{P}_Z(z)&\le \sup_{\hd \in \mathcal{H}}\int f_{\vartheta, g, z}\circ H_\epsilon(\hd)(w) d\bar{P}_Z(z) \\
       &\quad+ \sup_{\hd \in \mathcal{H}}\left|\int \left[f_{\vartheta, g, z}\circ \hd(w)-f_{\vartheta, g, z}\circ H_{\epsilon}(\hd)(w)\right] d\bar{P}_Z(z)\right| \\
        &= \sup_{\hd \in \mathcal{H}}\int f_{\vartheta, g, z}\circ H_{\epsilon}(\hd)(w) d\bar{P}_Z(z) +  \sup_\hd \Psi_g(\hd,H_{\epsilon}(\hd))   \\
       &\le \sup_{\hd \in \mathcal{H}}\int f_{\vartheta, g, z}\circ H_{\epsilon}(\hd)(w) d\bar{P}_Z(z)+ \epsilon \\
       &= \sup_{\hd \in \mathcal{H}_\epsilon}\int f_{\vartheta, g, z}\circ \hd(w) d\bar{P}_Z(z)+ \epsilon.
\end{align*}
By Lemma \ref{lem:qmd}, for each $\hd \in \mathcal{H}_{\epsilon}$, $\int  f_{\vartheta, g, z}\circ \hd(w) d\bar{P}_Z(z) \rightarrow 0 \text{ as } g \rightarrow 0$. 
Thus, the supremum on the right-hand side above, which is a supremum over a finite set, converges to zero. Hence,
\begin{align*}
   \sup_{\hd \in \mathcal{H}} \int f_{\vartheta, g, z}\circ \hd(w) d\bar{P}_Z(z) 
   \le \epsilon.
\end{align*} 
As $\epsilon>0$ was arbitrary, the LHS of the above equation converges to 0. Recalling the definition of $f_{\vartheta,g,z}$ from \eqref{eq:fdef} gives the result.
\end{proof}

\begin{lemma}\label{lem:locasymnorm}
Assume that the conditions of Lemma \ref{lem:qmd} hold. Let $g \in \mathbb{R}^{p+2}$ and $\vartheta_n = \vartheta_0 + g/\sqrt{n}$. Then, for all $\hd \in \mathcal{H}$, $E_{\hd, \vartheta_0}[U_{\hd}(\vartheta_0;Z)]  = 0$,  and $\mathscr{I}_{\hd, \vartheta_0}$ exists. Furthermore, for every $\epsilon>0$ as $n \rightarrow \infty$
\begin{align} \label{eq:lasymnorm}
\sup_{\hd \in \mathcal{H}} P_{\hd, \vartheta_0}\left( \bigg| \Lambda_{\hd}(\vartheta_n, \vartheta_0; D_n)  -g^{\rm{T}}\frac{1}{\sqrt{n}}U_{\hd}(\vartheta_0; D_n) + \frac{1}{2}g^{\rm{T}} \mathscr{I}_{\hd, \vartheta_0} g \bigg| > \epsilon \right) \rightarrow 0  
\end{align}
\end{lemma}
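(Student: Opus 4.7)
The plan is to mimic the standard proof of local asymptotic normality (Theorem 7.2 of \cite{van2000asymptotic}) while carefully tracking uniformity in $\hd \in \mathcal{H}$, with the uniform differentiability in quadratic mean established in Lemma \ref{lem:uqmd} playing the role that QMD plays in the usual argument. The first two assertions follow from the explicit form of the score: since $\vartheta_0 = (\theta_0, 0)$ gives $\pi_{\hd, \vartheta_0}(w) = \pi_0(w)$ for every $\hd$, the score reduces to $U_\hd(\vartheta_0; z) = [a - \pi_0(w)] r_\hd(w)$, so $E_{\hd, \vartheta_0}[U_\hd(\vartheta_0; Z)] = E\{E[A - \pi_0(W) \mid W]\, r_\hd(W)\} = 0$, while the existence and uniform boundedness of $\mathscr{I}_{\hd, \vartheta_0}$ was already established in the proof of Lemma~\ref{lem:qmd}.

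For the main LAN expansion I would set $W_{n,i,\hd} := 2[\sqrt{\bar p_{\hd, \vartheta_n}(Z_i)/\bar p_{\hd, \vartheta_0}(Z_i)} - 1]$, so that
\[
\Lambda_\hd(\vartheta_n, \vartheta_0; D_n) \;=\; 2\sum_{i=1}^n \log(1 + W_{n,i,\hd}/2).
\]
The Taylor expansion $2\log(1 + x/2) = x - x^2/4 + r(x)$ with $|r(x)| \leq C|x|^3$ for bounded $|x|$ decomposes $\Lambda_\hd$ into a linear, a quadratic, and a remainder term. I would then establish three uniform-in-$\hd$ convergences under $P_{\hd, \vartheta_0}$: (i) $\sum_i [W_{n,i,\hd} - (g^{\rm{T}}/\sqrt{n}) U_\hd(\vartheta_0; Z_i)] \to -\tfrac{1}{4} g^{\rm{T}} \mathscr{I}_{\hd, \vartheta_0} g$; (ii) $\tfrac{1}{4} \sum_i W_{n,i,\hd}^2 \to \tfrac{1}{4} g^{\rm{T}} \mathscr{I}_{\hd, \vartheta_0} g$; and (iii) $\sum_i r(W_{n,i,\hd}) \to 0$. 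Summing these gives the target expansion. Setting $V_{n,i,\hd} := W_{n,i,\hd} - (g^{\rm{T}}/\sqrt{n}) U_\hd(\vartheta_0; Z_i)$, a direct calculation shows that $E_{\hd, \vartheta_0}[V_{n,i,\hd}^2]$ equals four times the integrand controlled by UDQM in Lemma \ref{lem:uqmd}, so $\sup_\hd n E_{\hd, \vartheta_0}[V_{n,i,\hd}^2] \to 0$ and thus $\mathrm{Var}(\sum_i V_{n,i,\hd}) = o(1)$ uniformly. The Hellinger-affinity identity $E_{\hd, \vartheta_0}[\sqrt{\bar p_{\hd, \vartheta_n}/\bar p_{\hd, \vartheta_0}}] = 1 - \tfrac{1}{2}\int(\sqrt{\bar p_{\hd, \vartheta_n}} - \sqrt{\bar p_{\hd, \vartheta_0}})^2 d\bar P_0$, combined with the UDQM expansion of the Hellinger distance, yields $n E_{\hd, \vartheta_0}[W_{n,i,\hd}] \to -\tfrac{1}{4} g^{\rm{T}} \mathscr{I}_{\hd, \vartheta_0} g$ uniformly; coupled with the mean-zero score, Chebyshev delivers (i). A parallel argument gives (ii), since $n E_{\hd, \vartheta_0}[W_{n,i,\hd}^2] \to g^{\rm{T}} \mathscr{I}_{\hd, \vartheta_0} g$ uniformly and the uniform sup-norm bound below gives $n E_{\hd, \vartheta_0}[W_{n,i,\hd}^4] = O(1/n)$ uniformly. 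For (iii), $|\sum_i r(W_{n,i,\hd})| \leq C \max_i |W_{n,i,\hd}| \cdot \sum_i W_{n,i,\hd}^2 = O(1/\sqrt{n}) \cdot O_{P_{\hd, \vartheta_0}}(1)$, with implicit constants uniform in $\hd$.

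The principal technical obstacle is the uniform sup-norm bound $\sup_{\hd \in \mathcal{H}, z \in \mathcal{Z}} |W_{n,i,\hd}| \leq C/\sqrt{n}$, which is what upgrades the Taylor-remainder and fourth-moment controls from pointwise to uniform statements. I would obtain it by combining boundedness of $\mathcal{W}$ (Condition \ref{cond:qmd}) and of $\mathcal{H}$ (Condition \ref{cond:qmd2}), which give $\sup_{w, \hd} |(g/\sqrt{n})^{\rm{T}} r_\hd(w)| = O(1/\sqrt{n})$, with propensity positivity (Condition \ref{cond:pos:first}), which keeps $\bar p_{\hd, \vartheta_n}/\bar p_{\hd, \vartheta_0}$ uniformly bounded away from $0$ and $\infty$, together with Lipschitz continuity of $\mathrm{expit}$ and of $\sqrt{\,\cdot\,}$ on any closed subinterval of $(0, \infty)$. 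Once this bound is in hand, the remainder of the argument reduces to moment computations in which UDQM substitutes for QMD at every step where uniformity is required, and Chebyshev's inequality delivers the uniform-in-$\hd$ probability bound asserted in \eqref{eq:lasymnorm} with no further supremum-over-$\hd$ machinery.
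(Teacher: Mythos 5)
Your proposal is correct and follows essentially the same strategy as the paper: Taylor-expanding $2\log(1+T/2)$ with $T = 2[\sqrt{\bar p_{\hd,\vartheta_n}}-1]$ (recall $\bar p_{\hd,\vartheta_0}\equiv 1$, so your $W_{n,i,\hd}$ equals the paper's $T_{\hd,ni}$), controlling the linear term and the quadratic term separately via second-moment bounds plus the UDQM of Lemma~\ref{lem:uqmd}, and killing the cubic remainder with a uniform sup-norm bound $\sup_{\hd,z}|T_{\hd,n}(z)| = O(n^{-1/2})$ (the paper's Lemma~\ref{lem:supT}), all converted to probability statements via Markov/Chebyshev. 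The only cosmetic differences are that you verify $E_{\hd,\vartheta_0}[U_\hd(\vartheta_0;Z)]=0$ by direct computation while the paper simply invokes Theorem~7.2 of \cite{van2000asymptotic}, and you package the Hellinger-affinity computation as a separate step where the paper folds it into a joint second-moment bound for $\sum_i T_{\hd,ni} - n^{-1/2}\sum_i s_\hd(Z_i) + \tfrac14 P s_\hd^2$.
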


\begin{proof}
Let $\hd \in \mathcal{H}$. By assumption, $P_{\hd, \vartheta_0}$ is quadratic mean differentiable. Then, by Theorem 7.2 in \cite{van2000asymptotic} $E_{\hd, \vartheta_0}[U(\vartheta_0;Z)] = 0$ and $\mathscr{I}_{\hd, \vartheta_0}$ exist.
 
Now we show equation \ref{eq:lasymnorm}. First, we introduce some notation: let $\bar{p}_{\hd, n}(z) := \bar{p}_{\hd, \vartheta_0 + g/\sqrt{n}}(z)$ and $s_{\hd}(z) := g^{\rm{T}} U_{\hd}(\vartheta_0; z)$. Note that, since we use $\bar{P}_Z$ as a dominating measure, $\bar{p}_{\hd, \vartheta_0}(z) = 1$. Now, define the random variable $T_{\hd, ni} := 2[ \sqrt{\bar{p}_{\hd, n}(Z_i)} -1]$ and note that 
\begin{align*}
    &\sup_{\hd \in \mathcal{H}} E\left[\left\{\sum_{i=1}^n T_{\hd, ni} - n^{-1/2}\sum_{i=1}^n s_\hd(Z_i) + \frac{1}{4}Ps_\hd^2\right\}^2\right] \\ 
    &\leq \sup_{\hd \in \mathcal{H}} \text{var}\left(\sum_{i=1}^n T_{\hd, ni} - n^{-1/2}\sum_{i=1}^n s_\hd(Z_i) + \frac{1}{4}Ps_\hd^2 \right)\\
    &\quad+\sup_{\hd \in \mathcal{H}}E\left[\sum_{i=1}^n T_{\hd, ni} - n^{-1/2}\sum_{i=1}^n s_\hd(Z_i) + \frac{1}{4}Ps_\hd^2\right]^2  
\end{align*}
We show these two quantities converge to 0. First, by properties of variances and Lemma \ref{lem:uqmd}:
\begin{align}
  0 &\le \sup_{\hd \in \mathcal{H}}   \text{var}\left(\sum_{i=1}^n T_{\hd, ni} - n^{-1/2}\sum_{i=1}^n s_\hd(Z_i) + \frac{1}{4}Ps_\hd^2 \right) \nonumber \\
  &= \sup_{\hd \in \mathcal{H}}   \text{var}\left(\sum_{i=1}^n T_{\hd, ni} - n^{-1/2}\sum_{i=1}^n s_\hd(Z_i)\right) = \sup_{\hd \in \mathcal{H}} \sum_{i=1}^n\text{var}\left( T_{\hd, ni} - n^{-1/2}s_\hd(Z_i)\right) \nonumber \\
  &= \sup_{\hd \in \mathcal{H}}   \text{var}\left( n^{1/2}T_{\hd, n1} - s_\hd(Z_1)\right)
  \leq \sup_{\hd \in \mathcal{H}}  E\left[\left(n^{1/2} T_{\hd, n1} - s_\hd(Z_1)\right)^2\right] \nonumber \\
  &= 2 \sup_{\hd \in \mathcal{H}} \int n \left(\sqrt{\bar{p}_{\hd, n}(z)} -\sqrt{\bar{p}_{\hd, \vartheta_0}(z)} - \frac{1}{2} \frac{g^{\rm{T}}}{\sqrt{n}} U_{\hd}(\vartheta_0; z) \sqrt{\bar{p}_{\hd, \vartheta_0}(z)} \right)^2 d\bar{P}_Z(z)\rightarrow 0. \label{eq:1localsymn}
\end{align}
Next,
\begin{align*}
    \sup_{\hd \in \mathcal{H}} E\left[\sum_{i=1}^n T_{\hd, ni} - n^{-1/2}\sum_{i=1}^n s_\hd(Z_i) + \frac{1}{4}Ps_\hd^2\right]^2 &=  \sup_{\hd \in \mathcal{H}}E\left[\sum_{i=1}^n T_{\hd, ni} + \frac{1}{4}Ps_\hd^2\right]^2 \\
    &= \sup_{\hd \in \mathcal{H}} \left(2n \int[ \sqrt{\bar{p}_{n,\hd}(z)}-1 ]d\bar{P}_{Z}(z) + \frac{1}{4}Ps_\hd^2\right)^2\\
    &= \sup_{\hd \in \mathcal{H}} \left(-n\int \left[ \sqrt{\bar{p}_{\hd, n}(z)}-\sqrt{\bar{p}_{\hd, \vartheta_0}(z)} \right]^2 d\bar{P}_{Z}(z) + \frac{1}{4}Ps_\hd^2\right)^2,
\end{align*}
where for a fixed $\hd$ we have that
\begin{align}
    &\int \left[ \sqrt{\bar{p}_{\hd, n}(z)}-\sqrt{\bar{p}_{\hd, \vartheta_0}(z)} \right]^2 d\bar{P}_{Z}(z) \nonumber \\
    &= \int \left[ \sqrt{\bar{p}_{\hd, n}(z)}-\sqrt{\bar{p}_{\hd, \vartheta_0}(z)} \right]^2 d\bar{P}_{Z}(z) - \int \left[ \sqrt{\bar{p}_{\hd, n}(z)}-\sqrt{\bar{p}_{\hd, \vartheta_0}(z)} - n^{-1/2}\frac{1}{2} s_\hd(z) \right]^2 d\bar{P}_{Z}(z) + o(n^{-1}) \nonumber \\
    &= -n^{-1} \frac{1}{4}Ps_\hd^2+2\int \left[ \sqrt{\bar{p}_{\hd, n}(z)}-\sqrt{\bar{p}_{\hd, \vartheta_0}(z)}\right]\left[ n^{-1/2}\frac{1}{2} s_\hd(z) \right] d\bar{P}_{Z}(z) + o(n^{-1}) \nonumber \\
    &= -n^{-1} \frac{1}{4}Ps_\hd^2 + n^{-1} \frac{1}{2} \int s_\hd(z)^2 d\bar{P}_{Z}(z)  \nonumber \\
    &\quad + 2\int \left[ \sqrt{\bar{p}_{\hd, n}(z)}-\sqrt{\bar{p}_{\hd, \vartheta_0}(z)} - n^{-1/2} \frac{1}{2} s_\hd(z)\right]\left[ n^{-1/2}\frac{1}{2} s_\hd(z) \right] d\bar{P}_{Z}(z) + o(n^{-1}) \label{eq:CSUQMD} \\
    &= n^{-1} \frac{1}{4} Ps_\hd^2 + o(n^{-1}), \nonumber
\end{align}
where the integral in \eqref{eq:CSUQMD} is $o(n^{-1})$ by  Cauchy-Schwarz and Lemma \ref{lem:uqmd}. 
By the UQMD property of $\{P_{\hd,\vartheta}: \vartheta \in \mathbb{R}^{p+2}, \hd \in \mathcal{H}\}$, the $o(n^{-1})$ terms above are uniform over the choice of $\hd \in \mathcal{H}$. Therefore,
\begin{align}
\label{eq:2localsymn}
\sup_{\hd \in \mathcal{H}} E\left[\sum_{i=1}^n T_{\hd, ni} - n^{-1/2}\sum_{i=1}^n s_\hd(Z_i) + \frac{1}{4}Ps_\hd^2\right]^2 \rightarrow 0. 
\end{align}
By \eqref{eq:1localsymn}, \eqref{eq:2localsymn}, and Markov's inequality, we have that, for every $\epsilon >0$,
\begin{align}
    &\sup_{\hd \in \mathcal{H}} P\left( \bigg| \sum_{i=1}^N T_{\hd, ni} - n^{-1/2}\sum_{i=1}^n s_\hd(Z_i) + \frac{1}{4}Ps_\hd^2 \bigg| > \epsilon \right) \nonumber \\
    &\le \epsilon^{-2} \sup_{\hd \in \mathcal{H}} E\left[\left\{\sum_{i=1}^n T_{\hd, ni} - n^{-1/2}\sum_{i=1}^n s_\hd(Z_i) + \frac{1}{4}Ps_\hd^2\right\}^2\right] \overset{n\to\infty}{\longrightarrow} 0. \label{eq:TniUnif}
\end{align}
Next, we express $\Lambda_{\hd}(\vartheta_n, \vartheta_0; d_n)$ through a Taylor expansion of the logarithm. Define the $(-2,\infty)\rightarrow\infty$ function
\begin{align*}
    R(x)&= \begin{cases}
   \frac{\log\left(1+\frac{x}{2}\right) - x/2 + x^2/8}{x^2/4},&\mbox{ if }x\not=0, \\
    0,&\mbox{ otherwise.}
    \end{cases}
\end{align*}
Note that $\log(1+x)=x - \frac{1}{2}x^2 + x^2 R(2x)$. Moreover, by a Taylor expansion of $x\mapsto \log(1+x)$ about $0$, it can be shown that $R(x)\rightarrow 0$ as $x\rightarrow 0$. Then, for a given $\hd \in \mathcal{H}$ we have that 
\begin{align}
\label{eq:4loglktaylor}
   \Lambda_{\hd}(\vartheta_n, \vartheta_0; D_n) = 2 \sum_{i = 1}^n \log \left(1 + \frac{1}{2} T_{\hd, ni}\right)  = \sum_{i = 1}^n T_{\hd, ni} - \frac{1}{4} \sum_{i = 1}^n T_{\hd, ni}^2 + \frac{1}{2} \sum_{i = 1}^n T_{\hd, ni}^2 R(T_{\hd, ni}).
\end{align}
We have already shown uniform convergence of the first term of the right hand side of the above, in the sense described in \eqref{eq:TniUnif}. We now show that $\sum_{i = 1}^n T_{\hd, ni}^2 \rightarrow Ps_\hd ^2$ in an analogous uniform sense. To do this, we will show that both terms of the right hand side in the equation below converge to 0,
\begin{align}
\label{eq:5localsymn}
   \sup_{\hd \in \mathcal{H}} E\left[\left\{ \sum_{i = 1}^n T_{\hd, ni}^2 - Ps_\hd^2\right\} ^2\right] \leq  \sup_{\hd \in \mathcal{H}} \text{var}\left\{ \sum_{i = 1}^n T_{\hd, ni}^2 - Ps_\hd^2 \right\} + \sup_{\hd \in \mathcal{H}}E\left[\sum_{i = 1}^n T_{\hd, ni}^2 - Ps_\hd^2\right]^2.
\end{align}
Let $T_{\hd, n}(z):=2[\sqrt{\bar{p}_{\hd, n}(z)}-1]$, and note that $T_{\hd, ni}=T_{\hd, n}(Z_i)$. 
We begin by showing that the variance term converges to 0 uniformly. To do this, we first upper bound it as follows:
\begin{align*}
       \sup_{\hd \in \mathcal{H}}\text{var}\left\{ \sum_{i = 1}^n T_{\hd, ni}^2 - Ps_\hd^2 \right\} &=   \sup_{\hd \in \mathcal{H}}\text{var}\left\{ \sum_{i = 1}^n T_{\hd, ni}^2\right\} = \sup_{\hd \in \mathcal{H}}\sum_{i = 1}^n\text{var}\left\{ T_{\hd, ni}^2\right\} \leq  n \sup_{\hd \in \mathcal{H}}\int T_{\hd, n}(z)^4 dP_{Z}(z)\\
    &\le n\left(\sup_{\hd \in \mathcal{H},z\in\mathcal{Z}} T_{\hd, n}(z)^2\right)\sup_{\hd \in \mathcal{H}}\int T_{\hd, n}(z)^2dP_{Z}(z).
\end{align*}
Now, by Lemma \ref{lem:supT}, $\sup_{\hd \in \mathcal{H},z\in\mathcal{Z}} T_{\hd, n}(z)^2\ \rightarrow 0$ as $n\rightarrow\infty$, and because of the arguments in \eqref{eq:CSUQMD} from Lemma \ref{lem:locasymnorm}, we have that $\sup_{\hd \in \mathcal{H}}\int T_{\hd, n}(z)^2dP_{Z}(z) = O(n^{-1})$. Hence, $\sup_{\hd \in \mathcal{H}}\text{var}\left\{ \sum_{i = 1}^n T_{\hd, ni}^2 - Ps_\hd^2 \right\} \rightarrow 0.$
For the second term in \eqref{eq:5localsymn}, note that
\begin{align*}
 0 \leq \sup_{\hd \in \mathcal{H}}E\left[\sum_{i = 1}^n T_{\hd, ni}^2 - Ps_\hd^2\right]^2 &=  \sup_{\hd \in \mathcal{H}} \left(nE[T_{\hd, n1}^2] - Ps_\hd^2\right)^2. 
\end{align*}
By the arguments from \eqref{eq:CSUQMD} in Lemma \ref{lem:supT},
\begin{align*}
    \left|nE[ T_{\hd, ni}^2] - Ps_\hd^2\right| &= \left|n 4\int \left[ \sqrt{\bar{p}_{\hd, n}(z)}-\sqrt{\bar{p}_{\hd, \vartheta_0}(z)} \right]^2 d\bar{P}_{Z}(z)  - Ps_\hd^2\right| = \\
    &=  \left|n4 [n^{-1} \frac{1}{4} Ps_\hd^2 + o(n^{-1})]  - Ps_\hd^2\right|= o(n^{-1}) \rightarrow 0,
\end{align*}
uniformly over the choice of $\hd$. Hence, $\sup_{\hd \in \mathcal{H}} \left(nE[ T_{\hd, n1}^2] - Ps_\hd^2\right)^2 \rightarrow 0$. Combining the two preceding arguments we have that 
\begin{align*}
    \sup_{\hd \in \mathcal{H}}\text{var}\left\{ \sum_{i = 1}^n T_{\hd, ni}^2 - Ps_\hd^2 \right\} \rightarrow 0.
\end{align*}
Hence, by Markov's inequality and the aforementioned arguments, we have that for every $\epsilon > 0$
\begin{align}
\label{eq:6localsymn}
\sup_{\hd \in \mathcal{H}} P_{\hd, \vartheta_0}\left(\left| \sum_{i = 1}^n T_{\hd, ni}^2 - Ps_\hd^2\right| > \epsilon \right) \overset{n\to\infty}{\longrightarrow} 0.
\end{align}
Next, we show that for any $\epsilon >0$
\begin{align*}
\sup_{\hd \in \mathcal{H}}P_{\hd, \vartheta_0}\left(\left|\sum_{i = 1}^n T_{\hd, ni}^2 R(T_{\hd, ni})\right| > \epsilon\right)&\overset{n\to\infty}{\longrightarrow} 0.
\end{align*}
First,
\begin{align}
\label{eq:7localsymn}
    E\left[\left|\sum_{i = 1}^n T_{\hd, ni}^2 R(T_{\hd, ni})\right|\right] \leq  \sum_{i = 1}^n E\left[ \left|T_{\hd, ni}^2 R(T_{\hd, ni})\right|\right] \leq n \sup_{z \in \mathcal{Z}, \hd \in \mathcal{H}}R(T_{\hd, n}(z)) E\left[T_{\hd, ni}^2\right].
\end{align}
We now argue that $\sup_{z \in \mathcal{Z}, \hd \in \mathcal{H}}R(T_{\hd, n}(z))\rightarrow 0$ as $n\rightarrow\infty$. We first note that $R$ is continuous and nondecreasing. Also, because $\mathcal{W}$ and $\mathcal{H}$ are bounded, $\sup_{z \in \mathcal{Z}, \hd \in \mathcal{H}}R(T_{\hd, n}(z))<\infty$. Hence,
\begin{align*}
    \sup_{z \in \mathcal{Z}, \hd \in \mathcal{H}}R(T_{\hd, n}(z)) = R\left(\sup_{z \in \mathcal{Z}, \hd \in \mathcal{H}}T_{\hd, n}(z)\right). 
\end{align*}
By Lemma~\ref{lem:supT}, $\sup_{\hd \in \mathcal{H},z\in\mathcal{Z}}T_{\hd, n}(z)^2 \rightarrow 0$, which implies that $\sup_{\hd \in \mathcal{H},z\in\mathcal{Z}}T_{\hd, n}(z) \rightarrow 0$. 
By the above, this implies that $ \sup_{z \in \mathcal{Z}, \hd \in \mathcal{H}}R(T_{\hd, n}(z)) \rightarrow 0$ as $n \rightarrow \infty$. Next, because the arguments outlined in Lemma \ref{lem:locasymnorm} and \eqref{eq:CSUQMD}, we have that $\sup_{\hd \in \mathcal{H}}E[T_{\hd, ni}^2] = O(n^{-1})$. Combining the preceding arguments with \eqref{eq:7localsymn} and applying Markov's inequality, we have that, for any $\epsilon>0$, 
\begin{align}
\label{eq:8localsymn}
\sup_{\hd \in \mathcal{H}}P_{\hd, \vartheta_0}\left(\left|\sum_{i = 1}^n T_{\hd, ni}^2 R(T_{\hd, ni})\right| > \epsilon\right)&\le \epsilon^{-1}\sup_{\hd \in \mathcal{H}}E\left|\sum_{i = 1}^n T_{\hd, ni}^2 R(T_{\hd, ni})\right|\overset{n\to\infty}{\longrightarrow} 0.
\end{align}
Combining the identity in \eqref{eq:4loglktaylor} with the results shown in \eqref{eq:TniUnif}, \eqref{eq:6localsymn}, and \eqref{eq:8localsymn}, and by the triangle inequality, we have that for every $\epsilon > 0$,
\begin{align*}
\sup_{\hd \in \mathcal{H}} P_{\hd, \vartheta_0}\left( \bigg| \Lambda_{\hd}(\vartheta_n, \vartheta_0; D_n)  -\frac{g^{\rm{T}}}{\sqrt{n}}U_{\hd}(\vartheta_0;D_n) + \frac{1}{2}g^{\rm{T}} \mathscr{I}_{\hd, \vartheta_0} g \bigg| > \epsilon \right) \overset{n\to\infty}{\longrightarrow} 0.
\end{align*}
\end{proof}

\begin{lemma}
\label{lem:supT}
Fix $\vartheta, g\in \mathbb{R}^{p+2}$. If the conditions of Lemma \ref{lem:locasymnorm} hold, then 
\begin{align*}
   \sup_{\hd \in \mathcal{H},z\in\mathcal{Z}}T_{\hd, n}(z)^2 \rightarrow 0. 
\end{align*}
\end{lemma}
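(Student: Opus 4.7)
The plan is to show that $\bar{p}_{\hd,n}(z)\to 1$ uniformly in $\hd\in\mathcal{H}$ and $z\in\mathcal{Z}$, from which $T_{\hd,n}(z)^2=4[\sqrt{\bar{p}_{\hd,n}(z)}-1]^2\to 0$ uniformly by continuity of the square root at $1$. This reduces the problem to controlling $\pi_{\hd,\vartheta_n}(w)$ uniformly around $\pi_0(w)$.

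First I would exploit the specific form of $\vartheta_n$. Since $\vartheta_0=(\theta_0,0)$, we have $\vartheta_0^{\top} r_\hd(w)=\theta_0^{\top}w$ and hence $\pi_{\hd,\vartheta_0}(w)=\pi_0(w)$ for every $\hd$. Writing $\alpha_{\hd,n}(w):=\vartheta_n^{\top}r_\hd(w)-\theta_0^{\top}w=n^{-1/2}g^{\top}r_\hd(w)$, Conditions \ref{cond:qmd} and \ref{cond:qmd2}(i) give a finite constant $C$ with $\|r_\hd(w)\|\le C$ for all $\hd\in\mathcal{H}$, $w\in\mathcal{W}$, so
\begin{equation*}
\sup_{\hd\in\mathcal{H},\,w\in\mathcal{W}}|\alpha_{\hd,n}(w)|\le \|g\|C/\sqrt{n}\longrightarrow 0.
\end{equation*}
Since $\mathrm{expit}$ is globally Lipschitz (its derivative $\rho$ is bounded by $1/4$), this implies $\sup_{\hd,w}|\pi_{\hd,\vartheta_n}(w)-\pi_0(w)|=O(n^{-1/2})$.

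Next I would convert this into uniform convergence of $\bar{p}_{\hd,n}$. By Condition \ref{cond:pos:first}, $l<\pi_0(w)<u$, so $\pi_0(w)^a[1-\pi_0(w)]^{1-a}\ge \min\{l,1-u\}=:\eta>0$ for all $z$. Writing
\begin{equation*}
\bar{p}_{\hd,n}(z)-1=\frac{\pi_{\hd,\vartheta_n}(w)^a[1-\pi_{\hd,\vartheta_n}(w)]^{1-a}-\pi_0(w)^a[1-\pi_0(w)]^{1-a}}{\pi_0(w)^a[1-\pi_0(w)]^{1-a}},
\end{equation*}
and using that the numerator is bounded by $|\pi_{\hd,\vartheta_n}(w)-\pi_0(w)|$ for both $a\in\{0,1\}$, I obtain
\begin{equation*}
\sup_{\hd\in\mathcal{H},\,z\in\mathcal{Z}}|\bar{p}_{\hd,n}(z)-1|\le \eta^{-1}\sup_{\hd,w}|\pi_{\hd,\vartheta_n}(w)-\pi_0(w)|=O(n^{-1/2})\longrightarrow 0.
\end{equation*}

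Finally, because $x\mapsto(\sqrt{x}-1)^2$ is continuous at $x=1$ with value $0$, the uniform convergence above yields $\sup_{\hd\in\mathcal{H},\,z\in\mathcal{Z}}T_{\hd,n}(z)^2=4\sup_{\hd,z}[\sqrt{\bar{p}_{\hd,n}(z)}-1]^2\to 0$. There is no real obstacle here; the only care required is to verify that the Lipschitz bound for $\mathrm{expit}$ and the lower bound on $\pi_0$ combine to give uniform (in $\hd,z$) control, which is immediate from Conditions \ref{cond:pos:first}, \ref{cond:qmd}, and the uniform boundedness of $\mathcal{H}$ in Condition \ref{cond:qmd2}(i).
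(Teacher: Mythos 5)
Your proof is correct, and it takes a mildly different route from the paper's. The paper differentiates $\vartheta\mapsto \bar{p}_{\hd,\vartheta}(z)^{1/2}$ directly, bounds that derivative uniformly over $\hd,z$ using the boundedness of $\mathcal{W},\mathcal{H}$ and positivity of $\pi_0$, and then invokes Lipschitz continuity of this map in $\vartheta$ combined with $\|\vartheta_n-\vartheta\|=\|g\|/\sqrt{n}$. You instead establish Lipschitz control at the level of the linear predictor (using $\rho\le 1/4$), push it through the density ratio using the positivity lower bound on $\pi_0$ in the denominator and the algebraic identity that for each $a\in\{0,1\}$ the numerator perturbation is exactly $|\pi_{\hd,\vartheta_n}(w)-\pi_0(w)|$, and finish by continuity of $x\mapsto(\sqrt{x}-1)^2$ at $1$. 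Your route is a bit more elementary and exploits the specific cancellation $\bar{p}_{\hd,\vartheta_0}(z)=1$ coming from $\vartheta_{0}=(\theta_0,0)$; the paper's Lipschitz argument applies to an arbitrary base point $\vartheta$ without needing that last coordinate to vanish, at the cost of differentiating the square-root density ratio. Since the lemma is only invoked with base point $\vartheta_0$ (that is how $T_{\hd,n}$ is defined in the proof of Lemma~\ref{lem:locasymnorm}), your specialization costs nothing in practice. Both arguments rely on exactly the same structural inputs — boundedness of $\mathcal{W}$ and $\mathcal{H}$ and the positivity condition — so neither is stronger than the other.
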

\begin{proof}
Fix $z \in \mathcal{Z}$ and $\hd \in \mathcal{H}$ and define the $\mathbb{R}^{p+2} \rightarrow \mathbb{R}$ function $v_{\hd, z}(\vartheta):= \bar{p}_{\hd,\vartheta}(z)^{1/2}$. We show that $v$ is Lipschitz continuous in $\vartheta$ for all $\hd \in \mathcal{H}$ and $z\in\mathcal{Z}$. We do so by showing that its derivative is uniformly upper bounded entrywise. First, 
\begin{align*}
\left[\pi(w)^a (1-\pi(w))^{1-a}\right]\frac{\partial}{\partial \vartheta} \bar{p}_{\hd,\vartheta}(z) = \frac{\partial}{\partial \vartheta} \left\{\pi_{\hd,\vartheta}(w)^a(1-\pi_{\hd,\vartheta}(w))^{1-a}\right\} = 
\begin{cases}
\frac{\partial}{\partial \vartheta}\pi_{\hd,\vartheta}(w) \text{ if } a= 1\\
-\frac{\partial}{\partial \vartheta}\pi_{\hd,\vartheta}(w) \text{ if } a= 0,
\end{cases}
\end{align*}
and because $\frac{\partial}{\partial \vartheta}\pi_{\hd,\vartheta}(w)= r_{\hd}(w) \rho(\vartheta^{\rm{T}} w)$, we have that,
\begin{align*}
   2 &\left|\left[\pi(w)^a (1-\pi(w))^{1-a}\right] ^{1/2}\frac{\partial}{\partial \vartheta}  \bar{p}_{\hd,\vartheta}(z)^{1/2} \right| = \left|\left[\pi(w)^a (1-\pi(w))^{1-a}\right] ^{1/2} \frac{\frac{\partial}{\partial \vartheta} \bar{p}_{\hd,\vartheta}(z)}{\bar{p}_{\hd,\vartheta}(z)^{1/2}} \right|  \\ 
   &= \left|\frac{r_{\hd}(w) \rho(\vartheta^{\rm{T}} r_{\hd}(w))}{\left\{\pi_{\hd,\vartheta}(w)^a [1-\pi_{\hd,\vartheta}(w)]^{1-a}\right\}^{1/2}}\right| =  \frac{\left|r_{\hd}(w) \rho(\vartheta^{\rm{T}} r_{\hd}(w))\right|}{\left\{\pi_{\hd,\vartheta}(w)^a [1-\pi_{\hd,\vartheta}(w)]^{1-a}\right\}^{1/2}}.
\end{align*}
By the definition of $\rho$, we know that $\rho(\vartheta^{\rm{T}} r_{\hd}(w))<1$ regardless of the value of $w$ and $\hd$. Because of the boundedness of $\mathcal{W}$ and $\mathcal{H}$, there exists a finite vector $C_1$ and a finite constant $C_2$ such that $\sup_{\hd,w}|r_{\hd}(w)\rho(\vartheta^{\rm{T}} r_{\hd}(w))| < C_1$ entrywise and $\sup_{\hd, z} \left\{\pi_{\hd,\vartheta}(w)^a [1-\pi_{\hd,\vartheta}(w)]^{1-a}\right\}^{-1/2} < C_2$. Finally, by the positivity assumption, there exists a $\delta>0$ such that $\pi(w)> \delta$ for all $w \in \mathcal{W}$. 
Hence, $\sup_{\hd, z}\left|\frac{\partial}{\partial \vartheta}  \bar{p}_{\hd,\vartheta}(z)^{1/2}\right| \leq \delta^{-1/2} C_1 C_2^{1/2}$, which implies that $v_{\hd, z}$ is Lipschitz for all $z \in \mathcal{Z}$ and $\hd \in \mathcal{H}$ with Lipschitz constant $\delta^{-1/2} C_1 C_2^{1/2}$. Therefore,
\begin{align*}
    0 \leq \sup_{\hd,z} T_{\hd, n}(z)^2 &=  \sup_{\hd,z}\,[\bar{p}_{\hd, n}(z)^{1/2} - \bar{p}_{\hd, n}(z)^{1/2}]^2 = \sup_{\hd,z}\, [v_{\hd, z}(\vartheta + g/\sqrt{n}) - v_{\hd, z}(\vartheta)]^2  \\
    &\leq \delta^{-1/2} C_1 C_2^{1/2}\| (\vartheta + g/\sqrt{n}) - \vartheta \|^2 = \delta^{-1/2} C_1 C_2^{1/2}\|g\|^2/\sqrt{n} \overset{n\to\infty}{\longrightarrow} 0,
\end{align*}
which is the desired result.
\end{proof}
Before stating the next Lemma, we state the definition of contiguity of the measures $P_{h_{0n},\vartheta_n}$  and  $P_{h_0, \vartheta_0}$. We say that the sequence of measures $P_{h_{0n},\vartheta_n}$ is contiguous to $P_{h_0, \vartheta_0}$ if $P_{h_0, \vartheta_0}(A_n ) \rightarrow 0$ implies that $P_{h_{0n},\vartheta_n}(A_n) \rightarrow 0$ for every sequence of measurable sets $A_n$. Recall that we treat each $h_{\cdot n}$ as an element of the deterministic sequence of functions $\{h_{0n}\}_{n = 1}^{\infty}$. If $P_{h_{0n},\vartheta_n}$ is contiguous to $P_{h_0, \vartheta_0}$ and $P_{h_{0n},\vartheta_n}(A_n) \rightarrow 0$ implies that $P_{h_0, \vartheta_0}(A_n) \rightarrow 0$ for every sequence of measurable sets $A_n$, then we say that  the sequence of measures $P_{h_{0n},\vartheta_n}$ is mutually contiguous to $P_{h_0, \vartheta_0}$. 

We also introduce additional notation. First, let $\|\cdot\|_{L^1(P_0)}$ denote the $L^1(P_0)$ norm and $\|\cdot\|$ denote the $L^2(P)$ norm. Moreover, for a given measurable function $f$, let $\|f\|_{L^1(P_{h_0, \vartheta_0})}:=E_{P_{h_0, \vartheta_0}}|f(W,A,Y)|$. Finally, for a given matrix $M$, we let $(M)_{l,k}$ denote the value in its $k$-th column and $l$-th row.

\begin{lemma}
\label{lem:lasymnseqh}
 Fix $g \in \mathbb{R}^{p+2}$ and let $\vartheta_n = \vartheta_0 + g/\sqrt{n}$. Assume that the conditions of Lemma \ref{lem:qmd} hold and $ \|h_{0n}-h_0\|_{L^1(P_{h_0, \vartheta_0})}\rightarrow 0$. 
Then, for every $\epsilon > 0$,
\begin{equation}
\label{eq:1lasymnseqh}
P_{h_{0n},\vartheta_n}\left(\left|\Lambda_{h_{0n}}(\vartheta_0,\vartheta_n ; D_n) + g^{\rm{T}}\frac{1}{\sqrt{n}}U_{h_{0n}}(\vartheta_n;D_n) + \frac{1}{2}g^{\rm{T}} \mathscr{I}_{h_0, \vartheta_0} g \right| > \epsilon \right)  \overset{n\to\infty}{\longrightarrow} 0.
\end{equation}
\end{lemma}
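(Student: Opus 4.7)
The plan is to (i) expand $\Lambda_{h_{0n}}(\vartheta_0,\vartheta_n;D_n)$ by a second-order Taylor expansion of the log-density around $\vartheta_n$, (ii) control the averaged Hessian via a weak law of large numbers, and (iii) use Lemma~\ref{lem:locasymnorm} together with Le~Cam's first lemma to transfer convergence in $P_0$-probability to convergence in $P_{h_{0n},\vartheta_n}$-probability. An important simplification is that, because $\vartheta_0=(\theta_0,0)$, we have $\pi_{\hd,\vartheta_0}\equiv\pi_0$ for every $\hd$ and hence $P_{\hd,\vartheta_0}=P_0$, so the reference measure does not drift with $n$.

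For step (i), a Lagrange-form Taylor expansion of $\ell_{h_{0n}}(\,\cdot\,;z)$ around $\vartheta_n$ yields, for each $z$, an intermediate point $\widetilde{\vartheta}_{n,z}$ on the segment from $\vartheta_0$ to $\vartheta_n$ with
\begin{align*}
\ell_{h_{0n}}(\vartheta_0;z)-\ell_{h_{0n}}(\vartheta_n;z)=-\frac{g^{\rm{T}}}{\sqrt{n}}U_{h_{0n}}(\vartheta_n;z)+\frac{1}{2n}\,g^{\rm{T}} K_{h_{0n}}(\widetilde{\vartheta}_{n,z};z)\,g.
\end{align*}
Summing over the sample gives the exact identity
\begin{align*}
\Lambda_{h_{0n}}(\vartheta_0,\vartheta_n;D_n)=-\frac{g^{\rm{T}}}{\sqrt{n}}U_{h_{0n}}(\vartheta_n;D_n)+\frac{1}{2}\,g^{\rm{T}}\overline{K}_n\,g,
\end{align*}
where $\overline{K}_n:=n^{-1}\sum_{i=1}^n K_{h_{0n}}(\widetilde{\vartheta}_{n,Z_i};Z_i)$. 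For step (ii), I would decompose
\begin{align*}
\overline{K}_n=\frac{1}{n}\sum_{i=1}^n K_{h_0}(\vartheta_0;Z_i)+\frac{1}{n}\sum_{i=1}^n\bigl[K_{h_{0n}}(\widetilde{\vartheta}_{n,Z_i};Z_i)-K_{h_0}(\vartheta_0;Z_i)\bigr],
\end{align*}
apply the standard weak law to the first sum, whose $P_0$-limit is $E_0[K_{h_0}(\vartheta_0;Z)]=-\mathscr{I}_{h_0,\vartheta_0}$ by the Bartlett identity, and bound the second sum in $L^1(P_0)$ using uniform continuity of $(h,\vartheta)\mapsto K_h(\vartheta;z)$ in a neighborhood of $(h_0,\vartheta_0)$ (obtained from the smoothness of expit in \eqref{defu} combined with the boundedness of $\mathcal{W}$ and $\mathcal{H}$), together with the facts $\|\widetilde{\vartheta}_{n,Z_i}-\vartheta_0\|\le \|g\|/\sqrt n\to 0$ and $\|h_{0n}-h_0\|_{L^1(P_0)}\to 0$.

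Combining steps (i) and (ii) yields
\begin{align*}
\Lambda_{h_{0n}}(\vartheta_0,\vartheta_n;D_n)+\frac{g^{\rm{T}}}{\sqrt{n}}U_{h_{0n}}(\vartheta_n;D_n)+\frac{1}{2}g^{\rm{T}}\mathscr{I}_{h_0,\vartheta_0}g=o_{P_0}(1).
\end{align*}
To upgrade this to the target $o_{P_{h_{0n},\vartheta_n}}(1)$ conclusion, I would apply Le~Cam's first lemma. Lemma~\ref{lem:locasymnorm} applied to the sequence $h_{0n}\in\mathcal{H}$, together with a triangular-array CLT for $n^{-1/2}U_{h_{0n}}(\vartheta_0;D_n)$ (its covariance converges to $\mathscr{I}_{h_0,\vartheta_0}$ under the hypothesis $\|h_{0n}-h_0\|_{L^1(P_0)}\to 0$ and boundedness), shows that $\Lambda_{h_{0n}}(\vartheta_n,\vartheta_0;D_n)$ converges in distribution under $P_0$ to $N(-\tfrac12\sigma^2,\sigma^2)$ with $\sigma^2=g^{\rm{T}}\mathscr{I}_{h_0,\vartheta_0}g$, whose exponential has mean one. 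Hence $(P_{h_{0n},\vartheta_n})$ is contiguous to $(P_0)$, and any $o_{P_0}(1)$ sequence is also $o_{P_{h_{0n},\vartheta_n}}(1)$, delivering \eqref{eq:1lasymnseqh}.

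The main obstacle will be the Hessian control in step (ii): both the nuisance $h_{0n}$ and the random intermediate evaluation points $\widetilde{\vartheta}_{n,Z_i}$ drift with $n$, so a textbook weak law of large numbers cannot be applied directly to $\overline{K}_n$. I plan to circumvent this by showing $\sup_{\|\vartheta-\vartheta_0\|\le\|g\|/\sqrt n}\|K_{h_{0n}}(\vartheta;\cdot)-K_{h_0}(\vartheta_0;\cdot)\|_{L^1(P_0)}\to 0$ and then bounding the remainder sum via Markov's inequality; the required convergence follows from boundedness of $\mathcal{W}$ and $\mathcal{H}$, the explicit form of the score in \eqref{defu}, and the $L^1$ convergence of $h_{0n}$.
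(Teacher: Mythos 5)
Your approach is correct but takes a genuinely different route from the paper's. Where the paper decomposes the target via the triangle inequality into three pieces and delegates them to its Lemma~\ref{lem:locasymnorm} (uniform LAN over $\mathcal{H}$, proved via UDQM), Lemma~\ref{lem:lanscore} (a Taylor expansion of the \emph{score}), and a direct convergence argument for $\mathscr{I}_{h_{0n},\vartheta_0}\to\mathscr{I}_{h_0,\vartheta_0}$, followed by contiguity from Lemma~\ref{lem:contiguity}, you instead Taylor-expand the \emph{log-likelihood} itself to second order with a Lagrange remainder, producing an exact identity $\Lambda_{h_{0n}}(\vartheta_0,\vartheta_n;D_n)=-g^{\rm{T}} n^{-1/2}U_{h_{0n}}(\vartheta_n;D_n)+\tfrac12 g^{\rm{T}}\overline{K}_n g$, and then control $\overline{K}_n$ by a WLLN plus uniform continuity. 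This is a more elementary argument for the log-likelihood expansion (it sidesteps the UDQM machinery in the LAN step), at the cost of having to handle the sample-dependent intermediate points $\widetilde{\vartheta}_{n,Z_i}$; what the paper's route buys is that its Hessian-type controls are farmed out to Lemma~\ref{lem:lanscore}, which is reused elsewhere. Both arguments culminate in the same contiguity transfer; your final paragraph essentially re-derives the content of Lemma~\ref{lem:contiguity} rather than citing it, which would shorten your write-up.

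Two small technical remarks. First, when you bound the Hessian remainder you should put the supremum over $\vartheta$ \emph{inside} the $L^1(P_0)$ norm, i.e.\ show
\begin{align*}
E_{P_0}\left[\sup_{\|\vartheta-\vartheta_0\|\le\|g\|/\sqrt n}\bigl|K_{h_{0n}}(\vartheta;Z)-K_{h_0}(\vartheta_0;Z)\bigr|\right]\to 0,
\end{align*}
since $\widetilde{\vartheta}_{n,Z_i}$ depends on $Z_i$ and $\sup_\vartheta\|\cdot\|_{L^1}\to 0$ alone does not dominate the random-point remainder; under Conditions~\ref{cond:pos:first}, \ref{cond:qmd}, \ref{cond:qmd2} the stronger sup-inside bound follows just as easily and is the one you actually need. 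Second, the CLT for $n^{-1/2}U_{h_{0n}}(\vartheta_0;D_n)$ that you invoke for the Le~Cam step is genuinely a triangular-array CLT with drifting nuisance $h_{0n}$; the paper handles it inside Lemma~\ref{lem:contiguity} by decomposing $U_{h_{0n}}(\vartheta_0)-U_{h_0}(\vartheta_0)$ as a single-entry remainder with vanishing variance (Equation~\ref{eq:difu}). You should either give a similar decomposition or cite Lemma~\ref{lem:contiguity} directly.
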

\begin{proof}

First, note that $P_{h_0, \vartheta_0} = P_{h_{0n},  \vartheta_0}$ and that, by Lemma \ref{lem:qmd}, $\{P_{\hd,\vartheta}: \vartheta \in \mathbb{R}^{p+2}, \hd \in \mathcal{H}\}$ is uniformly differentiable in quadratic mean. By the triangle inequality, 

\begin{align}
\label{eq:2lasymnseqh}
&P_{h_0, \vartheta_0}\left(\left|\Lambda_{h_{0n}}(\vartheta_0,\vartheta_n ; D_n) + g^{\rm{T}}\frac{1}{\sqrt{n}}U_{h_{0n}}(\vartheta_n;D_n) + \frac{1}{2} g^{\rm{T}} \mathscr{I}_{h_0, \vartheta_0} g \right| > \epsilon \right)  \nonumber  \\ \nonumber
=&P_{h_{0n}, \vartheta_0}\left(\bigg|\Lambda_{h_{0n}}(\vartheta_n,\vartheta_0; D_n) -g^{\rm{T}}\frac{1}{\sqrt{n}}U_{h_{0n}}(\vartheta_n;D_n) - \frac{1}{2}g^{\rm{T}} \mathscr{I}_{h_0, \vartheta_0} g \bigg| > \epsilon \right) \\ \nonumber
\leq &P_{h_{0n}, \vartheta_0}\left( \bigg|\Lambda_{h_{0n}}(\vartheta_n,\vartheta_0; D_n) -g^{\rm{T}}\frac{1}{\sqrt{n}}U_{h_{0n}}(\vartheta_0;D_n) + \frac{1}{2}g^{\rm{T}} \mathscr{I}_{h_{0n}, \vartheta_0} g \bigg| > \epsilon \right) \\ \nonumber
+&P_{h_0, \vartheta_0}\left( \bigg|g^{\rm{T}}\frac{1}{\sqrt{n}}U_{h_{0n}}(\vartheta_0;D_n) -g^{\rm{T}}\frac{1}{\sqrt{n}}U_{h_{0n}}(\vartheta_n;D_n) - 
\frac{1}{2}g^{\rm{T}} \mathscr{I}_{h_{0n}, \vartheta_0} g - 
\frac{1}{2}g^{\rm{T}} \mathscr{I}_{h_0, \vartheta_0} g \bigg| > \epsilon \right) \\ \nonumber
\leq &P_{h_{0n}, \vartheta_0}\left( \bigg|\Lambda_{h_{0n}}(\vartheta_n,\vartheta_0; D_n) -g^{\rm{T}}\frac{1}{\sqrt{n}}U_{h_{0n}}(\vartheta_0;D_n) + \frac{1}{2}g^{\rm{T}} \mathscr{I}_{h_{0n}, \vartheta_0} g \bigg| > \epsilon \right) \\ \nonumber
+& P_{h_0, \vartheta_0}\left( \left|g^{\rm{T}}\frac{1}{\sqrt{n}}U_{h_{0n}}(\vartheta_0;D_n) -g^{\rm{T}}\frac{1}{\sqrt{n}}U_{h_{0n}}(\vartheta_n;D_n) - 
g^{\rm{T}} \mathscr{I}_{h_{0n}, \vartheta_0}g
\right| > \epsilon \right) \\ \nonumber
+&I\left\{\frac{1}{2}\left|g^{\rm{T}} \mathscr{I}_{h_{0n}, \vartheta_0}g - g^{\rm{T}} \mathscr{I}_{h_0, \vartheta_0} g \right| > \epsilon \right\} \\ \nonumber
\leq \sup_{\hd \in \mathcal{H}} &P_{\hd, \vartheta_0}\left( \bigg|\Lambda_{\hd}(\vartheta_n,\vartheta_0; D_n) -g^{\rm{T}}\frac{1}{\sqrt{n}}U_{\hd}(\vartheta_0;D_n) + \frac{1}{2}g^{\rm{T}} \mathscr{I}_{\hd, \vartheta_0} g \bigg| > \epsilon \right) \\ \nonumber
+\sup_{\hd \in \mathcal{H}} & P_{\hd, \vartheta_0}\left( \left|g^{\rm{T}}\frac{1}{\sqrt{n}}U_{\hd}(\vartheta_0;D_n) -g^{\rm{T}}\frac{1}{\sqrt{n}}U_{\hd}(\vartheta_n;D_n) - 
g^{\rm{T}} \mathscr{I}_{h_0, \vartheta_0}g
\right| > \epsilon \right) \\ 
+ &I\left\{\frac{1}{2}\left|g^{\rm{T}} \mathscr{I}_{h_{0n}, \vartheta_0}g - g^{\rm{T}} \mathscr{I}_{h_0, \vartheta_0} g \right| > \epsilon \right\}.
\end{align}

By Lemmas \ref{lem:lanscore} and \ref{lem:locasymnorm}, the first two terms in the last inequality converge to 0 as $n \rightarrow \infty$. For the third term, one can use that $\vartheta_{0_{p+2}}=0$, and therefore $\pi_{h_{0n}, \vartheta_0}=\pi_{h_0, \vartheta_0}$, to show that
\begin{equation*}
\left[\mathscr{I}_{h_{0n}, \vartheta_0} - \mathscr{I}_{h_0, \vartheta_0}\right]_{l,k}=\begin{cases}
0 & l\leq p,\;k\leq p\\
E\left\{W_{k}\left[h_{0n}(W)-h_0(W)\right]\pi_{h_0, \vartheta_0}(W)(1-\pi_{h_0, \vartheta_0}(W))\right\} & l=p+2,\; k\leq p+1\\
E\left\{W_{l}\left[h_{0n}(W)-h_0(W)\right]\pi_{h_0, \vartheta_0}(W)(1-\pi_{h_0, \vartheta_0}(W))\right\}  & k=p+2,\; l\leq p+1\\
E\left\{[h_{0n}^{2}(W)-h_0^2(W)]\pi_{h_0, \vartheta_0}(W)(1-\pi_{h_0, \vartheta_0}(W))\right\} & k=p+2,\; l=p+2,
\end{cases}
\end{equation*}
where $W_k$ is the $k-th$ covariate. 
By similar arguments to those in Lemma \ref{lem:lanscore}, and because $\pi_{h_0, \vartheta_0}(w)$ is uniformly bounded, there exists a finite constant $M$ that does not depend on $n$ and is such that 
\begin{equation*}
\frac{1}{2}\left|g^{\rm{T}} \mathscr{I}_{h_{0n}, \vartheta_0}g - g^{\rm{T}} \mathscr{I}_{h_0, \vartheta_0} g \right| \leq M \left|E\left[h_{0n}(W) - h_0(W)\right]\right| \leq M \|h_{0n}-h_0\|_{L^1(P_{h_0, \vartheta_0})}. 
\end{equation*}
By the above display and because  $\|h_{0n}-h_0\|_{L^1(P_{h_0, \vartheta_0})}\rightarrow 0$, we have that 
\begin{align*}
I\left\{\frac{1}{2}\left|g^{\rm{T}} \mathscr{I}_{h_{0n}, \vartheta_0}g - g^{\rm{T}} \mathscr{I}_{h_0, \vartheta_0} g \right| > \epsilon \right\} \overset{n \rightarrow \infty}{\longrightarrow}0.
\end{align*}
Therefore, for every $\epsilon > 0$,
\begin{equation*}
    P_{h_0, \vartheta_0}\left(\left|\Lambda_{h_{0n}}(\vartheta_0,\vartheta_n ; D_n) + g^{\rm{T}}\frac{1}{\sqrt{n}}U_{h_{0n}}(\vartheta_n;D_n) + \frac{1}{2} g^{\rm{T}} \mathscr{I}_{h_0, \vartheta_0} g \right| > \epsilon \right) \overset{n\to\infty}{\longrightarrow} 0.
\end{equation*}
Because of Lemma \ref{lem:contiguity} the sequence of measures $(P_{h_{0n},\vartheta_n}^n)_{n=1}^\infty$ and $(P_{h_0, \vartheta_0}^n)_{n=1}^\infty$ are mutually contiguous. The latter implies that 
\begin{equation*}
    P_{h_{0n},\vartheta_n}\left(\left|\Lambda_{h_{0n}}(\vartheta_0,\vartheta_n ; D_n) + g^{\rm{T}}\frac{1}{\sqrt{n}}U_{h_{0n}}(\vartheta_n;D_n) + \frac{1}{2} g^{\rm{T}} \mathscr{I}_{h_0, \vartheta_0} g \right| > \epsilon \right) \overset{n\to\infty}{\longrightarrow} 0,
\end{equation*}
which is the desired result.
\end{proof}   
\begin{lemma}
\label{lem:lanscore}
Under the conditions of Lemma \ref{lem:lasymnseqh}, for every $\epsilon >0$,
\begin{align*}
\sup_{\hd \in \mathcal{H}} P_{h_0, \vartheta_0}\left( \bigg|-g^{\rm{T}}\frac{1}{\sqrt{n}}U_{\hd}(\vartheta_n;D_n) + g^{\rm{T}}\frac{1}{\sqrt{n}} U_{\hd}(\vartheta_0;D_n) - g^{\rm{T}}\mathscr{I}_{\hd, \vartheta_0}g \bigg| > \epsilon \right) \rightarrow 0
\end{align*}
\end{lemma}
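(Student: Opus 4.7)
The approach is a first-order Taylor expansion of the score around $\vartheta_0$, combined with a uniform weak law of large numbers (WLLN) over $\mathcal{H}$. Using the explicit form of the score, $U_\hd(\vartheta;z) = (a-\pi_{\hd,\vartheta}(w))r_\hd(w)$, the difference inside the probability reduces to studying $[\pi_{\hd,\vartheta_0}(w)-\pi_{\hd,\vartheta_n}(w)]r_\hd(w)$. A second-order Taylor expansion of ${\rm expit}$ about $\vartheta_0^{\rm{T}} r_\hd(w)$ yields
\[
\pi_{\hd,\vartheta_n}(w)-\pi_{\hd,\vartheta_0}(w)\;=\;n^{-1/2}\rho(\vartheta_0^{\rm{T}} r_\hd(w))\,r_\hd(w)^{\rm{T}} g\;+\;R_{\hd,n}(w),
\]
with $\sup_{\hd\in\mathcal{H},w\in\mathcal{W}}|R_{\hd,n}(w)|=O(1/n)$, since $\mathcal{W}$ is bounded (Condition \ref{cond:qmd}), $\mathcal{H}$ is uniformly bounded (Condition \ref{cond:qmd2}(i)), and ${\rm expit}''$ is bounded.

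Substituting back and multiplying by $g^{\rm{T}}/\sqrt{n}$ gives
\[
g^{\rm{T}}\frac{1}{\sqrt{n}}[U_\hd(\vartheta_0;D_n)-U_\hd(\vartheta_n;D_n)]\;=\;\frac{1}{n}\sum_{i=1}^n \rho(\vartheta_0^{\rm{T}} r_\hd(w_i))\,g^{\rm{T}} r_\hd(w_i)r_\hd(w_i)^{\rm{T}} g\;+\;\tilde R_{\hd,n},
\]
with $\sup_{\hd}|\tilde R_{\hd,n}|=O(n^{-1/2})$. Since the last coordinate of $\vartheta_0$ is zero, $\rho(\vartheta_0^{\rm{T}} r_\hd(w))=\pi_0(w)[1-\pi_0(w)]$ does not depend on $\hd$, and the summand has mean $g^{\rm{T}}\mathscr{I}_{\hd,\vartheta_0}g$ by the information equality. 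It therefore suffices to show that, uniformly in $\hd\in\mathcal{H}$,
\[
\frac{1}{n}\sum_{i=1}^n \pi_0(w_i)[1-\pi_0(w_i)]\,g^{\rm{T}} r_\hd(w_i)r_\hd(w_i)^{\rm{T}} g\;\xrightarrow{P_{h_0,\vartheta_0}}\;g^{\rm{T}}\mathscr{I}_{\hd,\vartheta_0}g.
\]

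To obtain this uniform convergence I would mimic the covering argument used in the proof of Lemma \ref{lem:uqmd}. The integrand, viewed as a function of $\hd(w)$, is a quadratic polynomial with coefficients that are uniformly bounded (using Condition \ref{cond:pos:first}, the boundedness of $\mathcal{W}$, and the bound on $\mathcal{H}$), hence Lipschitz in $\hd(w)$ on the bounded range of $\hd$. This makes both the empirical average and its expectation Lipschitz in $\hd$ with respect to $L^1(P_{W,0})$, hence with respect to $L^2(P_{W,0})$ by Cauchy--Schwarz. Because $\mathcal{H}$ is totally bounded in $L^2(P_{W,0})$ (Condition \ref{cond:qmd2}(ii)), for any $\epsilon>0$ one picks a finite $\epsilon$-net $\mathcal{H}_\epsilon\subset\mathcal{H}$, applies the pointwise WLLN over the finitely many elements of the net (yielding uniform convergence there by a union bound), and transfers the conclusion to all of $\mathcal{H}$ via the triangle inequality.

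The main obstacle is the uniform WLLN step: pointwise convergence for a single $\hd$ is classical, but upgrading to uniformity over $\mathcal{H}$ requires the total boundedness of $\mathcal{H}$ together with the equicontinuity-in-$\hd$ of both the empirical averages and their means, exactly in the spirit of the covering argument already carried out in Lemma \ref{lem:uqmd}. The Taylor remainder $\tilde R_{\hd,n}=O(n^{-1/2})$ is straightforward given the uniform boundedness of $r_\hd$ on $\mathcal{H}\times\mathcal{W}$, so it contributes nothing to the final conclusion after applying Markov's inequality.
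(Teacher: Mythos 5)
Your proposal is correct and takes essentially the same route as the paper: a first-order Taylor expansion of the score (equivalently, expanding $\pi_{\hd,\vartheta}$ inside $U_\hd$) to reduce the statement to uniform convergence of the empirical Hessian $n^{-1}\sum_i\rho(\vartheta_0^{\rm{T}} r_\hd(w_i))\,g^{\rm{T}} r_\hd(w_i)r_\hd(w_i)^{\rm{T}} g$ to $g^{\rm{T}}\mathscr{I}_{\hd,\vartheta_0}g$, established via a covering argument over $\mathcal{H}$ using total boundedness and Lipschitzness in $\hd$, plus a uniform $O(n^{-1/2})$ bound on the Taylor remainder from boundedness of $\mathcal{W}$ and $\mathcal{H}$. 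The one point to tighten is that the transfer from the finite $\epsilon$-net to all of $\mathcal{H}$ requires a probabilistic step (Markov's inequality on the discrepancy between $\hd$ and its net representative), not merely the deterministic triangle-inequality argument of Lemma \ref{lem:uqmd}; the paper does this explicitly, and your sketch should too.
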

\begin{proof}
Fix $\hd \in \mathcal{H}$ and $g \in \mathbb{R}^{p+2}$. A Taylor expansion of the function $\vartheta \mapsto \frac{g^{\rm{T}}}{\sqrt{n}}U_{\hd}(\vartheta; d_n)$ around $\vartheta_0$ yields
\begin{align*}
    \frac{g^{\rm{T}}}{\sqrt{n}}U_{\hd}(\vartheta_n, d_n) &=   \frac{g^{\rm{T}}}{\sqrt{n}}U_{\hd}(\vartheta_0; d_n) +   \frac{g^{\rm{T}}}{\sqrt{n}}\left(K_{\hd}(\vartheta;d_n)\bigg|_{\vartheta = \vartheta_0}\right)(\vartheta_n - \vartheta_0) + R_\hd(\vartheta_n, \vartheta_0,d_n) \\
    &= \frac{g^{\rm{T}}}{\sqrt{n}}U_{\hd}(\vartheta_0;d_n) + \frac{g^{\rm{T}}}{n}\left(K_{\hd}(\vartheta;d_n)\bigg|_{\vartheta = \vartheta_0} \right)g+ R_\hd(\vartheta_n, \vartheta_0,d_n),
\end{align*}
where $R_\hd(\vartheta_n, \vartheta_0,d_n)$ is the remainder term of the Taylor expansion, and $K_{\hd}(\vartheta;d_n)$ is the Jacobian of $U_{\hd}(\vartheta; d_n)$. First, we will show that
\begin{align*}
  \sup_{\hd \in \mathcal{H}} P_{\hd, \vartheta_0}\left( \left| g^{\rm{T}}\left[ \frac{1}{n}K_{\hd}(\vartheta;D_n)\bigg|_{\vartheta = \vartheta_0} - \mathscr{I}_{\hd, \vartheta_0} \right]g  \right| > \epsilon \right)  \overset{n\to\infty}{\longrightarrow} 0.
\end{align*}
For a given $d_n$,
\begin{align*}
    K_{\hd}(\vartheta;d_n)\bigg|_{\vartheta = \vartheta_0} = \sum_{i = 1}^n \left\{ r_\hd(w_i)r_\hd(w_i)^{\rm{T}}[\pi_{\hd, \vartheta_0}(w_i)(1-\pi_{\hd, \vartheta_0}(w_i))]\right\}.
\end{align*}
Now, for a fixed $w \in \mathcal{W}$, define the $\mathcal{H} \rightarrow \mathbb{R}$ function
\begin{equation*}
f_{w,g}(\hd) = g^{\rm{T}}\left[r_{\hd}(w)r_{\hd}(w)^{\rm{T}}[\pi_{\hd, \vartheta_0}(w)(1-\pi_{\hd, \vartheta_0}(w))] - \mathscr{I}_{\hd, \vartheta_0}\right]g.
\end{equation*}
We will show that there exists a finite constant $Q$ such that for $\hd_1, \hd_2 \in \mathcal{H}$
\begin{align*}
    E\left[|f_{w,g}(\hd_1) - f_{w,g}(\hd_2)|\right] \leq 2 Q \int |\hd_1(w) - \hd_2(w)|d\bar{P}_W(w).
\end{align*}
Given a matrix $M$, let $(M)_{k,l}$ its value in row $k$ and column $l$, and note that 
\begin{equation*}
\left[r_{\hd_1}(w) r_{\hd_1}(w)^{\rm{T}}-r_{\hd_2}(w) r_{\hd_2}(w)^{\rm{T}}\right]_{l,k}=\begin{cases}
0 & l \leq p,\;k\leq p\\
w_{k}\left[\hd_1(w)-\hd_2(w)\right] & l=p+2,\;1\leq k\leq p+1\\
w_{l}\left[\hd_1(w)-\hd_2(w)\right] & k=p+2,\;1\leq l\leq p+1\\
\hd_1(w)^2-\hd_2(w)^2 & k=p+2,\;l=p+2,
\end{cases}
\end{equation*}
where $w_j$ is the $j$-th covariate. Given the above, and because $g \in \mathbb{R}^{p+2}$, $W$ has bounded support, and $\mathcal{H}$ is bounded, we have that there exists a constant $K< \infty$ such that, for  $\hd_1, \hd_2 \in \mathcal{H}$ 
\begin{align*}
g^{\rm{T}}|r_{\hd_1}(w)r_{\hd_1}(w)^{\rm{T}} - r_{\hd_2}(w)r_{\hd_2}(w)^{\rm{T}}|g\leq K|\hd_1(w) - \hd_2(w)|.
\end{align*}
Additionally, using that $\mathcal{W}$ is bounded and the positivity assumption, we have that for all $w \in \mathcal{W}$ $\pi_{\hd_1, \vartheta_0}(w)(1-\pi_{\hd_1, \vartheta_0}(w)) < C$ for some $C < \infty $. Note also that $\pi_{\hd_1, \vartheta_0}(w) = \pi_{\hd_2, \vartheta_0}(w)$ since $\vartheta_{0_{p+2}}=0$. 
Combining the preceding arguments, we have that
\begin{align*}
&g^{\rm{T}}[r_{\hd_1}(w)r_{\hd_1}(w)^{\rm{T}}[\pi_{\hd_1, \vartheta_0}(w)(1-\pi_{\hd_1, \vartheta_0}(w))] -[r_{\hd_2}(w)r_{\hd_2}(w)^{\rm{T}}[\pi_{\hd_2, \vartheta_0}(w)(1-\pi_{\hd_2, \vartheta_0}(w))]g \\ 
&=[\pi_{\hd_1, \vartheta_0}(w)(1-\pi_{\hd_1, \vartheta_0}(w))]g^{\rm{T}}[r_{\hd_1}(w)r_{\hd_1}(w)^{\rm{T}} - r_{\hd_2}(w)r_{\hd_2}(w)^{\rm{T}}]g \\ \nonumber
&\leq (K \times C)|\hd_1(w) - \hd_2(w)|.
\end{align*}
Because $\mathscr{I}_{\hd, \vartheta_0} = E\left[r_{\hd}(W)r_{\hd}(W)^{\rm{T}}\pi_{\hd, \vartheta_0}(W)(1-\pi_{\hd, \vartheta_0}(W))\right]$ and by the aforementioned arguments, we have that, for $\hd_1, \hd_2 \in \mathcal{H}$ 
\begin{align*}
|g^{\rm{T}}[\mathscr{I}_{\hd_1, \vartheta_0} - \mathscr{I}_{\hd_2, \vartheta_0}]g| \leq K \times C  \|\hd_1(w) - \hd_2(w)\|_{L^1(P_0)}.
\end{align*}
Therefore, letting $Q = K \times C$ we have that,
\begin{align*}
E[\left|f_{W,g}(\hd_1) - f_{W,g}(\hd_2)\right| ] \leq  2  Q \int|\hd_1(w) - \hd_2(w)|d\bar{P}_{W}(w). 
\end{align*}
Next, because $\mathcal{W}$ and $\mathcal{H}$ are bounded,  $E\left[\left|g^{\rm{T}} r_{\hd}(W)r_{\hd}(W)^{\rm{T}}\{\pi_{\hd, \vartheta_0}(W)(1-\pi_{\hd, \vartheta_0}(W))\}g\right|\right] < \infty$. Hence, by the Weak Law of Large Numbers, for all $\hd \in \mathcal{H}$,
\begin{align}
\label{eq1lanscore}
P_{\hd, \vartheta_0}\left(\left| \frac{1}{n} \sum_{i = 1}^n \left\{g^{\rm{T}} \left[ r_\hd(W_i)r_\hd(W_i)^{\rm{T}}[\pi_{\hd, \vartheta_0}(W_i)(1-\pi_{\hd, \vartheta_0}(W_i))] - \mathscr{I}_{\hd, \vartheta_0} \right]g \right\} \right| > \epsilon\right)  \overset{n\to\infty}{\longrightarrow} 0.
\end{align}
Fix $\delta>0$ and let $\mathcal{H}_{\delta}$ be a minimal $\delta/Q$-cover of $\mathcal{H}$ with respect to the $L^1(P_0)$ pseudometric (this cover has finite cardinality since $\mathcal{H}_\delta$ is totally bounded in $L^2(P_0)$ and the $L^2(P_0)$ norm is stronger than the $L^1(P_0)$ norm). Define a $\mathcal{H}\rightarrow \mathcal{H}_{\delta}$ map $H_{\delta}$ so that, for all $\hd \in \mathcal{H}$, $H_{\delta}(\hd)$ is such that $\|\hd-H_{\delta}(\hd)\|_{L^1(\bar{P}_W)}\le \delta/Q$.
Hence, by the triangle inequality and Markov's inequality, we have that
\begin{align*}
       0&\le \sup_{\hd \in \mathcal{H}} P_{\hd, \vartheta_0} \left( \left| g^{\rm{T}}\left[ \frac{1}{n}K_{\hd}(\vartheta;D_n)\bigg|_{\vartheta = \vartheta_0} - \mathscr{I}_{\hd, \vartheta_0} \right]g  \right| > \epsilon \right) \\
       &\leq \sup_{\hd \in \mathcal{H}}P_{\hd, \vartheta_0}\left(\left|f_{w, g}(\hd)-f_{w, g}\circ H_{\epsilon}(\hd)\right| > \epsilon \right) + \sup_{\hd \in \mathcal{H}_\delta} P_{\hd, \vartheta_0} \left( \left| g^{\rm{T}}\left[ \frac{1}{n}K_{\hd}(\vartheta;D_n)\bigg|_{\vartheta = \vartheta_0} - \mathscr{I}_{\hd, \vartheta_0} \right]g  \right| > \epsilon \right) \\
       &\leq \epsilon^{-1}\sup_{\hd \in \mathcal{H}}\int \left|f_{w, g}(\hd)-f_{w, g}\circ H_{\epsilon}(\hd)\right| d\bar{P}_W(w) + \sup_{\hd \in \mathcal{H}_\delta} P_{\hd, \vartheta_0} \left( \left| g^{\rm{T}}\left[ \frac{1}{n}K_{\hd}(\vartheta;D_n)\bigg|_{\vartheta = \vartheta_0} - \mathscr{I}_{\hd, \vartheta_0} \right]g  \right| > \epsilon \right) \\
       &\leq \epsilon^{-1} \sup_{\hd \in \mathcal{H}} Q \int |\hd(w) - H_{\epsilon}(\hd)(w)|d\bar{P}_W(w)  +  \sup_{\hd \in \mathcal{H}_\delta} P_{\hd, \vartheta_0} \left( \left| g^{\rm{T}}\left[ \frac{1}{n}K_{\hd}(\vartheta;D_n)\bigg|_{\vartheta = \vartheta_0} - \mathscr{I}_{\hd, \vartheta_0} \right]g  \right| > \epsilon \right)\\
       &\leq \delta/\epsilon + \sup_{\hd \in \mathcal{H}_\delta} P_{\hd, \vartheta_0} \left( \left| g^{\rm{T}}\left[ \frac{1}{n}K_{\hd}(\vartheta;D_n)\bigg|_{\vartheta = \vartheta_0} - \mathscr{I}_{\hd, \vartheta_0} \right]g  \right| > \epsilon \right)
\end{align*}
Taking a limit superior as $n\rightarrow\infty$, Equation \ref{eq1lanscore} and the fact that $\mathcal{H}_\delta$ has finite cardinality show that
\begin{align*}
    0&\le \limsup_{n\rightarrow\infty}\sup_{\hd \in \mathcal{H}} P_{\hd, \vartheta_0} \left( \left| g^{\rm{T}}\left[ \frac{1}{n}K_{\hd}(\vartheta;d_n)\bigg|_{\vartheta = \vartheta_0} - \mathscr{I}_{\hd, \vartheta_0} \right]g  \right| > \epsilon \right)\le \delta/\epsilon.
\end{align*}
Since $\delta>0$ was arbitrary, the limit superior above is zero. Next, we show that
\begin{align}
  \sup_{\hd \in \mathcal{H}} P_{\hd, \vartheta_0}\left(\left|R_\hd(\vartheta_n, \vartheta_0,D_n) \right| > \epsilon\right)  \overset{n\to\infty}{\longrightarrow} 0. \label{eq:RhSup}
\end{align}
Fix $\hd \in \mathcal{H}$ and $d_n\in \mathcal{Z}^n$. Let $\mathbb{B}_r(x)$ denote a p+2-dimensional open ball of radius $r$ centered at $x$. Writing $U(\vartheta;d_n)_j$ to denote the $j$-th entry of $U(\vartheta;d_n)$, we note that, for each $j$, the function $\vartheta\mapsto U(\vartheta;d_n)_j$ is twice differentiable with continuous Hessian on $\mathbb{B}_{R_n}(\vartheta_0)$, where $R_n = 2 g/\sqrt{n}$. Moreover, for arbitrary $j, k \in \{1,,...p+2\}$ and letting $w_{i,j}$ denote the value of the j-th covariate of the i-th observation, it can be shown that
\begin{align*}
  \frac{\partial^2}{\partial \vartheta_k \partial \vartheta_j}\left\{ \frac{g^{\rm{T}}}{\sqrt{n}}U_\hd(\vartheta;d_n)\right\}= \frac{g^{\rm{T}}}{\sqrt{n}}\sum_{i = 1}^n \left\{w_{i,j}w_{i,k}r_\hd(w_{i})\rho(\vartheta^{\rm{T}} r_\hd(w_i))\left[\frac{1-\exp(\vartheta^{\rm{T}} r_\hd(w_i)}{1+\exp(\vartheta^{\rm{T}} r_\hd(w_i)}\right]\right\}.
\end{align*}
Because $g \in \mathbb{R}$, $\mathcal{W}$ and $\mathcal{H}$ are bounded, and $[1-\exp(x)]/[(1+\exp(x)] <1$ for all $x \in \mathbb{R}$, there exists a finite constant $M$ such that 
\begin{align*}
  \frac{\partial^2}{\partial \vartheta_k \partial \vartheta_j}\left\{ \frac{g^{\rm{T}}}{\sqrt{n}}U_\hd(\vartheta;d_n)\right\} \leq \frac{M}{\sqrt{n}}n = M\sqrt{n}.
\end{align*}
Hence, the second partial derivatives of the function $\vartheta \mapsto g^{\rm{T}}/\sqrt{n}U_{\hd}(\vartheta;d_n)$ can be bounded (up to $\sqrt{n}$) by a finite constant $M$ which does not depend on $\hd$ or $d_n$. By Taylor's Theorem we have that 
\begin{align*}
\left|R_\hd(\vartheta_n, \vartheta_0, d_n)\right|\leq \frac{M\sqrt{n}}{2!}\left\|\frac{g}{\sqrt{n}}\right\|_1^{2} = \frac{M}{2!}n^{-1/2}\|g\|_1^2 \leq \sqrt{p+2}  \frac{M}{2!}n^{-1/2}\|g\|^2,
\end{align*}
where $\|\cdot\|_1$ denotes the $\ell_1$ norm and, as throughout, $\|\cdot\|$ denotes the $\ell_2$ norm. 
Taking a supremum on both sides yields
\begin{align*}
\sup_{\hd\in \mathcal{H}}\left|R_\hd(\vartheta_n, \vartheta_0, d_n)\right| \leq \sqrt{p+2}  \frac{M}{2!}n^{-1/2}\|g\|^2.
\end{align*}
Therefore, for any $\hd \in \mathcal{H}$,
\begin{align*}
    P_{\hd, \vartheta_0}\left(  \sup_{\hd' \in \mathcal{H}}\left|R_{\hd'}(\vartheta_n, \vartheta_0,D_n) \right| > \epsilon\right)&\le I\left\{\sqrt{p+2}  \frac{M}{2!}n^{-1/2}\|g\|^2>\epsilon\right\}.
\end{align*}
Taking a supremum over $\hd \in \mathcal{H}$ and a limit as $n\rightarrow\infty$ then shows that
\begin{align*}
     \sup_{\hd \in \mathcal{H}} P_{\hd, \vartheta_0}\left(  \sup_{\hd' \in \mathcal{H}}\left|R_{\hd'}(\vartheta_n, \vartheta_0,D_n) \right| > \epsilon\right)&\le I\left\{\sqrt{p+2}  \frac{M}{2!}n^{-1/2}\|g\|^2>\epsilon\right\}\rightarrow 0.
\end{align*}
Combining this with the fact that
\begin{align*}
    \sup_{\hd \in \mathcal{H}} P_{\hd, \vartheta_0}\left( \left|r_{\hd}(\vartheta_n, \vartheta_0,D_n) \right| > \epsilon\right)\le \sup_{\hd \in \mathcal{H}} P_{\hd, \vartheta_0}\left(  \sup_{\hd' \in \mathcal{H}}\left|R_{\hd'}(\vartheta_n, \vartheta_0,D_n) \right| > \epsilon\right)
\end{align*} 
establishes \eqref{eq:RhSup}.
\end{proof}

\begin{lemma}
\label{lem:contiguity}
Assume that the conditions of Lemma \ref{lem:qmd} hold and that the $\mathcal{H}$-valued sequence $(h_{0n})_{n=1}^\infty$ is such that $\|h_{0n}-h_0\|_{L^2(P_{h_0, \vartheta_0})}\rightarrow 0$ as $n\rightarrow\infty$. 
Let $g \in \mathbb{R}^{p+2}$, and $\vartheta_n= \vartheta_0 + g/\sqrt{n}$. Then, the sequences of probability measures $(P_{h_{0n},\vartheta_n}^n)_{n=1}^\infty$ and $(P_{h_0, \vartheta_0}^n)_{n=1}^\infty$ are mutually contiguous, where, for a distribution $Q$, $Q^n$ denotes the $n$-fold product measure of $Q$. 
\end{lemma}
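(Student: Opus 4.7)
The plan is to apply Le Cam's first lemma (\cite{van2000asymptotic}, Example 6.5 / Lemma 6.4): if the log-likelihood ratio is asymptotically normal under the null with mean $-\sigma^2/2$ and variance $\sigma^2$, then the two sequences of product measures are mutually contiguous. So the task reduces to establishing the asymptotic normality of $\Lambda_{h_{0n}}(\vartheta_n,\vartheta_0;D_n)$ under $P_{h_0,\vartheta_0}^n$.

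First, I would invoke Lemma~\ref{lem:locasymnorm} with the particular choice $\hd=h_{0n}\in\mathcal{H}$. Because the convergence in that lemma is uniform over $\mathcal{H}$ and $P_{h_{0n},\vartheta_0}=P_{h_0,\vartheta_0}$, specializing to $\hd=h_{0n}$ yields the quadratic expansion
\begin{align*}
\Lambda_{h_{0n}}(\vartheta_n,\vartheta_0;D_n)=g^{\rm{T}}\frac{1}{\sqrt{n}}U_{h_{0n}}(\vartheta_0;D_n)-\tfrac{1}{2}g^{\rm{T}}\mathscr{I}_{h_{0n},\vartheta_0}g+o_{P_{h_0,\vartheta_0}}(1).
\end{align*}
Then, by the computation of $\mathscr{I}_{h_{0n},\vartheta_0}-\mathscr{I}_{h_0,\vartheta_0}$ carried out in the proof of Lemma~\ref{lem:lasymnseqh} (which only depends on $h_{0n}-h_0$ through $L^1$-type integrals against bounded kernels) together with the hypothesis $\|h_{0n}-h_0\|_{L^2(P_{h_0,\vartheta_0})}\to 0$ (which implies $L^1$ convergence by Cauchy-Schwarz), we have $g^{\rm{T}}\mathscr{I}_{h_{0n},\vartheta_0}g\to g^{\rm{T}}\mathscr{I}_{h_0,\vartheta_0}g$.

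Next, I would establish that $g^{\rm{T}} n^{-1/2}U_{h_{0n}}(\vartheta_0;D_n)\overset{d}{\to} N(0,g^{\rm{T}}\mathscr{I}_{h_0,\vartheta_0}g)$ under $P_{h_0,\vartheta_0}$ via a triangular-array Lindeberg-Feller CLT applied to $\xi_{n,i}:=g^{\rm{T}} U_{h_{0n}}(\vartheta_0;Z_i)/\sqrt{n}$. For fixed $n$ the $\xi_{n,i}$ are iid, with mean zero by Lemma~\ref{lem:locasymnorm} and with $\sum_i \mathrm{Var}(\xi_{n,i})=g^{\rm{T}}\mathscr{I}_{h_{0n},\vartheta_0}g\to g^{\rm{T}}\mathscr{I}_{h_0,\vartheta_0}g$. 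Using the explicit form $U_{\hd}(\vartheta_0;z)=(a-\pi_0(w))r_\hd(w)$ together with the boundedness of $\mathcal{W}$ (Condition~\ref{cond:qmd}) and of the functions in $\mathcal{H}$ (Condition~\ref{cond:qmd2}(i)), the summands $\xi_{n,i}$ are uniformly bounded by $M/\sqrt{n}$ for a constant $M$ not depending on $n$. Hence for any $\epsilon>0$ and all $n$ sufficiently large, $|\xi_{n,i}|<\epsilon$ almost surely and the Lindeberg truncation sum vanishes identically, so Lindeberg-Feller applies.

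Combining these two displays via Slutsky gives
\begin{align*}
\Lambda_{h_{0n}}(\vartheta_n,\vartheta_0;D_n)\overset{d}{\to} N\!\left(-\tfrac{1}{2}g^{\rm{T}}\mathscr{I}_{h_0,\vartheta_0}g,\; g^{\rm{T}}\mathscr{I}_{h_0,\vartheta_0}g\right)\quad \text{under } P_{h_0,\vartheta_0}^n.
\end{align*}
Because the limiting mean equals minus one half of the limiting variance, Le Cam's first lemma yields contiguity $P_{h_{0n},\vartheta_n}^n \triangleleft P_{h_0,\vartheta_0}^n$. Applying the symmetric criterion (van der Vaart's Example 6.5 / Lemma 6.4 in the reverse direction, using that the same quadratic expansion together with $-\Lambda_{h_{0n}}$ gives the log-ratio under $P_{h_{0n},\vartheta_n}^n$ the conjugate mean $+\tfrac{1}{2}\sigma^2$ with variance $\sigma^2$) yields the reverse contiguity, establishing mutual contiguity. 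The main obstacle is the triangular-array nature of the score (since $h_{0n}$ varies with $n$), but boundedness of $\mathcal{W}$ and $\mathcal{H}$ makes the Lindeberg condition trivial, so the argument goes through cleanly.
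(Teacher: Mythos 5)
Your proof is correct and takes a genuinely different route from the paper's. The paper performs a direct second-order Taylor expansion of the log-likelihood ratio around $\vartheta_0$, then handles the first-order term by splitting $U_{h_{0n}}(\vartheta_0) = U_{h_0}(\vartheta_0) + [U_{h_{0n}}(\vartheta_0) - U_{h_0}(\vartheta_0)]$, applying the iid CLT to the first piece and a Chebyshev bound to the second, and handles the Hessian term by a three-way decomposition ($K_{h_0}(\vartheta_0)$, $K_{h_{0n}}(\vartheta_0) - K_{h_0}(\vartheta_0)$, and $K_{h_{0n}}(\vartheta'(D_n)) - K_{h_{0n}}(\vartheta_0)$), each estimated separately before assembling the limit. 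You instead specialize the already-proved uniform LAN expansion of Lemma~\ref{lem:locasymnorm} to the sequence $\hd = h_{0n}$ (legitimate, since the sup over $\hd$ dominates the value at $h_{0n}$ and $P_{h_{0n},\vartheta_0}=P_{h_0,\vartheta_0}$), show the deterministic curvature $g^{\rm{T}}\mathscr{I}_{h_{0n},\vartheta_0}g$ converges, and replace the paper's two-part CLT argument with a single triangular-array Lindeberg--Feller CLT for $g^{\rm{T}} n^{-1/2}U_{h_{0n}}(\vartheta_0;D_n)$, exploiting the uniform bound $|U_{\hd}(\vartheta_0;z)|\le M$ (from $U_{\hd}(\vartheta_0;z)=(a-\pi_0(w))r_\hd(w)$, bounded $\mathcal{W}$, bounded $\mathcal{H}$) so the Lindeberg truncation vanishes outright. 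Both reach the same $N(-\tfrac12\sigma^2,\sigma^2)$ weak limit and invoke Le Cam's first lemma. Your version is shorter because it reuses Lemma~\ref{lem:locasymnorm} instead of re-deriving the quadratic expansion, which is redundant given that lemma is already in the supplement; the paper's approach is more self-contained and also sets up intermediate facts (the decomposition of $U_{h_{0n}}-U_{h_0}$, the matrix difference formulas) that are reused later in Lemmas~\ref{lem:asymnorm} and \ref{lem:lasymnseqh}.
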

\begin{proof} 
Fix a sequence $(h_{0n})_{n=1}^\infty$ that is such that $\|h_{0n}-h_0\|_{L^2(P_{h_0, \vartheta_0})}\rightarrow 0$. We show that the ratio of the log-likelihoods, namely $\log\left\{\prod_{i=1}^n\frac{p_{h_{0n},\vartheta_n}(Z_i)}{p_{h_0, \vartheta_0}(Z_i)}\right\}$, is asymptotically normal under $P_{h_0, \vartheta_0}$. Recall that, because $\vartheta_{0_{p+2}} = 0$, it holds that $P_{h_0, \vartheta_0} = P_{h_{0n}, \vartheta_0}$. A Taylor expansion of the log-likelihood ratios gives that, for some intermediate value $\vartheta'(D_n)$ between $\vartheta_0$ and $\vartheta_n$,
\begin{align}
\label{loglkhd}
\log\left\{\prod_{i=1}^n\frac{p_{h_{0n},\vartheta_n}(Z_i)}{p_{h_0, \vartheta_0}(Z_i)}\right\} &= \sum_{i = 1}^n \left\{\log\left(p_{h_{0n},\vartheta_n}(Z_i)\right)- \log\left(P_{\hd, \vartheta_0}(Z_i)\right) \right\}  \nonumber \\ \nonumber
&= \sum_{i = 1}^n \left\{\log\left(p_{h_{0n},\vartheta_n}(Z_i)\right)- \log\left(p_{h_{0n}, \vartheta_0}(Z_i)\right) \right\} \\ \nonumber
&= (\vartheta_n - \vartheta_0)^{\rm{T}} \left\{\sum_{i=1}^n U_{h_{0n}}(\vartheta_0;Z_i)\right\} +\frac{1}{2}(\vartheta_n - \vartheta_0)^{\rm{T}}\left\{\sum_{i=1}^n K_{\hd_n}(\vartheta'(D_n);Z_i) \right\} (\vartheta_n- \vartheta_0) \\
&= n^{1/2}(\vartheta_n- \vartheta_0)^{\rm{T}} \left\{n^{-1/2}\sum_{i=1}^n U_{h_{0n}}(\vartheta_0;Z_i)\right\} \\ \nonumber
&\quad+ \frac{1}{2} n^{1/2}(\vartheta_n- \vartheta_0)^{\rm{T}}\left\{n^{-1}\sum_{i=1}^n K_{h_{0n}}(\vartheta'(D_n);Z_i)\right\} n^{1/2} (\vartheta_n- \vartheta_0) 
\end{align}
where $K(\vartheta,z)$ is the Jacobian of the score function evaluated at $(\vartheta, z)$ as defined in \eqref{defK}. We analyze the two sums in Equation \ref{loglkhd} as follows. For the first sum, we have that
\begin{align*}
&n^{1/2}(\vartheta_n- \vartheta_0)^{\rm{T}} \left\{n^{-1/2}\sum_{i=1}^n U_{h_{0n}}(\vartheta_0;Z_i)\right\}= n^{1/2}(\vartheta_n- \vartheta_0)^{\rm{T}} \left\{ n^{-1/2}\sum_{i=1}^n U_{h_0}(\vartheta_0;Z_i) \right\} \\
&+n^{1/2}(\vartheta_n- \vartheta_0)^{\rm{T}} \left\{ n^{-1/2}\sum_{i=1}^n \left\{U_{h_{0n}}(\vartheta_0;Z_i) - U_{h_0}(\vartheta_0;Z_i)\right\} \right\}.
\end{align*}
Next, for all $z=(w,a)$,
\begin{align*}
U_{h_{0n}}(\vartheta_0;z)&= r_{h_{0n}}(w) \frac{a -\pi_{h_{0n}, \vartheta_0}(w)}{\pi_{h_{0n}, \vartheta_0}(w)\left(1-\pi_{h_{0n}, \vartheta_0}(w)\right)}\rho(\vartheta_0^{\rm{T}}r_{h_{0n}}(w)) \\
&= r_{h_{0n}}(w)\frac{ -\pi_{h_0, \vartheta_0}(w)}{\pi_{h_0, \vartheta_0}(w)\left(1-\pi_{h_0, \vartheta_0}(w)\right)}\rho(\vartheta_0^{\rm{T}}r_{h_0}(w)).
\end{align*}
The latter implies
\begin{align*}
		U_{h_{0n}}(\vartheta_0;z)- 	U_{h_0}(\vartheta_0;z)= \begin{pmatrix}0\\
\vdots\\
0\\
(h_{0n}(w)-h_0(w))\frac{a -\pi_{h_0, \vartheta_0}(w)}{\pi_{h_0, \vartheta_0}(w)\left(1-\pi_{h_0, \vartheta_0}(w)\right)}\rho(\vartheta_0^{\rm{T}}r_{h_0}(w))
\end{pmatrix}.
\end{align*}
Then, because $\sqrt{n}(\vartheta_n-\vartheta_0) = g$, we have that
\begin{align}
\label{eq:difu}
&n^{1/2}(\vartheta_n- \vartheta_0)^{\rm{T}} \left( n^{-1/2}\sum_{i=1}^n \left\{U_{h_{0n}}(\vartheta_0;Z_i)- 	U_{h_0}(\vartheta_0;Z_i)\right\} \right) = \\ \nonumber
&= g_{p+2} n^{-1/2}\sum_{i=1}^n \left\{[h_{0n}(W_{i})-h_0(W_{i})]\frac{A_{i} -\pi_{h_0, \vartheta_0}(W_{i})}{\pi_{h_0, \vartheta_0}(W_{i})\left(1-\pi_{h_0, \vartheta_0}(W_{i})\right)}\rho(\vartheta_0^{\rm{T}}r_{h_0}(W_{i}))\right\},
\end{align}
The expectation of each element in the above sum is 
\begin{align*}
& E\left[[h_{0n}(W)-h_0(W)]\frac{A -\pi_{h_0, \vartheta_0}(W)}{\pi_{h_0, \vartheta_0}(W)\left(1-\pi_{h_0, \vartheta_0}(W)\right)}\rho(\vartheta_0^{\rm{T}}r_{h_0}(W))\right] \\
&= E\left\{ E\left[[h_{0n}(W)-h_0(W)]\frac{A -\pi_{h_0, \vartheta_0}(W)}{\pi_{h_0, \vartheta_0}(W)\left(1-\pi_{h_0, \vartheta_0}(W)\right)}\rho(\vartheta_0^{\rm{T}}r_{h_0}(W))|W \right] \right\} \\
&= E\left\{\frac{[h_{0n}(W)-h_0(W)]\rho(\vartheta_0^{\rm{T}}r_{h_0}(W))}{\pi_{h_0, \vartheta_0}(W)\left(1-\pi_{h_0, \vartheta_0}(W)\right)} E\left[A -\pi_{h_0, \vartheta_0}(W)|W \right] \right\} = 0.
\end{align*}
Therefore, the expectation of \eqref{eq:difu} is equal to 0.
Next, the variance of the right-hand side in Equation \ref{eq:difu} is equal to
\begin{align}
\label{eq1:lemcontiguitycondvar}
&\text{var}\left(g_{p+2} n^{-1/2}\sum_{i=1}^n \left\{[h_{0n}(W_{i})-h_0(W_{i})]\frac{A_{i} -\pi_{h_0, \vartheta_0}(W_{i})}{\pi_{h_0, \vartheta_0}(W_{i})\left(1-\pi_{h_0, \vartheta_0}(W_{i})\right)}\rho(\vartheta_0^{\rm{T}}r_{h_0}(W_{i}))\right\}\right) \\ \nonumber
&=g_{p+2}^2 E\left\{[h_{0n}(W)-h_0(W)]^2\left(\frac{A -\pi_{h_0, \vartheta_0}(W)}{\pi_{h_0, \vartheta_0}(W)\left(1-\pi_{h_0, \vartheta_0}(W)\right)}\rho(\vartheta_0^{\rm{T}}r_{h_0}(W))\right)^2\right\}.
\end{align}
We show that the above variance converges to 0 as $n \rightarrow \infty$. Because $\mathcal{W}$ is bounded and functions in $\mathcal{H}$ are uniformly bounded, and since $0<\pi_{h_0, \vartheta_0}(w)<1$ for all $w \in \mathcal{W}$, the quantity $\left(\frac{a -\pi_{h_0, \vartheta_0}(w)}{\pi_{h_0, \vartheta_0}(w)\left(1-\pi_{h_0, \vartheta_0}(w)\right)}\rho(\vartheta_0^{\rm{T}}r_{h_0}(w))\right)^2$ can be uniformly bounded by a constant $M< \infty$. Next, by assumption, $\|h_{0n}-h_0\|_{L^2(P_{h_0, \vartheta_0})}\rightarrow 0$. Hence, the variance in \eqref{eq1:lemcontiguitycondvar} can be upper bounded by a term that converges to 0, therefore it also converges to 0.
Because the mean of \eqref{eq:difu} is equal to 0 and its variance converges to 0, Chebyshev's inequality shows that, for any $\epsilon>0$,
\begin{align}
\label{eq2:lemcontiguity}
&P_{h_0, \vartheta_0}\left(\left|n^{1/2}(\vartheta_n - \vartheta_0)^{\rm{T}} \left\{n^{-1/2}\sum_{i=1}^n \left[U_{h_{0n}}(\vartheta_0;Z_i) - U_{h_0}(\vartheta_0;Z_i)\right] \right\}\right| > \varepsilon \right) \nonumber \\
&\le \frac{\text{var}\left(g_{p+2} n^{-1/2}\sum_{i=1}^n \left\{[h_{0n}(W_{i})-h_0(W_{i})]\frac{A_{i} -\pi_{h_0, \vartheta_0}(W_{i})}{\pi_{h_0, \vartheta_0}(W_{i})\left(1-\pi_{h_0, \vartheta_0}(W_{i})\right)}\rho(\vartheta_0^{\rm{T}}r_{h_0}(W_{i}))\right\}\right)}{\varepsilon^2} \nonumber \\
&\overset{n\to\infty}{\longrightarrow} 0.
\end{align}
Next, by the central limit theorem, under $P_{h_0, \vartheta_0}$, 
\begin{align}
\label{eq3:lemcontiguity}
n^{-1/2}\sum_{i=1}^n U_{h_0}(\vartheta_0;Z_i)  \rightsquigarrow N\left(\vec{0},\mathscr{I}_{h_0, \vartheta_0}\right).
\end{align}
By continuous mapping theorem,
\begin{equation*}
n^{1/2}(\vartheta_n - \vartheta_0)^{\rm{T}} \left\{n^{-1/2}\sum_{i=1}^n U_{h_0}(\vartheta_0;Z_i) \right\}\rightsquigarrow N\left(0, g^{\rm{T}} \mathscr{I}_{h_0, \vartheta_0} g\right).
\end{equation*}
Combining the arguments shown in \eqref{eq2:lemcontiguity} and \eqref{eq3:lemcontiguity}, we have that, under $P_{h_0, \vartheta_0}$,
\begin{align}
\label{eq4:lemcontiguity}
  (\vartheta_n - \vartheta_0)^{\rm{T}} \left\{\sum_{i=1}^n U_{h_{0n}}(\vartheta_0;Z_i)\right\} \rightsquigarrow N\left(0,g^{\rm{T}} \mathscr{I}_{h_0, \vartheta_0} g\right).
\end{align}
This concludes the study of the leading term in \eqref{loglkhd}.
Now, the second term in \eqref{loglkhd} can be reexpressed as 
\begin{align}
\label{eq5:lemcontiguity}
	&\frac{1}{2}n^{1/2}(\vartheta_n - \vartheta_0)^{\rm{T}}\left\{n^{-1}\sum_{i=1}^n K_{h_{0n}}(\vartheta'(D_n);Z_i)\right\}n^{1/2} (\vartheta_n - \vartheta_0) \nonumber \\
	&= \frac{1}{2}n^{1/2}(\vartheta_n - \vartheta_0)^{\rm{T}}\left\{n^{-1}\sum_{i=1}^n K_{h_0}(\vartheta_0;Z_i)\right\} n^{1/2} (\vartheta_n - \vartheta_0)  \\
	&+ \frac{1}{2}n^{1/2}(\vartheta_n - \vartheta_0)^{\rm{T}}\left[n^{-1}\sum_{i=1}^n \left\{K_{h_{0n}}(\vartheta_0;Z_i) - K_{h_0}(\vartheta_0;Z_i) \right\}\right] n^{1/2} (\vartheta_n - \vartheta_0) \nonumber \\
	&+ \frac{1}{2}n^{1/2}(\vartheta_n - \vartheta_0)^{\rm{T}}\left[n^{-1}\sum_{i=1}^n \left\{K_{h_{0n}}(\vartheta'(D_n);Z_i) - K_{h_{0n}}(\vartheta_0;Z_i) \right\}\right]n^{1/2} (\vartheta_n - \vartheta_0). \nonumber
\end{align}
By the weak law of large numbers, $n^{-1}\sum_{i=1}^n K_{h_0}(\vartheta_0;Z_i)\longrightarrow  -\mathscr{I}_{\vartheta_0 h_0}$ in probability under $P_{h_0, \vartheta_0}$. Therefore, by the continuous mapping theorem,  
\begin{equation}
\label{eq6:lemcontiguity}
P_{h_0, \vartheta_0}\left(\left|\frac{1}{2} n^{1/2}(\vartheta_n - \vartheta_0)^{\rm{T}}\left\{n^{-1}\sum_{i=1}^n K_{h_0}(\vartheta_0;Z_i)\right\} n^{1/2} (\vartheta_n - \vartheta_0) + \frac{1}{2}g^{\rm{T}} \mathscr{I}_{h_0, \vartheta_0}g\right|>\varepsilon \right) \overset{n\rightarrow \infty}{\longrightarrow} 0 
\end{equation}
Next, note that for all $z$,
\begin{align*}
    K_{h_{0n}}(\vartheta_0;z)&= [\pi_{h_{0n}, \vartheta_0}(w)][1-\pi_{h_{0n}, \vartheta_0}(w)]r_{h_{0n}}(w)r_{h_{0n}}(w)^{\rm{T}} \\
    &=[\pi_{h_0, \vartheta_0}(w)][1-\pi_{h_0, \vartheta_0}(w)]r_{h_{0n}}(w)r_{h_{0n}}(w)^{\rm{T}},
\end{align*}
where we have used that $\vartheta_{0_{p+2}}=0$ implies that $\pi_{\hd, \vartheta_0}=\pi_{h_0, \vartheta_0}$ for any $\hd \in \mathcal{H}$.
Hence,
\begin{align*}
&\frac{1}{2} n^{1/2}(\vartheta_n - \vartheta_0)^{\rm{T}}\left[n^{-1}\sum_{i=1}^n \left\{ K_{h_{0n}}(\vartheta_0;Z_i) -  K_{h_0}(\vartheta_0;Z_i)\right\}\right]n^{1/2} (\vartheta_n - \vartheta_0) = \\
&= \frac{g^{\rm{T}}}{2}\left[n^{-1}\sum_{i=1}^n \left(r_{h_{0n}}(W_i) r_{h_{0n}}(W_i)^{\rm{T}}-r_{h_0}(W_i) r_{h_0}(W_i)^{\rm{T}}\right)[\pi_{h_0, \vartheta_0}(W_i)(1-\pi_{h_0, \vartheta_0}(W_i))]\right]g. 
\end{align*}
We now show that the above term converges to 0 in probability by showing that the expected value of each entry of the matrix
\begin{align*}
\bigg|n^{-1}\sum_{i=1}^n\left(r_{h_{0n}}(W_i) r_{h_{0n}}(W_i)^{\rm{T}}-r_{h_0}(W_i) r_{h_0}(W_i)^{\rm{T}}\right)[\pi_{h_0, \vartheta_0}(W_i)(1-\pi_{h_0, \vartheta_0}(W_i))]\bigg|
\end{align*}
converges to 0, where $|\cdot|$ denotes the entrywise absolute value. First, for a given $w$, the matrix $r_{h_{0n}}(w)r_{h_{0n}}(w)^{\rm{T}}-r_{h_0}(w) r_{h_0}(w)^{\rm{T}}$ consists of the following entries:
\begin{equation*}
\left(r_{h_{0n}}(w) r_{h_{0n}}(w)^{\rm{T}}-r_{h_0}(w) r_{h_0}(w)^{\rm{T}}\right)_{l,k}=\begin{cases}
0 & l \leq p,\;k \leq p\\
w_{k}\left(h_{0n}(w)-h_0(w)\right) & l=p+2,\;1\leq k\leq p+1\\
w_{l}\left(h_{0n}(w)-h_0(w)\right) & k=p+2,\;1\leq l\leq p+1\\
h_{0n}(w)^{2}-h_0(w)^2 & k=p+2,\;l=p+2
\end{cases}
\end{equation*}
where $w_j$ is the $j-th$ observed covariate. For those entries that are different from 0, we apply Cauchy-Schwartz inequality:
\begin{align*}
	&E\left|\left(r_{h_{0n}}(W) r_{h_{0n}}(W)^{\rm{T}}-r_{h_0}(W) r_{h_0}(W)^{\rm{T}}\right)_{l,k}\left[\pi_{h_0, \vartheta_0}(W)(1-\pi_{h_0, \vartheta_0}(W))\right]\right| \\
	&\leq E\left\{\left(r_{h_{0n}}(W) r_{h_{0n}}(W)^{\rm{T}}-r_{h_0}(W) r_{h_0}(W)^{\rm{T}}\right)_{l,k}^2 \right\}^{1/2}E\left\{\left[\pi_{h_0, \vartheta_0}(W)(1-\pi_{h_0, \vartheta_0}(W))\right]^2   \right\}^{1/2}
\end{align*}
Because $\|h_{0n}-h_0\|_{L^2(P_{h_0, \vartheta_0})}\rightarrow 0$ and functions in $\mathcal{H}$ are uniformly bounded, we have that
$E\{[h_{0n}(W)^{2}-h_0(W)^{2}]^2\}= E\{[h_{0n}(W)+h_0(W)]^2[h_{0n}(W)-h_0(W)]^2\} \overset{n\to\infty}{\longrightarrow} 0$. Additionally, because $\mathcal{W}$ is bounded, $E\{[W_{l}(h_{0n}(W)-h_0(W))]^2\} \overset{n\to\infty}{\longrightarrow} 0$ for all $l \in \{1,...,p+1\}$. The latter statements combined with the fact that that $E\left\{\left[\pi_{h_0, \vartheta_0}(W)(1-\pi_{h_0, \vartheta_0}(W))\right]^2  \right\} < 1$ imply that
\begin{align*}
E\left[\left(r_{h_{0n}}(W) r_{h_{0n}}(W)^{\rm{T}}-r_{h_0}(W) r_{h_0}(W)^{\rm{T}}\right)_{l,k}^2 \right]^{1/2}E\left\{\left[\pi_{h_0, \vartheta_0}(W)(1-\pi_{h_0, \vartheta_0}(W))\right]^2 \right\}^{1/2}\overset{n\to\infty}{\longrightarrow} 0.
\end{align*}
Let $0_{(p+2)\times(p+2)}$ denote a $(p+2)\times(p+2)$ matrix of 0's. Then, the above limit and Markov's inequality imply that
\begin{align*}
n^{-1}\sum_{i=1}^n \left(r_{h_{0n}}(W_i) r_{h_{0n}}(W_i)^{\rm{T}}-r_{h_0}(W_i) r_{h_0}(W_i)^{\rm{T}}\right)\left[\pi_{h_0, \vartheta_0}(W_i)(1-\pi_{h_0, \vartheta_0}(W_i))\right] \overset{n\rightarrow\infty}{\longrightarrow}0_{(p+2)\times(p+2)}
\end{align*}
in probability under $P_{h_0, \vartheta_0}$. By the continuous mapping theorem,
\begin{align}
\label{eq7:lemcontiguity}
\frac{1}{2} n^{1/2}(\vartheta_n - \vartheta_0)^{\rm{T}}\left[n^{-1}\sum_{i=1}^n \left\{K_{h_{0n}}(\vartheta_0;Z_i) -  K_{h_0}(\vartheta_0;Z_i)\right\}\right]n^{1/2} (\vartheta_n - \vartheta_0) = o_{P_{h_0, \vartheta_0}}(1).
\end{align}
Finally, we show that the last term in Equation \ref{eq5:lemcontiguity} converges to 0 in probability. This term is equal to 
\begin{align*}
 &\frac{1}{2}n^{1/2}(\vartheta_n - \vartheta_0)^{\rm{T}}\left[n^{-1}\sum_{i=1}^n \left\{K_{h_{0n}}(\vartheta'(D_n);Z_i) -  K_{h_{0n}}(\vartheta_0;Z_i)\right\}\right]n^{1/2} (\vartheta_n - \vartheta_0)\\
	&=\frac{g^{\rm{T}}}{2}\left[n^{-1}\sum_{i=1}^n r_{h_{0n}}(W_{i})r_{h_{0n}}(W_{i})^{\rm{T}}\left(\left[\pi_{\vartheta'(D_n),h_{0n}}(W_i)(1-\pi_{\vartheta'(D_n), h_{0n}}(W_i))\right]-\left[\pi_{h_{0n}, \vartheta_0}(W_i)(1-\pi_{h_{0n}, \vartheta_0}(W_i))\right]\right)\right]g.
\end{align*}
Fix $w \in \mathcal{W}$ and $\hd \in \mathcal{H}$ and define the $\mathbb{R}^{p+2} \rightarrow \mathbb{R}$ function $v_{\hd, w}(\vartheta):= \pi_{\hd,\vartheta}(w_i)(1-\pi_{\hd,\vartheta}(w_i))$. We show that $v$ is Lipschitz continuous in $\vartheta$ uniformly over $\hd \in \mathcal{H}$ and $w \in\mathcal{W}$. We do so by showing that its derivative is uniformly upper bounded entrywise. The absolute value of the derivative is
\begin{align*}
    \left|\frac{\partial}{\partial \vartheta}v_{\hd, w}(\vartheta)\right| &= \left|\frac{\partial}{\partial \vartheta} \pi_{\hd,\vartheta}(w) - \frac{\partial}{\partial \vartheta}\pi_{\hd,\vartheta}(w)^2 \right| = \left|\rho(\vartheta^{\rm{T}}r_{\hd}(w))r_{\hd}(w) - 2\pi_{\hd,\vartheta}(w)\rho(\vartheta^{\rm{T}}r_{\hd}(w))r_{\hd}(w)\right| \\
    &= \left|\rho(\vartheta^{\rm{T}}r_{\hd}(w))r_{\hd}(w)[1-2\pi_{\hd,\vartheta}(w)] \right|.
\end{align*}
By definition $|\rho(\vartheta^{\rm{T}}r_{\hd}(w))| <1$ for all $w \in \mathcal{W}$ and $\hd \in \mathcal{H}$. Next, because $w \in \mathcal{W}$ and $\hd \in \mathcal{H}$ are bounded, we can upper bound each entry of $r_{\hd}(w)$ by a finite constant $L$ that depends neither on $w$ nor $\hd$. Finally, because $0<\pi_{\hd,\vartheta}(w)<1$ regardless of $w$ or $\hd$, we have that $|1-2\pi_{\hd,\vartheta}(w)|\leq 1$. Hence, for all $\vartheta$,
\begin{align*}
    \sup_{w \in \mathcal{W}, \hd \in \mathcal{H}}\left|\frac{\partial}{\partial \vartheta}v_{\hd, w}(\vartheta)\right| \leq L,
\end{align*}
which implies that $v_{\hd, w}(\cdot)$ is L-Lipschitz uniformly over $w \in \mathcal{W}$ and $\hd \in \mathcal{H}$. Combining the preceding arguments,
\begin{align*}
    &E\left|\frac{g^{\rm{T}}}{2}\left[n^{-1}\sum_{i=1}^n r_{h_{0n}}(W_{i})r_{h_{0n}}(W_{i})^{\rm{T}}\left(\left[\pi_{\vartheta'(D_n), h_{0n}}(W_i)(1-\pi_{\vartheta'(D_n), h_{0n}}(W_i))\right]-\left[\pi_{h_{0n}, \vartheta_0}(W_i)(1-\pi_{h_{0n}, \vartheta_0}(W_i))\right]\right)\right]g \right|\\
    &\leq L^2\|g\|^2 E\left[\|\vartheta'(D_n) - \vartheta_0\|\right].
\end{align*}
By Markov's inequality and the above display, 
\begin{align}
\label{eq8:lemcontiguity}
 &P_{h_0, \vartheta_0}\left(\left|\frac{1}{2} n^{1/2}(\vartheta_n- \vartheta_0)^{\rm{T}}\left[n^{-1}\sum_{i=1}^n \left\{K_{h_{0n}}(\vartheta'(D_n);Z_i) -  K_{h_{0n}}(\vartheta_0;Z_i)\right\}\right] n^{1/2} (\vartheta_n- \vartheta_0)\right| > \varepsilon \right) \\ \nonumber
 &\leq \varepsilon^{-1}L^2\|g\|^2E\left[\|\vartheta'(D_n) -  \vartheta_0\|\right] \leq \varepsilon^{-1}L^2\|g\|^2 \|\vartheta_n-  \vartheta_0\| = \varepsilon^{-1}L^2\frac{\|g\|^3}{\sqrt{n}} \overset{n\rightarrow \infty}{\longrightarrow} 0. 
 \end{align}
Therefore, by Equations \ref{eq5:lemcontiguity},  \ref{eq6:lemcontiguity}, \ref{eq7:lemcontiguity}, and \ref{eq8:lemcontiguity}, we have that 
\begin{align}
\label{eq9:lemcontiguity}
   P_{h_0, \vartheta_0}\left( \left|\frac{1}{2} n^{1/2}(\vartheta_n - \vartheta_0)^{\rm{T}}\left\{n^{-1}\sum_{i=1}^n K_{h_{0n}}(\vartheta'(D_n);Z_i)\right\} n^{1/2} (\vartheta_n - \vartheta_0) + \frac{1}{2}g^{\rm{T}} \mathscr{I}_{h_0, \vartheta_0}g\right| > \varepsilon \right) \overset{n \rightarrow \infty }{\longrightarrow }0 
\end{align}
Combining the equality in \ref{loglkhd} along with the results stated in Equations \ref{eq4:lemcontiguity} and \ref{eq9:lemcontiguity}, implies that
\begin{align*}
\log\left\{\prod_{i = 1}^n\frac{p_{h_{0n},\vartheta_n}(Z_i)}{p_{h_0, \vartheta_0}(Z_i)}\right\} \rightsquigarrow N\left(-\frac{1}{2}g^{\rm{T}} \mathscr{I}_{h_0, \vartheta_0}g, g^{\rm{T}} \mathscr{I}_{h_0, \vartheta_0}g\right)
\end{align*}
under $P_{h_0, \vartheta_0}$. Hence, by Le Cam's first lemma (see Example~6.5 in \cite{van2000asymptotic} for an appropriate version) $(P_{h_{0n},\vartheta_n}^n)_{n=1}^\infty$ and $(P_{h_0, \vartheta_0}^n)_{n=1}^\infty$ are mutually contiguous. 
\end{proof}
The following lemmas describe the asymptotic behavior of the maximum likelihood estimator of $\vartheta_0$ along the sequence of models $\{P_{h_{0n},\vartheta} : \vartheta \in \mathbb{R}^{p+2}\}_{n=1}^\infty$. Recall that $\ell_{\hd}(\vartheta; d_n)$ is the log-likelihood function given $d_n$ and $\hd \in (h_{0n})_{n=1}^{\infty}$,
\begin{equation}
\label{def:loglklhd}
\ell_{\hd }(\vartheta; d_n) := \frac{1}{n}\sum_{i = 1}^n \left[a_i\log \pi_{\hd,\vartheta} (w_i) + (1-a_i)\log (1-\pi_{\hd,\vartheta}(w_i))\right].
\end{equation}
Then, the maximum likelihood estimator of $\vartheta$ based on $d_n$ and $\hd$ is defined as
\begin{align*}
  \vartheta_{\hd, n}:= \argmax_{\vartheta}\ell_{\hd}(\vartheta; d_n),
\end{align*}
where this maximizer exists and is unique with probability one under our conditions. Finally, let $L(\vartheta, h)$ denote the expectation of $\ell_{\hd}(\vartheta; d_n)$,
\begin{equation*}
L(\vartheta, \hd) := E\left[A \log \pi_{\hd,\vartheta} (W) + (1-A)\log (1-\pi_{\hd,\vartheta} (W))\right].
\end{equation*}
The following lemma shows that $\vartheta_{h_{0n},n}$ is consistent for $\vartheta_0$.
\begin{lemma}
\label{lem:consistency} 
Assume that the $\mathcal{H}$-valued sequence $(h_{0n})_{n=1}^\infty$ is such that $\|h_{0n}-h_0\|_{L^2(P_{h_0, \vartheta_0})}\rightarrow 0$ as $n\rightarrow\infty$ and that the conditions of Proposition \ref{prop1} hold. Let $\vartheta_{h_{0n},n}=  \underset{\vartheta}{\mathrm{argmax}}\, \ell_{h_{0n}}(\vartheta; D_n)$ be the maximum likelihood estimator of $\vartheta_0$. Then, for every $\epsilon > 0$, 
\begin{equation}
P_{h_0, \vartheta_0}\left( \|\vartheta_{h_{0n},n} - \vartheta_0 \| \ge \epsilon\right) \overset{n \rightarrow \infty}{\longrightarrow} 0. 
\end{equation}
\end{lemma}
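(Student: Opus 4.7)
The plan is to adapt the standard Wald-style consistency argument for M-estimators to this triangular-array setting where the nuisance function $h_{0n}$ varies with $n$. I will exploit the fact that the logistic log-likelihood is concave in $\vartheta$, which reduces the usual uniform-convergence-over-compacts requirement to pointwise convergence plus a compactness-of-maximizers argument.

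\textbf{Step 1 (Identifiability of the population criterion).} Set
\[L(\vartheta, h) := E[A\log\pi_{h,\vartheta}(W) + (1-A)\log(1-\pi_{h,\vartheta}(W))],\]
with $E$ under $P_{h_0,\vartheta_0}$. By the Kullback-Leibler inequality, $L(\vartheta_0, h_0) - L(\vartheta, h_0) \geq 0$ with equality if and only if the conditional laws $P_{h_0, \vartheta}(\cdot\mid W)$ and $P_0(\cdot\mid W)$ agree $P_{W,0}$-almost surely; by Condition~\ref{cond:mle} this forces $\vartheta = \vartheta_0$. Moreover, Condition~\ref{cond:qmd2}(iii) ensures $\mathscr{I}_{h_0, \vartheta_0}$ is positive definite, so the Hessian of $L(\cdot, h_0)$ at $\vartheta_0$ is negative definite, $\vartheta_0$ is the unique maximizer, and $L(\vartheta, h_0) \to -\infty$ as $\|\vartheta\| \to \infty$.

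\textbf{Step 2 (Pointwise convergence of the empirical criterion).} For each fixed $\vartheta$, decompose
\[\ell_{h_{0n}}(\vartheta; D_n) - L(\vartheta, h_0) = [\ell_{h_{0n}}(\vartheta; D_n) - L(\vartheta, h_{0n})] + [L(\vartheta, h_{0n}) - L(\vartheta, h_0)].\]
The first bracket is the centered average of a triangular array of row-wise iid summands (under $P_{h_0,\vartheta_0}$) whose distribution depends on $n$ only through $h_{0n}$. Since $\mathcal{W}$ is bounded and $\sup_{\hd\in\mathcal{H}}\sup_w|\hd(w)|<\infty$ (Conditions~\ref{cond:qmd} and~\ref{cond:qmd2}(i)), the linear predictor $\vartheta^{\top} r_{h_{0n}}(w)$ is uniformly bounded in $n$ and $w$, hence $\pi_{h_{0n},\vartheta}$ is bounded away from $\{0,1\}$ and the summands $a\log\pi_{h_{0n},\vartheta}(w)+(1-a)\log(1-\pi_{h_{0n},\vartheta}(w))$ are uniformly bounded; a triangular-array weak law of large numbers then yields convergence to $0$ in $P_{h_0,\vartheta_0}$-probability. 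For the second bracket, the same uniform bounds make $h\mapsto L(\vartheta,h)$ Lipschitz with respect to the $L^2(P_{W,0})$ norm on the bounded class $\mathcal{H}$, so $\|h_{0n}-h_0\|_{L^2(P_{h_0,\vartheta_0})}\to 0$ drives it to zero as well.

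\textbf{Step 3 (Concavity upgrade, argmax, and main obstacle).} For each realization $D_n$, $\vartheta \mapsto \ell_{h_{0n}}(\vartheta; D_n)$ is concave, being the logistic log-likelihood composed with a linear map. A classical theorem on pointwise-to-uniform convergence of concave functions on open convex sets, combined with Step~2 applied along a countable dense subset of $\mathbb{R}^{p+2}$ via a diagonal extraction, yields, for every compact $K\subset\mathbb{R}^{p+2}$,
\[\sup_{\vartheta\in K}|\ell_{h_{0n}}(\vartheta; D_n) - L(\vartheta, h_0)| \xrightarrow{P_{h_0,\vartheta_0}} 0.\]
Given $\epsilon>0$, set $B:=\{\vartheta:\|\vartheta-\vartheta_0\|\leq\epsilon\}$ and let $\delta := L(\vartheta_0, h_0) - \max_{\|\vartheta-\vartheta_0\|=\epsilon} L(\vartheta, h_0)$, which is positive by Step~1; uniform convergence on $B$ transfers this gap to $\ell_{h_{0n}}$ with $P_{h_0,\vartheta_0}$-probability tending to one, and strict concavity then forces the unique global maximizer $\vartheta_{h_{0n},n}$ into the interior of $B$. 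The main obstacle throughout is the triangular-array structure: both $\ell_{h_{0n}}$ and its expectation depend on $n$ through the nuisance $h_{0n}$, so textbook LLN and argmax-consistency theorems do not apply verbatim. This is resolved by the uniform-in-$n$ bounds flowing from the boundedness of $\mathcal{W}$ and $\mathcal{H}$ in Conditions~\ref{cond:qmd} and~\ref{cond:qmd2}; concavity further sidesteps the need for a separate tightness proof of the MLE.
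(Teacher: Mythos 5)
Your proposal is correct and follows essentially the same route as the paper: you verify (i) unique maximization of $L(\cdot,h_0)$ at $\vartheta_0$, (ii) concavity of $\ell_{h_{0n}}(\cdot\,;D_n)$, and (iii) pointwise convergence $\ell_{h_{0n}}(\vartheta;D_n)\to L(\vartheta,h_0)$ in probability, then use the concavity upgrade to deduce argmax consistency. The paper compresses your Step~3 into a citation of Theorem~2.7 of Newey and McFadden (1994), which is precisely the concave-objective consistency result you re-derive by hand.
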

\begin{proof}
Let $D_n$ be a dataset arising as $n$ iid draws from $P_{h_0, \vartheta_0}$. 
It can be shown that, under the conditions of this lemma, it is true that (i) $L(\cdot,h_0)$ is uniquely maximized at $\vartheta_0$, (ii) $\ell_{h_{0n}}(\cdot,D_n)$ is concave almost surely, (iii) $\ell_{h_{0n}}(\vartheta,D_n)\rightarrow L(\vartheta,h_0)$ in probability for all $\vartheta\in\mathbb{R}^{p+2}$ (details are omitted here for brevity). Hence, Theorem~2.7 in \cite{newey1994large} gives the desired result.
\end{proof}
One can also show that the maximum likelihood estimator based on $h_0$, $\vartheta_{h_0,n}$, is consistent for $\vartheta_0$ by setting the $\mathcal{H}$-valued sequence in Lemma \ref{lem:consistency} to $(h_0)_{n=1}^{\infty}$. 
Next, the following lemma shows that the maximum likelihood estimator $\vartheta_{h_{0n},n}$ is asymptotically linear. 
\begin{lemma}
\label{lem:asymnorm} 
Assume that the $\mathcal{H}$-valued sequence $(h_{0n})_{n=1}^\infty$ is such that $\|h_{0n}-h_0\|_{L^2(P_{h_0, \vartheta_0})}\rightarrow 0$ as $n\rightarrow\infty$ and that the conditions of Proposition \ref{prop1} hold. Let $\vartheta_{h_{0n},n} = \underset{\vartheta}{\mathrm{argmax}}\, \ell_{h_{0n}}(\vartheta; D_n)$ be the maximum likelihood estimator of $\vartheta_0$, $g \in \mathbb{R}^{p+2}$, and $\vartheta_n= \vartheta_0 + g/\sqrt{n}$. Then, under sampling from $P_{h_{0n},\vartheta_n}^n$, 
\begin{align*}
\sqrt{n}[\vartheta_{h_{0n},n} - \vartheta_n] &= \mathscr{I}_{h_0, \vartheta_0}^{-1}\frac{1}{\sqrt{n}}U_{h_{0n}}(\vartheta_n; D_n) + o_p(1).
\end{align*}
\end{lemma}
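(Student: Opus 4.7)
The plan is to follow the standard Taylor-expansion argument for asymptotic linearity of an M-estimator, adapted to the present setting where both the likelihood and the sampling law depend on the drifting nuisance $h_{0n}$. I would first upgrade consistency to the shifted measure: Lemma~\ref{lem:consistency} gives $\vartheta_{h_{0n},n}\to\vartheta_0$ in probability under $P_{h_0,\vartheta_0}$, and mutual contiguity of $P_{h_{0n},\vartheta_n}^n$ and $P_{h_0,\vartheta_0}^n$ supplied by Lemma~\ref{lem:contiguity} transfers this convergence to $P_{h_{0n},\vartheta_n}$, so that $\|\vartheta_{h_{0n},n}-\vartheta_n\|=o_{P_{h_{0n},\vartheta_n}}(1)$. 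With probability tending to one, $\vartheta_{h_{0n},n}$ then lies in the interior of any fixed neighborhood of $\vartheta_0$ and, by concavity of $\vartheta\mapsto \ell_{h_{0n}}(\vartheta;D_n)$, satisfies the first-order condition $U_{h_{0n}}(\vartheta_{h_{0n},n};D_n)=0$. A coordinate-wise mean-value expansion of $U_{h_{0n}}(\cdot;D_n)$ around $\vartheta_n$ yields, for an intermediate $\tilde\vartheta_n$ lying between $\vartheta_n$ and $\vartheta_{h_{0n},n}$,
\begin{equation*}
0 \;=\; \tfrac{1}{\sqrt{n}}U_{h_{0n}}(\vartheta_n;D_n) \;+\; \Bigl[\tfrac{1}{n}K_{h_{0n}}(\tilde\vartheta_n;D_n)\Bigr]\,\sqrt{n}\,(\vartheta_{h_{0n},n}-\vartheta_n).
\end{equation*}

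The conclusion will follow by inverting the Hessian factor and invoking Slutsky's theorem, provided I can establish $\tfrac{1}{n}K_{h_{0n}}(\tilde\vartheta_n;D_n)\to -\mathscr{I}_{h_0,\vartheta_0}$ in probability under $P_{h_{0n},\vartheta_n}$, with the limit invertible by Condition~\ref{cond:qmd2}(iii). I would prove this by the decomposition
\begin{equation*}
\tfrac{1}{n}K_{h_{0n}}(\tilde\vartheta_n;D_n)+\mathscr{I}_{h_0,\vartheta_0}
= \bigl[\tfrac{1}{n}K_{h_{0n}}(\tilde\vartheta_n;D_n)-\tfrac{1}{n}K_{h_{0n}}(\vartheta_0;D_n)\bigr]
+ \bigl[\tfrac{1}{n}K_{h_{0n}}(\vartheta_0;D_n)-\tfrac{1}{n}K_{h_0}(\vartheta_0;D_n)\bigr]
+ \bigl[\tfrac{1}{n}K_{h_0}(\vartheta_0;D_n)+\mathscr{I}_{h_0,\vartheta_0}\bigr].
\end{equation*}
The third bracket is $o_{P_{h_0,\vartheta_0}}(1)$ by the weak law of large numbers, exactly as used in Lemma~\ref{lem:contiguity}. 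The second bracket vanishes entrywise in $L^1$ via the same matrix-entry comparison performed in Lemma~\ref{lem:contiguity}: because $\vartheta_{0,p+2}=0$ forces $\pi_{\hd,\vartheta_0}=\pi_{h_0,\vartheta_0}$ for every $\hd$, the entries reduce to terms linear in $h_{0n}(W)-h_0(W)$ or to $h_{0n}(W)^2-h_0(W)^2$, each of which is controlled by Cauchy--Schwarz and the hypothesis $\|h_{0n}-h_0\|_{L^2(P_{h_0,\vartheta_0})}\to 0$. Contiguity then transfers the total $o_{P_{h_0,\vartheta_0}}(1)$ conclusion to $P_{h_{0n},\vartheta_n}$.

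The main obstacle will be the first bracket, since the Hessian is evaluated at a data-dependent intermediate point while the nuisance is drifting. The key observation that resolves it is that the second partial derivatives of $\vartheta\mapsto U_{\hd}(\vartheta;z)$ are uniformly bounded in $(\hd,z)\in\mathcal{H}\times\mathcal{Z}$ under Conditions~\ref{cond:pos:first}, \ref{cond:qmd}, and \ref{cond:qmd2} --- this is precisely the bound exploited for the remainder term in Lemma~\ref{lem:lanscore}. Consequently, each entry of $\tfrac{1}{n}[K_{h_{0n}}(\tilde\vartheta_n;D_n)-K_{h_{0n}}(\vartheta_0;D_n)]$ is dominated by a constant times $\|\tilde\vartheta_n-\vartheta_0\|=o_p(1)$, which kills the bracket. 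Once all three brackets are $o_p(1)$, the Hessian factor is invertible with probability tending to one, and rearranging the expansion displayed above gives
\begin{equation*}
\sqrt{n}\,(\vartheta_{h_{0n},n}-\vartheta_n)
= \mathscr{I}_{h_0,\vartheta_0}^{-1}\,\tfrac{1}{\sqrt{n}}U_{h_{0n}}(\vartheta_n;D_n) + o_{P_{h_{0n},\vartheta_n}}(1),
\end{equation*}
which is the desired representation.
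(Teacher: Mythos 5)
Your proof is correct, but it takes a genuinely different route from the paper's. The paper casts $\vartheta_{h_{0n},n}$ as a $Z$-estimator and verifies (a modified form of) Theorem~5.21 in \cite{van2000asymptotic}: it bounds $\|Z_n(\vartheta_{h_{0n},n},h_0) - Z_n(\vartheta_{h_0,n},h_0)\|$ via a triangle-inequality split, controls one piece with a bracketing-entropy bound (Theorem~2.14.2 in \cite{van1996weak}) applied to a Lipschitz function class, and the other with a fundamental-theorem-of-calculus/Lipschitz-in-$\hd$ argument on $\hd \mapsto \mathscr{I}_{\hd,\vartheta}$. It then obtains the linearization at $(\vartheta_0, h_0)$, converts $U_{h_0}(\vartheta_0;\cdot)$ to $U_{h_{0n}}(\vartheta_0;\cdot)$ and then to $U_{h_{0n}}(\vartheta_n;\cdot)$ in two further steps, and finally invokes contiguity. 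You instead Taylor-expand the score $U_{h_{0n}}(\cdot;D_n)$ directly around $\vartheta_n$ and show the averaged Hessian at a data-dependent intermediate point converges to $-\mathscr{I}_{h_0,\vartheta_0}$ via a three-bracket decomposition (Lipschitz-in-$\vartheta$ bound on the Hessian, entrywise comparison of $K_{h_{0n}}$ vs.\ $K_{h_0}$ exploiting $\pi_{\hd,\vartheta_0}=\pi_{h_0,\vartheta_0}$, and the WLLN), transferring each $o_p(1)$ across contiguous measures as needed. Your route is more elementary and lands directly on the target expression at $\vartheta_n$, at the price of relying on the extra smoothness available in the logistic model (uniformly bounded derivatives of $K$ over $\mathcal{H}\times\mathcal{Z}$, essentially the bound used for the Taylor remainder in Lemma~\ref{lem:lanscore}); the paper's $Z$-estimator argument is heavier machinery but requires less smoothness in $\vartheta$ and produces the intermediate $U_{h_0}(\vartheta_0)$ representation, which is reused elsewhere. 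Two minor points to tighten in your writeup: the coordinate-wise mean-value expansion produces a different intermediate $\tilde\vartheta_n^{(j)}$ per coordinate (using the integral form of the remainder, with $\int_0^1 \tfrac{1}{n}K_{h_{0n}}(\vartheta_n+t(\vartheta_{h_{0n},n}-\vartheta_n);D_n)\,dt$, avoids this bookkeeping entirely), and the final Slutsky step implicitly requires $n^{-1/2}U_{h_{0n}}(\vartheta_n;D_n)=O_p(1)$ under $P_{h_{0n},\vartheta_n}$, which follows from Lemma~\ref{lem:brasym} and is worth flagging.
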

\begin{proof}
We first show that, under $P^n_{h_0, \vartheta_0}$, the following holds:
\begin{align*}
\sqrt{n}[\vartheta_{h_{0n},n}  - \vartheta_0] = \mathscr{I}_{h_0, \vartheta_0}^{-1} \frac{1}{\sqrt{n}}U_{h_0}(\vartheta_0; D_n) + o_{P_{h_0, \vartheta_0}}(1).
\end{align*}
 Let $Z_{0}(\vartheta, \hd):=\frac{\partial}{\partial \vartheta}L(\vartheta, \hd)$, $ \vartheta_{h_{0},n} := \underset{\vartheta}{\mathrm{argmax}}\,\ell_{{h_0}}(\vartheta; d_n)$ be the maximum likelihood estimator when $h_0$ is known, and $Z_n(\vartheta,\hd):=U_{h}(\vartheta;d_n)$. Applying Triangle Inequality and by definition of the MLE's we have that
\begin{align}
\label{eq1:asymnorm}
    \nonumber  \|Z_n(\vartheta_{h_{0n},n},h_0) - Z_n(\vartheta_{h_{0},n},h_0)\|&=\|Z_n(\vartheta_{h_{0n},n},h_0) - Z_n(\vartheta_{h_{0n},n},h_{0n})\| \\ \nonumber 
    &\le \|Z_n(\vartheta_{h_{0n},n},h_0) - Z_0(\vartheta_{h_{0n},n},h_0) + Z_0(\vartheta_{h_{0n},n},h_{0n}) - Z_n(\vartheta_{h_{0n},n},h_{0n})\| \\
    &\quad+ \|Z_0(\vartheta_{h_{0n},n},h_0) - Z_0(\vartheta_{h_{0n},n},h_{0n})\|.
\end{align}
We begin by showing that, for every $\epsilon >0$,
\begin{equation}
\label{eq2:asymnorm}
   P_{h_0,\vartheta_0}\left( \|Z_n(\vartheta_{h_{0n},n},h_0) - Z_0(\vartheta_{h_{0n},n},h_0) + Z_0(\vartheta_{h_{0n},n},h_{0n}) - Z_n(\vartheta_{h_{0n},n},h_{0n})\| > n^{-1/2}\epsilon\right)\overset{n \to \infty}{\longrightarrow}0.
\end{equation}
By consistency of $\vartheta_{h_{0n},n}$ and the assumption on  $(h_{0n})_{n=1}^{\infty}$ we know that there exists a sequence $\delta_n$ that converges to zero sufficiently slowly so that $P_{h_0, \vartheta_0}\left(\|\vartheta_{h_{0n},n}-\vartheta_0\|>\delta_n\right)\rightarrow 0$ and $\|h_{0n} - h_0\| \leq \delta_n$. Thus, it suffices to show that 
\begin{align*}
    P_{h_0, \vartheta_0}\left(\sup_{\vartheta \in D_{\delta_n}}\|Z_n(\vartheta,h_0) - Z_0(\vartheta,h_0) + Z_0(\vartheta,h_{0n})  - Z_n(\vartheta,h_{0n})  \|> n^{-1/2}\epsilon \right) \overset{n \to \infty}{\longrightarrow}0.
\end{align*}
where $D_{\delta} = \{\vartheta \in \mathbb{R}^{p+2}: \|\vartheta - \vartheta_0\| \leq \delta\}$. Now, because
\begin{align*}
&\sup_{\vartheta \in D_{\delta_n}}\|Z_n(\vartheta,h_0) - Z_0(\vartheta,h_0) + Z_0(\vartheta,h_{0n})  - Z_n(\vartheta,h_{0n})\|  \\
=&\sup_{\vartheta \in D_{\delta_n}}\|Z_n(\vartheta,h_0) - Z_0(\vartheta,h_0) +[Z_n(\vartheta_0,h_0) - Z_n(\vartheta_0,h_0)]+ Z_0(\vartheta,h_{0n}) - Z_n(\vartheta,h_{0n})| \\
&\leq \sup_{\vartheta \in D_{\delta_n}} \|Z_n(\vartheta,h_0) - Z_n(\vartheta_0,h_0) -Z_0(\vartheta,h_0))\| + \sup_{\vartheta \in D_{\delta_n}} \|Z_n(\vartheta_0,h_0) -Z_n(\vartheta,h_{0n}) + Z_0(\vartheta,h_{0n})\| 
\end{align*}
and since $h_0 \in \mathcal{H}_{\delta_n}$, we can instead show that
\begin{align}
\label{eq3:asymnorm}
\sup_{\hd \in\mathcal{H}_{\delta_n}}P_{h_0, \vartheta_0}\left(\sup_{\vartheta \in D_{\delta_n}}\|Z_n(\vartheta,\hd) - Z_n(\vartheta_0,h_0) - Z_0(\vartheta,\hd) \| > n^{-1/2}\epsilon \right)\overset{n \to \infty}{\longrightarrow}0.
\end{align}
Fix $\hd \in \mathcal{H}_{\delta_n}$ and let $$f_{\hd,\vartheta}(z) := a \frac{\rho(\vartheta^{\rm{T}} r_{\hd}(w))}{\pi_{\hd,\vartheta}(w)}r_{\hd}(w)+ (a-1) \frac{\rho(\vartheta^{\rm{T}} r_{\hd}(w))}{1-\pi_{\hd,\vartheta}(w)}r_{\hd}(w)- a \frac{\rho(\vartheta_0^{\rm{T}} r_{h_0}(w))}{\pi_{h_0, \vartheta_0}(w)}r_{\hd}(w)-(a-1) \frac{\rho(\vartheta_0^{\rm{T}} r_{h_0}(w))}{1-\pi_{h_0, \vartheta_0}(w)}r_{h_0}(w).$$
Additionally, let $f^k_{\hd,\vartheta}(z)$ denote the k-th entry of the function $f_{\hd,\vartheta}(\cdot)$ and $(M)_k$ denote the k-th column of a matrix $M$. Further, we also define the class of functions $\mathcal{F}^k_{\hd,\delta} = \{z\mapsto f_{\hd,\vartheta}^k(z) :  \vartheta \in \mathbb{R}^{p+2}, \vartheta \in D_{\delta}\}$, and let $N_{[]}(\epsilon, \mathcal{F}_{\hd,\delta}^k, L_{2}(P))$ denote its bracketing number of radius $\epsilon$ with respect to the $L_{2}(P)$ norm. We will show that $E[|\sup_{f \in \mathcal{F}_{\delta_n,h}^k}\mathbb{G}_n f|] \longrightarrow 0$ as $n \rightarrow \infty$. Fix $z \in \mathcal{Z}$ and $k \in \{1,...,p+2\}$ and let $\nu_{z,\hd}^k: \vartheta \mapsto f_{\hd,\vartheta}^k(z)$ be a function that maps $\vartheta$ to the k-th entry of $f_{\hd,\vartheta}(z)$. Each entry of the derivative
\begin{align*}
    \left|\frac{\partial }{\partial \vartheta} \nu_{z,\hd}^k\right|
    &= \left|(\rho(\vartheta^{\rm{T}}r_{\hd}(w))r_{\hd}(w)r_{\hd}(w)^{\rm{T}})_{k}\right|
\end{align*}
can be uniformly bounded by a finite constant $C>0$ because both $\mathcal{W}$ and $\mathcal{H}$ are bounded. Therefore, the function $\nu_{z,\hd}^k(\vartheta)$ is Lipschitz in $\vartheta$. Hence, for $\vartheta_1,\vartheta_2 \in D_{\delta_n}$ we have that
\begin{align}
\label{eq4:asymnorm}
    |f_{\hd, \vartheta_1}^k(z) -  f_{\hd, \vartheta_2}^k(z)| \leq C\|\vartheta_1 - \vartheta_2\| \leq 2 C  \delta_n
\end{align}
Then, by Example 19.7 in \cite{van2000asymptotic}, we have that there exists a constant $K$ that depends on $p$ and such that
\begin{align*} 
N_{[]}(\epsilon C, \mathcal{F}_{\hd,\delta_n}^k, L_{2}(P)) \leq K \left(\frac{2\delta_n}{\epsilon}\right)^{p+2}, \textnormal{ for every } 0 < \epsilon < 2 \delta_n.
\end{align*}
Next, if $\vartheta = \vartheta_0$, then $f_{\hd,\vartheta}^k = 0$. Applying the bound in Equation \ref{eq4:asymnorm} with $\vartheta_2 = \vartheta_0$, we have that $|f_{\vartheta_1, \hd}^k(z)| \leq C\|\vartheta_1-\vartheta_0\| \leq  C \delta_n$. Because the previous bound is irrespective of $\vartheta_1$ and $\hd$, we have that for every $z \in \mathcal{Z}$,
$\sup_{\hd\in\mathcal{H}}\sup_{f_1\in\mathcal{F}_{\hd,\delta_n}^k}|f_1(z)| \leq  C\delta_n$. Then, let $F_n^k(z) = C\delta_n$ be the envelope function of the function class $\mathcal{F}_{\hd, \delta_n}^k$. Therefore, for every $0<\epsilon < 2$, the bracketing integral is finite because it can be upper bounded as follows,
\begin{align*} 
\int_{0}^{1} \sqrt{1 + \log N_{[]}(\epsilon C\delta_n, \mathcal{F}_{\hd, \delta_n}^k, L_2(P))}d\epsilon
\leq \int_{0}^{1}  \sqrt{1 +\log K + (p+2)\log   \left(\frac{2}{\epsilon}\right)}d\epsilon =: K'<\infty. 
\end{align*}
By Theorem 2.14.2 in \cite{van1996weak}, $E|\sup_{f \in \mathcal{F}_{\hd,\delta_n}^k}\mathbb{G}_n f| \leq K' C \delta_n \overset{n\to \infty}{\longrightarrow} 0$. Because the index $k$ was arbitrary and the upper bound $K' C \delta_n$ does not depend on $\hd$, we can establish that \eqref{eq3:asymnorm} holds via Markov's inequality.
Next, we show that 
\begin{equation}
\label{eq3:lemZrootncons}
  \|Z_0(\vartheta_{h_{0n},n},h_0) - Z_0(\vartheta_{h_{0n},n},h_{0n})\| =  o_p(\|\vartheta_{h_{0n},n} - \vartheta_0\|).
\end{equation}
Using similar notation for multivariate functions as before, we will let $Z_0^k(\vartheta,\hd)$ denote the k-th value of one of its outputs. First, 
\begin{align*}
    Z^k_0(\vartheta_{h_{0n},n},h_0) - Z^k_0(\vartheta_{h_{0n},n},h_{0n})&= Z^k_0(\vartheta_{h_{0n},n},h_0) - Z^k_0(h_0,\vartheta_0) + Z^k_0(h_{0n}, \vartheta_0) - Z^k_0(\vartheta_{h_{0n},n},h_{0n})
\end{align*}
since $Z^k_0(\vartheta_0,h_0)=Z^k_0(\vartheta_0, h_{0n})=0$. Next, by the fundamental theorem of calculus, we have that, for any $\hd$, $\vartheta$,
\begin{align*}
    Z^k_0(\vartheta, \hd)-Z^k_0(\vartheta_0,\hd)&= \int_0^{\|\vartheta-\vartheta_0\|} \left.\frac{\partial}{\partial\delta} Z^k_0\left(\vartheta_0 + \delta \frac{\vartheta-\vartheta_0}{\|\vartheta-\vartheta_0\|},\hd\right) \right|_{\delta=\epsilon}\, d\epsilon.
\end{align*}
Hence,
\begin{align*}
    &Z^k_0(\vartheta_{h_{0n},n},h_0) - Z^k_0(\vartheta_0,h_0) + Z^k_0(\vartheta_0,h_{0n}) - Z^k_0(\vartheta_{h_{0n},n},h_{0n}) \\
    &= \int_0^{\|\vartheta_{h_{0n},n}-\vartheta_0\|} \left.\frac{\partial}{\partial\delta} \left[Z_0^k\left(\vartheta_0 + \delta \frac{\vartheta_{h_{0n},n}-\vartheta_0}{\|\vartheta_{h_{0n},n}-\vartheta_0\|},h_0\right) - Z_0^k\left(\vartheta_0 + \delta \frac{\vartheta_{h_{0n},n}-\vartheta_0}{\|\vartheta_{h_{0n},n}-\vartheta_0\|},h_{0n}\right)\right] \right|_{\delta=\epsilon}\, d\epsilon \\
    &\leq\int_0^{\|\vartheta_{h_{0n},n}-\vartheta_0\|}\left|\left.\frac{\partial}{\partial\delta} \left[Z_0^k\left(\vartheta_0 + \delta \frac{\vartheta_{h_{0n},n}-\vartheta_0}{\|\vartheta_{h_{0n},n}-\vartheta_0\|},h_0\right) - Z_0^k\left(\vartheta_0 + \delta \frac{\vartheta_{h_{0n},n}-\vartheta_0}{\|\vartheta_{h_{0n},n}-\vartheta_0\|},h_{0n}\right)\right] \right|_{\delta=\epsilon} \right|  d\epsilon \\
    &\leq \|\vartheta_{h_{0n},n}-\vartheta_0\| \sup_{\epsilon\in [0,1]}\left.\left|\frac{\partial}{\partial\delta} \left[Z_0^k\left(\vartheta_0 + \delta\frac{  [\vartheta_{h_{0n},n}-\vartheta_0]}{\|\vartheta_{h_{0n},n}-\vartheta_0\|},h_0\right) - Z_0^k\left(\vartheta_0 + \delta \frac{\vartheta_{h_{0n},n}-\vartheta_0}{\|\vartheta_{h_{0n},n}-\vartheta_0\|},h_{0n}\right)\right]\right|_{\delta=\epsilon}\right| \\
    &= \|\vartheta_{h_{0n},n}-\vartheta_0\|\sup_{\epsilon\in [0,1]}\left|\frac{  [\vartheta_{h_{0n},n}-\vartheta_0]}{\|\vartheta_{h_{0n},n}-\vartheta_0\|}\left[ (\mathscr{I}_{\vartheta_0 + \epsilon[\vartheta_{h_{0n},n} -\vartheta_0]\|\vartheta_{h_{0n},n}-\vartheta_0\|^{-1}, h_{0n}}-\mathscr{I}_{\vartheta_0 + \epsilon[\vartheta_{h_{0n},n}-\vartheta_0]\|\vartheta_{h_{0n},n}-\vartheta_0\|^{-1}, h_0})_k\right]\right| \\
    &\leq \sup_{\epsilon\in [0,1]} \|\vartheta_{h_{0n},n}-\vartheta_0\|\| (\mathscr{I}_{\vartheta_0 + \epsilon[\vartheta_{h_{0n},n} -\vartheta_0]\|\vartheta_{h_{0n},n}-\vartheta_0\|^{-1}, h_{0n}}-\mathscr{I}_{\vartheta_0 + \epsilon[\vartheta_{h_{0n},n}-\vartheta_0]\|\vartheta_{h_{0n},n}-\vartheta_0\|^{-1}, h_0})_k\| \\
    &= \|\vartheta_{h_{0n},n}-\vartheta_0\|\sup_{\epsilon\in [0,1]}\| (\mathscr{I}_{\vartheta_0 + \epsilon[\vartheta_{h_{0n},n} -\vartheta_0]\|\vartheta_{h_{0n},n}-\vartheta_0\|^{-1}, h_{0n}}-\mathscr{I}_{\vartheta_0 + \epsilon[\vartheta_{h_{0n},n}-\vartheta_0]\|\vartheta_{h_{0n},n}-\vartheta_0\|^{-1}, h_0})_k\|
\end{align*}
By similar arguments outlined in Lemma \ref{lem:lasymnseqh}, one can show that there exists a neighborhood of $\vartheta_0$ in which $\hd \rightarrow \mathscr{I}_{\hd,\vartheta}$ is Lipschitz with constant $M$, where $M$ does not depend on the value of $\vartheta$. Hence,
\begin{equation}
Z^k_0(\vartheta_{h_{0n},n},h_0) - Z^k_0(\vartheta_0,h_0) + Z^k_0(\vartheta_0,h_{0n}) - Z^k_0(\vartheta_{h_{0n},n},h_{0n}) 
\leq  M\|\vartheta_{h_{0n},n}-\vartheta_0\| \|h_{0n} - h_0\|_{L_1(P)}
\end{equation} 
implies Equation \ref{eq3:lemZrootncons} by Markov's inequality.
The inequality in Equation \ref{eq1:asymnorm}, along with the results from Equations \ref{eq2:asymnorm} and \ref{eq3:lemZrootncons} imply that
\begin{align}
\label{eq:9localsymn}
\|Z_n(\vartheta_{h_{0n},n},h_0) - Z_n(\vartheta_{h_{0},n},h_0)\| = o_{P_{h_0, \vartheta_0}}(\|\vartheta_{h_{0n},n} - \vartheta_0\| + n^{-1/2}).
\end{align}
Next, by the boundedness of $\mathcal{Z}$ and $\mathcal{H}$, we have that $E[\|Z_{0}(\vartheta,h_{0n})\|^2]< \infty$. 
Moreover, by Lemma \ref{lem:locasymnorm} along with regularity conditions of the model $\{P_{\hd,\vartheta}: \vartheta \in \mathbb{R}^{p+2}, \hd \in \mathcal{H}\}$ the mapping $\vartheta \mapsto E[U_{h_0}(\vartheta, z)]$ is differentiable at $\vartheta_0$ with derivative matrix $-\mathscr{I}_{h_0, \vartheta_0}$, which is invertible by assumption. These properties, along with the Lipschitz result shown in equation \eqref{eq3:asymnorm} and the consistency of $\vartheta_{h_{0n},n}$ imply that the conditions in Theorem 5.21 in \cite{van2000asymptotic} are satisfied, with the exception that our rate in Equation \ref{eq:9localsymn} is not necessarily $o_{P_{h_0, \vartheta_0}}(n^{-1/2})$, but instead $o_{P_{h_0, \vartheta_0}}(\|\vartheta_{h_{0n},n} - \vartheta_0\| + n^{-1/2})$. However, by inspecting the proof of Theorem 5.21 in that reference, it is clear that, in the notation of that work, this result holds under our condition that 
$\mathbb{P}_n\psi_{\hat{\vartheta}_{n}}=o_P(\|\hat{\theta}_{n}-\theta_0\| + n^{-1/2})$, which holds in our setting. Therefore, we have that 
\begin{equation}
    \label{eq:10localsymn}
    \sqrt{n}[\vartheta_{h_{0n},n} - \vartheta_0] = -\mathscr{I}_{h_0, \vartheta_0}^{-1} \frac{1}{\sqrt{n}}U_{h_0}(\vartheta_0; D_n) + o_{P_{h_0, \vartheta_0}}(1).
\end{equation}
By the results shown in Lemma \ref{lem:contiguity} (particularly the identity in Equation \ref{eq:difu} and the convergence result in Equation \ref{eq2:lemcontiguity}), it is true that
\begin{equation*}
 \frac{1}{\sqrt{n}}U_{h_0}(\vartheta_0; D_n) = \frac{1}{\sqrt{n}}U_{h_{0n}}(\vartheta_0; D_n) + o_{P_{h_0, \vartheta_0}}(1),
\end{equation*}
and so, by the continuous mapping theorem,
\begin{equation}
\label{eq:11localsymn}
\mathscr{I}_{h_0, \vartheta_0}^{-1} \frac{1}{\sqrt{n}}U_{h_0}(\vartheta_0; D_n) = \mathscr{I}_{h_0, \vartheta_0}^{-1} \frac{1}{\sqrt{n}}U_{h_{0n}}(\vartheta_0; D_n) + o_{P_{h_0, \vartheta_0}}(1).
\end{equation}
Combining Equations \ref{eq:10localsymn} and \ref{eq:11localsymn}, and because $\vartheta_n= \vartheta_0 + g/\sqrt{n}$, it follows that
\begin{equation*}
\sqrt{n}[\vartheta_{h_{0n},n} - \vartheta_n] = \mathscr{I}_{h_0, \vartheta_0}^{-1} \frac{1}{\sqrt{n}} U_{h_{0n}}(\vartheta_0; D_n) - g + o_{P_{h_0, \vartheta_0}}(1).
\end{equation*}
Using very similar techniques as those that show Lemma \ref{lem:lanscore} and the fact that $\mathscr{I}_{h_{0n}, \vartheta_0} \longrightarrow \mathscr{I}_{h_0, \vartheta_0}$, it can be shown that 
\begin{equation*}
\frac{1}{\sqrt{n}}U_{h_{0n}}(\vartheta_n; D_n) - \frac{1}{\sqrt{n}}U_{h_{0n}}(\vartheta_0; D_n) + \mathscr{I}_{h_0, \vartheta_0} g = o_{P_{h_0, \vartheta_0}}(1),
\end{equation*}
which implies that 
\begin{equation*}
\frac{1}{\sqrt{n}}\mathscr{I}_{h_0, \vartheta_0}^{-1} U_{h_{0n}}(\vartheta_0; D_n) - g = \frac{1}{\sqrt{n}}\mathscr{I}_{h_0, \vartheta_0} ^{-1}U_{h_{0n}}(\vartheta_n; D_n) + o_{P_{h_0, \vartheta_0}}(1).
\end{equation*}
Hence, 
\begin{equation*}
\sqrt{n}[\vartheta_{h_{0n},n} - \vartheta_n] = \mathscr{I}_{h_0, \vartheta_0}^{-1} \frac{1}{\sqrt{n}}U_{h_{0n}}(\vartheta_n; D_n) + o_{P_{h_0, \vartheta_0}}(1).
\end{equation*}
Lemma~\ref{lem:contiguity} shows that $(P^n_{h_{0n},\vartheta_n})_{n=1}^\infty$ is contiguous with respect to $(P^n_{h_0, \vartheta_0})_{n=1}^\infty$, and applying this to the above yields the desired result.
\end{proof}
The following lemmas pertain to the asymptotic distribution of the matching estimator. Given $\vartheta \in \mathbb{R}^{p+2}$, $\hd \in \mathcal{H}$, and a sample $D_n$ drawn from $P_{\hd,\vartheta}$ and consisting of $n$ iid observations, the matching estimator $\psi_n(\pi_{\hd,\vartheta})$ is defined as 
\begin{equation}
\psi_n(\pi_{\hd,\vartheta})= \frac{1}{n}\sum_{i=1}^{n}(2a_i-1)\left(y_i - \frac{1}{M}\sum_{j = 1}^{n}I(j \in \mathcal{J}_M(i,\pi_{\hd,\vartheta}))y_j\right),
\end{equation}
where $\mathcal{J}_M(i,\pi_{\hd,\vartheta})$ is the matched set of observation $i$ based on the propensity score $\pi_{\hd,\vartheta}(\cdot)$: 
$$\mathcal{J}_M(i,\pi_{\hd,\vartheta}) = \left\{j: a_i = 1-a_j, \sum_{k = 1}^{n}I(a_i = 1-a_k)I\left(|\pi_{\hd,\vartheta}(w_j) -\pi_{\hd,\vartheta}(w_i) |\leq |\pi_{\hd,\vartheta}(w_k) -\pi_{\hd,\vartheta}(w_i) |\right) \leq M \right\}.$$
If we let $K_{M, \pi_{\hd,\vartheta}}(i)$ denote the number of times observation $i$ is used as a match, 
\begin{equation*}
K_{M, \pi_{\hd,\vartheta}}(i) = \sum_{j= 1}^{n}I(i \in \mathcal{J}_M(j,\pi_{\hd,\vartheta})),
\end{equation*}
then the matching estimator can be represented as 
\begin{equation*}
\psi_n(\pi_{\hd,\vartheta}) = \frac{1}{n}\sum_{i=1}^{n}(2A_i-1)\left(1 + \frac{K_{M, \pi_{\hd,\vartheta}}(i)}{M}\right)y_i.
\end{equation*}
Let $\bar{\mu}_{\hd,\vartheta}(a,p) := E_{\hd,\vartheta}[Y|A = a, \pi_{\hd,\vartheta}(W) = p]$ denote the conditional expectation, under $P_{\hd,\vartheta}$, of $Y$ given $a$ and propensity score $p$, and let $\mu_{\hd,\vartheta}(a,w) := E_{\hd,\vartheta}[Y|A = a, W = w]$ denote the conditional mean of $Y$ under treatment $a$ and covariates $w$. Similarly, we let $\sigma^2(a, w) := \text{var}(Y|W = w, A = a)$ be the conditional mean and variance of $Y$ given treatment level $a$ and covariates $w$ and $\bar{\sigma}^2(a, \pi_{\hd,\vartheta}(w)) := \text{var}(Y|A = a, \pi_{\hd,\vartheta}(W) = \pi_{\hd,\vartheta}(w))$ be the conditional mean and variance of $Y$ given treatment level $a$ and propensity score value $\pi_{\hd,\vartheta}(w)$. 
Finally, let $\psi_0 := E[E(Y|A = 1, W)- E(Y|A = 0, W)]$, which under our assumptions also equals $E[E(Y|A = 1, \pi_{\hd,\vartheta}(W) = p)- E(Y|A = 0, \pi_{\hd,\vartheta}(W) = p)]$, where expectations are under $P_{\hd,\vartheta}$.
Following \cite{abadie2016}, the root-$n$ scaled matching estimator can be expressed as the sum of two terms, namely $B_n$ and $R_n$, as follows:
\begin{equation*}
    \sqrt{n}[\psi_n(\pi_{\hd,\vartheta})-\psi_0]= B_n(\hd, \vartheta) + R_n(\hd, \vartheta),
\end{equation*}
where
\begin{align*}
B_n(\hd, \vartheta) &:= \frac{1}{\sqrt{n}}\sum_{i = 1}^n\Bigg\{\bar{\mu}_{\hd,\vartheta}\left(1,\pi_{\hd,\vartheta}(w_i)\right)- \bar{\mu}_{\hd,\vartheta}\left(0,\pi_{\hd,\vartheta}(w_i)\right) - \psi_0 \\
&\hspace{6em}+ (2a_i - 1)\left( 1+ \frac{K_{M,\pi_{\hd,\vartheta}}(i)}{M}\right)\left(Y_i - \mu_{\hd,\vartheta}\left(a_i,w_i\right)\right)\Bigg\}, \\
R_n(\hd, \vartheta) &:= \frac{1}{\sqrt{n}}\sum_{i = 1}^n\left\{(2a_i - 1)\left(\bar{\mu}_{\hd,\vartheta}\left(1-a_i,\pi_{\hd,\vartheta}(w_i)\right)- \frac{1}{M}\sum_{\substack{j \in \mathcal{J}_M  (i,\pi_{\hd,\vartheta}})} \bar{\mu}_{\hd,\vartheta}\left(1-a_i,\pi_{\hd,\vartheta}\left(w_j\right)\right)\right)\right\}. 
\end{align*}
The following lemma shows that $R_n$ converges to 0 in probability under sampling from $P_{h_{0n},\vartheta_n}$, and that $B_n$ and $n^{-1/2}U_{h_{0n}}(\vartheta_n; D_n)$ are jointly asymptotically normal where the covariance of their asymptotic distribution will depend on the following quantities: 
\begin{align*}
    c_{h_0, \vartheta_0}&:=E_{h_0, \vartheta_0}\Bigg[\text{Cov}[r_{h_0}(W),\mu_{h_0, \vartheta_0}(A,W)|\pi_{h_0, \vartheta_0}(W),A]\rho_{h_0, \vartheta_0}(W) \\
    &\hspace{5em} \left(\frac{A}{\pi_{h_0, \vartheta_0}(W)^2} + \frac{1-A}{(1-\pi_{h_0, \vartheta_0}(W))^2}\right)\Bigg] \\
\sigma_M^2 &:= E_{h_0, \vartheta_0}\left[\left(\bar{\mu}_{h_0, \vartheta_0}(1,\pi_{h_0, \vartheta_0}(W)) -\bar{\mu}_{h_0, \vartheta_0}(0,\pi_{h_0, \vartheta_0}(W)) -\psi_0 \right)^2\right] \\
&\; +E_{h_0, \vartheta_0}\left[\bar{\sigma}^2(1,\pi_{h_0, \vartheta_0}(W)) \left(\frac{1}{\pi_{h_0, \vartheta_0}(W) } + \frac{1}{2M} \left(\frac{1}{\pi_{h_0, \vartheta_0}(W)}-\pi_{h_0, \vartheta_0}(W) \right)\right)\right], \\
&\; +E_{h_0, \vartheta_0}\Bigg[\bar{\sigma}^2(0,\pi_{h_0, \vartheta_0}(W)) \Bigg(\frac{1}{1-\pi_{h_0, \vartheta_0}(W) } \\
& \hspace{4em} +\frac{1}{2M} \left(\frac{1}{1-\pi_{h_0, \vartheta_0}(W)}-(1-\pi_{h_0, \vartheta_0}(W)) \Bigg)\right)\Bigg].
\end{align*}

\begin{lemma}
\label{lem:brasym}
Assume that the $\mathcal{H}$-valued sequence $(h_{0n})_{n=1}^\infty$ is such that $\|h_{0n}-h_0\|_{L^2(P_{h_0, \vartheta_0})}\rightarrow 0$ as $n\rightarrow\infty$ and that the conditions of Proposition \ref{prop1} hold. Let $g \in \mathbb{R}^{p+2}$, and $\vartheta_n= \vartheta_0 + g/\sqrt{n}$. Then, under sampling from $P_{h_{0n},\vartheta_n}$,
\begin{align}
\label{eq1:brasym}
\begin{pmatrix}B_{n}(h_{0n},\vartheta_n)\\
\frac{1}{\sqrt{n}}U_{h_{0n}}(\vartheta_n; D_n)
\end{pmatrix} &\rightsquigarrow  N\left(\begin{pmatrix}0\\
0
\end{pmatrix},\begin{pmatrix}\sigma_M^{2} & c_{h_0, \vartheta_0}\\
c_{h_0, \vartheta_0} & \mathscr{I}_{h_0,\vartheta_0}
\end{pmatrix}\right)
\end{align}
and $R_n(h_{0n},\vartheta_n)\rightarrow 0$ in probability.
\end{lemma}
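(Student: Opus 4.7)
The plan is to handle the two claims separately, exploiting the decomposition of $B_n$ into an $\mathcal{F}_n$-measurable sum (with $\mathcal{F}_n:=\sigma\{(W_i,A_i):i=1,\ldots,n\}$) and a conditional matching-noise piece, and using mutual contiguity between $P_{h_{0n},\vartheta_n}$ and $P_{h_0,\vartheta_0}$ (Lemma~\ref{lem:contiguity}) to transfer in-probability statements established under the simpler null sequence.

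For $R_n(h_{0n},\vartheta_n)\to 0$, I would adapt the matching-bias analysis of \cite{abadie2016}. Conditions~\ref{cond:bounds}, \ref{cond:qmd}--\ref{cond:qmd2}, and \ref{cond:ps} give a Lipschitz bound for $p\mapsto\bar{\mu}_{h_{0n},\vartheta_n}(a,p)$ with constant independent of $n$, and a continuous density for $\pi_{h_{0n},\vartheta_n}(W)$ in a neighbourhood of $\vartheta_0$. Combining $|\bar{\mu}(1-a_i,\pi(w_i))-\bar{\mu}(1-a_i,\pi(w_j))|\le L|\pi(w_i)-\pi(w_j)|$ with the standard $O_p(n^{-1})$ bound on the average nearest-neighbour distance in propensity yields $R_n=O_{P_{h_0,\vartheta_0}}(n^{-1/2})$, and mutual contiguity then promotes this to $R_n=o_{P_{h_{0n},\vartheta_n}}(1)$.

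For \eqref{eq1:brasym} I would apply the Cram\'er--Wold device in conjunction with a conditional Lindeberg--Feller CLT. Fix $(t_0,t)\in\mathbb{R}\times\mathbb{R}^{p+2}$ and write $S_n:=t_0B_n(h_{0n},\vartheta_n)+t^{\rm{T}}n^{-1/2}U_{h_{0n}}(\vartheta_n;D_n)=X_n+E_n$, where $X_n$ is $n^{-1/2}$ times the sum of the iid, $\mathcal{F}_n$-measurable variables
$$t_0\left[\bar{\mu}_{h_{0n},\vartheta_n}(1,\pi_{h_{0n},\vartheta_n}(W_i))-\bar{\mu}_{h_{0n},\vartheta_n}(0,\pi_{h_{0n},\vartheta_n}(W_i))-\psi_0\right]+t^{\rm{T}}U_{h_{0n}}(\vartheta_n;Z_i),$$
and $E_n$ is the matching-noise remainder $n^{-1/2}\sum_i t_0(2A_i-1)\left(1+\tfrac{K_{M,\pi_{h_{0n},\vartheta_n}}(i)}{M}\right)\left(Y_i-\mu_{h_{0n},\vartheta_n}(A_i,W_i)\right)$. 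Given $\mathcal{F}_n$, the summands of $E_n$ are independent with mean zero, and conditional variance (up to the factor $t_0^2$) equal to the quantity $\lambda_n$ referenced in Condition~\ref{cond:lambdan}. The iid CLT gives $X_n\rightsquigarrow N(0,\tau^2)$; the conditional Lindeberg condition is verified using the $2+\delta$-moment bound of Condition~\ref{cond:bounds}; and Condition~\ref{cond:lambdan} gives $\lambda_n\to\lambda^*$ in $P_{h_{0n},\vartheta_n}$-probability. Multiplying characteristic functions and taking expectations,
$$E\left[e^{iS_n}\right]=E\left[e^{iX_n}\,E\left(e^{iE_n}\mid\mathcal{F}_n\right)\right]\longrightarrow e^{-(\tau^2+t_0^2\lambda^*)/2},$$
which is the characteristic function of the Gaussian limit in \eqref{eq1:brasym}.

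The main obstacle will be the variance identification: verifying that $\tau^2+t_0^2\lambda^*=t_0^2\sigma_M^2+2t_0t^{\rm{T}}c_{h_0,\vartheta_0}+t^{\rm{T}}\mathscr{I}_{h_0,\vartheta_0}t$. The score-variance contribution reduces to $\mathscr{I}_{h_0,\vartheta_0}$ by continuity of $\hd\mapsto\mathscr{I}_{\hd,\vartheta_0}$ along $h_{0n}$; the cross-covariance between the regression-style summand in $X_n$ and the score reproduces $c_{h_0,\vartheta_0}$ by exactly the identity manipulation used in the proof of Theorem~\ref{theo:theo1}; and the matching-variance contribution $\lambda^*$ combines with the $\bar{\mu}$-centered part of $X_n$ to recover $\sigma_M^2$ through the standard conditional-variance decomposition of \cite{abadie2016}. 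Uniform convergence of the relevant expectations under $P_{h_{0n},\vartheta_n}$ to the corresponding ones under $P_{h_0,\vartheta_0}$ is secured by $\|h_{0n}-h_0\|_{L^2(P_{W,0})}\to 0$, the boundedness of $\mathcal{H}$ and $\mathcal{W}$ (Conditions~\ref{cond:qmd} and \ref{cond:qmd2}), and dominated convergence.
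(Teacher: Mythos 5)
There is a genuine structural gap in your decomposition, and it invalidates the variance identification.

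You split $S_n := t_0 B_n + t^{\rm{T}} n^{-1/2}U_{h_{0n}}(\vartheta_n;D_n)$ as $X_n + E_n$, where $X_n$ is $\mathcal F_n$-measurable ($\mathcal F_n = \sigma\{(W_i,A_i)\}$) and $E_n$ is the matching-noise piece $n^{-1/2}\sum_i t_0(2A_i-1)(1+K_i/M)(Y_i - \mu(A_i,W_i))$, which is conditionally mean-zero given $\mathcal F_n$. Because $X_n$ is $\mathcal F_n$-measurable and $E[E_n\mid\mathcal F_n]=0$, you automatically have $\mathrm{Cov}(X_n,E_n)=0$, so the limiting variance of $S_n$ must be $\mathrm{Var}(X_n) + E[\mathrm{Var}(E_n\mid\mathcal F_n)]$ with no cross term. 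Moreover, inside $X_n$ the covariance between the $\bar\mu$-difference summand (a function of $\pi(W)$ alone, mean-zero) and the score $U(\vartheta_0;Z)=r(W)(A-\pi(W))$ is also zero: conditioning on $W$ kills $A-\pi(W)$. So your decomposition can never generate a $2t_0 t^{\rm{T}} c_{h_0,\vartheta_0}$ cross term, and your claim that this term is ``reproduced by exactly the identity manipulation used in Theorem~\ref{theo:theo1}'' does not apply here --- that manipulation relies on a conditional covariance between the \emph{residuals} $\mu(A,W)-\bar\mu(A,\pi(W))$ and $r(W)-E[r(W)\mid\pi(W)]$, neither of which appears as a separate block in your scheme. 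The paper circumvents this by using a three-block martingale-difference decomposition with a \emph{finer} filtration: the first block conditions only on $\sigma\{A_i,\vartheta_n^{\rm{T}}r_{h_{0n}}(W_i)\}$ (i.e., on $A$ and the propensity value, not the full covariate $W$), the second block reveals $W_i$ (putting the score residual $r-E[r\mid\pi]$ and the regression residual $\mu-\bar\mu$, multiplied by the matching weight, in the \emph{same} martingale increment so that their conditional cross-moment contributes $c_{h_0,\vartheta_0}$ to $\lambda_n$), and the third block reveals $Y_i$. With your single all-at-once filtration this intermediate block vanishes into $X_n$ and is decoupled from the matching weight, so the resulting variance is simply wrong whenever $c_{h_0,\vartheta_0}\neq 0$.

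Two secondary issues. First, you refer to $\lambda_n$ and $\lambda^*$ from Condition~\ref{cond:lambdan} as though $\lambda^*$ were the matching-noise conditional variance; in fact $\lambda^* = z_1^2\sigma_M^2 + z_2^{\rm{T}}\mathscr I_{h_0,\vartheta_0}z_2 + 2z_2^{\rm{T}}c_{h_0,\vartheta_0}z_1$ is the full limiting variance of the linear combination, and $\lambda_n$ sums the conditional second moments over \emph{all} $3n$ martingale increments --- not just the last block --- so it cannot be recycled as the conditional variance of $E_n$ alone. Second, the paper does not need contiguity for the $R_n$ piece: it bounds $\sup_{\vartheta\in D_{\tilde\epsilon},\hd\in\mathcal H}E_{\hd,\vartheta}|R_{n,a}(\hd,\vartheta)|\lesssim n^{-1/4}$ directly via the Lipschitz property of $\bar\mu$ and Lemmas S.1--S.3 of \cite{abadie2016}, which immediately gives $R_n(h_{0n},\vartheta_n)\to 0$ under $P_{h_{0n},\vartheta_n}$. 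Your contiguity route would also work, but note the rate one gets from the Abadie--Imbens lemmas is $O_p(n^{-1/4})$, not the $O_p(n^{-1/2})$ you asserted; that is harmless here since only $o_p(1)$ is needed, but worth correcting.
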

\begin{proof}
Fix $g \in \mathbb{R}^{p+2}$. Let $z_1 \in \mathbb{R}$, $z_2 \in \mathbb{R}^{p+2}$, and define the linear combination $C_n := z_1 B_n(h_{0n},\vartheta_n) + z_2^{\rm{T}}\frac{1}{\sqrt{n}} U_{h_{0n}}(\vartheta_n; D_n)$. Note that \begin{align*}
C_n(h_{0n},\vartheta_n) &= z_1 \frac{1}{\sqrt{n}}\sum_{i = 1}^n \left\{ \bar{\mu}_{h_{0n},\vartheta_n}\left(1, \pi_{h_{0n},\vartheta_n}\left(W_{i}\right)\right) - \bar{\mu}_{h_{0n},\vartheta_n}\left(0, \pi_{h_{0n},\vartheta_n}\left(W_{i}\right)\right) - \psi_0 \right\}\\
&\quad+ z_1 \frac{1}{\sqrt{n}} \sum_{i = 1}^n\left\{ (2A_{i} -1)\left(1 + \frac{K_{M, \pi_{h_{0n},\vartheta_n}}(i)}{M}\right) \times \left[Y_{i} - \bar{\mu}_{h_{0n},\vartheta_n}\left(A_{i}, \pi_{h_{0n},\vartheta_n}\left(W_{i}\right)\right) \right]\right\}\\
&\quad+ z_2^{\rm{T}} \frac{1}{\sqrt{n}} \sum_{i = 1}^n\left\{ r_{h_{0n}}(W_{i})\frac{A_{i}- \pi_{h_{0n},\vartheta_n}\left(W_{i}\right)}{\pi_{h_{0n},\vartheta_n}\left(W_{i}\right)\left(1-\pi_{h_{0n},\vartheta_n}\left(W_{i}\right)\right)} \rho_{h_{0n},\vartheta_n}\left(W_{i}\right)\right\}.
\end{align*}
Further, $C_n(h_{0n},\vartheta_n)$ can be re-expressed as $C_n(h_{0n},\vartheta_n) = \sum_{k = 1}^{3n}\xi_{n,k}(h_{0n},\vartheta_n)$ where for $1 \leq k \leq n$,
\begin{align*}
\xi_{n,k}(h_{0n},\vartheta_n)  &= z_1 \frac{1}{\sqrt{n}}\left\{\bar{\mu}_{h_{0n},\vartheta_n}\left(1, \pi_{h_{0n},\vartheta_n}\left(W_{k}\right)\right)- \bar{\mu}_{h_{0n},\vartheta_n}\left(0, \pi_{h_{0n},\vartheta_n}\left(W_{k}\right) \right) - \psi_0 \right\} \\
& \quad + \frac{1}{\sqrt{n}} z_2^{\rm{T}} E\left[r_{h_{0n}}(W_{k})\mid\pi_{h_{0n},\vartheta_n}\left(W_{k}\right)\right] \left[ \frac{A_{k}-\pi_{h_{0n},\vartheta_n}\left(W_{k}\right)}{\pi_{h_{0n},\vartheta_n}\left(W_{k}\right)\left(1-\pi_{h_{0n},\vartheta_n}\left(W_{k}\right)\right)}\rho_{h_{0n},\vartheta_n}\left(W_{k}\right)\right],
\end{align*}
for $n+1\leq k \leq 2n,$
\begin{align*}
\xi_{n,k}(h_{0n},\vartheta_n)  &= z_2^{\rm{T}}\frac{1}{\sqrt{n}}\left(r_{h_{0n}}(W_{k-n})- E\left[r_{h_{0n}}(W_{k-n})\mid\pi_{h_{0n},\vartheta_n}\left(W_{k-n}\right)\right]\right) \\
& \quad \left[\frac{A_{k-n} - \pi_{h_{0n},\vartheta_n}(W_{k-n})}{\pi_{h_{0n},\vartheta_n}\left(W_{k-n}\right)\left(1-\pi_{h_{0n},\vartheta_n}\left(W_{k-n}\right)\right)}\rho_{h_{0n},\vartheta_n}\left(W_{k-n}\right)\right] \\
&\quad+ z_1 \frac{1}{\sqrt{n}}(2A_{k-n}-1)\left(1+\frac{K_{M, \pi_{h_{0n},\vartheta_n}}(k-n)}{M}\right) \\
&\quad \quad\cdot\left[\mu_{h_{0n},\vartheta_n}\left(A_{k-n},W_{k-n}\right)- \bar{\mu}_{h_{0n},\vartheta_n}\left(A_{k-n},\pi_{h_{0n},\vartheta_n}\left(W_{k-n}\right)\right)\right],
\end{align*}
and, for $2n+1\leq k \leq 3n$,
\begin{align*}
\xi_{n,k}(h_{0n},\vartheta_n)  &= z_1 \frac{1}{\sqrt{n}} (2A_{k-2n} - 1)\left( 1+ \frac{K_{M, \pi_{h_{0n},\vartheta_n}}(k-2n)}{M}\right) (Y_{k-2n}  - \mu_{h_{0n},\vartheta_n}((A_{k-2n},W_{k-2n})).
\end{align*}
Now, it can be shown that
\begin{equation}
\label{eq2:lembrasym}
\left\{\sum_{j = 1}^{i} \xi_{n,j}(h_{0n},\vartheta_n), \mathcal{F}_{n,i}, 1 \leq i \leq 3n\right\}	
\end{equation}
is a martingale difference sequence with respect to the filtrations: \begin{equation*} \begin{cases}
\mathcal{F}_{n,k} = \sigma\{A_{1},...A_{k}, \vartheta_n^{\rm{T}} r_{h_{0n}}(W_{1}),..., \vartheta_n^{\rm{T}} r_{h_{0n}}(W_{k}),(h_{0j})_{j= 1}^n\}, \textnormal{ for } 1 \leq k \leq n, \\ 
\mathcal{F}_{n,k} = \sigma\{r_{h_{0n}}(W_{1}),...,r_{h_{0n}}(W_{k-n}),A_{1},...A_{n}, \vartheta_n^{\rm{T}} r_{h_{0n}}(W_{1}),...,\vartheta_n^{\rm{T}} r_{h_{0n}}(W_n),(h_{0j})_{j= 1}^n\}, \textnormal{ for }  n+1 \leq k \leq 2n, \\
\mathcal{F}_{n,k} = \sigma\{r_{h_{0n}}(W_{1}),...,r_{h_{0n}}(W_{n}),A_{1},...A_{n}, Y_{1},...Y_{k-2n},(h_{0j})_{j= 1}^n\}, \textnormal{ for }  2n+1 \leq k \leq 3n;
\end{cases}
\end{equation*}
Next, let
\begin{align}
\label{eqdefl:lembrasymn}
\lambda_{n} &:= \sum_{k=1}^{3n} E_{h_{0n},\vartheta_n}[\xi_{n,k}(h_{0n},\vartheta_n)^2|\mathcal{F}_{n,k-1}], \text{ and}\\
\label{eqdeflstr:lembrasymn}
&\lambda^*:= z_1^2 \sigma_M^2 + z_2^{\rm{T}} \mathscr{I}_{h_0, \vartheta_0}z_2 + 2 z_2^{\rm{T}} c_{h_0, \vartheta_0} z_1.
\end{align}
By assumption, for every $\epsilon > 0$,
\begin{equation}
\label{eq3:lembrasymn}
P_{h_{0n},\vartheta_n}\left(\left|\frac{\lambda_{n}}{\lambda^*} - 1\right| > \epsilon\right) \overset{n \rightarrow \infty }{\longrightarrow} 0.
\end{equation}
Additionally, by Lemma \ref{lfcond}, under $P_{h_{0n},\vartheta_n}$, for each $\epsilon > 0$, 	
\begin{align}
\label{eq4:lembrasymn}
\sum_{k=1}^{\infty}E_{h_{0n},\vartheta_n}\left[(\lambda^*)^{-1/2}\xi_{n,k}(h_{0n},\vartheta_n)^2I(|\xi_{n,k}(h_{0n},\vartheta_n)| > \epsilon)\right] \longrightarrow 0.
\end{align}
Hence, for an $\epsilon>0$ by Markov's inequality and the preceding result we have that 
\begin{align*}
&\lim_{n\to \infty}P_{h_{0n},\vartheta_n}\left(\sum_{k=1}^{n}E_{h_{0n},\vartheta_n}\left[(\lambda^*)^{-1/2}\xi_{n,k}(h_{0n},\vartheta_n)^2I(|\xi_{n,k}(h_{0n},\vartheta_n)| > \epsilon)| \mathcal{F}_{n,k-1}\right] > \epsilon \right)\\
&\leq \lim_{n\to \infty}\frac{E_{h_{0n},\vartheta_n}\left[(\lambda^*)^{-1/2}\sum_{k=1}^{n}E_{h_{0n},\vartheta_n}\left[\xi_{n,k}(h_{0n},\vartheta_n)^2I(|\xi_{n,k}(h_{0n},\vartheta_n)| > \epsilon | \mathcal{F}_{n,k-1})\right]\right]}{\epsilon} \\
&= \lim_{n\to \infty}\frac{(\lambda^*)^{-1/2}E_{h_{0n},\vartheta_n}\left[\xi_{n,k}(h_{0n},\vartheta_n)^2I(|\xi_{n,k}(h_{0n},\vartheta_n)|>\epsilon)\right]}{\epsilon} \longrightarrow 0
\end{align*}
Therefore, the conditions stated in Theorem 2 from \cite{gaenssler1986martingale} hold. Hence, under $P_{h_{0n},\vartheta_n}$, \begin{align*}
 \sum_{k = 1}^{\infty} (\lambda^*)^{-1/2}\xi_{n,k}(h_{0n},\vartheta_n)  \rightsquigarrow  N(0,1).
\end{align*}
By the Cramer-Wold device,
\begin{align*}
\begin{pmatrix}B_{n}(h_{0n},\vartheta_n)\\
n^{-1/2}U_{h_{0n}}(\vartheta_n; D_n)
\end{pmatrix} &\rightsquigarrow  N\left(\begin{pmatrix}0\\
0
\end{pmatrix},\begin{pmatrix}\sigma_M^{2} & c_{h_0, \vartheta_0}\\
c_{h_0, \vartheta_0} & \mathscr{I}_{h_0,\vartheta_0}
\end{pmatrix}\right).
\end{align*}
which shows Equation \ref{eq1:brasym}.
Next, we show that, for all $\epsilon > 0$
\begin{equation}
\label{eq2:brasym}
P_{h_{0n},\vartheta_n}\left(|R_n(h_{0n},\vartheta_n)| > \epsilon\right)\overset{n \to \infty}{\longrightarrow} 0.
\end{equation} 
Let 
\begin{equation*}
	R_{n,a}(h_{0n},\vartheta_n) := \frac{(2a-1)}{\sqrt{n}}\sum_{\{i:a_i = a\}} \left\{\bar{\mu}_{h_{0n},\vartheta_n}(1-a, \pi_{h_{0n},\vartheta_n}(w_i))- \frac{1}{M}\sum_{j \in \mathcal{J}_{M}(i, \pi_{h_{0n},\vartheta_n)}}\bar{\mu}_{h_{0n},\vartheta_n}(1-a, \pi_{h_{0n},\vartheta_n}(w_j))\right\}.
\end{equation*}
Then $R_n(h_{0n},\vartheta_n) =R_{n,1}(h_{0n},\vartheta_n)+ R_{n,0}(h_{0n},\vartheta_n)$. We show that $R_{n,1}(h_{0n},\vartheta_n) = o_{P_{h_{0n},\vartheta_n}}(1)$, the proof for $R_{n,0}(h_{0n},\vartheta_n)= o_{P_{h_{0n},\vartheta_n}}(1)$ is analogous. Fix $\hd \in \mathcal{H}$, $\vartheta \in \mathbb{R}^{p+2}$, and let $n_a$ be the number of observations in treatment arm $a$. Note that $n_0$ is lower and upper bounded by $M \leq n_0 \leq n-M$, and that $n_1 = n - n_0$. Without loss of generality, assume that the observations are ordered such that the observations with values $a_i=1$ are first. Then, the expectation of $R_{n,1}(\hd, \vartheta)$ conditional on the event $N_0 = n_0$ can be bounded as follows:
\begin{align*}
	&E_{\hd,\vartheta}\left[|R_{n,1}(\hd, \vartheta)\| N_0 = n_0\right] \\
	&\leq \frac{1}{\sqrt{n}}\sum_{i=1}^{n_1} E_{\hd,\vartheta} \left[ \left| \bar{\mu}_{\hd,\vartheta}(0, \pi_{\hd,\vartheta}(W_i))- \frac{1}{M}\sum_{j \in \mathcal{J}_{M}(i, \pi_{\hd,\vartheta}) }\bar{\mu}_{\hd,\vartheta}(0, \pi_{\hd,\vartheta}(W_j))\right|  | N_0 = n_0 \right] \\
	&=\frac{1}{\sqrt{n}}\sum_{i=1}^{n_1} E_{\hd,\vartheta} \left[ \left| \frac{1}{M}\sum_{j \in \mathcal{J}_{M}(i, \pi_{\hd,\vartheta}) } \left\{ \bar{\mu}_{\hd,\vartheta}(0, \pi_{\hd,\vartheta}(W_i))-\bar{\mu}_{\hd,\vartheta}(0, \pi_{\hd,\vartheta}(W_j))\right\}\right| N_0 = n_0 \right] \\
	&\leq \frac{1}{\sqrt{n}}\sum_{i=1}^{n_1} \left\{\frac{1}{M}\sum_{j \in \mathcal{J}_{M}(i, \pi_{\hd,\vartheta}) } E_{\hd,\vartheta} \left[ \left|  \bar{\mu}_{\hd,\vartheta}(0, \pi_{\hd,\vartheta}(W_i))-\bar{\mu}_{\hd,\vartheta}(0, \pi_{\hd,\vartheta}(W_j))\right|  | N_0 = n_0 \right] \right\} \\
	&\leq \frac{L}{\sqrt{n}}\sum_{i=1}^{n_1} \left\{\frac{1}{M}\sum_{j \in \mathcal{J}_{M}(i, \pi_{\hd,\vartheta}) } E_{\hd,\vartheta} \left[ \left| \pi_{\hd,\vartheta}(W_i)- \pi_{\hd,\vartheta}(W_j)\right|  | N_0 = n_0 \right]\right\},
\end{align*}
where the final inequality follows by the assumption that $\bar{\mu}_{\hd,\vartheta}(a,p)$ is L-Lipschitz in $p$. Let $U_{n_0, n_1, i}(m)$ be the $mth$ order statistic of $\left\{\left|\pi_{\hd,\vartheta}(W_{i})- \pi_{\hd,\vartheta}(W_{j})\right|: A_j  = 0\right\}$. Then, the last expression, up to the constant $L$, is equal to 
\begin{align*}
&\frac{1}{\sqrt{n}}\sum_{i: a_i = 1}^n \frac{1}{M}\sum_{j = 1}^M E_{\hd,\vartheta} \left[ U_{n_1,n_0,i}(j) \right]  = E_{\hd,\vartheta} \left[  \frac{1}{\sqrt{n}}\sum_{i: a_i = 1}^n \frac{1}{M}\sum_{j = 1}^M U_{n_1,n_0,i}(j) \right]. 
\end{align*}
By our assumptions and Lemmas S.1 and S.2 in \cite{abadie2016}, there exists a finite constant $r$ that depends exclusively on the support of $\mathcal{W}$ and the boundedness of $\mathcal{H}$ such that 
\begin{align*}
E_{\hd,\vartheta} \left[  \frac{1}{\sqrt{n}}\sum_{i: A_i = 1}^n \frac{1}{M}\sum_{j = 1}^M U_{n_1,n_0,i}(j) \right] &\leq r \frac{n_1}{\sqrt{n}\lfloor n_0^{3/4 }\rfloor} + M\frac{n_1}{\sqrt{n}} n_0^{M-1/4} \exp(-n_0 ^{1/4}) \\
&= \frac{r n_1\lfloor n_0^{-3/4}\rfloor n^{3/4}+ Mn_1 n_0^{M+1/2} \exp(-n_0 ^{1/4})}{\sqrt{n} n_0^{3/4}}.
\end{align*}
Moreover, $\lfloor n_0^{-3/4}\rfloor n^{3/4}$, $n_0^{M+1/2}\exp(-n_0^{1/4})$, $r$, and $M$ are bounded. Therefore, there is a constant $K$ such that
\begin{align*}
E_{\hd,\vartheta} \left[  \frac{1}{\sqrt{n}}\sum_{i: A_i = 1}^n \frac{1}{M}\sum_{j = 1}^M U_{n_1,n_0,i}(j) \right] &\leq \frac{K}{n^{1/4}}\left(\frac{n_1}{n}\frac{n^{3/4}}{n_0^{3/4}}\right).
\end{align*}
Because $r$ does not depend on $\vartheta$ nor $\hd$, neither does $K$. Combining the previous results, we have that
\begin{align*}
	E_{\hd,\vartheta}\left[|R_{n,1}(\hd, \vartheta) | |N_0 = n_0\right] & \leq \frac{LK}{n^{1/4}}\left(\frac{n_1}{n}\frac{n^{3/4}}{n_0^{3/4}}\right).
\end{align*}
By the tower property and and because $n_1 / n < 1$, we have that 
\begin{align*}
	E_{\hd,\vartheta}\left[|R_{n,1}(\hd, \vartheta)| \right] & \leq \frac{LK}{n^{1/4}}E_{\hd,\vartheta}\left[\left(\frac{n}{n_0}\right)^{3/4}\right]
\end{align*}
By Lemma S.3 in \cite{abadie2016}, there exists a constant $C$ that does not depend on $\vartheta$ nor $\hd$ such that 
\begin{align*}
E_{\hd,\vartheta}\left[\left(\frac{n}{n_0}\right)^{3/4}\right] \leq C.
\end{align*}
Therefore,
\begin{align*}
E_{\hd,\vartheta}\left[|R_{n,1}(\hd, \vartheta)| \right] \leq \frac{LK}{n^{1/4}} C.
\end{align*}
Because the above upper bound is uniform over $\vartheta$ and $\hd$, we have that $\sup_{\vartheta, \hd}E_{\hd,\vartheta}\left[|R_{n,1}(\hd, \vartheta)|\right] \overset{n \to \infty }{\longrightarrow}0$. Similarly, $\sup_{\vartheta, \hd}E_{\hd,\vartheta}\left[|R_{n,0}(\hd, \vartheta)|\right] \overset{n \to \infty }{\longrightarrow}0$. Applying Markov's inequality followed by the triangle inequality to $P_{h_{0n},\vartheta_n}\left(|R_n(h_{0n},\vartheta_n)| > \epsilon\right)$ along with the latter convergence result shows Equation \ref{eq2:brasym}.
\end{proof}

\begin{lemma}
\label{lfcond}
Let 
 \begin{equation*}
\left\{\sum_{j = 1}^{i} \xi_{n,j}(h_{0n},\vartheta_n), \mathcal{F}_{n,i}, 1 \leq i \leq 3n\right\}	
\end{equation*} 
be the martingale difference sequence defined in Lemma \ref{lem:brasym}, Equation \ref{eq2:lembrasym}. Assume that the conditions in Lemma  \ref{lem:brasym} hold. Then, for every 
$\epsilon > 0$
\begin{equation}
\label{eq1:lem12}
\sum_{k=1}^{\infty}E_{h_{0n},\vartheta_n}\left[\xi_{n,k}(h_{0n},\vartheta_n)^2I(|\xi_{n,k}(h_{0n},\vartheta_n) > \epsilon|)\right] \longrightarrow 0
\end{equation}
\end{lemma}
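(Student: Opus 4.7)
The plan is to prove the stronger Lyapunov condition, which implies the Lindeberg-type bound in Equation~\ref{eq1:lem12} via the elementary inequality $x^2 I(|x|>\epsilon) \leq \epsilon^{-\delta}|x|^{2+\delta}$. Specifically, taking $\delta>0$ to be the same constant as in Condition~\ref{cond:bounds}, I aim to show that
\begin{equation*}
\sum_{k=1}^{3n} E_{h_{0n},\vartheta_n}\bigl[|\xi_{n,k}(h_{0n},\vartheta_n)|^{2+\delta}\bigr] \longrightarrow 0.
\end{equation*}
I will split the sum according to the three blocks in the definition of $\xi_{n,k}$ and establish in each block that $E[|\xi_{n,k}|^{2+\delta}]=O(n^{-(2+\delta)/2})$ uniformly in $k$, so that each block contributes $O(n^{-\delta/2})$.

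In the first block $1\leq k\leq n$, every factor appearing in $\xi_{n,k}$ is uniformly bounded: $r_{h_{0n}}$ is bounded because $\mathcal{W}$ is bounded (Condition~\ref{cond:qmd}) and $\mathcal{H}$ is uniformly bounded (Condition~\ref{cond:qmd2}(i)); $\pi_{h_{0n},\vartheta_n}$ is bounded away from $0$ and $1$ for all sufficiently large $n$ since $\vartheta_n\to\vartheta_0$ and Condition~\ref{cond:pos:first} holds at $\vartheta_0$; and $\bar\mu_{h_{0n},\vartheta_n}(a,\cdot)$ is uniformly bounded by Condition~\ref{cond:bounds}. Hence $|\xi_{n,k}|\leq Cn^{-1/2}$ deterministically and $\sum_{k=1}^{n}E[|\xi_{n,k}|^{2+\delta}]\leq C^{2+\delta}n^{-\delta/2}\to 0$. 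For the second block $n+1\leq k\leq 2n$, the same uniform bounds apply except for the additional factor $1+K_{M,\pi_{h_{0n},\vartheta_n}}(k-n)/M$; I will control this by invoking the moment bounds on $K_{M,\pi}$ from the matching-theory arguments in \cite{abadie2016} (valid to any order under Conditions~\ref{cond:pos:first} and \ref{cond:ps}), which yields $\sup_{i,n}E[K_{M,\pi_{h_{0n},\vartheta_n}}(i)^{2+\delta}]<\infty$, giving the same $O(n^{-\delta/2})$ conclusion.

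The main obstacle is the third block $2n+1\leq k\leq 3n$, where $\xi_{n,k}$ carries the unbounded residual $Y_{k-2n}-\mu_{h_{0n},\vartheta_n}(A_{k-2n},W_{k-2n})$ multiplied by $1+K_{M,\pi_{h_{0n},\vartheta_n}}(k-2n)/M$. I plan to first condition on the $\sigma$-algebra generated by $\{A_i,\pi_{h_{0n},\vartheta_n}(W_i)\}_{i=1}^n$, under which $K_{M,\pi_{h_{0n},\vartheta_n}}(k-2n)$ is measurable and can therefore be factored out of the conditional $(2+\delta)$-th moment of the residual. To bound that conditional moment, I will use $|Y-\mu(A,W)|^{2+\delta}\leq 2^{1+\delta}(|Y|^{2+\delta}+|\mu(A,W)|^{2+\delta})$, the Jensen bound $|\mu(A,W)|^{2+\delta}\leq E[|Y|^{2+\delta}\mid A,W]$, and iterated expectations to reduce to conditioning on $(A,\pi_{h_{0n},\vartheta_n}(W))$. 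This reduces the problem to the uniform bound on $E[|Y|^{2+\delta}\mid A=a,\pi_{h_{0n},\vartheta_n}(W)=p]$ provided by Condition~\ref{cond:bounds}. Taking the outer expectation and again using the moment bound on $K_M$ gives $E[|\xi_{n,k}|^{2+\delta}]=O(n^{-(2+\delta)/2})$, so the block contributes $O(n^{-\delta/2})$. Summing the three $O(n^{-\delta/2})$ contributions yields $\sum_{k=1}^{3n}E[|\xi_{n,k}|^{2+\delta}]\to 0$, establishing Lyapunov's condition and hence the claim.
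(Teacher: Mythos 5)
Your proposal is correct and follows essentially the same route as the paper's proof: establish Lyapunov's $(2+\delta)$-moment condition block by block, using uniform boundedness of $\mathcal{W}$, $\mathcal{H}$, the propensity score near $\vartheta_0$, and $\bar{\mu}$ for the first block, Abadie--Imbens-type moment bounds on $K_{M,\pi}$ for the second and third blocks, and the conditional-$(2+\delta)$-moment bound from Condition~\ref{cond:bounds} together with Jensen's inequality and iterated conditioning on $(A^n,\pi^n)$ to handle the outcome residual. The one place to tighten is the second block, where the factor $\mu_{h_{0n},\vartheta_n}(A,W)-\bar{\mu}_{h_{0n},\vartheta_n}(A,\pi_{h_{0n},\vartheta_n}(W))$ is not deterministically bounded (Condition~\ref{cond:bounds} bounds $\bar{\mu}$, not $\mu$) and therefore needs the same conditional-moment and Jensen argument you correctly apply to the third block, rather than the uniform bounds you cite there.
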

\begin{proof}
Instead of showing Equation \ref{eq1:lem12} directly, we show Lyapunov's condition instead, which implies the desired result. That is, we show that
$$\lim_{n \rightarrow \infty }\sum_{k = 1}^{3n} E_{h_{0n},\vartheta_n}\left[|\xi_{n,k}(h_{0n},\vartheta_n)|^{2+\delta}\right] =  0, \; \textnormal{ for some } \delta>0.$$
First, let $\tilde{\epsilon}$ be as defined in Condition \ref{cond:bounds}. Define the following upper bounds: let $\bar{w}$ denote the p+2-dimensional such that $\bar{w} = \sup_{w \in \mathcal{W}, \hd \in \mathcal{H}}r_{\hd}(w)$ entrywise, $[\underline{p},\overline{p}]$ be the upper and lower bounds of $\pi_{\hd,\vartheta}(w)$ for all $w \in \mathcal{W}$ that are uniform over $\hd\in\mathcal{H}$ and $\vartheta$ belonging to the ball $D_{\tilde{\epsilon}} = \{\vartheta \in \mathbb{R}^{p+2}: \|\vartheta-\vartheta_0\| < \tilde{\epsilon}\}$, and let $C_{\bar{\mu}} = \sup_{a \in \{0,1\}, \vartheta \in D_{\tilde{\epsilon}},  \hd \in \mathcal{H}} \bar{\mu}_{\hd,\vartheta}(a,\pi_{\hd,\vartheta}(w))$. Note that $C_{\bar{\mu}}<\infty$ by Condition \ref{cond:bounds}. Further, recall that $0< \rho_{\vartheta}(r_{\hd}(w)) <1$ for all $\vartheta \in \mathbb{R}^{p+2}, \hd \in \mathcal{H}$, and $w \in \mathcal{W}$. 

Fix $\vartheta \in D_{\tilde{\epsilon}}$, $\hd \in \mathcal{H}$. Let $\delta$ be some positive real number such that there exists a $C_\delta<\infty$ such that $E_{\hd,\vartheta}[|Y|^{2+\delta}|A=a, \pi_{\hd,\vartheta}(W)=p]$ is uniformly bounded by $C_\delta$ (such a $\delta$ is guaranteed to exist by Condition \ref{cond:bounds}). In the following equations all expectations are taken with respect to $P_{\hd,\vartheta}$. 

We divide $\sum_{k=1}^{3n}E\left[|\xi_{n,k}(\hd, \vartheta)|^{2+\delta}\right]$ into three separate sums and show that each converge to 0. For the first sum, we have that
\begin{align}
\label{eq2:lem12}
\sum_{k = 1}^{n} E\left[|\xi_{n,k}(\hd, \vartheta)|^{2+\delta}\right] &\leq \sum_{k = 1}^{n}  E\left[\bigg| z_1 \frac{1}{\sqrt{n}}(\bar{\mu}_{\hd,\vartheta}(1, \pi_{\hd,\vartheta}(W_{k}))- \bar{\mu}_{\hd,\vartheta}(0, \pi_{\hd,\vartheta}(W_{k})) - \psi_0)\bigg|^{2+\delta}\right]  \nonumber \\ 
&\quad + \sum_{k = 1}^{n} E\bigg[ \bigg|\left(\frac{1}{\sqrt{n}} z_2^{\rm{T}} E[r_{\hd}(W_{k})|\pi_{\hd,\vartheta}(W_{k})]\right)  \nonumber\\
& \hspace{4em} \left(\frac{A_{k}-\pi_{\hd,\vartheta}((W_{k}))}{\pi_{\hd,\vartheta}(W_{k})(1-\pi_{\hd,\vartheta}(W_{k}))}\rho_{\vartheta}(r_{\hd}(W_{k}))\right) \bigg|^{2+\delta}\bigg]  \nonumber \\
&\leq \frac{|2 z_1 C_{\bar{\mu}} + z_1|\psi_0||^{2+\delta}}{n^{\delta/2}} +  \sum_{k = 1}^{n}E\bigg[ \bigg|\left(\frac{1}{\sqrt{n}} z_2^{\rm{T}} E[r_{\hd}(W_{k})|\pi_{\hd,\vartheta}(W_{k})]\right) \nonumber  \\
& \hspace{10em}\left(\frac{A_{n,k}-\pi_{\hd,\vartheta}(W_{k})}{\pi_{\hd,\vartheta}(W_{k})(1-\pi_{\hd,\vartheta}(W_{k}))}\rho_{\vartheta}(r_{\hd}(W_{k}))\right)\bigg|^{2+\delta}\bigg]  \nonumber \\ 
&\leq \frac{|2 z_1 C_{\bar{\mu}} + z_1|\psi_0||^{2+\delta}}{n^{\delta/2}} + \frac{|z_{2}^{\rm{T}}\bar{w}|^{2+\delta}}{n^{\delta/2}} \left(\frac{1}{\underline{p}(1-\overline{p})}\right)^{2+\delta},
\end{align}
where the first inequality holds due to the triangle inequality. The latter upper bound converges to 0 as $n\to \infty$. To bound the second sum, first let 
$\Delta_{\hd,\vartheta}(a,w,\pi) : = E[Y|A=a,W=w] - E[Y|A=a,\pi_{\hd,\vartheta}(W) = \pi_{\hd,\vartheta}(w)]$. Then, $E\left[\mid\Delta_{\hd,\vartheta}(A,W,\pi)\mid^{2+\delta} |\pi_{\hd,\vartheta}(W), A\right]$ can be bounded as follows:
\begin{align*}
E\left[|\Delta_{\hd,\vartheta}(A,W,\pi)|^{2+\delta} |\pi_{\hd,\vartheta}(W), A\right]
& \leq E\left[|E(Y|A, W)|^{2+\delta}|\pi_{\hd,\vartheta}, A \right] +  E\left[|E(Y|A, \pi_{\hd,\vartheta}(W)) |^{2+\delta}|\pi_{\hd,\vartheta}(W),A\right] \\
&\leq E\left[E(|Y|^{2 + \delta }|A, W) |\pi_{\hd,\vartheta}(W), A\right] + E[|Y|^{2+\delta}|\pi_{\hd,\vartheta}(W),A] \\ 
&= 2 E[|Y|^{2+\delta}|\pi_{\hd,\vartheta}(W)=p,A]] \leq 2C_{\delta}
\end{align*}
Let $C_{\Delta}:=2C_{\delta}$ where, by assumption, $C_{\Delta}$ does not depend on $\vartheta$, nor $\hd$, nor $p$. Then, we have that 
\begin{align*}
& \sum_{k=n+1}^{2n} E[|\xi_{n,k}(\hd, \vartheta)|^{2+\delta}] \leq \sum_{k=n+1}^{2n} E\bigg\{ \bigg| \left[z_2^{\rm{T}}\frac{1}{\sqrt{n}}(r_{\hd}(W_{k-n})- E[r_{\hd}(W_{k-n})|\pi_{\hd,\vartheta}(W_{k-n})])\right] \\
& \;\hspace{15em} \left[\frac{A_{k-n} - \pi_{\hd,\vartheta}(W_{k-n})}{\pi_{\hd,\vartheta}(W_{k-n})(1-\pi_{\hd,\vartheta}(W_{k-n}))}\rho_\vartheta(r_{\hd}(W_{k-n}))\right] \bigg|^{2+\delta}\bigg\}  \\
&\quad+\sum_{k=n+1}^{2n} E\left\{\bigg|\frac{z_1}{\sqrt{n}}(2A_{k-n}-1)\left(1 + \frac{K_{M, \pi_{\hd,\vartheta}}(k-n)}{M}\right) \Delta_{\hd,\vartheta}(A_{k-n},W_{k-n},\pi_{\hd,\vartheta}) \bigg|^{2+\delta}\right\}\\
& \leq \sum_{k=n+1}^{2n}  E \left\{ \bigg| z_2^{\rm{T}}\frac{1}{\sqrt{n}}(r_{\hd}(W_{k-n}))  \left(\frac{A_{k-n} - \pi_{\hd,\vartheta}(W_{k-n})}{\pi_{\hd,\vartheta}(W_{k-n})(1-\pi_{\hd,\vartheta}(W_{k-n}))}\right)\rho_\vartheta(r_{\hd}(W_{k-n})) \bigg|^{2+\delta}\right\}  \\
&\quad+\sum_{k=n+1}^{2n}  E\left\{ \bigg| z_2^{\rm{T}}\frac{1}{\sqrt{n}}(E[r_{\hd}(W_{k-n})|\pi_{\hd,\vartheta}(W_{k-n})]) \left(\frac{A_{k-n} - \pi_{\hd,\vartheta}(W_{k-n})}{\pi_{\hd,\vartheta}(W_{k-n})(1-\pi_{\hd,\vartheta}(W_{k-n}))}\right)\rho_\vartheta(r_{\hd}(W_{k-n}))\bigg|^{2+\delta}\right\}  \\
&\quad+\sum_{k=n+1}^{2n}E\left\{\bigg|\frac{z_1}{\sqrt{n}}(2A_{k-n}-1)\left(1+\frac{K_{M, \pi_{\hd,\vartheta}}(k-n)}{M}\right) \Delta_{\hd,\vartheta}(A_{k-n},W_{k-n},\pi_{\hd,\vartheta})\bigg|^{2+\delta}\right\} \\
& \leq 2 \frac{|z_{2}^{\rm{T}}\bar{w}|^{2+\delta}}{n^{\delta/2}}\bigg|\frac{1}{\underline{p}(1-\overline{p})}\bigg|^{2+\delta} \\
& \hspace{1em} +\sum_{k=n+1}^{2n}E\left\{\bigg|\frac{z_1}{\sqrt{n}}(2A_{k-n}-1)\left(1+\frac{K_{M, \pi_{\hd,\vartheta}}(k-n)}{M}\right) \Delta_{\hd,\vartheta}(A_{k-n},W_{k-n},\pi_{\hd,\vartheta})\bigg|^{2+\delta}\right\},
\end{align*}
where the first and second inequalities are due to the triangle inequality and the third is due to the boundedness of $\mathcal{W}$ and $\mathcal{H}$, and the fact that $\vartheta \in D_{\tilde{\epsilon}}$. Next, let $A^n$ denote the treatment values for all $n$ observations. Similarly, let $\pi_{\hd,\vartheta}^{n}$ denote all the propensity score values for all $n$ observations. When conditioning on $A^n$ and $\pi_{\hd,\vartheta}^{n}$, each $K_{M,\pi_{\hd,\vartheta}}(i)$ term is constant for all $i \in \{n+1,...,2n\}$. Hence,
\begin{align*}
&\sum_{k=n+1}^{2n}E\left\{\bigg|\frac{z_1}{\sqrt{n}}(2A_{k-n}-1)\left(1+\frac{K_{M, \pi_{\hd,\vartheta}}(k-n)}{M}\right) \Delta_{\hd,\vartheta}(A_{k-n},W_{k-n},\pi_{\hd,\vartheta})\bigg|^{2+\delta}\right\} \\
&=\sum_{k=n+1}^{2n}E\left[E\left\{\bigg|\frac{z_1}{\sqrt{n}}(2A_{k-n}-1)\left(1+\frac{K_{M, \pi_{\hd,\vartheta}}(k-n)}{M}\right) \Delta_{\hd,\vartheta}(A_{k-n},W_{k-n},\pi_{\hd,\vartheta})\bigg|^{2+\delta}\middle|A^n,\pi_{\hd,\vartheta}^{n}\right\}\right] \\
&=\frac{z_1^2}{n^{1+\delta/2}}\sum_{k=n+1}^{2n}E\left[ \bigg|(2A_{k-n}-1)\left(1+\frac{K_{M, \pi_{\hd,\vartheta}}(k-n)}{M}\right)\bigg|^{2+\delta}E\left\{\Delta_{\hd,\vartheta}(A_{k-n},W_{k-n},\pi_{\hd,\vartheta})^{2+\delta}|A^n,\pi_{\hd,\vartheta}^{n}\right\}\right] \\
&\leq \frac{C_{\Delta}z_1^2}{n^{1+\delta/2}}\sum_{k=n+1}^{2n}E\left[ \bigg|(2A_{k-n}-1)\left(1+\frac{K_{M, \pi_{\hd,\vartheta}}(k-n)}{M}\right)\bigg|^{2+\delta}\right].
\end{align*}
Nearly identical arguments to those used to prove Lemma S.8 in \cite{abadie2016} show that $E[K^{\lceil2+\delta\rceil}_{M,\pi_{\hd,\vartheta}}(i)|A_i = a]$ is uniformly bounded in $n$ by a finite constant that neither depends on $\vartheta$ nor $\hd$. We denote this constant by $L$. Moreover, because $K_{M, \pi_{\hd,\vartheta}}(i)$ is a positive integer, we have that $E[K^{2+\delta}_{M,\pi_{\hd,\vartheta}}(i)|A_i = a]\leq E[K^{\lceil2+\delta\rceil}_{M,\pi_{\hd,\vartheta}}(i)|A_i = a]$, where $\lceil\,\cdot\,\rceil$ denotes the ceiling function. Therefore, 
\begin{equation}
\label{eq3:lem12}
\sum_{k = n+1}^{2n} E[|\xi_{n,k}(\hd, \vartheta)|^{2+\delta}] \leq \frac{LC_{\Delta}z_1^2}{n^{\delta/2}},
\end{equation}
and recall that neither $L$ nor $C_{\Delta}$ depend on $\vartheta$ or $\hd$. The last sum can be bounded as follows:
\begin{align*}
&\sum_{k = 2n+1}^{3n} E[|\xi_{n,k}(\hd, \vartheta)|^{2+\delta}] \\
&= \sum_{k = 2n+1}^{3n} E\left[\bigg|z_1 \frac{1}{\sqrt{n}} (2A_{k-2n} - 1)\left( 1+ \frac{K_{M,\pi_{\hd,\vartheta}}(k-2n)}{M}\right) \times (Y_{k-2n}  - \mu(A_{k-2n}, W_{k-2n}))\bigg|^{2+\delta}\right] \\ 
&\leq \sum_{k = 2n+1}^{3n} E\left[\bigg|z_1 \frac{1}{\sqrt{n}} (2A_{k-2n} - 1)\left( 1+ \frac{K_{M,\pi_{\hd,\vartheta}}(k-2n)}{M}\right) Y_{k-2n}\bigg|^{2+\delta}\right] \\ 
&\quad+ \sum_{k = 2n+1}^{3n} E\left[\bigg|z_1 \frac{1}{\sqrt{n}} (2A_{k-2n} - 1)\left( 1+ \frac{K_{M,\pi_{\hd,\vartheta}}(k-2n)}{M}\right) \mu(A_{k-2n},W_{k-2n})\bigg|^{2+\delta}\right].
\end{align*}
Similar arguments as those employed to show Equation \ref{eq2:lem12} (broadly, conditioning on $A^n$ and $\pi_{\hd,\vartheta}^n$ followed by bounding the terms in the sum) can be used to show that
\begin{equation}
\label{eq4:lem12}
E\left[\bigg|\sum_{k = 2n+1}^{3n}|\xi_{n,k}(\hd, \vartheta)|^{2+\delta}\right] \leq \frac{z_1^2M_{\delta}}{n^{\delta/2}}, 
\end{equation}
where $M_{\delta}$ depends on neither $\vartheta$ nor $\hd$.

By the triangle inequality and since the bounds on Equations  \ref{eq1:lem12}, \ref{eq2:lem12}, \ref{eq3:lem12}, and \ref{eq4:lem12} are uniform over $\vartheta$ and $\hd$, we have that $\sup_{\vartheta \in D_{\tilde{\epsilon}}, \hd \in \mathcal{H}}E\left[\sum_{i=1}^{3n}|\xi_{n,k}(\hd, \vartheta)|^{2+\delta}\right] \rightarrow 0$. Finally, because $P_{h_{0n},\vartheta_n}\left(\vartheta_n \not \in D_{\tilde{\epsilon}}\right) \overset{n \to \infty}{\longrightarrow} 0$, 
\begin{align*}
\sum_{k = 1}^{3n} E_{h_{0n},\vartheta_n}\left[|\xi_{n,k}(h_{0n},\vartheta_n)|^{2+\delta}\right]\overset{n \to \infty}{\longrightarrow} 0.
\end{align*}
\end{proof}
\section{Proof of Proposition 1}\label{supsec:sec5}
Similar to the results in \cite{abadie2016}, our main asymptotic normality theorem provides guarantees for the discretized version of the maximum likelihood estimator. That is, we employ the same result from \cite{andreou2012alternative} that allows us to show the asymptotic distribution of the matching estimator when using the following transformation of the MLE to match. For a given $\hd \in \mathcal{H}$, let $\vartheta_{\hd,n}$ be the MLE of $\vartheta_0$ based on $\hd$ and $d_n$. Additionally, let $k>0$ be the discretization constant that we use to define a grid of cubes in $\mathbb{R}^{p+2}$ of sides of length $k/\sqrt{n}$. Informally, the discretized estimator $\vartheta_{\hd,n,k}$ is obtained by taking the midpoint of the cube $\vartheta_{\hd,n}$ belongs to. Formally, if $r:\mathbb{R}^{p+2}\rightarrow \mathbb{Z}^{p+2}$ is a function that rounds each entry of a vector to its nearest integer, then $\vartheta_{\hd,n,k}:= k r(\sqrt{n}\vartheta_{\hd,n}/k)/\sqrt{n}$. 
The following proposition shows that asymptotic distribution of the matching estimator when carrying matching via the discretized estimator $\vartheta_{h_{0n},n,k}$ is asymptotically normal under a set of conditions. It is worth noting that our asymptotic result is based on taking the limit with respect to the probability measure based on the shifted discretized version of the truth defined as $\vartheta_{0,n,k}:= k r(\sqrt{n}\vartheta_0 / k)/\sqrt{n} + g/\sqrt{n}$, where $g\in \mathbb{R}^{p+2}$. 
\begin{proposition}
\label{prop1}
Suppose that Conditions \ref{cond:pos:first}-\ref{cond:ps} and \ref{cond:bounds}-\ref{cond:ulan:last} hold. Let $(h_{0n})_{n=1}^{\infty}$ be an $\mathcal{H}-$ valued sequence such that $\|h_{0n} - h_0\|_{L^2(P_{h_0,\vartheta_0,})} \rightarrow 0$, and let $g\in \mathbb{R}^{p+2}$. Additionally, let: $\vartheta_{h_{0n},n}= \argmax_{\vartheta} \ell_{h_{0n}}(\vartheta;D_n)$ be the maximum likelihood estimator based on observations $D_n = \{A_i, W_i, h_{0n}(W_i)\}_{i = 1}^n$, let $\vartheta_{h_{0n},n,k}$ denote its discretized version with discretization constant $k$, and let $\psi_{n,h_{0n},k}=\psi_{n}(\pi_{h_{0n}, \vartheta_{h_{0n},n,k}})$ denote the matching estimator based on propensity score $\pi_{h_{0n}, \vartheta_{h_{0n},n,k}}$. Finally, let $\vartheta_{0,n,k}= k r(\sqrt{n}\vartheta_0 / k)/\sqrt{n} + g/\sqrt{n}$ be the shifted discretized version of $\vartheta_0$.
Then,
\begin{align*}
\lim_{k \downarrow 0}\lim_{n \to \infty}&P_{\vartheta_{0,n,k},h_{0n}}\left(\sqrt{n}(\sigma_M^2 - c_{h_0, \vartheta_0}^{\rm{T}}\mathscr{I}_{h_0, \vartheta_0}c_{h_0, \vartheta_0})^{-1/2}(\psi_{n,h_{0n},k} - \psi_0) \leq z\right) = \Phi(z).
\end{align*}
where $\Phi$ is the cumulative distribution function of a standard normal random variable.
\end{proposition}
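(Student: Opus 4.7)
The plan is to adapt the strategy of Abadie and Imbens (2016) for their Theorem 2 to the augmented-propensity setting with the nuisance function $h_{0n}$. I would begin with the decomposition $\sqrt{n}(\psi_{n,h_{0n},k}-\psi_0) = B_n(h_{0n},\vartheta_{h_{0n},n,k}) + R_n(h_{0n},\vartheta_{h_{0n},n,k})$ introduced in the preamble to Lemma~\ref{lem:brasym}. The remainder $R_n$ is controlled via the uniform bound $\sup_{\vartheta,\hd} E_{\hd,\vartheta}|R_n(\hd,\vartheta)| = O(n^{-1/4})$ established inside the proof of Lemma~\ref{lem:brasym} (valid uniformly over $\vartheta$ in a neighborhood of $\vartheta_0$ and $\hd\in\mathcal{H}$), combined with the consistency of $\vartheta_{h_{0n},n,k}$ for $\vartheta_0$ (from Lemma~\ref{lem:consistency} together with the $O(k/\sqrt n)$ rounding) and the mutual contiguity of $P_{h_{0n},\vartheta_{0,n,k}}$ and $P_{h_0,\vartheta_0}$ (Lemma~\ref{lem:contiguity}); this yields $R_n\to 0$ in probability under the shifted law.

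For the leading term $B_n(h_{0n},\vartheta_{h_{0n},n,k})$ I would invoke the discretization technique of Andreou and Werker (2012). Because $\vartheta_{h_{0n},n,k}$ takes values on the lattice $(k/\sqrt n)\mathbb{Z}^{p+2}$ and $\sqrt n(\vartheta_{h_{0n},n,k}-\vartheta_{0,n,k})$ is tight by Lemma~\ref{lem:asymnorm}, it suffices to analyze $B_n$ separately on each cell $\{\sqrt n(\vartheta_{h_{0n},n,k}-\vartheta_{0,n,k})=\delta\}$ for $\delta$ ranging over the finite grid $G_{k,M}:=k\mathbb{Z}^{p+2}\cap\{\|\delta\|\le M\}$. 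Setting $v_n(\delta):=\vartheta_{0,n,k}+\delta/\sqrt n$, Lemma~\ref{lem:brasym} provides the joint limit
\[
\left(B_n(h_{0n},v_n(\delta)),\, n^{-1/2}U_{h_{0n}}(v_n(\delta);D_n)\right) \rightsquigarrow N\left(0,\begin{pmatrix}\sigma_M^2 & c_{h_0,\vartheta_0}^{\rm{T}}\\ c_{h_0,\vartheta_0} & \mathscr{I}_{h_0,\vartheta_0}\end{pmatrix}\right)
\]
under $P_{h_{0n},v_n(\delta)}$, and Le Cam's third lemma, powered by the uniform-in-shift LAN expansion in Condition~\ref{cond:ulan:last}, transfers this to $P_{h_{0n},\vartheta_{0,n,k}}$. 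Combining with the asymptotically linear expansion $\sqrt n(\vartheta_{h_{0n},n,k}-\vartheta_{0,n,k}) = \mathscr{I}_{h_0,\vartheta_0}^{-1}\, n^{-1/2}U_{h_{0n}}(\vartheta_{0,n,k};D_n)+o_P(1)+O(k)$ from Lemma~\ref{lem:asymnorm} and decomposing
\[
B_n = \left\{B_n - c_{h_0,\vartheta_0}^{\rm{T}}\mathscr{I}_{h_0,\vartheta_0}^{-1}\sqrt n(\vartheta_{h_{0n},n,k}-\vartheta_{0,n,k})\right\} + c_{h_0,\vartheta_0}^{\rm{T}}\mathscr{I}_{h_0,\vartheta_0}^{-1}\sqrt n(\vartheta_{h_{0n},n,k}-\vartheta_{0,n,k})
\]
isolates a residual that is asymptotically uncorrelated with the MLE displacement, with variance $\sigma_M^2 - c_{h_0,\vartheta_0}^{\rm{T}}\mathscr{I}_{h_0,\vartheta_0}^{-1}c_{h_0,\vartheta_0}$.

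The main obstacle is the Le Cam-style transfer inside a random lattice cell: one must justify that conditioning on $\vartheta_{h_{0n},n,k}$ lying in a specific cell does not alter the conditional distribution of $B_n(h_{0n},\vartheta_{h_{0n},n,k})$ beyond the score-projection shift. This is exactly what Condition~\ref{cond:ulan:last} delivers, since the discretization confines the relevant shifts to the finite grid $G_{k,M}$ and the uniform-in-$\delta$ LAN expansion permits a single application of Le Cam's third lemma across all cells simultaneously. Passing $M\to\infty$ to remove the tightness truncation and then $k\downarrow 0$ to wash out the rounding bias then produces the stated iterated-limit normal distribution with limiting variance $\sigma_M^2 - c_{h_0,\vartheta_0}^{\rm{T}}\mathscr{I}_{h_0,\vartheta_0}^{-1}c_{h_0,\vartheta_0}$.
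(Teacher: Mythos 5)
Your proposal takes essentially the same route as the paper's proof: decompose $\sqrt n(\psi_{n,h_{0n},k}-\psi_0)=B_n+R_n$, drive $R_n$ to zero using the uniform bound and contiguity, establish the joint limit of $(B_n,n^{-1/2}U_{h_{0n}})$ under a local shift via Lemma~\ref{lem:brasym}, and then hand the discretized-plug-in step to the Andreou--Werker machinery using Lemma~\ref{lem:asymnorm} and Condition~\ref{cond:ulan:last}. The one place you diverge in style is that the paper does not re-derive the discretization argument cell by cell; instead it applies the continuous mapping theorem and Lemma~\ref{lem:lasymnseqh} to the joint limit from Lemma~\ref{lem:brasym} to obtain the trivariate limit of $\bigl(\sqrt n(\psi_n-\psi_0),\sqrt n(\vartheta_{h_{0n},n}-\vartheta_n),\Lambda(\vartheta_0\mid\vartheta_n)\bigr)$ under $P_{h_{0n},\vartheta_n}$, identifies this as Condition~AN of Andreou and Werker (2012), and cites their theorem together with Condition~\ref{cond:ulan:last} to conclude. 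Your sketch of the cell-by-cell Le~Cam transfer and projection decomposition is correct in spirit, but to make it rigorous you need exactly the joint normality with the log-likelihood ratio that the paper makes explicit through Lemma~\ref{lem:lasymnseqh}; once you have Condition~AN in hand, citing the Andreou--Werker result directly is both shorter and safer than reconstructing its internals.
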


\begin{proof}
Let $\vartheta_n = \vartheta_0+g/\sqrt{n}$. By Lemma \ref{lem:brasym}, we have that, under $P_{h_{0n},\vartheta_n}$,
\begin{align*}
\begin{pmatrix}B_{n}(h_{0n},\vartheta_n)\\
\frac{1}{\sqrt{n}}U_{h_{0n}}(\vartheta_n; D_n)
\end{pmatrix} &\rightsquigarrow  N\left(\begin{pmatrix}0\\
0 
\end{pmatrix},\begin{pmatrix}\sigma_M^{2} & c_{h_0, \vartheta_0} \\
c_{h_0, \vartheta_0} & \mathscr{I}_{h_0, \vartheta_0}
\end{pmatrix}\right).
\end{align*}
Moreover, also because of Lemma \ref{lem:brasym}, we have that $R_n(h_{0n},\vartheta_n)\rightarrow 0$ in probability. By Slutsky's Theorem, under $P_{h_{0n},\vartheta_n}$,
\begin{align}
\label{eq1:theo2}
\begin{pmatrix}\sqrt{n}(\psi_{n}(\pi_{h_{0n},\vartheta_n})-\psi_0) \\
\frac{1}{\sqrt{n}}U_{h_{0n}}(\vartheta_n; D_n)
\end{pmatrix} &\rightsquigarrow  N\left(\begin{pmatrix}0\\
0 
\end{pmatrix},\begin{pmatrix}\sigma_M^{2} & c_{h_0, \vartheta_0} \\
c_{h_0\vartheta_0} & \mathscr{I}_{h_0, \vartheta_0}
\end{pmatrix}\right).
\end{align}
Applying the continuous mapping theorem to \eqref{eq1:theo2} with the bivariate function $f(x,y) = (x,\mathscr{I}_{h_0, \vartheta_0}^{-1}y,-g^{\rm{T}}y  -g^{\rm{T}}\mathscr{I}_{h_0, \vartheta_0}g/2)$, we have that under $P_{h_{0n},\vartheta_n}$,
\begin{align*}
\begin{pmatrix}\sqrt{n}(\psi_{n}(\pi_{h_{0n},\vartheta_n})-\psi_0) \\
\mathscr{I}_{h_0, \vartheta_0}^{-1}\frac{1}{\sqrt{n}}U_{h_{0n}}(\vartheta_n;D_n)\\
-g^{\rm{T}}\frac{1}{\sqrt{n}}U_{h_{0n}}(\vartheta_n;D_n) - g^{\rm{T}}\mathscr{I}_{h_0, \vartheta_0}g/2
\end{pmatrix} \rightsquigarrow  N\left(\begin{pmatrix}0\\
0\\
-g^{\rm{T}}\mathscr{I}_{h_0,\vartheta_0}g/2
\end{pmatrix},\begin{pmatrix}\sigma_{M}^{2} & c_{h_0,\vartheta_0}^{\rm{T}}\mathscr{I}_{h_0,\vartheta_0}^{-1} & -c_{h_0,\vartheta_0}^{\rm{T}}g\\
\mathscr{I}_{h_0,\vartheta_0}^{-1}c_{h_0,\vartheta_0}  & \mathscr{I}_{h_0,\vartheta_0}^{-1} & -g\\
-g^{\rm{T}}c_{h_0,\vartheta_0} & -g^{\rm{T}} & g^{\rm{T}}\mathscr{I}_{h_0,\vartheta_0}g
\end{pmatrix}\right).
\end{align*}
By Slutsky's theorem, Lemma \ref{lem:lasymnseqh}, and Lemma \ref{lem:asymnorm}, we have that under $P_{h_{0n},\vartheta_n}$,
\begin{align*}
\begin{pmatrix}\sqrt{n}(\psi_{n}(\pi_{h_{0n},\vartheta_n})-\psi_0)\\
\sqrt{n}\left(\vartheta_{h_{0n},n} -\vartheta_n\right)\\
\Lambda\left(\vartheta_0|\vartheta_n\right)
\end{pmatrix}\rightsquigarrow N\left(\begin{pmatrix}0\\
0\\
-g^{\rm{T}}\mathscr{I}_{h_0, \vartheta_0}g/2
\end{pmatrix},\begin{pmatrix}\sigma_{M}^{2} & c_{h_0,\vartheta_0}^{\rm{T}}\mathscr{I}_{h_0, \vartheta_0}^{-1} & -c_{h_0,\vartheta_0}^{\rm{T}}g\\
\mathscr{I}_{h_0, \vartheta_0}^{-1}c_{h_0,\vartheta_0} & \mathscr{I}_{h_0, \vartheta_0}^{-1} & -g\\
-g^{\rm{T}}c_{h_0,\vartheta_0} & -g^{\rm{T}} & g^{\rm{T}}\mathscr{I}_{h_0, \vartheta_0}g
\end{pmatrix}\right).
\end{align*}
Therefore, Condition AN in \cite{andreou2012alternative} is satisfied. Next, because we assumed that Condition \ref{cond:ulan:last} is true, the (ULAN) condition in \cite{andreou2012alternative} applies in our setting. Then, we have that,
\begin{align*}
\lim_{k \downarrow 0}\lim_{n \to \infty}&P_{{h_{0n},\vartheta_{0,n,k}}}\left(\sqrt{n}(\sigma_M^2 - c_{h_0, \vartheta_0}^{\rm{T}}\mathscr{I}_{h_0, \vartheta_0}^{-1}c_{h_0, \vartheta_0})^{-1/2}(\psi_{n,h_{0n},k} - \psi_0) \leq z\right) = \Phi(z).
\end{align*}
\end{proof}

\section{Proof of Corollary~\ref{cor2}} \label{supsec:sec6}

In the following corollary we show the convergence statement from Proposition \ref{prop1} when the sequence of $\mathcal{H}$- valued functions is random and obtained from a dataset $S=\{O_{-i}\}_{i=1}^\infty$ consisting of infinitely many iid draws from $P_{h_0, \vartheta_0}$. Let $\{H_n\}_{n= 1}^{\infty}$ denote a sequence of operators that take as input a realization $s$ of the independent dataset and output an estimate $H_n(s)\in\mathcal{H}$ of $h_0$. We mostly have in mind the case where $H_n(S)$ is an estimate of $h_0$ based on the first $m_n<\infty$ observations $\{O_{-i}\}_{i=-1}^{-m_n}$ in $S$, where $(m_n)_{n=1}^\infty$ is some increasing sequence, 
though our specification of $\{H_n\}_{n= 1}^{\infty}$ enables consideration of more general specifications of the functions $H_n$. 
We will establish the convergence in probability of the sequence of matching estimators, assuming that $\|H_n(S) - h_0\|_{L^2(P_{h_0, \vartheta_0})} \rightarrow 0$ holds almost surely under sampling of $S$. 
Hereafter we will denote $H_n(S)$ as $h_n$. As in Proposition \ref{prop1}, we will establish a distributional result regarding the sequence of matching estimators under sampling from a discretized distribution, namely the distribution indexed by the unshifted discretized parameter $\vartheta_{0,n,k}^u:= k r(\sqrt{n}\vartheta_0 / k)/\sqrt{n}$. Since the final entry of $\vartheta_0$ is zero by assumption, the final entry of $\vartheta_{0,n,k}^u$ is zero as well. Hence, the value of $P_{\hd,\vartheta_{0,n,k}^u}$ does not depend on the value of $\hd \in\mathcal{H}$. Hereafter we shall write $P_{\vartheta_{0,n,k}^u}$, rather than $P_{\hd,\vartheta_{0,n,k}^u}$, to make this clear.

Our distributional result will hold conditional on $S$. 
Before showing the Corollary, we clarify what we mean by this notion of convergence. For each $n$, let $F_n$ denote a random $\mathcal{D}^n\rightarrow\mathbb{R}$ function that is measurable with respect to the $\sigma$-field generated by $S$, $\mathcal{D}:=\mathbb{R}^{p+2}\times\{0,1\}\times\mathbb{R}$ contain the support of each $P_{\hd,\vartheta}$, and $D_n$ denote a random dataset of $n$ iid draws from $P_{\vartheta_{0,n,k}^u}$, where this dataset is independent of $S$. 
We say that, under $P_{\vartheta_{0,n,k}^u}$ and conditionally on $S$, $F_n(D_n)$ converges to a random variable $Z$ if
\begin{align*}
    P_{h_0, \vartheta_0}\left(\lim_{k \downarrow 0 }\lim_{n \to \infty }{\rm Pr}(F_n(D_n) \leq z\mid S) =G_{Z}(z)\right ) = 1,
\end{align*}
where $G_Z$ is the cumulative distribution function of $Z$ and ${\rm Pr}(\,\cdot\mid S)$ denotes the conditional distribution of $(D_n,S)$ given $S$. The display above depends on the probability measure $P_{\vartheta_{0,n,k}^u}$ through $D_n$, which is an iid sample from $P_{\vartheta_{0,n,k}^u}$.
\begin{corollary}
\label{cor2}
Suppose that the conditions in Proposition \ref{prop1} hold. Suppose that $\|h_n - h_0\|_{L^2(P_{h_0, \vartheta_0})} \rightarrow 0$ almost surely. 
Let $\vartheta_{h_n,n}= \argmax_{\vartheta} \ell_{h_n}(\vartheta;D_n)$ be the maximum likelihood estimator based on observations $D_n = \{A_i, W_i, h_n(W_i)\}_{i = 1}^n$, let $\vartheta_{h_n,n,k} = k r(\sqrt{n}\vartheta_{h_n,n}/ k)/\sqrt{n}$ denote its discretized version with discretization constant $k$, and let $\psi_{n,h_n,k}=\psi_{n}(\pi_{h_n, \vartheta_{h_n,n,k}})$ denote the matching estimator based on propensity score $\pi_{h_n,\vartheta_{h_n,n,k}}$. 

Then, under $P_{\vartheta_{0,n,k}^u}$ and conditionally on $S$,
\begin{align*}
    \sqrt{n}(\sigma_M^2 - c_{h_0, \vartheta_0}^{\rm{T}}\mathscr{I}_{h_0, \vartheta_0}^{-1}c_{h_0, \vartheta_0})^{-1/2}(\psi_{n,h_n,k}-\psi_0)
\end{align*}
converges to a standard normal random variable.
\end{corollary}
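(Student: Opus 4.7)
The plan is to reduce Corollary~\ref{cor2} to Proposition~\ref{prop1} via a conditioning-on-$S$ argument, exploiting the independence of the estimation dataset $S$ (on which $h_n$ is built) from the analysis dataset $D_n$. Because conditional on $S$ the function $h_n$ is deterministic, the random sequence $(h_n)_{n=1}^\infty$ becomes a deterministic $\mathcal{H}$-valued sequence to which Proposition~\ref{prop1} applies directly, with $g=0$ to obtain the unshifted discretized parameter $\vartheta_{0,n,k}^u$.

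First I would set up the measurability framework. Let $\mathcal{E}:=\{s : \|H_n(s)-h_0\|_{L^2(P_{h_0,\vartheta_0})}\to 0\}$; by hypothesis $P_{h_0,\vartheta_0}(S\in\mathcal{E})=1$. Fix any $s\in\mathcal{E}$ and consider the deterministic sequence $h_{0n}:=H_n(s)\in\mathcal{H}$, which by construction satisfies $\|h_{0n}-h_0\|_{L^2(P_{h_0,\vartheta_0})}\to 0$. Next, because $D_n$ is independent of $S$, the conditional distribution of $\psi_{n,h_n,k}$ given $S=s$ equals the unconditional distribution of the statistic built from this deterministic $h_{0n}$ and an iid sample $D_n\sim P_{\vartheta_{0,n,k}^u}^{n}$. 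I will invoke Proposition~\ref{prop1} with the shift $g=0$, which converts $\vartheta_{0,n,k}$ into $\vartheta_{0,n,k}^u$; since the last coordinate of $\vartheta_{0,n,k}^u$ is zero, $r_{h_{0n}}(w)$ only enters through the first $p+1$ coordinates of $\vartheta_{0,n,k}^u$, and hence $P_{h_{0n},\vartheta_{0,n,k}^u}=P_{\vartheta_{0,n,k}^u}$. This yields, for every $z\in\mathbb{R}$,
\begin{equation*}
\lim_{k\downarrow 0}\lim_{n\to\infty}P_{\vartheta_{0,n,k}^u}\!\left(\sqrt{n}(\sigma_M^2-c_{h_0,\vartheta_0}^{\rm{T}}\mathscr{I}_{h_0,\vartheta_0}^{-1}c_{h_0,\vartheta_0})^{-1/2}(\psi_{n,h_{0n},k}-\psi_0)\leq z \,\middle|\,S=s\right)=\Phi(z).
\end{equation*}

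Since this holds for every $s\in\mathcal{E}$ and $P_{h_0,\vartheta_0}(S\in\mathcal{E})=1$, the claimed almost-sure convergence of the conditional distribution function follows. The only genuinely delicate point is ensuring that the $P_{\vartheta_{0,n,k}^u}$-probability statement from Proposition~\ref{prop1} translates cleanly to the regular conditional probability ${\rm Pr}(\,\cdot\mid S=s)$; this is immediate from the independence $D_n\indep S$ together with the fact that, on $\mathcal{E}$, $h_n(s)$ is an $\mathcal{H}$-valued deterministic sequence satisfying the $L^2$-convergence hypothesis of Proposition~\ref{prop1}. The main (and essentially the only) obstacle is bookkeeping around the two sources of randomness and verifying that the almost-sure set $\mathcal{E}$ is measurable with respect to the $\sigma$-field generated by $S$, so that the conditional distribution is well-defined on a set of full $P_{h_0,\vartheta_0}$-measure; both follow from standard arguments since $h_n$ is a measurable function of the first $m_n$ coordinates of $S$ and the $L^2(P_{h_0,\vartheta_0})$ norm is a measurable functional of such estimators.
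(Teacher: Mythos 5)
Your proof is correct and follows essentially the same route as the paper: apply Proposition~\ref{prop1} with $g=0$ pathwise along the deterministic sequence $h_{0n}:=H_n(s)$ for each $s$ in the full-measure set where $\|H_n(s)-h_0\|_{L^2(P_{h_0,\vartheta_0})}\to 0$, then use the independence of $D_n$ from $S$ to identify this unconditional limit with the conditional-on-$S$ distribution. The bookkeeping around the two sources of randomness and the observation that $P_{h_{0n},\vartheta_{0,n,k}^u}=P_{\vartheta_{0,n,k}^u}$ because the last coordinate of $\vartheta_{0,n,k}^u$ vanishes match the paper's argument.
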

\begin{proof}
Let $\Phi$ denote the cumulative distribution function of a standard normal random variable. By Proposition \ref{prop1}, taking the shift $g$ in that result to be equal to zero, we have that 
\begin{align*}
\lim_{k \downarrow 0}\lim_{n \to \infty}P_{\vartheta_{0,n,k}^u}\left(\sqrt{n}(\sigma_M^2 - c_{h_0, \vartheta_0}^{\rm{T}}\mathscr{I}_{h_0, \vartheta_0}^{-1}c_{h_0, \vartheta_0})^{-1/2}(\psi_{n,h_{0n},k} - \psi_0) \leq z \right) = \Phi(z),
\end{align*}
for every $\mathcal{H}$-valued deterministic sequence $\{h_{0n}\}_{n=1}^{\infty}$ such that $\|h_{0n} - h_0\|_{L^2(P_{h_0,\vartheta_0})}\rightarrow 0$. Hence,
\begin{align*}
    &\left\{s\in \mathcal{S}: \lim_{n \to \infty}\|H_n(s) - h_0\|_{L^2(P_{h_0, \vartheta_0)}} \rightarrow 0\right\} \\
    &\subseteq \left\{s\in\mathcal{S} : \lim_{k \downarrow 0}\lim_{n \to \infty}P_{\vartheta_{0,n,k}^u}\left(\sqrt{n}(\sigma_M^2 - c_{h_0, \vartheta_0}^{\rm{T}}\mathscr{I}_{h_0, \vartheta_0}^{-1}c_{h_0, \vartheta_0})^{-1/2}(\psi_{n,H_n(s),k} - \psi_0) \leq z \right) = \Phi(z) \right\}.
\end{align*}
By assumption, the event $\left\{s\in \mathcal{S}: \lim_{n \to \infty}\|H_n(s) - h_0\|_{L^2(P_{h_0,\vartheta_0})} \rightarrow 0\right\}$ has probability one under sampling the variates in $S$ independently from $P_{h_0, \vartheta_0}$. Hence, the event on the right-hand side also has probability one. Finally, because $S$ is independent of $D_n$, the probability in the event on the right-hand side is equal to
\begin{align*}
    {\rm Pr}\left(\sqrt{n}(\sigma_M^2 - c_{h_0, \vartheta_0}^{\rm{T}}\mathscr{I}_{h_0, \vartheta_0}^{-1}c_{h_0, \vartheta_0})^{-1/2}(\psi_{n,h_n,k}  - \psi_0) \leq z \,\mid\, S\right).
\end{align*}
\end{proof}


\end{document}